\pdfoutput=1

\documentclass[10pt]{article}

\usepackage{amsfonts,amsmath,amssymb,amsthm}
\usepackage[pdftex]{graphicx}
\usepackage[hmargin=1in,vmargin=1in]{geometry}
\usepackage{natbib,setspace,enumerate,mathtools}
\usepackage{multirow}
\usepackage{booktabs}
\allowdisplaybreaks
\usepackage[toc,title,titletoc,header]{appendix}
\usepackage{adjustbox}
\usepackage[colorlinks,citecolor=blue]{hyperref}
\usepackage{etoolbox} 
\usepackage{threeparttable}
\bibliographystyle{ims}

\def\qed{\rule{2mm}{2mm}}
\def\independent{\perp \!\!\! \perp}

\parskip = 1.5ex plus 0.5 ex minus0.2 ex

\newtheorem{theorem}{Theorem}[section]
\newtheorem{lemma}{Lemma}[section]
\newtheorem{definition}{Definition}[section]

\theoremstyle{definition}
\newtheorem{example}{Example}[section]
\newtheorem{remark}{Remark}[section]
\newtheorem{assumption}{Assumption}[section]

\AtEndEnvironment{remark}{~\qed}
\AtEndEnvironment{example}{~\qed}

\DeclareMathOperator*{\var}{Var}
\DeclareMathOperator*{\cov}{Cov}

\newcommand{\mycomment}[1]{}

\begin{document}

\author{
Jizhou Liu \\
HSBC Business School\\
Peking University\\
\url{jizhou.liu@phbs.pku.edu.cn}
}

\bigskip

\title{Inference for Two-stage Experiments under Covariate-Adaptive Randomization \thanks{I thank Yuehao Bai, Max Farrell, Christian Hansen, Tetsuya Kaji, Azeem Shaikh, Max Tabord-Meehan, and Panagiotis Toulis for helpful comments.}}

\maketitle

\vspace{-0.3in}

\begin{spacing}{1.2}
\begin{abstract}
This paper studies inference in two-stage randomized experiments under covariate-adaptive randomization. In the initial stage of this experimental design, clusters (e.g., households, schools, or graph partitions) are stratified and randomly assigned to control or treatment groups based on cluster-level covariates. Subsequently, an independent second-stage design is carried out, wherein units within each treated cluster are further stratified and randomly assigned to either control or treatment groups, based on individual-level covariates. Under the homogeneous partial interference assumption, I establish conditions under which the proposed difference-in-``average of averages'' estimators are consistent and asymptotically normal for the corresponding average primary and spillover effects and develop consistent estimators of their asymptotic variances. Combining these results establishes the asymptotic validity of tests based on these estimators. My findings suggest that ignoring covariate information in the design stage can result in efficiency loss, and commonly used inference methods that ignore or improperly use covariate information can lead to either conservative or invalid inference. Then, I apply these results to studying optimal use of covariate information under covariate-adaptive randomization in large samples, and demonstrate that a specific generalized matched-pair design achieves minimum asymptotic variance for each proposed estimator. Finally, I discuss covariate adjustment, which incorporates additional baseline covariates not used for treatment assignment. The practical relevance of the theoretical results is illustrated through a simulation study and an empirical application.
\end{abstract}
\end{spacing}
 
\noindent KEYWORDS: Randomized controlled trials, two-stage randomization, matched pairs, stratified block randomization, causal inference under interference

\noindent JEL classification codes: C13, C21
\hypersetup{pageanchor=false}
\thispagestyle{empty} 
\newpage
\hypersetup{pageanchor=true}
\setcounter{page}{1}

\section{Introduction}\label{sec:intro}
This paper considers the problem of inference in two-stage randomized experiments under covariate-adaptive randomization. Here, a two-stage randomized experiment refers to a design where clusters (e.g., households, schools, or graph partitions) are initially randomly assigned to either a control or treatment group. Subsequently, random assignment of units within each treated cluster to either treatment or control is carried out based on a pre-determined treated fraction. Covariate-adaptive randomization refers to randomization schemes that first stratify according to baseline covariates and then assign treatment status so as to achieve ``balance'' within each stratum. Two-stage randomized experiments are widely used in social science (see for example \cite{Duflo2003,Haushofer2016,McKenzie2021}), and discussed by statisticians (see for example \cite{hudgens2008}), as a general approach to causal inference with interference; that is, when one individual's treatment status affects outcomes of other individuals. Moreover, practitioners often use covariate information to design more efficient two-stage experiments \citep[see for example][]{Duflo2003,Ichino2012,Beuermann2015,Muralidharan2015,Hidrobo2016,Rogers2018,Kinnan2020,Banerjee2021,Malani2021}. However, to the best of my knowledge, there has not yet been any formal analysis on covariate-adaptive randomization in two-stage randomized experiments. Accordingly, this paper establishes general results about estimation and inference for two-stage designs under covariate-adaptive randomization. Subsequently, I propose and examine the optimality of two-stage designs with ``matched tuples'', i.e. a generalized matched-pair design (see \cite{Bai-optimal} and \cite{matched-tuple}).

This paper examines covariate-adaptive randomization for two-stage experiments within a comprehensive framework that encompasses matched tuples designs, stratified block randomization, and complete randomization as special cases. The framework relies on finely stratified randomization (see \cite{cytrybaum2021} and \cite{bai2024efficiency}), which involves grouping clusters into homogeneous strata of size $k$ and then assigning treatment entirely at random within each stratum.\footnote{The terms ``cluster'' and ``stratum'' are both used in the literature to describe groupings of units, which can lead to confusion. Here, a cluster is defined as a pre-determined group of units (e.g., households, schools, or graph partitions), and a stratum as a group of clusters that share similar baseline cluster-level covariates.} Within this framework, I propose a set of difference-in-``average of averages'' estimators and analyze the statistical inference for four parameters of interest: equally-weighted and size-weighted primary effects, and equally-weighted and size-weighted spillover effects, under the assumption of homogeneous partial interference, where interference is confined within clusters. I establish conditions under which these four estimators are asymptotically normal and construct consistent estimators of their corresponding asymptotic variances. These results collectively validate the asymptotic validity of tests based on these estimators.

This paper then considers the asymptotic properties of a commonly recommended inference procedure based on a linear regression with cluster-robust standard errors. My findings suggest that the corresponding $t$-test is generally valid but conservative. I also demonstrate that in the first stage of cluster-level assignment, covariate information about clusters is important for both designing efficient experiments and consistently estimating variances under covariate-adaptive randomization. However, in the second stage of unit-level assignment, while individual-level covariate information is useful for improving efficiency, it is not required for the proposed inference method. Specifically, I show that consistent variance estimators can be constructed using only the cluster-level covariates from the first stage design, regardless of the use of individual-level covariates in the second stage.

Next, I apply the results to study optimal use of covariate information in two-stage designs. Here, by ``optimal'', I mean designs that achieve the minimum asymptotic variances within the class of designs considered in the paper. For all estimands of interest, the designs in the first and second stage affect the efficiency independently. Thus, I am able to identify optimal designs in the first and second stage separately and use them together as the optimal two-stage design. My result shows that, at each stage, the asymptotically optimal design is a matched tuples design where clusters or units are matched based on an index function (similar to \cite{Bai-optimal}) that is specific to the given estimator. In a simulation study, the results demonstrate that properly designed two-stage experiments utilizing the optimality results outperform other designs. However, the efficiency gain achieved through proper second-stage randomization is significantly lower compared to the first stage under my simulation specifications. 

In the empirical literature, it is common to match or stratify on a small set of covariates expected to be most predictive of outcomes and to adjust for other pre-treatment covariates ex-post. Building on \cite{mp-cluster} and \cite{bai2023}, I propose a covariate-adjusted version of my estimator and discuss the conditions under which this estimator enhances asymptotic efficiency compared to the unadjusted version.

Finally, this paper evaluates the proposed inference method against various regression-based methods commonly used in empirical literature in a simulation study and empirical application. The simulation study confirms the asymptotic exactness of the  inference results and highlights that statistical inference based on various ordinary least squares regressions could either be too conservative or invalid. Specifically, my result verifies that the commonly used regression with cluster-robust standard errors is conservative, while the other regression-based methods examined in the paper, such as regressions with strata fixed effects or heteroskedasticity-robust standard errors, are generally invalid. In the empirical application, I demonstrate the proposed inference method based on the experiment conducted in \cite{foos2017} and compare it with regression-based methods. The empirical findings are consistent with the results of the simulation study.

The analysis of data from two-stage randomized experiments and experiments under covariate-adaptive randomization has received considerable attention, but most work has focused on only one of these two features at a time. Previous work on the analysis of two-stage randomized experiments includes \cite{Hirano2010}, \cite{liu2014}, \cite{Rigdon2015}, \cite{Baird2018}, \cite{basse2018}, \cite{basse2019}, \cite{Imai2021}, \cite{imai2022}, \cite{VAZQUEZBARE2022} and \cite{Gonzalo-two-stage}. Recent work on the analysis of covariate-adaptive experiments includes \cite{Bugni2018}, \cite{yichong2021}, \cite{bai-inference}, \cite{Bai-optimal}, \cite{mp-cluster}, \cite{matched-tuple}, \cite{yichong2022}, \cite{bai2023}, \cite{cytrybaum2021} and \cite{bai-attrition}. In fact, both \cite{basse2018} and \cite{Imai2021} applied their inference methods, which do not account for covariate information, to two-stage experiments under covariate-adaptive randomization.\footnote{\cite{basse2018} analyzes the empirical application from \cite{Rogers2018}, whose design involves stratification on school, grade, and prior-year absences. \cite{Imai2021} analyzes the empirical application from \cite{Kinnan2020}, whose design involves matching villages (clusters) and households into small blocks.} My framework of analysis follows closely \cite{Bugni2022}, in which they formalize cluster randomized experiments in a super population framework. 

This paper contributes to the methodology for a growing number of empirical papers using two-stage experiments with covariate-adaptive randomization. For instance, \cite{Muralidharan2015}, \cite{Hidrobo2016}, \cite{foos2017}, \cite{Rogers2018} and  \cite{Banerjee2021} conducted two-stage randomized experiments that stratify clusters or units into a small number of large strata according to their baseline covariates, typically known as stratified block randomization. \cite{Duflo2003}, \cite{Ichino2012}, \cite{Beuermann2015}, \cite{Kinnan2020} and  \cite{Malani2021}  conducted two-stage randomized experiments in which clusters or units are matched into small strata according to their baseline covariates, commonly known as matched pairs, matched triplets or matched tuples designs. 

The rest of the paper is organized as follows. Section \ref{sec:setup} describes the setup and notation. Section \ref{sec:main-results} presents the main results. Section \ref{sec:optimality} discusses the optimality of matched tuples designs. Section \ref{sec:covariate-adjustment} introduces the covariate-adjusted estimator. Section \ref{sec:simulation} examines the finite sample behavior of various experimental designs through simulations. Section \ref{sec:empirical} illustrates the proposed inference methods in an empirical application based on the experiment conducted in \cite{foos2017}. Finally, I conclude with recommendations for empirical practice in Section \ref{sec:conclude}.

\section{Setup and Notation}\label{sec:setup}
Let $Y_{i,g}$ and $X_{i,g}$ denote the observed outcome and individual baseline covariates of the $i$th unit in the $g$th cluster, respectively. Denote by $Z_{i,g}$ the indicator for whether the $i$th unit in the $g$th cluster is treated or not. Let $C_g$ denote the observed baseline covariates for the $g$th cluster, $N_g$ denote the size of the $g$th cluster, $H_g$ denote the target fraction of units treated in the $g$th cluster, and $G$ the number of observed clusters. In addition, define $\mathcal{M}_g$ as the (possibly random) subset of $\{1,...,N_g\}$ corresponding to the observations within the $g$th cluster that are sampled by the researcher. Let $M_g=|\mathcal{M}_g|$ denote the number of units in set $\mathcal{M}_g$. In other words, the researcher randomly assigns treatments to all $N_g$ units in the $g$th cluster but only observes or conducts analysis on a subset of units sampled from the $g$th cluster \citep[see for example][]{Beuermann2015,Muralidharan2015,Haushofer2016,Hidrobo2016,Aramburu2019,Haushofer2019,Banerjee2021,Malani2021}. Denote by $P_G$ the distribution of the observed data
\begin{equation*}
    V^{(G)} :=  \left(\left(Y_{i,g}, X_{i,g}, Z_{i,g}:  i \in \mathcal{M}_g\right), H_g, C_g, N_g: 1 \leq g \leq G\right)~.
\end{equation*}
This paper considers a setup where units are partitioned into a large number of clusters. In this context, the paper studies a two-stage randomized experiment with binary treatment in both stages. In the first stage, a fraction of $\pi_1$ clusters are randomly assigned to the treatment group, while the remaining clusters are assigned to the control group with no treated units. Then, conditional on the assignment in the first stage, a fraction of $\pi_2$ individuals from treated clusters are assigned to the treatment group, while the remaining units are assigned to the control group. Such a binary design is widely used in empirical literature \citep[see, e.g.,][]{Duflo2003,Ichino2012,Haushofer2016,foos2017,Haushofer2019}. Moreover, while some experiments have multiple treated fractions, researchers often analyze them as binary designs \citep[see, e.g.,][]{Beuermann2015,basse2018,Imai2021}.

The two-stage experiment closely resembles the split-plot design (see \cite{shi2022rerandomization,zhao2022splitplot}), where \(H_g\) represents the whole-plot (cluster-level) randomization and \(Z_{i,g}\) represents the subplot (within-cluster) randomization. In split-plot designs, \(H_g\) and \(Z_{i,g}\) are usually assumed to be independent and represent two binary factors of treatment. However, in two-stage designs, \(H_g\) represents the intended treated fraction and thus does not correspond to a real treatment; it is correlated with \(Z_{i,g}\) through the relation \(H_g = \sum_{1 \leq i \leq N_g} Z_{i,g} / N_g\).\footnote{Strictly speaking, the equality holds up to a finite sample error, i.e. $\lfloor H_g N_g \rfloor =  \sum_{1 \leq i \leq N_g} Z_{i,g}$.} This distinction indicates that it could be a promising direction for future research to develop a general framework that allows dependence between the first-stage and second-stage randomizations, encompassing both split-plot and two-stage designs.



\subsection{Potential Outcomes and Interference}
In this section, I provide assumptions on the interference structure that assume no interference across clusters and exchangeable/homogeneous interference within clusters. Let $Y_{i,g}(\mathbf{z}, n)$ denote the potential outcome of the $i$th unit in the $g$th cluster, where $n$ denotes the cluster size and $\mathbf{z}$ denotes a realized vector of assignment for all units in all clusters, i.e., $\mathbf{z}=((z_{i,g}: 1\leq i\leq n): 1\leq g\leq G)$, where $z_{i,g} \in \{0,1\}$ denotes a realized assignment for the $i$th unit in the $g$th cluster. Following previous work \citep[see, for example,][]{hudgens2008, basse2018, basse2019, Forastiere2021, Imai2021}, I assume the following about potential outcomes.
\begin{assumption}[Homogeneous partial interference]\label{as:interference}
\begin{equation*}
    Y_{i,g}(\mathbf{z}, n) = Y_{i,g}(\mathbf{z}^\prime, n) \text{ w.p.1 } \text{ if } z_{i,g} = z_{i,g}^\prime \text{ and }\sum_{1\leq j\leq n} z_{j,g} =  \sum_{1\leq j\leq n} z_{j,g}^\prime \text{ for any } 1 \leq i \leq n, 1 \leq g \leq G~,
\end{equation*}
where $\mathbf{z}$ and $\mathbf{z}^\prime$ are any realized vectors of assignment, and $z_{i,g}, z_{i,g}^\prime$ are the corresponding individual treatment indicators for $i$-th unit in $g$-th cluster.
\end{assumption}
Under Assumption \ref{as:interference}, potential outcomes can be simplified as $Y_{i,g}(z,n,n_1)$ where $n_1$ denotes the number of treated units in the cluster. Following this notation, we define
\begin{equation*}
    Y_{i,g}(z, h) := \sum_{n \geq 1} Y_{i,g}(z, n, \lfloor n h \rfloor) I\{N_g = n\} 
\end{equation*}
to be the potential outcome under the individual treatment status $z \in \{0,1\}$ and the cluster target treated fractions $h \in \mathcal{H} \subseteq [0, 1]$, where $\mathcal{H}$ is a pre-determined set of treated fractions.\footnote{For example, when the cluster size is $3$ and the target treated fraction is $0.5$, there will be one treated unit in the cluster. Other rounding approaches, like the ceiling function, to handle fractional numbers of treated units can also be easily accommodated.} As mentioned before, this paper considers binary treatments, i.e. $\mathcal{H} = \{0, \pi_2\}$, throughout the paper.\footnote{Extending the designs to accommodate multiple treatment fractions is technically straightforward. Related work can be found in \cite{Bugni2019}.} Furthermore, the (observed) outcome and potential outcomes are related to treatment assignment by the relationship $Y_{i,g} = Y_{i,g}(Z_{i,g}, H_{g})$. Denote by $Q_G$ the distribution of
\begin{equation*}
    W^{(G)}:= \left(\left(\left(Y_{i,g}(z, h):z \in \{0, 1\}, h\in \mathcal{H}\right), X_{i,g}: 1 \leq i \leq N_g \right), \mathcal{M}_g, C_g, N_g: 1 \leq g \leq G\right) ~.
\end{equation*}

\subsection{Distribution and Sampling Procedure}
The distribution $P_G$ of observed data and its sampling procedure can be described in three steps. First, $\{(\mathcal{M}_g, C_g, N_g): 1\leq g \leq G\}$ are i.i.d samples from a population distribution. Second, potential outcomes and baseline individual covariates are sampled from a conditional distribution $R_G(\mathcal{M}^{(G)}, C^{(G)}, N^{(G)})$, which is defined as follows:
\begin{equation*}
    \left(\left(\left(Y_{i,g}(z, h):z \in \{0, 1\}, h\in \mathcal{H}\right), X_{i,g}: 1 \leq i \leq N_g \right): 1\leq g \leq G\right) \mid \mathcal{M}^{(G)}, C^{(G)}, N^{(G)}~.
\end{equation*}
Finally, $P_G$ is jointly determined by the relationship $Y_{i,g} = Y_{i,g}(Z_{i,g}, H_{g})$ together with the assignment mechanism, which will be described in Section \ref{sec:main-results}, and $Q_G$, which is described in the first two steps. Note that \(A^{(G)}\) denotes the vector \((A_1, \dots, A_G)\) for any random variable \(A\), and \(X_g\) represents the vector \((X_{i,g}: 1 \leq i \leq N_g)\). The following assumption states my requirements on $Q_G$ using this notation.
\begin{assumption}\label{ass:Q_G}The distribution $Q_G$ is such that
\begin{enumerate}
    \item[(a)] $\{(\mathcal{M}_g, C_g, N_g): 1\leq g \leq G\}$ is an i.i.d. sequence of random variables.
    \item[(b)]  For some family of distributions $\{R(m,c,n):(m,c,n) \in \text{supp}(\mathcal{M}_g, C_g, N_g) \}$,
    \begin{equation*}
        R_G(\mathcal{M}^{(G)}, C^{(G)}, N^{(G)}) = \prod_{1\leq g \leq G} R(\mathcal{M}_g, C_g, N_g)~,
    \end{equation*}
    where $R(\mathcal{M}_g, C_g, N_g)$ denotes the distribution of $\left(\left(Y_{i,g}(z, h):z \in \{0, 1\}, h\in \mathcal{H}\right), X_{i,g}: 1 \leq i \leq N_g \right)$ conditional on $\{\mathcal{M}_g, C_g, N_g\}$.
    \item[(c)] $P\left\{|\mathcal{M}_g| \geq 2 \right\}=1$ and $E[N_g^2] < \infty$.
    \item[(d)] For some constant $C < \infty$, $P\left\{E[Y_{i,g}^2(z,h)\mid N_g, C_g, X_g] \leq C \text{ for all } 1\leq i \leq N_g \right\}=1$ for all $z \in \{0, 1\}$ and $h \in \mathcal{H}$ and $1 \leq g \leq G$.
    \item[(e)] $\mathcal{M}_g \perp \left(\left(Y_{i,g}(z, h):z \in \{0, 1\}, h\in \mathcal{H}\right): 1 \leq i \leq N_g\right) \mid C_g, N_g, X_g$ for all $1 \leq g \leq G$.
    \item[(f)] For all $z\in\{0,1\}, h\in\mathcal{H}$ and $1 \leq g \leq G$,
    \begin{equation*}
        E\left[\frac{1}{M_g} \sum_{i \in \mathcal{M}_g} Y_{i,g}(z,h) \mid N_g \right] = E\left[\frac{1}{N_g}\sum_{1\leq i \leq N_g} Y_{i,g}(z, h) \mid N_g \right] \text{w.p.1}~.
    \end{equation*}
\end{enumerate}
\end{assumption}
The sampling procedure of a cluster randomized experiment used in this paper closely follows that formalized by \cite{mp-cluster} and \cite{Bugni2022}. Assumption \ref{ass:Q_G} is essentially the same as Assumption 2.2 in \cite{Bugni2022}, which formalizes the sampling procedure of i.i.d. clusters (Assumptions \ref{ass:Q_G} (a)-(b)) and imposes mild regularity conditions (Assumptions \ref{ass:Q_G} (c)-(d)). Furthermore, Assumption \ref{ass:Q_G} (e) accommodates a second-stage sampling process within a given cluster that may depend on cluster-level and individual-level covariates as well as cluster sizes. This flexibility permits $\mathcal{M}_g$ to be potentially determined through stratified sampling, as discussed in \cite{cytrybaum2021}. Finally, Assumption \ref{ass:Q_G} (f) is a high-level assumption that ensures the extrapolation from the observations that are sampled to those that are not sampled.

\subsection{Parameters of Interest and Estimators}\label{subsec:estimand-estimator}

In the context of the sampling framework described above, this paper considers four parameters of interest, including primary and spillover effects that are equally or (cluster) size-weighted. For different choices of (possibly random) weights $\omega_g$, $1\leq g\leq G$ satisfying $E[\omega_g]=1$, we define the average \textit{primary effects} and \textit{spillover effects} under general weights as follows.
\begin{definition}
Define the weighted average primary effect under weight $\omega_g$ as follows:
\begin{equation}
    \theta^{P}_\omega(Q_G) := E\left[ \omega_g\left( \frac{1}{N_g} \sum_{1\leq i \leq N_g} Y_{i,g}(1,\pi_2) - Y_{i,g}(0,0)\right)\right]~,
\end{equation}
and the weighted average spillover effect as:
\begin{equation}
    \theta^S_\omega(Q_G) := E\left[\omega_g\left( \frac{1}{N_g} \sum_{1\leq i \leq N_g} Y_{i,g}(0,\pi_2) - Y_{i,g}(0,0)\right)\right]~.
\end{equation}
\end{definition}

Denote by $\theta^{P}_1(Q_G)$ and $\theta^{S}_1(Q_G)$ the equally-weighted cluster-level average primary and spillover effects with $\omega_g=1$, and $\theta^{P}_2(Q_G)$ and $\theta^{S}_2(Q_G)$ the size-weighted cluster-level average primary and spillover effects with $\omega_g=N_g/E[N_g]$. The consideration of weighted estimands is motivated by the \textit{non-ignorability} of cluster sizes. According to \cite{Bugni2022}, cluster sizes are considered \textit{ignorable} if the individual-level average treatment effect is independent of the cluster size. Formally, this is expressed as:
\begin{equation}\label{eqn:ignorable}
    P\{E[Y_{i,g}(z,\pi_2) - Y_{i,g}(0,0)\mid N_g] = E[Y_{i,g}(z,\pi_2) - Y_{i,g}(0,0)] \text{ for all } 1\leq i \leq N_g\} = 1
\end{equation}
for all \(1\leq g \leq G\) and \(z \in \{0, 1\}\). Cluster sizes are \textit{non-ignorable} whenever (\ref{eqn:ignorable}) is not satisfied. When cluster sizes are non-ignorable, different weights can lead to distinct parameters. The selection between these two types of estimands—equally weighted or size-weighted—depends on the analytical focus: whether the primary interest is on the clusters themselves or the individuals within these clusters. For instance, in assessing the impact of an educational program on students' academic performance, if policymakers are concerned with improvements at the school level, equally weighted estimands are appropriate. Conversely, if the focus is on student-level outcomes, then size-weighted estimands become relevant. 

The primary effects $\theta^{P}_1(Q_G)$ and $\theta^{P}_2(Q_G)$ are the differences in the averaged potential outcomes of treated units from treated clusters and control units from control clusters. In contrast, the spillover effects $\theta^{S}_1(Q_G)$ and $\theta^{S}_2(Q_G)$ are the differences in the averaged potential outcomes of control units from treated clusters and control units from control clusters. In many empirical settings, the estimation and comparison of primary and spillover effects play a crucial role in addressing important research questions \citep[see for example][]{Duflo2003}.

In summary, the formulas for the four parameters of interest are listed in Table \ref{table:estimands}. These estimands have been proposed and studied in previous literature \citep[see, e.g.,][]{hudgens2008, toulis2013, basse2018, Imai2021}, but mostly in a finite population framework. This paper adopts the terminology ``primary'' and ``spillover'' effects from \cite{basse2018}, which are respectively referred to as ``total'' and ``indirect'' effects in \cite{hudgens2008}. Previous works on interference have also studied other estimands, such as direct effects and overall effects \citep[see, e.g.,][]{hudgens2008,Wager2021,Imai2021}, but I do not explore these estimands further in this paper.

    \begin{table}[ht!]
    \centering
    \setlength{\tabcolsep}{8pt}
    \begin{tabular}{ll}
    \toprule
    Parameter of interest & Formula \rule[2ex]{0pt}{-2ex} \\ \midrule
    Equally-weighted primary effect      &   $ \theta^{P}_1(Q_G) := E\left[  \frac{1}{N_g} \sum_{1\leq i \leq N_g} Y_{i,g}(1,\pi_2) - Y_{i,g}(0,0)\right] $    \rule[-3.5ex]{0pt}{2ex}  \\
    Equally-weighted spillover effect & $ \theta^{S}_1(Q_G) := E\left[  \frac{1}{N_g} \sum_{1\leq i \leq N_g} Y_{i,g}(0,\pi_2) - Y_{i,g}(0,0)\right] $ \rule[-3.5ex]{0pt}{2ex} \\
    Size-weighted primary effect    & $ \theta^{P}_2(Q_G) := E\left[  \frac{1}{E[N_g]} \sum_{1\leq i \leq N_g} Y_{i,g}(1,\pi_2) - Y_{i,g}(0,0)\right] $ \rule[-3.5ex]{0pt}{2ex} \\
    Size-weighted spillover effect      &   $ \theta^{S}_2(Q_G) := E\left[  \frac{1}{E[N_g]} \sum_{1\leq i \leq N_g} Y_{i,g}(0,\pi_2) - Y_{i,g}(0,0)\right] $ \\
    \bottomrule
    \end{tabular}
    \caption{Parameters of interest}
    \label{table:estimands}
    \end{table}

For estimating the four parameters of interest, I propose the following estimators analogous to the difference-in-“average of averages” estimator in \cite{Bugni2022}:
\begin{align*}
    \hat{\theta}^P_1 &= \frac{1}{G_T}\sum_{1\leq g \leq G} I\{H_g = \pi_2 \} \bar{Y}_{g}^{1}  - \frac{1 }{G_C} \sum_{1\leq g \leq G} I\{H_g = 0 \} \bar{Y}_{g}^{1} \\
    \hat{\theta}^S_1 &= \frac{1}{G_T}\sum_{1\leq g \leq G} I\{H_g = \pi_2 \} \bar{Y}_{g}^{0}  - \frac{1 }{G_C} \sum_{1\leq g \leq G} I\{H_g = 0 \} \bar{Y}_{g}^{0} \\
    \hat{\theta}^P_2 &= \frac{1}{N_T} \sum_{1\leq g \leq G} I\{H_g = \pi_2 \} N_g \bar{Y}_{g}^1  - \frac{1}{N_C} \sum_{1\leq g \leq G}I\{H_g = 0 \}N_g \bar{Y}_{g}^1  \\
    \hat{\theta}^S_2 &= \frac{1}{N_T} \sum_{1\leq g \leq G} I\{H_g = \pi_2 \} N_g \bar{Y}_{g}^{0}  - \frac{1 }{N_C} \sum_{1\leq g \leq G} I\{H_g = 0 \}N_g \bar{Y}_{g}^0~,
\end{align*}
where $G_T = \sum_{1\leq g \leq G} I\{H_g = \pi_2\}$, $G_C = \sum_{1\leq g \leq G} I\{H_g = 0\}$, and 
$N_T = \sum_{1\leq g \leq G} I\{H_g = \pi_2 \} N_g$, $N_C = \sum_{1\leq g \leq G} I\{H_g = 0 \} N_g$ and
\begin{align*}
    \bar Y_g^z &= \frac{1}{M_g^z} \sum_{i\in\mathcal{M}_g} Y_{i,g}  I\{H_g = \pi_2, Z_{i,g}=z\} + \frac{1}{M_g} \sum_{i\in\mathcal{M}_g} Y_{i,g} I\{H_g =0\} ~,
\end{align*}
where $M_g^z = \sum_{i\in\mathcal{M}_g} I\{ Z_{i,g} = z \}$ with $z \in \{0,1\}$.

By definition, the ``first/individual average'' $\bar{Y}_{g}^{1}$ from the primary effect estimator is taken over all treated units within the $g$-th cluster if the cluster is treated, and all control units within the $g$-th cluster if the cluster is assigned to control. When it comes to estimating spillover effects, the ``first/individual average'' $\bar{Y}_{g}^{0}$ is taken over all control units within the $g$-th cluster if the cluster is treated, and all control units within the $g$-th cluster if the cluster is assigned to control. Then, the ``second/cluster average'' is a cluster-level average of $\bar{Y}_{g}^{1}$ or $\bar{Y}_{g}^{0}$ taken within groups of treated and untreated clusters as featured in a usual difference-in-means estimator. 

The proposed estimators can be obtained from ordinary least squares regressions using different weighting schemes. Let \( L_{i,g} = I\{H_g = \pi_2\}(1 - Z_{i,g}) \) denote the indicator for untreated units within treated clusters. Consider the following linear model for an ordinary least squares regression:
\begin{equation}\label{eqn:ols-full}
    Y_{i,g} = \alpha + \beta_1 Z_{i,g} + \beta_2 L_{i,g} + \epsilon_{i,g}~.
\end{equation}
Note that the estimators \(\hat{\theta}_1^P\) and \(\hat{\theta}_1^S\) may be obtained by estimating coefficients \(\beta_1\) and \(\beta_2\) from a weighted least squares regression of equation (\ref{eqn:ols-full}) using weights \(1/M_g\). Similarly, \(\hat{\theta}_2^P\) and \(\hat{\theta}_2^S\) may be derived using weights \(N_g/M_g\) (see Appendix \ref{sec:app-wols} for formal derivations). Moreover, the unweighted least squares regression produces the ``sample'' size-weighted estimators. Taking \(\beta_1\) as an example:
\begin{align*}
    \hat{\beta}_1 = \frac{1}{M_1} \sum_{1 \leq g \leq G} I\{H_g = \pi_2\} M_g \bar{Y}_{g}^1  - \frac{1}{M_0} \sum_{1 \leq g \leq G} I\{H_g = 0\} M_g \bar{Y}_{g}^1~,
\end{align*}
where \( M_1 = \sum_{1 \leq g \leq G} I\{H_g = \pi_2\} M_g \), and \( M_0 = \sum_{1 \leq g \leq G} I\{H_g = 0\} M_g \). These ``sample'' size-weighted estimators are identical to \(\hat{\theta}_2^P\) and \(\hat{\theta}_2^S\) when outcomes of all units from each cluster are observed or the number of observed units is proportional to the cluster size, i.e., \( M_g / N_g = c \) for \( 0 < c \leq 1 \).

My estimators are closely related to those studied in previous methodological literature. For example, equally-weighted estimators $\hat{\theta}^P_1$ and $\hat{\theta}^S_1$ are identical to the household-weighted estimators from \cite{basse2018}, which are closely related to the estimators in \cite{hudgens2008}. $\hat{\theta}^P_1$ and $\hat{\theta}^S_1$ may also be obtained through the ``household-level regression'' proposed in \cite{basse2018}, which is equivalent to running two separate ordinary least squares regressions of $\bar Y_{g}^1$ on a constant and $I\{H_g=\pi_2\}$, and $\bar Y_{g}^0$ on a constant and $I\{H_g=0\}$. Size-weighted estimators $\hat{\theta}^P_2$ and $\hat{\theta}^S_2$ are closely related to the individual-weighted estimator proposed by \cite{basse2018}. In previous studies such as \cite{basse2018}, \cite{VAZQUEZBARE2022}, and \cite{Gonzalo-two-stage}, researchers have investigated estimators obtained through a widely used saturated regression in multi-treatment experiments, similar to the least squares regression described by equation (\ref{eqn:ols-full}).

In empirical literature, various regression estimators are used for estimating primary and spillover effects. One widely used estimator is described in equation (\ref{eqn:ols-full}) \citep[see, e.g.,][]{Haushofer2016,Haushofer2019}. Another estimator that produces the same set of estimators is through the alternative regression $Y_{i,g} = a + b_1 Z_{i,g} + b_2 I\{H_g=\pi_2\} + u_{i,g}$ \citep[see, e.g.,][]{Duflo2003,Ichino2012},
where the estimators are related to those from (\ref{eqn:ols-full}) as follows: $\hat\beta_1 = \hat b_1 + \hat b_2$ and $\hat \beta_2 = \hat b_2$. Some empirical works use either or both of the two separate regressions: $Y_{i,g} = \alpha + \beta_1 Z_{i,g} + \epsilon_{i,g}$ and $Y_{i,g} = \alpha + \beta_2 L_{i,g} + \epsilon_{i,g}$ \citep[see, e.g.,][]{Beuermann2015,Hidrobo2016,Aramburu2019}. In many cases, estimators obtained from regressions with fixed effects are reported along with those without fixed effects \citep[see, e.g.,][]{Ichino2012}. Section \ref{sec:simulation-inference} will examine the validity of statistical tests based on regressions with and without fixed effects.

\section{Main Results}\label{sec:main-results}
In this section, I investigate the asymptotic properties of the estimators presented in Section \ref{subsec:estimand-estimator} within a finely stratified randomization framework. Specifically, in the first stage, clusters are partitioned into a large number of small strata of a fixed size, with the assignment mechanism being a completely randomized design (also known as a permuted block design) independently applied within each stratum. Formally, consider \( n \) strata of size \( k \) (each stratum consisting of \( k \) clusters), formed by matching clusters according to a function \( S:\text{supp}((C_g,N_g)) \rightarrow \mathbf{R}^\ell \). Denote by \( S^{(G)}=(S_1, \dots, S_G) \) the vector of variables used for matching, where \( S_g = S(C_g, N_g) \). Within each stratum, \( l \) clusters are randomly selected and assigned to the treatment group.\footnote{Extending the setup to a more general framework with varying stratum sizes and heterogeneous treatment fractions is indeed possible; see Section 3.2 of \cite{cytrybaum2021}.} Specifically, \( G = nk \) and \( \pi_1 = l/k \), where \( 0 < l < k \), and \( l \) and \( k \) are mutually prime. Furthermore, I consider a second-stage stratification on units from a given cluster. Denote by \( B_g = (B_{i,g}: 1 \leq i \leq N_g) \) the vector of strata on units in the \( g \)th cluster, constructed from observed baseline covariates \( X_{i,g} \) for the \( i \)th unit using a function \( B:\text{supp}(X_{i,g}) \rightarrow \mathcal{B}_g \).\footnote{Asymptotics are not considered in the second-stage design; thus, the second stage could employ finely stratified designs like matched-pair, or those with coarse stratification such as stratified block randomization.}

\begin{example}
    \cite{Duflo2003} conducted such a finely stratified experiment involving 330 university departments, each averaging 30 staff employees. In the first stage, these departments (clusters with an average size of 30) were grouped into triplets (small strata of size 3) based on their cluster-level covariates. Within each triplet, two departments were randomly chosen to be part of the treated group. Specifically, this design has \( k=3 \), \( l=2 \), \( \pi_1 = 2/3 \), and \( G=3n = 330 \). In the second stage, individuals from these treated departments were randomly selected to receive treatments.
\end{example}

To start with, I describe my assumptions on the treatment assignment mechanism in the first stage. Formally, let
\begin{equation*}
    \lambda_j = \lambda_j(S^{(G)}) \subseteq \{1,\dots, G\},~ 1\leq j \leq n
\end{equation*}
denote $n$ sets each consisting of $k$ elements that form a partition of $\{1,\dots, G\}$.

I assume treatment status is assigned to clusters as follows:
\begin{assumption}\label{ass:assignment-match}
Treatments are assigned so that $W^{(G)} \independent H^{(G)} | S^{(G)}$ and, conditional on $S^{(G)}$,
\[ \{(I\{H_{i}=\pi_2\}): i \in \lambda_j): 1 \leq j \leq n\} \]
are i.i.d.\ and each uniformly distributed over all permutations of $\left\{z\in \{0,1\}^{k}: \sum_{z=1}^k z_{i}= l \right\}$.
\end{assumption}
Assumption \ref{ass:assignment-match} formally describes the assignment mechanism of a two-stage experiment with finely stratified randomization in the first stage. Further, units in each pair are required to be ``close'' in terms of their stratification variable $S_g$ in the following sense:
\begin{assumption} \label{as:close}
The strata used in determining treatment status satisfy
\[ \frac{1}{n} \sum_{1 \leq j \leq n} \max_{i,k\in\lambda_j} \|S_{i} - S_{k}\|^2 \stackrel{P}{\to} 0~. \]
\end{assumption}
The validity of the variance estimators relies on the following condition that the distances between units in adjacent blocks are considered ``close'' in relation to their baseline covariates:
\begin{assumption}\label{ass:variance-estimator}
The strata used in determining treatment status satisfy
        \[ \frac{1}{n} \sum_{1 \leq j \leq \lfloor n / 2 \rfloor } \max_{i\in\lambda_{2j-1}, k\in \lambda_{2j}} \|S_{i} - S_{k}\|^2 \stackrel{P}{\to} 0~. \]
\end{assumption}

\begin{remark}\label{remark:finely-stratified-framework}
    Following \cite{cytrybaum2021}, Assumptions \ref{ass:assignment-match}-\ref{ass:variance-estimator} underpin the finely stratified randomization framework that enables unified asymptotics and inference for a wide variety of different designs. When \( S_g \) is continuous, this framework aligns with \textit{matched tuples designs}, essentially generalized matched-pair designs. Blocking algorithms that satisfy Assumptions \ref{ass:assignment-match}-\ref{ass:variance-estimator} have been thoroughly discussed in recent literature on matched pairs/tuples designs \citep[see, for example,][]{bai-inference, Bai-optimal, matched-tuple, cytrybaum2021}.\footnote{For instance, when $\text{dim}(S_g) = 1$ and clusters/units are matched into blocks by ordering them according to the values of $S_g$ and grouping the adjacent clusters/units, Theorem 4.1 of \cite{bai-inference} shows that Assumptions \ref{ass:assignment-match}-\ref{ass:variance-estimator} are satisfied as long as $E[S_g^2] < \infty$.} When \( S_g \) is categorical, the framework corresponds to \textit{stratified block randomization (SBR)}. Intuitively, consider a finely stratified design that first stratifies on \( S_g \) and then groups clusters arbitrarily into small strata of size \( k \) within each large stratum. Such a design is referred to as ``coarse stratification'' by \cite{cytrybaum2021} and is shown to be equivalent to SBR in Proposition 9.15 of \cite{cytrybaum2021_old}. When \( S_g \) is a constant and thus provides no information at all, the framework equates to a \textit{completely randomized design}.
\end{remark}

The next step is to formalize the assumption of independence between the first and second stage designs. To begin with, I utilize the notation $\{Z_{i,g}(h): h \in \mathcal{H}\}$, representing the ``potential treatment'' for various treated fractions $h\in\mathcal{H}$, and relate the (observed) individual treatment indicator and potential individual-level treatment indicator as follows:
\begin{equation}\label{eqn:potential_treatment}
    Z_{i,g} = \sum_{h \in \mathcal{H}}  Z_{i,g}(h) I\{H_g=h\} \text{ for } 1\leq i \leq N_g~.
\end{equation}
The underlying motivation for this ``potential outcome style'' notation becomes evident when considering that in two-stage experiments, the realized treatment assignment in the first stage is almost always correlated with that in the second stage (e.g., $H_g = \frac{1}{N_g}\sum_{1\leq i \leq N_g} Z_{i,g}$). Yet, the ``potential'' individual-level treatment assignment, for any specified target treated fraction, can be independent of the cluster-level assignment of that target treated fraction. This is similar to the classic potential outcome model, where treatment assignment is independent of potential outcomes but likely correlates with observed outcomes.

Then, my requirements on the treatment assignment mechanism for the second stage are summarized in the following assumption:
\begin{assumption}\label{as:assignment2}
The treatment assignment mechanism for the second-stage is such that
\begin{itemize}
    \item[(a)] $(((Z_{i,g}(h):h\in\mathcal{H}): 1\leq i\leq N_g): 1\leq g \leq G) \perp H^{(G)}$,
    \item[(b)] $W^{(G)} \perp (((Z_{i,g}(h):h\in\mathcal{H}): 1\leq i\leq N_g): 1\leq g \leq G) \mid(B_g: 1\leq g \leq G)$,
    \item[(c)] For all $1\leq g\leq G$, $E[Z_{i,g}(h)\mid B_g] = \frac{1}{M_g} \sum_{i \in \mathcal{M}_g} Z_{i,g}(h) = h + o_P(1) $.
\end{itemize}
\end{assumption}
Assumption \ref{as:assignment2} (a) rules out any confounders between the first-stage and second-stage treatment assignments, which is typically satisfied in most two-stage experiments. Assumption \ref{as:assignment2} (b) is analogous to Assumption \ref{as:assignment1} (a). Assumption \ref{as:assignment2} (c) requires that the marginal assignment probability for each stratum and the realized treated fraction in the observed subset of units both equal the intended treated fraction 
$h$, up to a finite sample error that diminishes as cluster size increases.\footnote{In the proof of the main results, I only need \( E[Z_{i,g}(h) \mid B_g] = \frac{1}{M_g} \sum_{i \in \mathcal{M}_g} Z_{i,g}(h) \) to hold for unbiasedness. However, in practice, these two quantities, along with the treated fraction for the entire cluster, need to align with the intended treated fraction \( h \) so that they are consistent with the notations of the potential outcomes \( Y_{i,g}(z,h) \).} An example of this could be (individual-level) stratified block randomization, where the treated fraction remains constant across all strata, with observed units drawn from a random subset of these strata.

Finally, I impose the following assumption on $Q_G$ in addition to Assumption \ref{ass:Q_G}:
\begin{assumption} \label{as:Q_G-lip}
The distribution $Q_G$ is such that
\begin{enumerate}
\item[(a)] $E[\bar{Y}_g^r(z,h) N_g^\ell | S_g = s]$ is Lipschitz in $s$ for $(z,h) \in \mathcal \{(0,0),(0,\pi_2),(1,\pi_2)\}$ and $r,\ell\in\{0,1,2\}$.
\item[(b)] For some $C < \infty$, $P\{E[N_g^2|S_g] \le C\} = 1$
\end{enumerate}
\end{assumption}
Assumption \ref{as:Q_G-lip}(a) is a smoothness requirement analogous to Assumption 3(ii) in \cite{Bai-optimal} ensuring that units within clusters which are “close” in terms of their baseline covariates are suitably comparable. Assumption \ref{as:Q_G-lip}(b) imposes an additional restriction on the distribution of cluster sizes beyond what is stated in Assumption \ref{ass:Q_G}(c).

The following theorem derives the asymptotic behavior of estimators for equally-weighted and size-weighted effects.\footnote{Throughout the paper, $V_1(1)$ and $V_2(1)$ denote the variances of primary effects, while $V_1(0)$ and $V_2(0)$ represent the variances of spillover effects. In other words, the notation $z\in \{0, 1\}$ (as in $V_1(z)$) represents the individual's own treatment status.}

\begin{theorem}\label{thm:matched-group}
Suppose Assumption \ref{as:interference} holds, $Q_G$ satisfies Assumptions \ref{ass:Q_G} and \ref{as:Q_G-lip} and the treatment assignment mechanism satisfies Assumptions \ref{ass:assignment-match}-\ref{as:close} and \ref{as:assignment2}. Then, as $n \to \infty$,
\begin{align}
    &\sqrt{G} \left( \hat\theta^{P}_1 - \theta^P_1(Q_G) \right) \xrightarrow{d} \mathcal{N}(0, V_1(1))~, \\ &\sqrt{G} \left( \hat\theta^{S}_1 - \theta^S_1(Q_G) \right) \xrightarrow{d} \mathcal{N}(0, V_1(0))~,\\
    &\sqrt{G} \left( \hat\theta^{P}_2 - \theta^P_2(Q_G) \right) \xrightarrow{d} \mathcal{N}(0, V_2(1))~, \\ &\sqrt{G} \left( \hat\theta^{S}_2 - \theta^S_2(Q_G) \right) \xrightarrow{d} \mathcal{N}(0, V_2(0))~,
\end{align}
where, for $z\in \{0,1\}$,
\begin{align}\label{eqn:V1}
    \begin{split}
    V_1(z) &= \frac{1}{\pi_1} \operatorname{Var}\left[\bar{Y}_{g}(z,\pi_2)\right] + \frac{1}{1-\pi_1} \operatorname{Var}\left[\bar{Y}_{g}(0,0)\right]\\
    &\quad\quad\quad - \pi_1(1-\pi_1)  E\left[\left( \frac{1}{\pi_1}m_{z,\pi_2}\left(S_{g}\right) + \frac{1}{1-\pi_1}m_{0,0}\left(S_{g}\right) \right)^2\right]\\
    \end{split}
\end{align}
and
\begin{align}\label{eqn:V2}
    \begin{split}
    V_2(z) &= \frac{1}{\pi_1} \var[\tilde Y_g(z,\pi_2)] + \frac{1}{1-\pi_1} \var[\tilde Y_g(0,0)]\\
    &\quad\quad\quad - \pi_1(1-\pi_1) E\left[\left(\frac{1}{\pi_1}E[\tilde Y_g(z,\pi_2)\mid S_g] + \frac{1}{1-\pi_1}E[\tilde Y_g(0,0)\mid S_g] \right)^2\right]
    \end{split}
\end{align}
with
\begin{align}
    \bar{Y}_g(1,\pi_2) &= \frac{1}{M_g^{1}} \sum_{i \in \mathcal{M}_g} Y_{i,g}(1,\pi_2)Z_{i,g}(\pi_2)\\
    \bar{Y}_g(0,\pi_2) &= \frac{1}{M_g^{0}} \sum_{i \in \mathcal{M}_g} Y_{i,g}(0,\pi_2)(1-Z_{i,g}(\pi_2))\\
    \bar{Y}_g(0,0) &= \frac{1}{M_g} \sum_{i \in \mathcal{M}_g} Y_{i,g}(0,0)\\
    m_{z,h}\left(S_{g}\right) &= E[\bar{Y}_g(z,h)\mid S_g] - E[\bar{Y}_g(z,h)] \label{eqn:m(S)}
\end{align}
and 
\begin{align}\label{eqn:Y(z,h)}
    & \tilde Y_g(z,h) = \frac{N_g}{E[N_g]} \left ( \bar Y_g(z,h) - \frac{E[\bar Y_g(z,h) N_g]}{E[N_g]} \right )
\end{align}
for $(z,h) \in \{(1,\pi_2),(0,\pi_2),(0,0)\}$.
\end{theorem}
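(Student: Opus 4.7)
The plan is to reduce all four estimators to cluster-level difference-in-means of carefully chosen random variables, so that the asymptotic theory for finely stratified matched-tuples designs applies directly. The first step is a bookkeeping identity. Using Assumption \ref{as:assignment2}(a) together with equation (\ref{eqn:potential_treatment}), a short calculation shows that on the event $\{H_g = \pi_2\}$ one has $\bar Y_g^1 = \bar Y_g(1,\pi_2)$ and $\bar Y_g^0 = \bar Y_g(0,\pi_2)$, while on $\{H_g = 0\}$ one has $\bar Y_g^1 = \bar Y_g^0 = \bar Y_g(0,0)$. The crucial consequence is that the variables $\bar Y_g(z,h)$ are functions of $W^{(G)}$ and the second-stage potential treatments $(Z_{i,g}(\pi_2))$ alone, so by Assumptions \ref{ass:assignment-match} and \ref{as:assignment2}(a)-(b) they are i.i.d.\ across $g$ and jointly independent of the first-stage assignments $H^{(G)}$ conditional on $S^{(G)}$.

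For the equally-weighted effects $\hat\theta^P_1$ and $\hat\theta^S_1$, the reduction turns the estimator into a cluster-level difference-in-means of $\bar Y_g(z,\pi_2)$ and $\bar Y_g(0,0)$ whose first-stage strata $\lambda_j$ of size $k$ satisfy Assumptions \ref{ass:assignment-match} and \ref{as:close}. I would then invoke the matched-tuples central limit theorem (e.g.\ Theorem 3.1 of \cite{Bai-optimal} or Theorem 3.1 of \cite{matched-tuple}), whose bounded-second-moment and Lipschitz conditional mean hypotheses follow from Assumption \ref{ass:Q_G}(c)-(d) and Assumption \ref{as:Q_G-lip}(a). The resulting variance is $\pi_1^{-1}\var[\bar Y_g(z,\pi_2)] + (1-\pi_1)^{-1}\var[\bar Y_g(0,0)]$ minus a ``matching-improvement'' term that, after rearrangement using the definition (\ref{eqn:m(S)}), matches (\ref{eqn:V1}).

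The size-weighted estimators $\hat\theta^P_2, \hat\theta^S_2$ are ratios, so I would add a delta-method step. Setting $A_z = G^{-1}\sum_g I\{H_g = h_z\} N_g \bar Y_g(z,h_z)$ and $B_z = G^{-1}\sum_g I\{H_g = h_z\} N_g$ with $h_1 = \pi_2,\ h_0 = 0$, one has $B_z \xrightarrow{P} \pi_z E[N_g]$ by a standard law of large numbers, and Taylor expansion around the limits yields
\begin{equation*}
    \sqrt{G}\,(\hat\theta^P_2 - \theta^P_2) = \frac{1}{\sqrt{G}}\sum_{g} \frac{I\{H_g = \pi_2\}\,\tilde Y_g(1,\pi_2)}{\pi_1} - \frac{1}{\sqrt{G}}\sum_{g} \frac{I\{H_g = 0\}\,\tilde Y_g(0,0)}{1 - \pi_1} + o_P(1)~,
\end{equation*}
with $\tilde Y_g$ as in (\ref{eqn:Y(z,h)}), and analogously for $\hat\theta^S_2$. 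This again reduces to a cluster-level matched-tuples difference-in-means, and the same CLT delivers (\ref{eqn:V2}).

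The main obstacle I anticipate is verifying the smoothness hypotheses of the matched-tuples CLT in the weighted case: conditional Lipschitz control is needed not only for $E[\bar Y_g(z,h)\mid S_g]$ but for $E[\bar Y_g(z,h) N_g^\ell \mid S_g]$ with $\ell \in \{0,1,2\}$, which is precisely what Assumption \ref{as:Q_G-lip} is engineered to provide. A secondary subtlety is that $\bar Y_g(z,h)$ is itself stochastic through the second-stage design---in particular the denominators $M_g^z$ are random---so one must check, via the tower property together with Assumption \ref{as:assignment2}(b)-(c) and Assumption \ref{ass:Q_G}(e)-(f), that the relevant conditional moments of $\bar Y_g(z,h)$ inherit the bounds on the underlying potential outcomes and that $\{\bar Y_g(z,h)\}_g$ retains its i.i.d.\ structure across clusters under $Q_G$.
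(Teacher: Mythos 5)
Your proposal is correct and follows essentially the same route as the paper: reduce each estimator to a cluster-level difference-in-means of the i.i.d.\ variables $\bar Y_g(z,h)$ (respectively their size-weighted, demeaned versions $\tilde Y_g(z,h)$ via a delta-method linearization of the ratio), then apply a matched-tuples CLT under Assumptions \ref{ass:assignment-match}--\ref{as:close}, with Assumption \ref{as:Q_G-lip} supplying the Lipschitz control on $E[\bar Y_g(z,h)N_g^\ell\mid S_g]$ and Assumptions \ref{ass:Q_G}(e)--(f) and \ref{as:assignment2} handling the randomness of $M_g^z$ and the second stage. The only difference is presentational: the paper proves its own joint six-dimensional CLT (Lemma \ref{lemma:asymptotics-mp}, adapted from the arguments of \cite{bai-inference}) and applies the delta method through the Jacobian of $(x,y,z,w)\mapsto x/y - z/w$, whereas you cite existing matched-tuples CLTs and write the linearization in terms of $\tilde Y_g$ directly; your expansion of $\sqrt{G}(\hat\theta_2^P-\theta_2^P)$ agrees with the paper's.
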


\begin{remark}
Note that the asymptotic variance $V_2(z)$ has the same form as $V_1(z)$, with $\tilde Y_g(z,h)$ replacing $\bar Y_g(z,h)$. Intuitively, $\tilde Y_g(z,h)$ is a demeaned and cluster size weighted version of $\bar Y_g(z,h)$. Moreover, $V_1(z)$ and $V_2(z)$ correspond exactly to the asymptotic variance of the difference-in-means estimator for matched-pair experiments with individual-level ``one-stage'' assignment, as in \cite{bai-inference} and \cite{Bai-optimal}. Additionally, $V_2(z)$ has a similar form to the asymptotic variance in a cluster randomized trial with matched pairs, as derived in \cite{mp-cluster}. In fact, when $\pi_1=1/2$ and $\pi_2 = 1$, my result collapses exactly to theirs.

In a special case where covariate information is not used to construct strata, the asymptotic variance of the estimated equally-weighted effects can be expressed as follows:
\begin{equation}\label{eqn:variance-no-covariate}
    V_1(z) = \frac{1}{\pi_1} \operatorname{Var}\left[\bar{Y}_g(z,\pi_2)\right]+\frac{1}{1-\pi_1} \operatorname{Var}\left[\bar{Y}_{g}(0,0) \right]~,
\end{equation}
which is equivalent to the identifiable parts of the variance derived in \cite{basse2018} under the finite population framework. The asymptotic variance of partial population designs from \cite{Gonzalo-two-stage} is also closely related to (\ref{eqn:variance-no-covariate}) under binary settings. Specifically, \cite{Gonzalo-two-stage} provides an alternative expression of $\operatorname{Var}\left[\bar{Y}_g(z,\pi_2)\right]$ with intra-cluster variances and correlations. Therefore, inference methods based on (\ref{eqn:variance-no-covariate}), including \cite{basse2018} and \cite{Gonzalo-two-stage}, are generally conservative under covariate-adaptive randomization.
\end{remark}

\begin{remark}\label{remark:how-second-stage-affects-variance}
    It's worth noting that the setup of the first-stage design has a clear impact on the asymptotic variance $V_1(z)$, as evidenced by the third term in equation (\ref{eqn:V1}). Furthermore, the second-stage design also influences the asymptotic variance $V_1(z)$, albeit more implicitly, via the distribution of $Z_{i,g}(\pi_2)$. Specifically, the first term in equation (\ref{eqn:V1}) depends on $\operatorname{Var}\left[\bar{Y}_{g}(z,\pi_2)\right]$, which is directly tied to the second-stage design. In contrast, the second and third terms do not depend on $Z_{i,g}(\pi_2)$ (for further details, see Remark \ref{remark:how-second-stage-affects-variance-app}). Thus, the efficacy of designing the first stage versus the second stage can be disentangled into distinct components. More importantly, as I show in Appendix \ref{subsubsec:variance-improvement-bound}, if there exists $M>0$ such that $M_g \geq M$ for all $g$, then the effect of second-stage designs on $V_1(z)$ is $O(1/M)$, while the effect of the first-stage is $O(1)$. This characterization could be beneficial for practitioners seeking to assess the relative importance of first-stage design versus second-stage design in optimizing efficiency gains. If possible, a calibrated simulation study using pilot or observational data can be used to estimate the relative efficiency gain obtained at each stage.
\end{remark}

Theorem \ref{thm:matched-group} implies that covariate information is important to establish asymptotically exact inference for the four estimands of interest under covariate-adaptive randomization. Many empirical studies rely on statistical inference based on the regression in equation (\ref{eqn:ols-full}) with HC2 cluster-robust standard errors. While this procedure is also proposed in \cite{basse2018} and \cite{Gonzalo-two-stage}, the regression coefficients it produces generally do not provide consistent estimates for the estimands in Table \ref{table:estimands}. As discussed in Section \ref{subsec:estimand-estimator}, if all units in each cluster are sampled ($N_g = M_g$) or the number of sampled units is proportional to cluster size ($M_g/N_g=c$ for $0< c < 1$), this procedure yields consistent point estimates for size-weighted effects but may still be conservative (see Appendix \ref{sec:app-wols}). Therefore, I aim to develop asymptotically exact inference methods based on my theoretical results.

To begin with, I introduce consistent variance estimators for the asymptotic variances from Theorem \ref{thm:matched-group}. To estimate $V_1(z)$ and $V_2(z)$, I follow the construction of ``pairs of pairs'' in \cite{bai-inference} and \cite{matched-tuple}, and replace the individual outcomes with the averaged outcomes $\bar Y_g^{z}$ (as defined in Section \ref{subsec:estimand-estimator}) and adjusted averaged outcomes $\tilde{Y}_g^z$ , respectively. The definition of the adjusted average outcomes is given as follows:
\begin{align*}
   \tilde{Y}_g^z = \frac{N_{g}}{ \frac{1}{G}\sum_{1\leq g \leq G} N_g }\left(\bar{Y}_{g}^z-\frac{ \frac{1}{G_g} \sum_{1\leq j \leq G} \bar{Y}_{j}^z I\{H_g = H_j\} N_j}{\frac{1}{G} \sum_{1\leq j \leq G} N_{j}} \right)~,
\end{align*}
where $G_g = \sum_{1\leq j \leq G} I\{H_g = H_j\}$. Here, I present the construction of variance estimator $\hat V_1(z)$ for $V_1(z)$. Similarly, $\hat V_2(z)$ can be constructed by simply replacing $\bar Y_{g}^z$ with $\tilde Y_{g}^z$, and thus details are omitted. Let $\hat \Gamma^z_n(h) = \frac{1}{n k(h)} \sum_{1\leq g \leq G} \bar Y_g^z I \{H_g = h\}$ where $k(h) = \sum_{i\in\lambda_j} I\{H_i=h\}$ denotes the number of units under assignment $H_i = h$ in the $j$-th strata. In the setup of binary treatment, it becomes that $k(\pi_2)=l$ and $k(0)=k-l$. Finally, my estimator for $V_1(z)$ is then given by
\begin{align}\label{eqn:Vhat1}
    \hat V_1(z) = \frac{1}{\pi_1} \hat{\mathbb{V}}_{1,n}^z(\pi_2) + \frac{1}{1-\pi_1} \hat{\mathbb{V}}_{1,n}^z(0) + \hat{\mathbb{V}}_{2,n}^z(\pi_2, \pi_2) + \hat{\mathbb{V}}_{2,n}^z(0, 0) - 2 \hat{\mathbb{V}}_{2,n}^z(\pi_2, 0)
\end{align}
with
\begin{align*}
&\hat{\mathbb{V}}_{1,n}^z(h)= \hat{\mathbb{E}}\left[\operatorname{Var}\left[\bar Y_{g}(z,h) \mid S_g \right]\right] := \left(\hat\sigma^{z}_n(h)\right)^2 - (\hat{\rho}_n^z(h) - (\hat{\Gamma}_n^z(h))^2 )\\
&\hat{\mathbb{V}}_{2,n}^z(h, h') = \hat{\operatorname{Cov}}\left[ E\left[\bar Y_{g}(z,h) \mid S_g\right], E\left[\bar Y_{g}(z, h') \mid S_g\right] \right]  := \hat{\rho}_n^z(h, h') - \hat{\Gamma}_n^z(h)\hat{\Gamma}_n^z(h')~,
\end{align*}
where 
\begin{align*}
\hat \rho_n^z(h) &:= \frac{2}{n} \sum_{1 \leq j \leq \lfloor n / 2 \rfloor} \frac{1}{k^2(h)}\Big ( \sum_{i \in \lambda_{2j-1}} \bar Y_i^z I \{H_i = h\} \Big ) \Big ( \sum_{i \in \lambda_{2j}} \bar Y_i^z I \{H_i = h\} \Big )  \\
\hat \rho_n^z(h, h^\prime) &:= \frac{1}{n} \sum_{1 \leq j \leq n} \frac{1}{l(k-l)} \Big ( \sum_{i \in \lambda_j} \bar Y_i^z I \{H_i = h\} \Big ) \Big ( \sum_{i \in \lambda_j} \bar Y_i^z I \{H_i = h^\prime\} \Big )  \\
\left(\hat\sigma^{z}_n(h)\right)^2 &:= \frac{1}{n k(h)} \sum_{1 \leq g \leq G} (\bar Y_g^z - \hat \Gamma_n^z(h))^2 I \{H_g = h\}~.
\end{align*} 

Based on the variance estimators, I propose the ``adjusted'' $t$-test with the aforementioned variance estimators as my method of inference throughout the rest of the paper. As an example, the ``adjusted'' $t$-test for equally-weighted primary effect, i.e. $H_0: \theta_1^P(Q_G) = \theta_{0}$, is given by
\begin{equation}\label{eqn:adjust-t-test}
    \phi_G(V^{(G)})=I\left\{\left|\sqrt{G}\left(\hat{\theta}_1^P-\theta_{0}\right)/\hat{V}_1(1)\right|>z_{1-\frac{\alpha}{2}}\right\}~,
\end{equation}
where $z_{1-\frac{\alpha}{2}}$ represents $1-\frac{\alpha}{2}$ quantile of a standard normal random variable.

The subsequent analysis yields consistency results for the estimators $\hat V_1(z)$ and $\hat V_2(z)$ and validity results for the adjusted $t$-test:
\begin{theorem}\label{thm:variance-estimator-mt}
Suppose Assumption \ref{as:interference} holds, \(Q_G\) satisfies Assumptions \ref{ass:Q_G} and \ref{as:Q_G-lip}, and the treatment assignment mechanism satisfies Assumptions \ref{ass:assignment-match}-\ref{as:assignment2}. Then, as \(n \to \infty\), \(\hat V_1(z) \xrightarrow{P} V_1(z)\) and \(\hat V_2(z) \xrightarrow{P} V_2(z)\) for \(z \in \{0,1\}\). As a consequence, provided that \(V_1(z) > 0\) and \(V_2(z) > 0\) for \(z \in \{0,1\}\), \(\phi_G(V^{(G)})\) satisfies
\begin{equation*}
    \lim_{n \rightarrow \infty} E\left[\phi_G(V^{(G)})\right] = \alpha ~,
\end{equation*}
under the null hypothesis and a significance level \(\alpha \in (0,1)\).
\end{theorem}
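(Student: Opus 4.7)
The plan is to first establish $\hat V_1(z) \xrightarrow{P} V_1(z)$ by showing convergence of each building block $\hat\Gamma_n^z(h)$, $(\hat\sigma_n^z(h))^2$, $\hat\rho_n^z(h,h)$, and $\hat\rho_n^z(\pi_2,0)$ separately, and then derive $\hat V_2(z) \xrightarrow{P} V_2(z)$ via an analogous decomposition with $\bar Y_g^z$ replaced by $\tilde Y_g^z$. The assertion about $\phi_G^{\text{adj}}$ will then follow from Theorem \ref{thm:matched-group}, the variance convergence, and Slutsky's theorem together with the continuous mapping theorem applied to the studentized statistic in (\ref{eqn:adjust-t-test}); so this last step is routine once consistency of the variance estimators is in hand.

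For consistency of $\hat V_1(z)$, I would first note that, by Assumption \ref{as:interference} and the definition of potential treatments in (\ref{eqn:potential_treatment}), $\bar Y_g^z I\{H_g=\pi_2\} = \bar Y_g(z,\pi_2) I\{H_g=\pi_2\}$ and $\bar Y_g^z I\{H_g=0\} = \bar Y_g(0,0) I\{H_g=0\}$ almost surely. The simple sums $\hat\Gamma_n^z(h)$ and $(\hat\sigma_n^z(h))^2$ then converge in probability to $E[\bar Y_g(z,h)]$ and $\mathrm{Var}[\bar Y_g(z,h)]$ by a conditional (on $S^{(G)}$ and $H^{(G)}$) weak law for i.i.d.\ clusters, using Assumptions \ref{ass:Q_G} and \ref{as:Q_G-lip}(b) for moment control and Assumption \ref{as:assignment2}(c) to absorb the second-stage randomness entering $M_g^z$. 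The heart of the argument is showing
\[
\hat\rho_n^z(h,h) \xrightarrow{P} E\bigl[E[\bar Y_g(z,h)\mid S_g]^2\bigr]\quad\text{and}\quad \hat\rho_n^z(\pi_2,0) \xrightarrow{P} E\bigl[E[\bar Y_g(z,\pi_2)\mid S_g]\,E[\bar Y_g(0,0)\mid S_g]\bigr].
\]
For the first, I would follow the ``pairs of pairs'' strategy of \cite{bai-inference} and \cite{matched-tuple}: Assumption \ref{ass:variance-estimator} ensures that tuples $\lambda_{2j-1}$ and $\lambda_{2j}$ have $S$-values close in mean-square, so one may replace each within-tuple average by its conditional mean $E[\bar Y_g(z,h)\mid S_g]$ at the cost of a Lipschitz remainder controlled by Assumption \ref{as:Q_G-lip}(a), and then close the argument with a Chebyshev bound. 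The limit for $\hat\rho_n^z(\pi_2,0)$ is obtained by the same template applied within each tuple $\lambda_j$, using the within-tuple closeness in Assumption \ref{as:close}. Combining yields $\hat{\mathbb V}_{1,n}^z(h)\xrightarrow{P} E[\mathrm{Var}[\bar Y_g(z,h)\mid S_g]]$ and $\hat{\mathbb V}_{2,n}^z(h,h')\xrightarrow{P}\mathrm{Cov}[E[\bar Y_g(z,h)\mid S_g], E[\bar Y_g(z,h')\mid S_g]]$; assembling with the weights in (\ref{eqn:Vhat1}) and applying the variance-decomposition identity recovers $V_1(z)$ in (\ref{eqn:V1}) exactly. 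The conclusion for $\hat V_2(z)$ reduces to the same template with $\tilde Y_g^z$, once one verifies—via Assumption \ref{ass:Q_G}(c) and another conditional WLLN—that the empirical normalizations $\frac{1}{G}\sum_g N_g$ and $\frac{1}{G_g}\sum_j \bar Y_j^z N_j I\{H_g=H_j\}$ may be replaced by $E[N_g]$ and $E[\bar Y_g(z,h)N_g]/E[N_g]$ uniformly in the sums that define the building blocks.

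The main technical obstacle is handling $\hat\rho_n^z(h,h)$ in the present setting: unlike standard matched-pairs, the ``outcomes'' $\bar Y_g^z$ are themselves cluster-level averages with a random denominator $M_g^z$ tied to the second-stage randomization. Controlling cross-terms of the form $E[\bar Y_i^z \bar Y_k^z \mid S^{(G)}, H^{(G)}]$ across distinct tuples requires conditioning on $(S^{(G)}, (C_g, N_g, X_g, B_g): 1\le g\le G)$ and exploiting the independence and marginal-probability properties of Assumption \ref{as:assignment2}, together with the uniform moment bounds in Assumptions \ref{ass:Q_G}(d) and \ref{as:Q_G-lip}(b), to verify that these cross-terms behave like $E[\bar Y_g(z,h)\mid S_g]E[\bar Y_{g'}(z,h)\mid S_{g'}]$ up to a $o_P(1)$ remainder after Lipschitz approximation. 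Once this is in place, combining $\hat V_j(z)\xrightarrow{P} V_j(z)$ with Theorem \ref{thm:matched-group} via Slutsky's theorem delivers the stated asymptotic size for $\phi_G^{\text{adj}}$.
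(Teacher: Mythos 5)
Your proposal is correct and follows essentially the same route as the paper: the paper likewise decomposes $\hat V_1(z)$ into the building blocks $\hat\Gamma_n^z(h)$, $(\hat\sigma_n^z(h))^2$, $\hat\rho_n^z(h,h)$, and $\hat\rho_n^z(\pi_2,0)$, proves each limit by computing conditional expectations given $S^{(G)}$, invoking the Lipschitz condition in Assumption \ref{as:Q_G-lip}(a) together with the closeness conditions in Assumptions \ref{as:close} and \ref{ass:variance-estimator}, and then concentrating around the conditional mean (Lemmas \ref{lemma:var-mp-1}--\ref{lemma:var-mp-3}), handles $\hat V_2(z)$ by showing the empirical normalizations in $\tilde Y_g^z$ can be replaced by their population counterparts (Lemmas \ref{lemma:var-mp-size-1}--\ref{lemma:var-mp-size-3}), and obtains the size result from Theorem \ref{thm:matched-group} via Slutsky. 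The only cosmetic difference is that the paper closes the concentration step by citing the uniform-integrability argument of Lemma C.3 in \cite{matched-tuple} rather than a direct Chebyshev bound, but given the bounded conditional moments from Lemma \ref{lem:E_bounded} either finishes the argument.
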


\begin{remark}\label{remark:large-strata-variance-estimator}
    Although the ``pairs of pairs'' variance estimators were initially developed for matched-pair designs (see \cite{abadie2008} and \cite{bai-inference}), they are consistent under both Stratified Block Randomization (SBR) and complete randomization. Nevertheless, it is beneficial to explore alternative variance estimators inspired by the ``large strata'' asymptotic framework proposed by \cite{Bugni2018}. This framework involves a fixed number of large strata where the number of units within each stratum grows indefinitely. Formal results for two-stage experiments under covariate-adaptive randomization using this framework are detailed in Appendix \ref{sec:large-strata}. Notably, I construct variance estimators \(\hat{V}_3(z)\) and \(\hat{V}_4(z)\) (detailed in equations (\ref{eqn:V3(z)hat}) and (\ref{eqn:V4(z)hat}) in Appendix \ref{sec:large-strata}) that are particularly well-suited for large strata experiments such as SBR (see Remark \ref{remark:compare-variance-app}). This framework also enables an analytical examination of a broader spectrum of experimental designs, including Efron's biased coin design and other sequential randomizations.
\end{remark}

Note that the variance estimator $\hat V_1(z)$ (or $\hat V_2(z)$) depends on the assignment mechanism in the first stage through the strata indicator $S_g$, but not on the assignment mechanism in the second stage. This means that valid statistical inference based on $\phi_G(V^{(G)})$ does not require knowledge of the assignment mechanism in the second stage. We can see this by observing that the first term in equations (\ref{eqn:V1}), which is the only term affected by the second-stage design, can be consistently estimated by the first term in equation (\ref{eqn:Vhat1}). My approach leverages the cluster-level averaged outcomes and benefits from large samples of clusters, without explicitly modeling intra-cluster correlations as done in the previous literature \citep[see, for example,][]{Gonzalo-two-stage}.

\section{Optimal Stratification for Two-stage Designs}\label{sec:optimality}
In this section, I introduce two optimality results related to two-stage randomized experiments, as discussed in Sections \ref{sec:main-results}. The first result provides insights into the optimal design for the initial stage, while the second addresses the optimal design for the second stage, taking into account additional assumptions about the assignment mechanism and covariance among unit outcomes within clusters. These findings indicate that particular finely stratified designs maximize statistical precision when estimating parameters outlined in Table \ref{table:estimands}.

First, I present a result that identifies the optimal functions for matching in the first-stage, targeting various parameters of interest.
\begin{theorem}\label{thm:optimal-first-stage}
$V_1(z)$ is minimized when $S_g=E\left[  \frac{\bar{Y}_g(z,\pi_2)}{\pi_1} + \frac{\bar{Y}_g(0,0) }{1-\pi_1}\mid C_g, N_g \right]$. Meanwhile, $V_2(z)$ is minimized when $S_g= E\left[\frac{\tilde Y_g(z,\pi_2)}{\pi_1} + \frac{\tilde Y_g(0,0)}{1-\pi_1} \mid C_g,N_g\right] $.
\end{theorem}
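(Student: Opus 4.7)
The plan is to note that, in each of the two asymptotic variance formulas, the matching function $S_g$ enters through exactly one term, and then to exploit the law of total variance to identify the minimizer over the class of admissible $S_g$.

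First, I would isolate the $S_g$-dependent part of $V_1(z)$. Define the linear combination $W_g(z) = \frac{1}{\pi_1}\bar Y_g(z,\pi_2) + \frac{1}{1-\pi_1}\bar Y_g(0,0)$. By linearity of conditional expectation, $\frac{1}{\pi_1}m_{z,\pi_2}(S_g) + \frac{1}{1-\pi_1}m_{0,0}(S_g) = E[W_g(z)\mid S_g] - E[W_g(z)]$, so the third term of (\ref{eqn:V1}) equals $-\pi_1(1-\pi_1)\,\var(E[W_g(z)\mid S_g])$. Since the first two terms in (\ref{eqn:V1}) do not involve $S_g$, minimizing $V_1(z)$ is equivalent to maximizing $\var(E[W_g(z)\mid S_g])$ over admissible choices of $S_g$.

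Next, because $S_g = S(C_g, N_g)$ is by construction a function of $(C_g, N_g)$, the tower property gives $E[W_g(z)\mid S_g] = E\bigl[E[W_g(z)\mid C_g, N_g]\,\bigm|\, S_g\bigr]$. Applying the $L^2$ contraction property of conditional expectation (equivalently, the law of total variance applied to $E[W_g(z)\mid C_g, N_g]$) yields $\var(E[W_g(z)\mid S_g]) \le \var(E[W_g(z)\mid C_g, N_g])$, with equality iff $E[W_g(z)\mid C_g, N_g]$ is $\sigma(S_g)$-measurable. Setting $S_g = E[W_g(z)\mid C_g, N_g]$ trivially satisfies this condition and therefore attains the minimum of $V_1(z)$, which proves the first claim.

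The argument for $V_2(z)$ is identical after replacing $\bar Y_g(\cdot,\cdot)$ with $\tilde Y_g(\cdot,\cdot)$: letting $\tilde W_g(z) = \frac{1}{\pi_1}\tilde Y_g(z,\pi_2) + \frac{1}{1-\pi_1}\tilde Y_g(0,0)$, the $S_g$-dependent part of $V_2(z)$ is $-\pi_1(1-\pi_1)\,\var(E[\tilde W_g(z)\mid S_g])$, and the same $L^2$ contraction argument identifies $S_g = E[\tilde W_g(z)\mid C_g, N_g]$ as optimal.

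There is no real technical obstacle here beyond recognizing the rewrite of the $S_g$-dependent term as the variance of a single conditional expectation; once that is done the result is a one-line application of the law of total variance. The subtlety worth flagging in the write-up is that the optimizer is not unique---any $S_g$ for which $\sigma(S_g)$ contains $E[W_g(z)\mid C_g, N_g]$ (respectively $E[\tilde W_g(z)\mid C_g, N_g]$) achieves the minimum---but the one-dimensional representative stated in the theorem is appealing because dimension reduction eases the ``close matching'' requirements of Assumptions \ref{as:close} and \ref{ass:variance-estimator}, so the optimizer lies within the admissible class under mild regularity on the conditional-mean function (consistent with Assumption \ref{as:Q_G-lip}(a)).
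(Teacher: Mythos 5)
Your proposal is correct and follows essentially the same route as the paper: isolate the single $S_g$-dependent term, rewrite it as the (co)variance of a conditional expectation of the index $\frac{1}{\pi_1}\bar Y_g(z,\pi_2)+\frac{1}{1-\pi_1}\bar Y_g(0,0)$ (resp.\ its $\tilde Y$ analogue), and apply the $L^2$ projection inequality $E[g^2]\ge E[E[g\mid S_g]^2]$ via the tower property, which is exactly the orthogonality decomposition the paper writes out for the size-weighted case (and delegates to Theorem C.2 of \cite{bai-inference} for the equally-weighted case). Your closing remark on non-uniqueness and admissibility of the optimizer is a reasonable observation that the paper also leaves implicit.
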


A direct implication of Theorem \ref{thm:optimal-first-stage} is that it characterizes the optimal functions to match on within the class of finely stratified designs. These functions are referred to as ``index function'' in \cite{Bai-optimal}. As noted in Remark \ref{remark:finely-stratified-framework}, when $S_g$ is categorical, finely stratified designs correspond to stratified block randomization, which implies that the optimal finely stratified designs is also asymptotically optimal among all large strata designs described in Appendix \ref{sec:large-strata}. It is important to note that when discussing the optimal design for the first stage, we are comparing different first-stage designs for any fixed second-stage design (and vice versa for the second-stage design optimality).

\begin{remark}\label{remark:first-stage-optimality}
    Based on the optimality results in Theorem \ref{thm:optimal-first-stage}, I recommend choosing covariates for matching based on the parameters of interest. For example, for the size-weighted estimands \(\theta_2^P\) and \(\theta_2^S\), matching on \(N_g\) is essential in experiments with widely varying cluster sizes, as this will be highly predictive of the scale of the outcome \(\tilde{Y}_g(z,h)\). When cluster covariates \(C_g\) are aggregations of individual-level covariates \(X_{ig}\), it may be beneficial to consider whether to match on averages or sums. For instance, matching on \(C_g = \sum_{i=1}^{N_g} X_{ig}\) might be appropriate for \(\theta_2^P\) and \(\theta_2^S\), while using the normalized mean \(C_g = N_g^{-1} \sum_{i=1}^{N_g} X_{ig}\) might be better for other estimands.
\end{remark}

The subsequent discussion examines the optimality of finely stratified designs in the second stage of the experiment. The second-stage randomization is formalized in the following assumptions.
\begin{assumption}\label{ass:assignment-crd}
For $1 \leq g \leq G$, units within a given stratum, denoted by $\lambda_b = \{i \in \mathcal{M}_g: B_i = b\}$ for $b \in \mathcal{B}$, are assigned with treatment $(Z_{i,g}(\pi_2): i \in \lambda_b)$ that is uniformly distributed over $\{z \in \{0,1\}^{|\lambda_b|}:\sum_{j\in\lambda_b} z_j =  \lfloor \pi_2|\lambda_b|\rfloor \}$ and i.i.d across $b \in \mathcal{B}$.
\end{assumption}

Additionally, I assume that the covariance of outcomes between any pairs of units within a cluster is homogeneous. In other words, the covariance does not depend on the individual-level covariates of units in the same cluster. Formally, the assumption is stated as follows:
\begin{assumption}\label{ass:cond-indep}
For $z \in \{0,1\}$, $ 1 \leq i\neq j\leq N_g$, 
\begin{equation}
    \operatorname{Cov}\left[ Y_{i,g}(z,\pi_2), Y_{j,g}(z,\pi_2)\mid (X_{i,g}:1\leq i \leq N_g)\right] = \operatorname{Cov}\left[ Y_{i,g}(z,\pi_2), Y_{j,g}(z,\pi_2)\right]  ~.
\end{equation}
\end{assumption}
Assumption \ref{ass:cond-indep} is a weaker assumption than assuming that outcomes of units are independent and identically distributed (i.i.d) within a cluster, as it only requires conditional independence between individual covariates and the covariance of outcomes. It is analogous to the standard homoscedasticity assumption, which assumes constant variance of errors in a regression model, except that it is a statement about covariance instead of variance. Under these two additional assumptions I obtain the following optimality result:
\begin{theorem}\label{thm:optimal-second-stage}
Under Assumption \ref{as:assignment2}, \ref{ass:assignment-crd} and \ref{ass:cond-indep}, $V_a(z)$ is minimized when the second-stage design is a finely stratified design that matches on $E\left[Y_{i,g}(z,\pi_2)\mid (X_{i,g}:1\leq i \leq N_g)\right]$ for $z \in \{0,1\}$ and $a\in\{1,2,3,4\}$.
\end{theorem}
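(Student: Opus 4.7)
The plan is to reduce the optimization over second-stage designs to a within-cluster sum-of-squares objective, and then apply a Bai-type matching argument to identify the optimal index. The starting point is Remark~\ref{remark:how-second-stage-affects-variance} and its analogue for $V_2(z)$: of the three summands in (\ref{eqn:V1}) and (\ref{eqn:V2}), only the first, $\pi_1^{-1}\operatorname{Var}[\bar Y_g(z,\pi_2)]$ (resp.\ $\pi_1^{-1}\operatorname{Var}[\tilde Y_g(z,\pi_2)]$), depends on the second-stage design, because the other terms involve $\bar Y_g(0,0)$ or $E[\bar Y_g(z,\pi_2)\mid S_g]$, both preserved across second-stage designs under Assumption~\ref{as:assignment2}(c). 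It therefore suffices to minimize this single variance.

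I would then apply the law of total variance, conditioning on $\mathcal{F}_g := ((Y_{i,g}(z,h): z,h), X_g, \mathcal{M}_g, C_g, N_g)$:
\begin{equation*}
\operatorname{Var}[\bar Y_g(z,\pi_2)] = E\bigl[\operatorname{Var}[\bar Y_g(z,\pi_2)\mid \mathcal{F}_g]\bigr] + \operatorname{Var}\bigl[E[\bar Y_g(z,\pi_2)\mid \mathcal{F}_g]\bigr].
\end{equation*}
Because $B_g$ is a function of $X_g$ and Assumption~\ref{as:assignment2}(c) gives $E[Z_{i,g}(\pi_2)\mid \mathcal{F}_g]=\pi_2$, the inner conditional mean collapses to a deterministic function of the potential outcomes and is therefore invariant to the second-stage design. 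Only the first summand needs to be minimized.

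Under Assumption~\ref{ass:assignment-crd}, the within-stratum assignment is simple random sampling without replacement, so the finite-population variance formula yields
\begin{equation*}
\operatorname{Var}[\bar Y_g(1,\pi_2)\mid \mathcal{F}_g] = \frac{\pi_2(1-\pi_2)}{(M_g^1)^2}\sum_{b\in\mathcal{B}}\frac{|\lambda_b|}{|\lambda_b|-1}\sum_{i\in\lambda_b}\bigl(Y_{i,g}(1,\pi_2)-\bar Y_{\lambda_b}\bigr)^2,
\end{equation*}
with an analogous expression for $z=0$. Rewriting $\sum_{i\in\lambda_b}(Y_{i,g}-\bar Y_{\lambda_b})^2 = (2|\lambda_b|)^{-1}\sum_{i,j\in\lambda_b}(Y_{i,g}-Y_{j,g})^2$ and taking expectations, Assumption~\ref{ass:cond-indep} guarantees that $\operatorname{Cov}[Y_{i,g}, Y_{j,g}\mid X_g]$ does not depend on $X_g$, so the decomposition
\begin{equation*}
E\bigl[(Y_{i,g}-Y_{j,g})^2\bigr] = E\bigl[\operatorname{Var}[Y_{i,g}\mid X_g]+\operatorname{Var}[Y_{j,g}\mid X_g]\bigr]-2\operatorname{Cov}[Y_{i,g},Y_{j,g}]+E\bigl[(m_i(X_g)-m_j(X_g))^2\bigr]
\end{equation*}
isolates the only design-dependent term, $E[(m_i(X_g)-m_j(X_g))^2]$, where $m_i(X_g):=E[Y_{i,g}(z,\pi_2)\mid X_g]$.

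The final step is the within-cluster analog of the Bai matching argument used in Theorem~\ref{thm:optimal-first-stage}: among all functions $B(\cdot)$ defining the within-cluster stratification, the summed squared difference $E[(m_i(X_g)-m_j(X_g))^2]$ across matched units is minimized by matching units whose values of $m_i(X_g)$ are closest, i.e., by matching on $E[Y_{i,g}(z,\pi_2)\mid X_g]$; an $L^2$-projection (Jensen) step shows that no other feature of $X_g$ can improve upon this conditional-mean match. The argument for $V_2(z)$ is identical after replacing $\bar Y_g$ with $\tilde Y_g$ throughout, since $\tilde Y_g$ is a deterministic rescaling by $N_g/E[N_g]$ that factors through $\mathcal{F}_g$. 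The main obstacle I anticipate is bookkeeping around the rounding $\lfloor \pi_2|\lambda_b|\rfloor$ and the random denominator $M_g^1$: Assumption~\ref{as:assignment2}(c) absorbs these errors in $o_P(1)$, but one must verify they do not accumulate across clusters or disturb the clean decomposition above.
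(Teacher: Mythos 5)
Your proposal is correct and follows essentially the same route as the paper: isolate the first term of $V_a(z)$ as the only design-dependent piece, condition on the covariates and potential outcomes so that the conditional mean of $\bar Y_g(z,\pi_2)$ is design-invariant and only the within-cluster randomization variance remains, use Assumption \ref{ass:cond-indep} to strip the conditional covariances and reduce the objective to the within-stratum dispersion of $E[Y_{i,g}(z,\pi_2)\mid X_g]$, and conclude by the Bai-type index-matching argument (the paper does the same computation by directly expanding $E[(\sum_i Y_{i,g}Z_{i,g})^2]$ with Lemma \ref{lemma:crd-cov} rather than quoting the finite-population variance formula, and explicitly invokes Lemma II.2 of \cite{Bai-optimal} to restrict to equal-sized matched groups, a step you fold into the final matching argument). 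The rounding and covariance-term issues you flag are treated just as informally in the paper, so they are not gaps relative to its own proof.
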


\begin{remark}\label{remark:second-stage-optimality}
    By Theorem \ref{thm:optimal-second-stage}, if outcomes are highly correlated within a cluster, it is advisable to match not only on individual covariates but also on neighbors' covariates. For example, in a school-based experiment, treatments within a school could be assigned by matching on each student’s baseline outcome and the average baseline outcome of their close friends.
\end{remark}

Though practitioners may not have knowledge of the index functions in Theorem \ref{thm:optimal-first-stage} and \ref{thm:optimal-second-stage}, optimal stratification can be determined in some special cases. For instance, in experiments where the first-stage design uses only a univariate covariate $C_g$ \citep[see, e.g.,][]{Ichino2012}, and practitioners expect a monotonic relationship between $S_g$ and $C_g$, the optimal stratification is to order the units by $C_g$ and group adjacent units. Similar results apply to the second-stage design. In more general cases where monotonicity does not hold or the baseline covariates are multivariate, a suitable matching algorithm \citep[see, e.g.,][]{bai-inference, cytrybaum2021} that directly matches on vectors of covariates can be asymptotically as efficient if the sample size is sufficiently large. In cases where the sample size is not sufficiently large, \cite{McKenzie2009} and \cite{Bai-optimal} suggest matching on the baseline outcome, when available. If none of the aforementioned options is available, matching in a sub-optimal way can still be effective, as both \cite{Bai-optimal} and simulation results from Section \ref{sec:simulation} demonstrate that matching units sub-optimally can be more effective than completely randomized designs or some sub-optimal stratified block randomization designs. In this case, it could be useful to consider the recommendations in Remarks \ref{remark:first-stage-optimality} and \ref{remark:second-stage-optimality} for the choice of covariates.

\section{Covariate Adjustment}\label{sec:covariate-adjustment}

In the empirical literature, it is common to match or stratify on a small set of covariates expected to be most predictive of outcomes, and to adjust for additional pre-treatment covariates ex-post. Consequently, this section introduces a linearly covariate-adjusted modification of $\hat\theta_2^P$, the size-weighted primary effect estimator. Adjusted estimators for other estimands follow a similar methodology and are thus omitted for brevity.

To begin, I introduce a new set of baseline covariates \(L_g\) that were not used for treatment assignment. These covariates \(L_g\) may include cluster-level aggregates of individual-level outcomes, such as intracluster means and quantiles. For the remainder of Section \ref{sec:covariate-adjustment}, the assumptions specified in Section \ref{sec:setup} are modified such that \(C_g\) is replaced by \((C_g, L_g)\) throughout. In particular, references to Assumption \ref{ass:Q_G} should now be considered to include \((C_g, L_g)\) instead of \(C_g\). Following this, the treatment status is assigned as follows:

\begin{assumption}\label{ass:assignment-match-covariate-adjustment}
Treatments are assigned so that $(W^{(G)}, L^{(G)}) \independent H^{(G)} | S^{(G)}$ and, conditional on $S^{(G)}$,
\[ \{(I\{H_{i}=\pi_2\}): i \in \lambda_j): 1 \leq j \leq n\} \]
are i.i.d.\ and each uniformly distributed over all permutations of $\left\{z\in \{0,1\}^{k}: \sum_{z=1}^k z_{i}= l \right\}$.
\end{assumption}

I consider a linearly covariate-adjusted estimator based on a set of regressors generated by $C_g, N_g, L_g$. To this end, define $\psi_g = \psi( C_g, N_g, L_g)$, where $\psi: \text{supp}((C_g,N_g,L_g)) \to \mathbf R^p$. We impose the following assumptions on $\psi$:

\begin{assumption} \label{ass:psi}
    The function $\psi$ is such that
    \begin{enumerate}[(a)]
        \item No component of $\psi$ is a constant and $E[\var[\psi_g | S_g]]$ is nonsingular.
        \item $\var[\psi_g] < \infty$.
        \item $E[\psi_g | S_g = s]$, $E[\psi_g \psi_g' | S_g = s]$, and $E[\psi_g \bar Y_g^r(z,h) N_g^\ell | S_g = s]$ for $(z,h) \in \{(0,0), (0, \pi_2), (1,\pi_2)\}$ and $r,\ell \in \{0, 1, 2\}$ are Lipschitz.
        \item For some $C < \infty$, $P \{E[\|\psi_g\|^2 \bar Y_g^2(z,h) | S_g] \leq C\} = 1$ for $(z,h) \in \{(0,0), (0, \pi_2), (1,\pi_2)\}$.
    \end{enumerate}
\end{assumption}

I extend the covariate-adjusted estimator from \cite{mp-cluster} to accommodate the finely stratified design with a general treatment fraction $\pi_1$, as discussed in this paper. Let $\hat\mu_{1,j}$ represent the averaged value of $\tilde Y_g^1 \bar N_G$ among treated clusters within the $j$-th tuple, i.e., $g \in \lambda_j$. Similarly, $\hat\mu_{0,j}$ denotes the corresponding value for control clusters. Additionally, $\hat\psi_{1,j}$ and $\hat\psi_{0,j}$ refer to the averaged values of $\psi_g$ for treated and control clusters, respectively. Formaly, define
\begin{align*}
    \hat\mu_{1,j} &= \frac{1}{l} \sum_{g\in\lambda_j} \tilde Y_g^1 \bar N_G I\{H_g = \pi_2\} \\
    \hat\mu_{0,j} &= \frac{1}{k-l} \sum_{g\in\lambda_j} \tilde Y_g^1 \bar N_G I\{H_g = 0\} \\
    \hat\psi_{1,j} &= \frac{1}{l} \sum_{g\in\lambda_j} \psi_g I\{H_g = \pi_2\} \\
    \hat\psi_{0,j} &= \frac{1}{k-l} \sum_{g\in\lambda_j} \psi_g I\{H_g = 0\} ~,
\end{align*}
where $\bar N_G = \sum_{1\leq g \leq G} N_g/G$. Then, I define the linear adjustment coefficient $\hat\beta_2^P$ as the ordinary least squares (OLS) estimator of the slope coefficient in the linear regression of $\hat\mu_{1,j} - \hat\mu_{0,j}$ on a constant and $\hat\psi_{1,j} - \hat\psi_{0,j}$. Finally, I introduce my covariate-adjusted estimator for the size-weighted primary treatment effect as follows:
\begin{align}\label{eqn:covariate-adjusted-estimator}
    \begin{split}
        \hat \theta_2^{P, adj} &= \frac{1}{N_T} \sum_{1\leq g \leq G} I\{H_g = \pi_2 \} (N_g \bar{Y}_{g}^1 -(\psi_g - \bar \psi_G)' \hat\beta_2^P) \\
        &\hspace{3em} - \frac{1}{N_C} \sum_{1\leq g \leq G}I\{H_g = 0 \} (N_g \bar{Y}_{g}^1 - (\psi_g - \bar \psi_G)' \hat\beta_2^P)~,
    \end{split}
\end{align}
where
\begin{equation*}
    \bar \psi_G = \frac{1}{G} \sum_{1 \leq g \leq G} \psi_g~.
\end{equation*}

The following theorem derives the asymptotic behavior of my covariate-adjusted estimator for $\theta_2^P$, and, importantly, it shows that the limiting variance of $\hat \theta_2^{P, adj}$ is no larger than that of $\hat \theta_2^{P}$ in Theorem \ref{thm:matched-group} and can be strictly smaller.

\begin{theorem} \label{thm:adj}
    Suppose Assumption \ref{as:interference} holds, $Q_G$ satisfies Assumptions \ref{ass:Q_G} and \ref{as:Q_G-lip} and the treatment assignment mechanism satisfies Assumptions \ref{as:close}, \ref{as:assignment2} and \ref{ass:assignment-match-covariate-adjustment}, and $\psi$ satisfies Assumption \ref{ass:psi},
    \[ \sqrt G(\hat \theta_2^{P, adj} - \theta_2^P) \stackrel{d}{\to} N(0, V_2^\ast(1)) \]
    as $G \to \infty$, where
    \begin{align*}
        V_2^*(1) &= \frac{1}{\pi_1} \var[ Y_g^*(1,\pi_2)] + \frac{1}{1-\pi_1} \var[ Y_g^*(0,0)]\\
        &\hspace{3em} - \pi_1(1-\pi_1) E\left[\left(\frac{1}{\pi_1}E[ Y_g^*(1,\pi_2)\mid S_g] + \frac{1}{1-\pi_1}E[ Y_g^*(0,0)\mid S_g] \right)^2\right]
    \end{align*}
    with
    \begin{align*}
        Y_g^\ast(z,h) &= \tilde Y_g(z,h) - \frac{(\psi_g - E[\psi_g])' \beta_2^P}{E[N_g]} \quad \text{ for $(z,h) \in \{(0,0), (1,\pi_2)\}$}
    \end{align*}
    and
    \begin{equation} \label{eq:beta}
        \beta_2^P = \pi_1(1-\pi_1) (E[\var[\psi_g \mid S_g]])^{-1}  E\left[  \cov \left[\frac{1}{\pi_1} \tilde Y_g(1, \pi_2)   + \frac{1}{1-\pi_1}  \tilde Y_g(0,0),  \psi_g\mid S_g\right] \right] E[N_g]~.
    \end{equation}
    Moreover, 
    \begin{equation}\label{eq:kappa}
    V_2^*(1) = V_2(1) - \kappa^2~,
    \end{equation} where
    \[
    \kappa^2 = \frac{1}{\pi_1(1-\pi_1)} \frac{1}{E[N_g]^2} E\left[ \var[ \psi_g' \beta^P_2\mid S_g] \right]~.
    \]
    As a consequence, $V_2^*(1) \le V_2(1)$, with equality if and only if $\kappa^2 = 0$.
\end{theorem}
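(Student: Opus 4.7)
The plan is to decompose the adjusted estimator as $\hat\theta_2^{P, adj} = \hat\theta_2^P - \Delta_G(\hat\beta_2^P)$, where
\begin{equation*}
    \Delta_G(\beta) = \frac{1}{N_1}\sum_{1\leq g\leq G} I\{H_g=\pi_2\}(\psi_g - \bar\psi_G)'\beta - \frac{1}{N_0}\sum_{1\leq g\leq G} I\{H_g=0\}(\psi_g - \bar\psi_G)'\beta~.
\end{equation*}
For any fixed $\beta$, $\hat\theta_2^P - \Delta_G(\beta)$ is the size-weighted primary-effect estimator of Section \ref{subsec:estimand-estimator} applied to the modified outcome $N_g\bar Y_g^1 - (\psi_g - \bar\psi_G)'\beta$. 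Using consistency of $\bar\psi_G$ for $E[\psi_g]$ and $\bar N_G$ for $E[N_g]$, Theorem \ref{thm:matched-group} would then yield $\sqrt{G}(\hat\theta_2^P - \Delta_G(\beta) - \theta_2^P) \xrightarrow{d} \mathcal{N}(0, V_2^*(1;\beta))$, where $V_2^*(1;\beta)$ has exactly the form of (\ref{eqn:V2}) but with $\tilde Y_g(z,h)$ replaced by $Y_g^*(z,h;\beta) := \tilde Y_g(z,h) - (\psi_g - E[\psi_g])'\beta/E[N_g]$. A Slutsky-type argument, combined with consistency of $\hat\beta_2^P$ for $\beta_2^P$, would then deliver the asymptotic-normality statement at $\beta = \beta_2^P$.

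\textbf{Consistency of $\hat\beta_2^P$.} The main technical step is proving $\hat\beta_2^P \xrightarrow{P} \beta_2^P$. By definition,
\begin{equation*}
\hat\beta_2^P = \Big(n^{-1}\textstyle\sum_{j=1}^n (\hat\psi_{1,j} - \hat\psi_{0,j})(\hat\psi_{1,j} - \hat\psi_{0,j})'\Big)^{-1} n^{-1}\textstyle\sum_{j=1}^n (\hat\psi_{1,j} - \hat\psi_{0,j})(\hat\mu_{1,j} - \hat\mu_{0,j})~,
\end{equation*}
up to the recentering terms, which vanish since $E[\hat\psi_{1,j} - \hat\psi_{0,j}]$ is asymptotically zero by within-stratum randomization. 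Under Assumption \ref{ass:assignment-match-covariate-adjustment} and the closeness in Assumption \ref{as:close}, the clusters within each stratum are asymptotically comparable in $S_g$, and the Lipschitz conditions in Assumption \ref{ass:psi}(c) and \ref{as:Q_G-lip}(a) then permit the stratum-level moments to be approximated by conditional expectations/variances given $S_g$. Following the proof techniques of \cite{bai-inference}, \cite{Bai-optimal} and \cite{mp-cluster} and computing the combinatorial constants for completely randomized assignment of $l$ out of $k$ units, the denominator converges to $\frac{1}{\pi_1(1-\pi_1)} E[\var[\psi_g | S_g]]$ and the numerator to $\frac{E[N_g]}{1} E\!\left[\cov\!\left[\tfrac{1}{\pi_1}\tilde Y_g(1,\pi_2) + \tfrac{1}{1-\pi_1}\tilde Y_g(0,0), \psi_g \mid S_g\right]\right]$ up to the $\pi_1(1-\pi_1)$ factor, and the ratio matches (\ref{eq:beta}).

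\textbf{Variance reduction algebra.} For (\ref{eq:kappa}), I would first apply the law of total variance together with $E[\tilde Y_g(z,h)] = 0$ to rewrite (\ref{eqn:V2}) as
\begin{equation*}
V_2(1) = \tfrac{1}{\pi_1} E[\var[\tilde Y_g(1,\pi_2)|S_g]] + \tfrac{1}{1-\pi_1} E[\var[\tilde Y_g(0,0)|S_g]] + E\bigl[(E[\tilde Y_g(1,\pi_2)|S_g] - E[\tilde Y_g(0,0)|S_g])^2\bigr]~,
\end{equation*}
and similarly for $V_2^*(1)$. Since $Y_g^*(z,h) - \tilde Y_g(z,h)$ is the same $S_g$-measurable shift for both $(z,h) \in \{(1,\pi_2),(0,0)\}$, this shift cancels in the third term, which is therefore identical for $V_2(1)$ and $V_2^*(1)$. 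Expanding the first two terms gives
\begin{equation*}
V_2(1) - V_2^*(1) = \tfrac{2}{E[N_g]}(\beta_2^P)' A - \tfrac{1}{\pi_1(1-\pi_1) E[N_g]^2}(\beta_2^P)' M \beta_2^P~,
\end{equation*}
where $A := E\!\left[\cov\!\left[\tfrac{1}{\pi_1}\tilde Y_g(1,\pi_2) + \tfrac{1}{1-\pi_1}\tilde Y_g(0,0), \psi_g \mid S_g\right]\right]$ and $M := E[\var[\psi_g|S_g]]$. Substituting $\beta_2^P = \pi_1(1-\pi_1) E[N_g] M^{-1} A$ reduces both $V_2(1) - V_2^*(1)$ and $\kappa^2$ to $\pi_1(1-\pi_1) A' M^{-1} A$, proving (\ref{eq:kappa}). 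The inequality $V_2^*(1) \le V_2(1)$ and the equality condition then follow from the positive definiteness of $M$ guaranteed by Assumption \ref{ass:psi}(a). The hardest step will be the $\hat\beta_2^P$ consistency argument: although the individual ingredients are borrowed from the matched-tuples literature, handling the group-specific demeaning and size-weighting implicit in the definition of $\tilde Y_g^z$, together with the general $(k,l)$ combinatorics, will require careful bookkeeping that mirrors but does not directly reuse the proof of Theorem \ref{thm:matched-group}.
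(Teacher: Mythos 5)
Your proposal is correct and follows essentially the same route as the paper's own proof: (i) consistency of $\hat\beta_2^P$ by computing stratum-level limits of the Gram matrix and cross-moment under the within-tuple randomization combinatorics, (ii) reduction to Theorem \ref{thm:matched-group} applied to the $\beta_2^P$-adjusted outcomes after showing the substitutions $\hat\beta_2^P \mapsto \beta_2^P$ and $\bar\psi_G \mapsto E[\psi_g]$ cost only $o_P(G^{-1/2})$, and (iii) the law-of-total-variance rewriting plus the projection property of $\beta_2^P$ for \eqref{eq:kappa}. The only point worth making explicit in a full write-up is that your ``Slutsky-type argument'' must use that the between-arm imbalance in $\psi_g$ is $O_P(G^{-1/2})$ (so that it times $\hat\beta_2^P - \beta_2^P = o_P(1)$ is $o_P(G^{-1/2})$), which is exactly how the paper closes that step.
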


\begin{remark}\label{remark:covariate-adjustment1}
The specific motivation for using the OLS estimator \(\hat\beta_2^P\) stems from its ability to improve efficiency. For matched pairs experiments, this style of covariate adjustment was proposed as early as Section 10.6 of \cite{imbensrubin2015}. Under equal allocation (\(\pi_1 = 1/2\)), its optimality was shown independently by \cite{bai2023} and \cite{cytrynbaum2023covariate}. The generalization to unequal treatment probabilities (\(\pi_1 \neq 1/2\)) considered here was first proposed and analyzed in Section 3.4.3 of \cite{cytrynbaum2023covariate}, where their ``Group OLS'' estimator coincides with our estimator in the special case of individual-level experiments (\(N_g = 1\)) and full treatment saturation (\(\pi_2 = 1\)). In \cite{mp-cluster}, this estimator is adapted to cluster randomized trials, resulting in an estimator closely aligned with \(\hat\theta_2^{P, adj}\). Remark 3.5 of \cite{mp-cluster} discusses the technical distinctions in covariate adjustment between individual- and cluster-level experiments.
\end{remark}

For variance estimation, I employ the same methodology as \(\hat{V}_2(z)\) but with a modification: \(\tilde{Y}_g^z\) is replaced by \(\mathring{Y}_g^z = \tilde{Y}_g^z - \frac{(\psi_g - \bar{\psi}_G)' \hat\beta_2^P}{ \frac{1}{G}\sum_{1 \leq g \leq G} N_g }\). The consistency of this variance estimator follows from combining the arguments used to establish Theorem \ref{thm:variance-estimator-mt} and those used to establish Theorem 3.2 in \cite{bai2023}.

\section{Simulations}\label{sec:simulation}
In this section, I illustrate the results presented in Section \ref{sec:main-results} with a simulation study. To begin with, potential outcomes are generated according to the equation:
\begin{equation*}
    Y_{i,g}(z,h) = \mu_{z,h} + \alpha_{z,h} X_{1,i,g}/(X_{2,i,g} + 0.1)  + \beta_{z,h}\left(C_g - \frac{1}{2}\right) + \gamma\left(N_g - 100\right) + \sigma(C_g, N_g) \epsilon_{i,g}~,
\end{equation*}
for $(z,h) \in \{(0,0),(0,\pi_2),(1,\pi_2)\}$, where
\begin{itemize}
    \item $C_g, N_g$ are i.i.d with $C_g \sim \text{Unif}[0,1]$, and $N_g \sim \text{Unif}\{50,\dots,150\}$, which are mutually independent.
    \item $X_{1, i,g} = N_g u_{i,g}/100$, where $u_{i,g}$ are i.i.d $N(0,0.1)$ across $i,g$. $X_{2,i,g}$ are i.i.d $\text{Unif}[0,1]$ across $i,g$.
    \item $\mu_{1,\pi_2} = \mu_{0,\pi_2} + \tau = \mu_{0,0} + \tau + \omega$\footnote{In Table \ref{table:mse} and \ref{table:reject-probs-ols}, $\tau = \omega = 0$. In Table \ref{table:reject-probs-1}, $\tau = \omega = 0$ for $H_0$ and $\tau = \omega = 0.05$ for $H_1$.}, i.e. primary and spillover effects are additive and homogeneous. 
    \item $\sigma(C_g, N_g) = C_g (N_g-100)/100$ and $\epsilon_{i,g} \sim N(0,10)$, which satisfies Assumption \ref{ass:cond-indep}.
\end{itemize}
All simulations are performed with a sample of $200$ clusters, in which all units are sampled, i.e. $N_g=M_g$.

\mycomment{
\begin{table}[ht!]
\centering
\setlength{\tabcolsep}{5pt}
\begin{adjustbox}{max width=0.75\linewidth,center}
\begin{tabular}{lllllllll}
\toprule
First-stage              & Parameter & \textbf{C}    & \textbf{S-2}   & \textbf{S-4}   & \textbf{S-4O} & \textbf{MT-A}  & \textbf{MT-B}  & \textbf{MT-C}   \\
\midrule
\multirow{4}{*}{\textbf{C}}   & $\theta^P_1$ & 1.0000      & 1.0660  & 1.0631 & 0.9687 & 1.0172 & 0.9886 & \textbf{0.9671} \\
 & $\theta^P_2$ & 1.0000      & 1.0675 & 1.0298 & \textbf{0.9586} & 1.0424 & 1.0164 & 0.9598 \\
 & $\theta^S_1$ & 1.0000      & 1.0470  & 1.0574 & \textbf{0.9574} & 1.0283 & 1.0063 & 0.9601 \\
 & $\theta^S_2$ & 1.0000      & 1.0453 & 1.0229 & \textbf{0.9430}  & 1.0455 & 1.0422 & 0.9483 \\
\\
\multirow{4}{*}{\textbf{S-2}}  & $\theta^P_1$ & 0.6480  & \textbf{0.6032} & 0.6568 & 0.6591 & 0.6186 & 0.6699 & 0.6361 \\
 & $\theta^P_2$ & 0.6923 & \textbf{0.6252} & 0.6921 & 0.6671 & 0.6547 & 0.7028 & 0.6395 \\
 & $\theta^S_1$ & 0.6447 & \textbf{0.6055} & 0.6357 & 0.6590  & 0.6237 & 0.6769 & 0.6351 \\
 & $\theta^S_2$ & 0.6844 & \textbf{0.6207} & 0.6659 & 0.6640  & 0.6553 & 0.7030  & 0.6341 \\
\\
\multirow{4}{*}{\textbf{S-4}}  & $\theta^P_1$ & 0.5736 & 0.5605 & 0.5609 & 0.5581 & \textbf{0.5514} & 0.5800   & 0.5571 \\
 & $\theta^P_2$ & 0.5909 & 0.6005 & 0.5932 & 0.5982 & 0.5792 & 0.6152 & \textbf{0.5697} \\
 & $\theta^S_1$ & 0.5762 & 0.5597 & \textbf{0.5411} & 0.5587 & 0.5484 & 0.5728 & 0.5568 \\
 & $\theta^S_2$ & 0.6014 & 0.5962 & 0.5765 & 0.5915 & 0.5686 & 0.6069 & \textbf{0.5676} \\
\\
\multirow{4}{*}{\textbf{S-4O}}  & $\theta^P_1$ & 0.1722 & 0.1780  & 0.1501 & 0.1445 & 0.1487 & 0.1807 & \textbf{0.1432} \\
 & $\theta^P_2$ & 0.2137 & 0.2114 & 0.1931 & 0.1826 & 0.1852 & 0.2121 & \textbf{0.1785} \\
 & $\theta^S_1$ & 0.1698 & 0.1783 & 0.1593 & 0.1472 & 0.1515 & 0.1830  & \textbf{0.1436} \\
 & $\theta^S_2$ & 0.2087 & 0.2160  & 0.2030  & 0.1805 & 0.1869 & 0.2276 & \textbf{0.1759} \\
\\
\multirow{4}{*}{\textbf{MT-A}}  & $\theta^P_1$ & 0.5873 & 0.5195 & 0.5378 & 0.5334 & 0.5255 & 0.5755 & \textbf{0.5190}  \\
 & $\theta^P_2$ & 0.6271 & \textbf{0.5451} & 0.5791 & 0.5876 & 0.5528 & 0.6339 & 0.5500   \\
 & $\theta^S_1$ & 0.5692 & 0.5308 & 0.5445 & 0.5344 & 0.5161 & 0.5261 & \textbf{0.5118} \\
 & $\theta^S_2$ & 0.6066 & 0.5448 & 0.5772 & 0.5830  & \textbf{0.5370}  & 0.5835 & 0.5416 \\
\\
\multirow{4}{*}{\textbf{MT-B}}  & $\theta^P_1$ & 0.5219 & 0.5117 & 0.4871 & 0.5410  & 0.5081 & 0.4824 & \textbf{0.4585} \\
 & $\theta^P_2$ & 0.5201 & 0.5057 & 0.5079 & 0.5484 & 0.5214 & 0.4917 & \textbf{0.4644} \\
 & $\theta^S_1$ & 0.5244 & 0.5137 & 0.4746 & 0.5257 & 0.5105 & 0.4696 & \textbf{0.4558} \\
 & $\theta^S_2$ & 0.5320  & 0.5032 & 0.4903 & 0.5324 & 0.5181 & 0.4868 & \textbf{0.4590}  \\
\\
\multirow{4}{*}{\textbf{MT-C}}  & $\theta^P_1$ & 0.0688 & 0.0621 & 0.0662 & 0.0565 & 0.0587 & 0.0758 & \textbf{0.0538} \\
 & $\theta^P_2$ & 0.0926 & 0.0876 & 0.0889 & 0.0802 & 0.0784 & 0.1000    & \textbf{0.0761} \\
 & $\theta^S_1$ & 0.0704 & 0.0671 & 0.0661 & 0.0581 & 0.0599 & 0.0777 & \textbf{0.0554} \\
 & $\theta^S_2$ & 0.0911 & 0.0903 & 0.0891 & 0.0819 & 0.0798 & 0.1043 & \textbf{0.0764} \\
\bottomrule
\end{tabular}
\end{adjustbox}
\caption{Ratio of MSE under all designs against those under complete randomization in both stages}
\label{table:mse}
\end{table}}

\begin{table}[ht!]
\centering
\setlength{\tabcolsep}{5pt}
\begin{adjustbox}{max width=0.75\linewidth,center}
\begin{tabular}{lllllllll}
\toprule
 & & \multicolumn{7}{c}{Second-stage}   \\\cmidrule{3-9}
First-stage              & Parameter & \textbf{C}    & \textbf{S-2}   & \textbf{S-4}   & \textbf{S-4O} & \textbf{MT-A}  & \textbf{MT-B}  & \textbf{MT-C}   \\
\midrule
\multirow{4}{*}{\textbf{C}}    & $\theta^P_1$ & 1.0000 & 0.9601 & 0.9270 & 0.9235 & 0.9720 & 0.9323 & \textbf{0.9106} \\
                               & $\theta^P_2$ & 1.0000 & 0.9803 & 0.9404 & \textbf{0.9263} & 0.9939 & 0.9573 & 0.9560 \\
                               & $\theta^S_1$ & 1.0000 & 0.9625 & 0.9187 & 0.9197 & 0.9649 & 0.9410 & \textbf{0.9093} \\
                               & $\theta^S_2$ & 1.0000 & 0.9921 & 0.9432 & \textbf{0.9209} & 0.9875 & 0.9709 & 0.9596 \\
                               &              &        &        &        &                 &        &        &        \\
\multirow{4}{*}{\textbf{S-2}}  & $\theta^P_1$ & 0.8437 & 0.7866 & 0.7859 & \textbf{0.7629} & 0.8473 & 0.7981 & 0.7957 \\
                               & $\theta^P_2$ & 0.8227 & 0.7601 & 0.7877 & \textbf{0.7440} & 0.8361 & 0.7880 & 0.7672 \\
                               & $\theta^S_1$ & 0.8396 & 0.7913 & 0.7754 & \textbf{0.7534} & 0.8473 & 0.8052 & 0.7943 \\
                               & $\theta^S_2$ & 0.8244 & 0.7790 & 0.7806 & \textbf{0.7438} & 0.8456 & 0.7904 & 0.7693 \\
                               &              &        &        &        &                 &        &        &        \\
\multirow{4}{*}{\textbf{S-4}}  & $\theta^P_1$ & 0.7772 & 0.8084 & 0.7730 & 0.7835 & 0.7603 & \textbf{0.7216} & 0.7262 \\
                               & $\theta^P_2$ & 0.7759 & 0.7757 & 0.7330 & 0.7473 & 0.7114 & \textbf{0.6909} & 0.7024 \\
                               & $\theta^S_1$ & 0.7711 & 0.8053 & 0.7656 & 0.7749 & 0.7556 & 0.7357 & \textbf{0.7283} \\
                               & $\theta^S_2$ & 0.7773 & 0.7848 & 0.7330 & 0.7482 & 0.7204 & 0.7100 & \textbf{0.7091} \\
                               &              &        &        &        &                 &        &        &        \\
\multirow{4}{*}{\textbf{S-4O}}  & $\theta^P_1$ & 0.2104 & 0.2102 & 0.2026 & \textbf{0.2010} & 0.2172 & 0.2115 & 0.2035 \\
                               & $\theta^P_2$ & 0.2418 & 0.2428 & 0.2371 & 0.2285 & 0.2339 & 0.2494 & \textbf{0.2241} \\
                               & $\theta^S_1$ & 0.2081 & 0.2136 & 0.2028 & \textbf{0.2002} & 0.2158 & 0.2221 & 0.2004 \\
                               & $\theta^S_2$ & 0.2367 & 0.2489 & 0.2418 & 0.2254 & 0.2396 & 0.2606 & \textbf{0.2226} \\
                               &              &        &        &        &                 &        &        &        \\
\multirow{4}{*}{\textbf{MT-A}} & $\theta^P_1$ & 0.7683 & 0.8172 & 0.7573 & 0.7401 & 0.7347 & 0.7744 & \textbf{0.7097} \\
                               & $\theta^P_2$ & 0.7555 & 0.7693 & 0.7202 & \textbf{0.6726} & 0.7159 & 0.7665 & 0.6769 \\
                               & $\theta^S_1$ & 0.7592 & 0.8157 & 0.7573 & 0.7277 & 0.7310 & 0.7882 & \textbf{0.7035} \\
                               & $\theta^S_2$ & 0.7537 & 0.7763 & 0.7221 & \textbf{0.6644} & 0.7123 & 0.7847 & 0.6771 \\
                               &              &        &        &        &                 &        &        &        \\
\multirow{4}{*}{\textbf{MT-B}} & $\theta^P_1$ & 0.2935 & 0.2806 & \textbf{0.2719} & 0.2970 & 0.2912 & 0.2847 & 0.2797 \\
                               & $\theta^P_2$ & 0.4175 & 0.4013 & \textbf{0.3802} & 0.4120 & 0.4134 & 0.3953 & 0.3880 \\
                               & $\theta^S_1$ & 0.2866 & 0.2935 & \textbf{0.2661} & 0.2941 & 0.2811 & 0.2810 & 0.2746 \\
                               & $\theta^S_2$ & 0.4143 & 0.4181 & \textbf{0.3786} & 0.4106 & 0.4020 & 0.3934 & 0.3841 \\
                               &              &        &        &        &                 &        &        &        \\
\multirow{4}{*}{\textbf{MT-C}} & $\theta^P_1$ & 0.1160 & 0.1140 & \textbf{0.1047} & 0.1125 & 0.1095 & 0.1149 & 0.1069 \\
                               & $\theta^P_2$ & 0.0921 & 0.0873 & 0.0818 & 0.0893 & 0.0842 & 0.0874 & \textbf{0.0755} \\
                               & $\theta^S_1$ & 0.1221 & 0.1183 & 0.1143 & 0.1126 & 0.1076 & 0.1193 & \textbf{0.1045} \\
                               & $\theta^S_2$ & 0.0997 & 0.0930 & 0.0908 & 0.0891 & 0.0829 & 0.0914 & \textbf{0.0757} \\
\bottomrule
\end{tabular}
\end{adjustbox}
\begin{tablenotes}
\footnotesize
\item Note: The rows indicate first-stage designs, and columns indicate second-stage designs.
\end{tablenotes}
\caption{Ratio of MSE under all designs against those under complete randomization in both stages}
\label{table:mse}
\end{table}

\subsection{MSE Properties}\label{sec:mse}
This section examines the performance of optimal matched tuples designs and several other designs via comparison of their MSEs (Mean Squared Errors). For simplicity, the parameters are given as follows: $\alpha_{z,h} = \beta_{z,h} = 1, \gamma = 1/100$ for all $(z,h)\in \{(0,0),(0,\pi_2),(1,\pi_2)\}$. This model configuration is referred to as ``homogeneous model'' since treatment effects are fully captured by $\mu_{z,h}$ and thus are homogeneously additive in this setting. A more complicated ``heterogeneous model'' will be introduced later. According to Theorem \ref{thm:optimal-first-stage}, the optimal index functions for equally-weighted and size-weighted effects in the first stage are
\begin{align}
    & E\left[  \frac{\bar{Y}_g(1,\pi_2)}{\pi_1} + \frac{\bar{Y}_g(0,0) }{1-\pi_1}\mid C_g, N_g \right] \propto  C_g + N_g/100 ~, \label{eqn:optimal-function-simulation-1}\\
    & E\left[  \frac{\tilde{Y}_g(1,\pi_2)}{\pi_1} + \frac{\tilde{Y}_g(0,0) }{1-\pi_1}\mid C_g, N_g \right] \propto N_g (C_g +  N_g/100 ) - \frac{25}{3} N_g \label{eqn:optimal-function-simulation-2}~.
\end{align}
In the second stage, the optimal finely stratified design matches on $X_{1,i,g}/(X_{2,i,g}+0.1)$ according to Theorem \ref{thm:optimal-second-stage}.
This section considers the following experimental designs for both stages:
\begin{enumerate}
    \item \textbf{(C)} $(H_g: 1\leq g \leq G)$ is drawn from a completely randomized design (also known as permuted block design), i.e. uniformly from the assignment space that $\pi_1 G$ (or $\pi_2 N_g$ in the second stage) number of clusters/units get treated.
    \item \textbf{(S-2)} A SBR design, where the experimental sample is divided into two strata using the midpoint of covariate $C_g$ (or $X_{1,i,g}$ in the second stage) as the cutoff. In each stratum, treatment is assigned as in \textbf{C}.
    \item \textbf{(S-4)} As in \textbf{(S-2)}, but with four strata.
    \item \textbf{(S-4O)} The ``optimal'' stratification with four strata. Clusters/units are divided into strata using quartiles of (\ref{eqn:optimal-function-simulation-1}) and (\ref{eqn:optimal-function-simulation-2}) for equally- and size-weighted estimands respectively (or $X_{1,i,g}/(X_{2,i,g}+0.1)$ in the second stage).
    \item \textbf{(MT-A)} Matched tuples design where units are ordered according to $C_g$ (or $X_{1,i,g}$ in the second stage).
    \item \textbf{(MT-B)}  Matched tuples design where units are ordered according to cluster size $N_g$ (or $X_{2,i,g}$ in the second stage).
    \item \textbf{(MT-C)} The optimal matched tuples design where units are ordered according to (\ref{eqn:optimal-function-simulation-1}) and (\ref{eqn:optimal-function-simulation-2}) for equally- and size-weighted estimands respectively (or $X_{1,i,g}/(X_{2,i,g}+0.1)$ in the second stage).
\end{enumerate}

Table \ref{table:mse} shows the ratio of the MSE of each design relative to the MSE of the design with completely randomized assignments (\textbf{C}) in both stages, computed across 1000 Monte Carlo iterations. The rows indicate first-stage designs, and columns indicate second-stage designs. The lowest values in each row are marked in bold. In all designs, treatment effects are set to zero by assigning $\mu_{z,h} = 0$ for all $(z,h)\in {(0,0),(0,\pi_2),(1,\pi_2)}$, and the treated fraction is set to $1/2$ in both stages. As expected from Theorem \ref{thm:optimal-first-stage} and \ref{thm:optimal-second-stage}, the matched-tuples design with complete matching (\textbf{MT-C}) outperforms the other designs in the first stage for all parameters of interest while remaining optimal in the second stage for many cases. However, it is noticeable that the assignment mechanism in the first stage has a greater effect on statistical precision than the second stage.

\begin{table}[ht!]
\centering
\setlength{\tabcolsep}{3pt}
\begin{adjustbox}{max width=0.95\linewidth,center}
\begin{tabular}{lllllllllllllll}
\toprule
       &              & \multicolumn{13}{c}{Second-stage}    \\  \cmidrule{3-15}
       &              & \multicolumn{6}{c}{$H_0: \tau = \omega = 0$}                      &  & \multicolumn{6}{c}{$H_1: \tau = \omega = 0.05$}                      \\  \cmidrule{3-8} \cmidrule{10-15}
First-stage & Parameter &  \textbf{S-2}   & \textbf{S-4}   & \textbf{S-4O} & \textbf{MT-A}  & \textbf{MT-B}  & \textbf{MT-C} & & \textbf{S-2}   & \textbf{S-4}   & \textbf{S-4O} & \textbf{MT-A}  & \textbf{MT-B}  & \textbf{MT-C} \\
\toprule
\multirow{4}{*}{\textbf{S-2}}  & $\theta^P_1$ & 0.044 & 0.066 & 0.063 & 0.044 & 0.050 & 0.050 &  & 0.222 & 0.244 & 0.244 & 0.248 & 0.262 & 0.258 \\
                               & $\theta^P_2$ & 0.045 & 0.062 & 0.059 & 0.049 & 0.062 & 0.058 &  & 0.224 & 0.226 & 0.229 & 0.239 & 0.256 & 0.262 \\
                               & $\theta^S_1$ & 0.046 & 0.061 & 0.065 & 0.043 & 0.052 & 0.050 &  & 0.084 & 0.100 & 0.102 & 0.101 & 0.095 & 0.098 \\
                               & $\theta^S_2$ & 0.046 & 0.066 & 0.066 & 0.046 & 0.056 & 0.061 &  & 0.087 & 0.101 & 0.091 & 0.094 & 0.101 & 0.094 \\
\multicolumn{1}{l}{}           &              &       &       &       &       &       &       &  &       &       &       &       &       &       \\
\multirow{4}{*}{\textbf{S-4}}  & $\theta^P_1$ & 0.050 & 0.048 & 0.060 & 0.058 & 0.036 & 0.051 &  & 0.241 & 0.267 & 0.243 & 0.275 & 0.245 & 0.276 \\
                               & $\theta^P_2$ & 0.056 & 0.055 & 0.062 & 0.051 & 0.037 & 0.056 &  & 0.230 & 0.261 & 0.241 & 0.284 & 0.250 & 0.270 \\
                               & $\theta^S_1$ & 0.054 & 0.053 & 0.062 & 0.056 & 0.037 & 0.048 &  & 0.096 & 0.119 & 0.105 & 0.130 & 0.109 & 0.112 \\
                               & $\theta^S_2$ & 0.058 & 0.055 & 0.061 & 0.054 & 0.033 & 0.056 &  & 0.087 & 0.121 & 0.096 & 0.127 & 0.107 & 0.110 \\
\multicolumn{1}{l}{}           &              &       &       &       &       &       &       &  &       &       &       &       &       &       \\
\multirow{4}{*}{\textbf{S-4O}}  & $\theta^P_1$ & 0.048 & 0.051 & 0.052 & 0.058 & 0.054 & 0.066 &  & 0.692 & 0.729 & 0.691 & 0.708 & 0.685 & 0.716 \\
                               & $\theta^P_2$ & 0.048 & 0.059 & 0.062 & 0.058 & 0.055 & 0.060 &  & 0.608 & 0.629 & 0.588 & 0.630 & 0.582 & 0.639 \\
                               & $\theta^S_1$ & 0.048 & 0.055 & 0.047 & 0.057 & 0.054 & 0.057 &  & 0.220 & 0.268 & 0.247 & 0.282 & 0.222 & 0.241 \\
                               & $\theta^S_2$ & 0.052 & 0.060 & 0.054 & 0.055 & 0.052 & 0.058 &  & 0.220 & 0.228 & 0.214 & 0.246 & 0.192 & 0.208 \\
\multicolumn{1}{l}{}           &              &       &       &       &       &       &       &  &       &       &       &       &       &       \\
\multirow{4}{*}{\textbf{MT-A}} & $\theta^P_1$ & 0.060 & 0.049 & 0.044 & 0.050 & 0.044 & 0.060 &  & 0.270 & 0.271 & 0.260 & 0.252 & 0.240 & 0.256 \\
                               & $\theta^P_2$ & 0.058 & 0.049 & 0.041 & 0.048 & 0.050 & 0.056 &  & 0.254 & 0.260 & 0.268 & 0.237 & 0.236 & 0.259 \\
                               & $\theta^S_1$ & 0.055 & 0.042 & 0.040 & 0.050 & 0.052 & 0.058 &  & 0.109 & 0.101 & 0.100 & 0.101 & 0.105 & 0.106 \\
                               & $\theta^S_2$ & 0.055 & 0.052 & 0.049 & 0.046 & 0.051 & 0.052 &  & 0.115 & 0.100 & 0.097 & 0.096 & 0.100 & 0.105 \\
\multicolumn{1}{l}{}           &              &       &       &       &       &       &       &  &       &       &       &       &       &       \\
\multirow{4}{*}{\textbf{MT-B}} & $\theta^P_1$ & 0.044 & 0.053 & 0.057 & 0.031 & 0.043 & 0.051 &  & 0.565 & 0.582 & 0.586 & 0.553 & 0.530 & 0.586 \\
                               & $\theta^P_2$ & 0.053 & 0.047 & 0.058 & 0.041 & 0.045 & 0.057 &  & 0.402 & 0.419 & 0.444 & 0.403 & 0.378 & 0.431 \\
                               & $\theta^S_1$ & 0.049 & 0.045 & 0.052 & 0.035 & 0.051 & 0.052 &  & 0.197 & 0.203 & 0.216 & 0.174 & 0.184 & 0.198 \\
                               & $\theta^S_2$ & 0.053 & 0.046 & 0.057 & 0.038 & 0.050 & 0.059 &  & 0.148 & 0.158 & 0.180 & 0.131 & 0.135 & 0.148 \\
\multicolumn{1}{l}{}           &              &       &       &       &       &       &       &  &       &       &       &       &       &       \\
\multirow{4}{*}{\textbf{MT-C}} & $\theta^P_1$ & 0.058 & 0.056 & 0.061 & 0.044 & 0.053 & 0.043 &  & 0.920 & 0.939 & 0.917 & 0.917 & 0.919 & 0.933 \\
                               & $\theta^P_2$ & 0.057 & 0.045 & 0.058 & 0.041 & 0.052 & 0.051 &  & 0.955 & 0.975 & 0.955 & 0.950 & 0.941 & 0.968 \\
                               & $\theta^S_1$ & 0.074 & 0.058 & 0.059 & 0.044 & 0.042 & 0.044 &  & 0.399 & 0.429 & 0.427 & 0.416 & 0.400 & 0.411 \\
                               & $\theta^S_2$ & 0.058 & 0.052 & 0.062 & 0.034 & 0.050 & 0.050 &  & 0.430 & 0.465 & 0.471 & 0.472 & 0.444 & 0.504 \\
\bottomrule
\end{tabular}
\end{adjustbox}
\begin{tablenotes}
\footnotesize
\item Note: The rows indicate first-stage designs, and columns indicate second-stage designs.
\end{tablenotes}
\caption{Rejection probabilities under the null and alternative hypothesis}
\label{table:reject-probs-1}
\end{table}

\subsection{Inference}\label{sec:simulation-inference}
In this section, the focus shifts from optimality to studying the finite sample properties of different tests for the following null hypotheses of interest: 
\begin{equation}
    H_0^{P,1}: \theta_1^P(Q_G) = 0, \quad H_0^{P,2}: \theta_2^P(Q_G) = 0, \quad H_0^{S,1}: \theta_1^S(Q_G) = 0, \quad H_0^{S,2}: \theta_2^S(Q_G) = 0~,
\end{equation}
against the alternative hypotheses:
\begin{equation}
    H_1^{P,1}: \theta_1^P(Q_G) = \tau + \omega, \quad H_1^{P,2}: \theta_2^P(Q_G) = \tau + \omega, \quad H_1^{S,1}: \theta_1^S(Q_G) = \omega, \quad H_1^{S,2}: \theta_2^S(Q_G) = \omega~.
\end{equation}

In Table \ref{table:reject-probs-1}, the six assignment mechanisms with covariate-adaptive randomization (Design 2-7 in Section \ref{sec:mse}) for the first and second stages are considered, resulting in a total of 36 different designs. Hypothesis tests are performed at a significance level of 0.05, and rejection probabilities under the null and alternative hypotheses are computed from 1000 Monte Carlo iterations in each case. Tests are constructed as ``adjusted $t$-tests'' using the asymptotic results from Theorem \ref{thm:matched-group}-\ref{thm:variance-estimator-mt}. For stratified designs in the first stage (\textbf{S-2}, \textbf{S-4} and \textbf{S-4O}), tests for equally- and size-weighted effects are performed using the variance estimators $\hat V_3(z)$ and $\hat V_4(z)$ (see (\ref{eqn:V3(z)hat}) and (\ref{eqn:V4(z)hat}) in Appendix \ref{sec:large-strata}). For matched tuples designs in the first stage (\textbf{MT-A}, \textbf{MT-B} and \textbf{MT-C}), tests for equally- and size-weighted effects are performed using the variance estimators $\hat V_1(z)$ and $\hat V_2(z)$. The results show that the rejection probabilities are universally around 0.05 under the null hypothesis, which verifies the validity of tests based on my asymptotic results across all the designs. Under the alternative hypotheses, the rejection probabilities vary substantially across the first-stage designs while remaining relatively stable across the second-stage designs. \textbf{MT-C} stands out as the most powerful design for the first-stage. These findings are consistent with previous section.

\mycomment{
\begin{table}[ht!]
\centering
\setlength{\tabcolsep}{3pt}
\begin{adjustbox}{max width=0.95\linewidth,center}
\begin{tabular}{lclllllll}
\toprule
\multirow{2}{*}{Model}  & \multirow{2}{*}{Inference Method}     &  \multirow{2}{*}{Effect}        & \textbf{S-4O} & \textbf{S-4O} & \textbf{S-4O} & \textbf{MT-C} & \textbf{MT-C} & \textbf{MT-C} \\
& &  & \textbf{C} & \textbf{S-4O} & \textbf{MT-C} & \textbf{C} & \textbf{S-4O} & \textbf{MT-C} \\
\toprule
\multirow{9}{*}{\textbf{Homogeneous}}  & OLS robust                         & Primary   & 0.102    & 0.088     & 0.084     & 0.021    & 0.025     & 0.009     \\
& (standard $t$-test)                                            & Spillover & 0.098    & 0.089     & 0.080     & 0.022    & 0.021     & 0.011     \\
\multicolumn{1}{l}{}                         &                  &          &           &           &          &           &           \\
& OLS cluster                         & Primary   & 0.000    & 0.000     & 0.000     & 0.000    & 0.000     & 0.000     \\
& (clustered $t$-test)                                            & Spillover & 0.000    & 0.000     & 0.000     & 0.000    & 0.000     & 0.000     \\
\multicolumn{1}{l}{}                         &                  &          &           &           &          &           &           \\
& OLS with group                & Primary   & 0.101 &   0.086 &   0.074 &   0.022 &   0.015 &   0.008     \\
& fixed effects (robust)                                             & Spillover & 0.091 &   0.093 &   0.073 &   0.020 &   0.019 &   0.009     \\
\multicolumn{1}{l}{}                         &                  &          &           &           &          &           &           \\
& OLS with group  & Primary   & 0.032 &   0.024 &   0.030 &   0.049 &   0.062 &   0.074     \\ 
& fixed effects (clsutered)                                            & Spillover & 0.029 &   0.026 &   0.029 &   0.046 &   0.068 &   0.074     \\
\\ \\
\multirow{9}{*}{\textbf{Heterogeneous}} & OLS robust                          & Primary   & 0.096  & 0.095  & 0.147  & 0.058  & 0.047 &  0.084     \\
& (standard $t$-test)                                            & Spillover & 0.229 &   0.228 &   0.096 &   0.196 &   0.162 &   0.052    \\
\multicolumn{1}{l}{}                         &                  &          &           &           &          &           &           \\
& OLS cluster                         & Primary   & 0.001    & 0.001     & 0.000     & 0.000    & 0.000     & 0.000    \\
& (clustered $t$-test)                                              & Spillover & 0.000    & 0.002     & 0.003     & 0.000    & 0.000     & 0.000     \\
\multicolumn{1}{l}{}                         &                  &          &           &           &          &           &           \\
& OLS with group    &   Primary         & 0.104 &   0.085 &   0.146 &   0.055 &   0.041 &   0.090   \\
& fixed effects (robust)                                             & Spillover & 0.263 &   0.256 &   0.091 &   0.222 &   0.189 &   0.061    \\
\multicolumn{1}{l}{}                         &                  &          &           &           &          &           &           \\
& OLS with group  & Primary   & 0.019 &   0.028 &   0.023 &   0.024 &   0.028 &   0.033     \\
& fixed effects (clsutered)                                            & Spillover & 0.021 &   0.017 &   0.024 &   0.016 &   0.012 &   0.017    \\
\bottomrule
\end{tabular}
\end{adjustbox}
\caption{Rejection probabilities of various inference methods under the null hypothesis}
\label{table:reject-probs-ols}
\end{table}}

\begin{table}[ht!]
\centering
\setlength{\tabcolsep}{3pt}
\begin{adjustbox}{max width=0.95\linewidth,center}
\begin{tabular}{lclllllll}
\toprule
\multirow{2}{*}{Model}  & \multirow{2}{*}{Inference Method}     &  \multirow{2}{*}{Effect}        & \textbf{S-4O} & \textbf{S-4O} & \textbf{S-4O} & \textbf{MT-C} & \textbf{MT-C} & \textbf{MT-C} \\
& &  & \textbf{C} & \textbf{S-4O} & \textbf{MT-C} & \textbf{C} & \textbf{S-4O} & \textbf{MT-C} \\
\toprule
\multirow{9}{*}{\textbf{Homogeneous}}  & OLS robust                         & Primary   & 0.184 & 0.194 & 0.156 & 0.062 & 0.086 & 0.049 \\
& (standard $t$-test)                                            & Spillover   & 0.184 & 0.167 & 0.159 & 0.077 & 0.048 & 0.048 \\
\multicolumn{1}{l}{}                         &                  &          &           &           &          &           &           \\
& OLS cluster                         & Primary   & 0.000    & 0.000     & 0.000     & 0.000    & 0.000     & 0.000     \\
& (clustered $t$-test)                                            & Spillover & 0.000    & 0.000     & 0.000     & 0.000    & 0.000     & 0.000     \\
\multicolumn{1}{l}{}                         &                  &          &           &           &          &           &           \\
& OLS with group                & Primary   & 0.209 & 0.196 & 0.179 & 0.100 & 0.106 & 0.077 \\
& fixed effects (robust)                                             & Spillover & 0.201 & 0.184 & 0.177 & 0.113 & 0.100 & 0.075 \\
\multicolumn{1}{l}{}                         &                  &          &           &           &          &           &           \\
& OLS with group  & Primary   &0.028 & 0.027 & 0.029 & 0.068 & 0.085 & 0.071 \\ 
& fixed effects (clustered)                                            & Spillover & 0.036 & 0.027 & 0.026 & 0.064 & 0.062 & 0.069    \\
\\ \\
\multirow{9}{*}{\textbf{Heterogeneous}} & OLS robust                          & Primary   & 0.118  & 0.118  & 0.175  & 0.061  & 0.048 &  0.080     \\
& (standard $t$-test)                                            & Spillover & 0.225 &   0.213 &   0.162 &   0.135 &   0.144 &   0.069    \\
\multicolumn{1}{l}{}                         &                  &          &           &           &          &           &           \\
& OLS cluster                         & Primary   & 0.000    & 0.001     & 0.000     & 0.000    & 0.000     & 0.000    \\
& (clustered $t$-test)                                              & Spillover & 0.002    & 0.000     & 0.000     & 0.000    & 0.000     & 0.000     \\
\multicolumn{1}{l}{}                         &                  &          &           &           &          &           &           \\
& OLS with group    &   Primary         & 0.118 &   0.115 &   0.172 &   0.079 &   0.057 &   0.125   \\
& fixed effects (robust)                                             & Spillover & 0.250 &   0.253 &   0.166 &   0.273 &   0.265 &   0.150    \\
\multicolumn{1}{l}{}                         &                  &          &           &           &          &           &           \\
& OLS with group  & Primary   & 0.024 &   0.015 &   0.023 &   0.056 &   0.051 &   0.047     \\
& fixed effects (clustered)                                            & Spillover & 0.027 &   0.018 &   0.025 &   0.045 &   0.071 &   0.061    \\
\bottomrule
\end{tabular}
\end{adjustbox}
\caption{Rejection probabilities of various inference methods under the null hypothesis}
\label{table:reject-probs-ols}
\end{table}

Next, the validity of commonly used regression-based inference methods in the empirical literature is tested. These methods are tested under both the ``homogeneous model'' from the previous simulation study in Section \ref{sec:mse} and a ``heterogeneous model'' in which two parameters are modified as follows: $\alpha_{1,\pi_2} = \beta_{1,\pi_2} = 2$, $\alpha_{0,\pi_2} = \beta_{0,\pi_2} = 0.5$, and $\alpha_{0,0} = \beta_{0,0} = 1$. The key difference between the two models is whether the conditional expectations of potential outcomes are identical or different across different exposures $(z,h)$. Four commonly used regression methods are considered in this study:
\begin{enumerate}
    \item OLS robust: regress $Y_{i,g}$ on a constant, individual-level treatment indicator $Z_{i,g}$ and the indicator for untreated units in treated clusters $L_{i,g}$. Tests for primary and spillover effects are performed using standard $t$-tests under robust standard errors to heteroskedasticity.
    \item OLS cluster: run the same regression as ``OLS robust'' but perform $t$-tests with clustered standard errors.
    \item OLS with group fixed effects (robust): regress $Y_{i,g}$ on a constant, $Z_{i,g}$, $L_{i,g}$ and fixed effects for strata or tuples $S_g$. Tests are  performed using standard $t$-tests under robust standard errors to heteroskedasticity.
    \item OLS with group fixed effects (clustered): run the same regression as ``OLS with group fixed effects (robust)'' but perform $t$-tests with clustered standard errors.
\end{enumerate}
Note that due to full sampling, i.e. $N_g = M_g$, regressions without fixed effects (``OLS robust'' and ``OLS cluster'') output the same estimators as the size-weighted estimators $\hat \theta_2^P$ and $\hat \theta_2^S$. Most of the previous empirical analysis on covariate-adaptive two-stage experiments report cluster-robust standard errors in their main results, which could either be ``OLS cluster'' \citep[see for example][]{basse2018} or ``OLS with group fixed effects (clustered)'' \citep[see for example][]{Duflo2003,Ichino2012}. For brevity, Table \ref{table:reject-probs-ols} includes only six designs: those with either \textbf{S-4O} or \textbf{MT-C} in the first stage, and \textbf{C}, \textbf{S-4O}, or \textbf{MT-C} in the second stage. The table reveals that test results can be either conservative or invalid across different regression methods and designs. For stratified designs in the first stage, methods based on ``robust'' standard errors tend to over-reject, while methods based on ``clustered'' standard errors tend to under-reject. For matched tuples designs, ``OLS cluster'' is conservative, and the remaining methods could be invalid as they may over-reject the null hypothesis under some model specifications and parameters of interest. Similar results can also be found in the previous literature on covariate-adaptive randomization. For example, \cite{matched-tuple} demonstrated that inferences based on OLS regressions with strata fixed effects could be invalid. On the other hand, \cite{dechaisemartin2022} documented that in cluster randomized experiments, $t$-test based on clustered standard errors tend to over-reject the null hypothesis when strata fixed effects are included, and under-reject otherwise. Therefore, it can be concluded that, with the exception of ``OLS cluster'' being conservative, the other three inference methods based on regression are generally invalid.

\section{Empirical Application}\label{sec:empirical}
In this section, the inference methods introduced in Section \ref{sec:main-results} are illustrated using data collected in \cite{foos2017}. The experiment conducted by \cite{foos2017} is a randomly assigned spillover experiment in the United Kingdom designed to identify social influence within heterogeneous and homogeneous partisan households. The study first stratified $5190$ two-voter households into three blocks based on the latest recorded party preference of the experimental subject\footnote{Before assigning treatments, the researchers randomly selected one individual per household to potentially receive treatments, whom they mark as ``experimental subjects''. In other words, the second-stage assignment is a complete randomization. Specifically, this two-stage design corresponds to  ``\textbf{S3-C}'' (using the notation from the simulation section).}: ``Labour'' supporter,``rival party'' supporter and those who were ``unattached'' to a party. Then experimental subjects or equivalently their households were randomly assigned to three groups: high partisan intensity treatment, low partisan intensity and control\footnote{The empirical treatment fractions for ``Labour'' supporters are 0.217, 0.217, and 0.566 for the high-intensity, low-intensity, and control groups, respectively. For ``rival party'' supporters, the corresponding fractions are 0.222, 0.215, and 0.563. For ``unattached'' individuals, they are 0.208, 0.226, and 0.566.}. Experimental subjects allocated to treatment groups were called by telephone and encouraged to vote in the PCC election
on November 15, 2012. The ``high partisan intensity'' was formulated in a strongly partisan tone, explicitly mentioning the Labour Party and policies multiple time, while the ``low partisan intensity'' treatment message avoided all statements about party competition.

In the original analysis of \cite{foos2017}, their main focus was on analyzing treatment effects conditional on a wide range of pre-treatment covariates. That said, in the final column of Table 1 in \cite{foos2017}, they report estimators for (unconditional) primary and spillover effects, which are based on calculations of averages over separate experimental subjects and unassigned subjects. In contrast, my estimators do not distinguish experimental subjects from unassigned subjects and take averages solely based on treatment or spillover status. Another difference in my analysis is that estimators are calculated by pooling the two treatment arms, i.e. high and low partisan intensity, to maintain consistency with the setup of the paper\footnote{
Specifically, treated households effectively received a ``random treatment'': high partisan intensity with some probability and low partisan intensity with the complementary probability. The pooled treatment still follows a complete randomization design within each stratum and therefore satisfies all assumptions related to treatment assignment.
}
. In contrast, \cite{foos2017} provide separate estimates for each treatment arm.

\begin{table}[ht!]
\caption{Point estimates and confidence intervals for testing the primary and spillover effects}
\begin{adjustbox}{max width=0.95\linewidth,center}
\begin{tabular}{lllllllllll}
\toprule
         & \multicolumn{2}{c}{adjusted $t$-test}    & \multicolumn{2}{c}{OLS robust}    & \multicolumn{2}{c}{OLS cluster} & \multicolumn{2}{c}{OLS fe robust} & \multicolumn{2}{c}{OLS fe cluster} \\
\toprule
\multirow{2}{*}{Primary}   & \multicolumn{2}{c}{3.0488} & \multicolumn{2}{c}{3.0488} & \multicolumn{2}{c}{3.0488}      & \multicolumn{2}{c}{2.9971} & \multicolumn{2}{c}{2.9971}         \\
              & [0.8339,      & 5.2638]    & [0.9962,       & 5.1014]      & [0.8103,         & 5.2874]         & [0.9633,       & 5.0308]      & [0.7812,           & 5.2129]          \\
                 &                 &               &              &             &                &                &              &             &                  &                 \\
\multirow{2}{*}{Spillover} & \multicolumn{2}{c}{4.5930} & \multicolumn{2}{c}{4.5930}  & \multicolumn{2}{c}{4.5930}       & \multicolumn{2}{c}{4.5413} & \multicolumn{2}{c}{4.5413}         \\
              & [2.3430,      & 6.8431]    & [2.5046,       & 6.6815]      & [2.3216,         & 6.8645]         & [2.4694,       & 6.6132]      & [2.2904,           & 6.7922]     \\  
\bottomrule
\end{tabular}
\end{adjustbox}
\label{table:application}
\begin{tablenotes}
\footnotesize
\item Note: The original paper did not mention the target treated fraction $\pi_1$. I decided to use the empirical treated fraction, $1/G\sum_{1\leq g \leq G} I\{H_g = \pi_2\}$, to calculate the variance estimators.
\end{tablenotes}
\label{table:empirical}
\end{table}

Table \ref{table:application} compares point estimates of treatment effect on turnout percentage and confidence intervals obtained from the four regression methods listed in Section \ref{sec:simulation-inference} with those based on my theoretical results, namely ``adjusted $t$-test''. Since cluster (household) size is fixed, equally-weighted and size-weighted estimators and estimands collapse into one. Moreover, full sampling ($N_g = M_g = 2$) makes the point estimates of ``adjusted $t$-test'' and ``OLS robust/cluster'' equivalent. In the simulation study, it is found that `OLS robust'' and ``OLS fe robust'' tend to over-reject the null hypothesis, which is consistent with the empirical results in Table \ref{table:application} that they both have narrower confidence interval than the ``adjusted $t$-test''. Furthermore, ``OLS cluster'' and ``OLS fe cluster'' are shown to be conservative in the simulation study, and accordingly, they both have wider confidence intervals than the ``adjusted $t$-test'' in Table \ref{table:application}. Therefore, the empirical findings are consistent with the simulation study in Table \ref{table:reject-probs-ols}.

\section{Recommendations for Empirical Practice}\label{sec:conclude}

Based on the theoretical results and the supporting simulation study, I conclude with the following recommendations for empirical practice, particularly in conducting inference about the parameters of interest, as listed in Table \ref{table:estimands}. In scenarios where sizes of all strata are considerably large, such as more than 50 clusters as exemplified in simulation \textbf{S-4}, we advise practitioners to utilize $\hat V_3(1)$ and $\hat V_3(0)$, as defined in (\ref{eqn:V3(z)hat}), for estimating the equally-weighted primary effect $\theta_1^P$ and the spillover effect $\theta_1^S$. Similarly, $\hat V_4(1)$ and $\hat V_4(0)$, as detailed in (\ref{eqn:V4(z)hat}), should be employed for the size-weighted primary effect $\theta_2^P$ and the spillover effect $\theta_2^S$. However, when it is unclear whether the strata size is sufficiently large, or more commonly, when the experimental design involves a matched-tuples design with only one or two observations per treatment arm, we recommend the application of $\hat V_1(1), \hat V_1(0)$ and $\hat V_2(1), \hat V_2(0)$ as indicated in (\ref{eqn:Vhat1}) for the corresponding equally-weighted and size-weighted effects.

The results of this study have shown that tests based on the regression specified in equation (\ref{eqn:ols-full}) with HC2 cluster-robust standard errors are valid but potentially conservative, which would result in a loss of power relative to our proposed test. Further, it's critical to note that regressions using strata fixed effects or heteroskedasticity-robust standard errors have generally been found invalid in the simulation study.

Based on the optimality results for the first-stage design, I recommend selecting cluster-level covariates for matching according to the parameters of interest, as elaborated in Remark \ref{remark:first-stage-optimality}, while adhering to the established guidelines from previous studies \citep{McKenzie2009, bai-inference, Bai-optimal, cytrybaum2021}. For the second stage, it is advisable to first evaluate the impact of the design on efficiency, as detailed in Remark \ref{remark:how-second-stage-affects-variance}, and then assess whether the benefits of second-stage randomization outweigh its costs. Should this be the case, implementing a finely stratified second-stage randomization is recommended, taking into account intra-cluster correlation, as discussed in Remark \ref{remark:second-stage-optimality}.

\newpage
\appendix
\small
\section{Inference for Experiments with Large Strata}\label{sec:large-strata}
In this section, I investigate the asymptotic properties of the estimators presented in Section \ref{subsec:estimand-estimator} in the context of two-stage stratified experiments with a fixed number of large strata in the first stage of the experimental design. Specifically, in the first stage, clusters are partitioned into a fixed number of strata such that the number of clusters within each stratum grows as the total number of clusters increases. Formally, denote by $S^{(G)}=(S_1, \dots, S_G)$ the vector of strata on clusters, constructed from the observed, baseline covariates $C_{g}$ and cluster size $N_g$ for $g$th cluster using a function $S:\text{supp}((C_g,N_g)) \rightarrow \mathcal{S}$, where $\mathcal{S}$ is a finite set. Additionally, the second-stage design adheres to the specifications outlined in Section \ref{sec:main-results}.
\begin{example}
    Section \ref{sec:empirical} presents an illustrative empirical example of such a large-strata experiment conducted by \cite{foos2017}. In the first stage of their experiment, 5,190 two-voter households (i.e., clusters of size 2) were categorized into three strata: ``Labour'' supporter, ``rival party'' supporter, and those ``unattached'' to any party. Within each stratum, households were then randomly allocated to either treatment or control groups. In the subsequent stage, one member from the households in the treatment group was given the treatment.
\end{example}

First of all, I provide notations for the quantity of imbalance for each stratum. For $s \in \mathcal{S}$, let
\begin{equation}\label{eqn:D_G(s)}
    D_G(s) = \sum_{1\leq g \leq G} (I\{H_g=\pi_2\} - \pi_1) I\{S_g = s\},
\end{equation}
where $\pi_1 \in (0,1)$ is the “target” proportion of clusters to assign to treatment in each stratum. My requirements on the treatment assignment mechanism for the first stage are summarized in the following assumption:
\begin{assumption}\label{as:assignment1}
The treatment assignment mechanism for the first-stage is such that
\begin{enumerate}
    \item[(a)] $W^{(G)} \perp H^{(G)} \mid  S^{(G)}$,
    \item[(b)] $\left\{ \left\{\frac{D_G(s)}{\sqrt{G}} \right\}_{s \in \mathcal{S}}  \mid S^{(G)}\right\} \xrightarrow{d} N(0, \Sigma_D)$ a.s., where
    \begin{equation*}
        \Sigma_D = \text{diag}\{p(s)\tau(s): s\in\mathcal{S} \}
    \end{equation*}
    with $0\leq \tau(s) \leq \pi_1 (1-\pi_1)$ for all $s\in\mathcal{S}$, and $p(s) = P\left\{S_g = s\right\}$.
\end{enumerate}
\end{assumption}
Assumption \ref{as:assignment1} (a) simply requires that the treatment assignment mechanism is a function only of the vector of strata and an exogenous randomization device. Assumption \ref{as:assignment1} (b) follows Assumption 2.2 (b) of \cite{Bugni2018}. This assumption is commonly satisfied by various experiment designs, such as Bernoulli trials, stratified block randomization, and Efron's biased-coin design, which are widely used in clinical trials and development economics.

The following theorem derives the asymptotic behavior of estimators for equally-weighted effects.
\begin{theorem}\label{thm:clt-large-strata}
Under Assumption \ref{as:interference}-\ref{ass:Q_G}, \ref{as:assignment2} and \ref{as:assignment1},
\begin{align}
    &\sqrt{G} \left( \hat\theta^{P}_1 - \theta^P_1(Q_G) \right) \xrightarrow{d} \mathcal{N}(0, V_3(1))~, \\ &\sqrt{G} \left( \hat\theta^{S}_1 - \theta^S_1(Q_G) \right) \xrightarrow{d} \mathcal{N}(0, V_3(0))~,\\
    &\sqrt{G} \left( \hat\theta^{P}_2 - \theta^P_2(Q_G) \right) \xrightarrow{d} \mathcal{N}(0, V_4(1))~, \\ &\sqrt{G} \left( \hat\theta^{S}_2 - \theta^S_2(Q_G) \right) \xrightarrow{d} \mathcal{N}(0, V_4(0))~,
\end{align}
where
\begin{align}\label{eqn:V3(z)}
    \begin{split}
    V_3(z) &= \frac{1}{\pi_1} \operatorname{Var}\left[\bar{Y}_{g}(z,\pi_2)\right] + \frac{1}{1-\pi_1} \operatorname{Var}\left[\bar{Y}_{g}(0,0)\right] - \pi_1(1-\pi_1)  E\left[\left( \frac{1}{\pi_1} m_{z,\pi_2}\left(S_{g}\right) + \frac{1}{1-\pi_1}m_{0,0}\left(S_{g}\right) \right)^2\right] \\
    &\quad\quad\quad + E\left[\tau\left(S_{g}\right)\left(\frac{1}{\pi_1} m_{z,\pi_2}\left(S_{g}\right) +\frac{1}{1-\pi_1} m_{0,0}\left(S_{g}\right)\right)^{2}\right]~,
    \end{split}
\end{align}
and
\begin{align}\label{eqn:V4(z)}
    \begin{split}
    V_4(z) &= \frac{1}{\pi_1} \var[\tilde Y_g(z,\pi_2)] + \frac{1}{1-\pi_1} \var[\tilde Y_g(0,0)] - \pi_1(1-\pi_1) E\left[\left(\frac{1}{\pi_1}E[\tilde Y_g(z,\pi_2)\mid S_g] + \frac{1}{1-\pi_1}E[\tilde Y_g(0,0)\mid S_g] \right)^2\right] \\
    &\quad\quad\quad + E\left[\tau(S_g) \left(\frac{1}{\pi_1}E[\tilde Y_g(z,\pi_2)\mid S_g] + \frac{1}{1-\pi_1}E[\tilde Y_g(0,0)\mid S_g] \right)^2 \right]~~.
    \end{split}
\end{align}
\end{theorem}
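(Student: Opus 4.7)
The plan is to follow the three-term decomposition strategy of \cite{Bugni2018}, adapted to the two-stage cluster setting. First, by Assumption \ref{as:interference} and $Y_{i,g} = Y_{i,g}(Z_{i,g}, H_g)$, with $A_g = I\{H_g=\pi_2\}$,
\begin{equation*}
    \hat\theta_1^P = \frac{1}{G_1}\sum_{g} A_g \bar Y_g(1,\pi_2) - \frac{1}{G_0}\sum_{g} (1-A_g) \bar Y_g(0,0).
\end{equation*}
By Assumption \ref{as:assignment1}(b), $G_1/G \to \pi_1$ and $G_0/G \to 1-\pi_1$ in probability, so Slutsky reduces $\sqrt{G}(\hat\theta_1^P - \theta_1^P)$ to analyzing
\begin{equation*}
    U_G = \frac{1}{\sqrt{G}} \sum_g \left[\frac{A_g}{\pi_1}(\bar Y_g(1,\pi_2) - E[\bar Y_g(1,\pi_2)]) - \frac{1-A_g}{1-\pi_1}(\bar Y_g(0,0) - E[\bar Y_g(0,0)])\right].
\end{equation*}

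Writing $\bar Y_g(z,h) - E[\bar Y_g(z,h)] = (\bar Y_g(z,h) - \mu_{z,h}(S_g)) + m_{z,h}(S_g)$ with $\mu_{z,h}(s) = E[\bar Y_g(z,h)\mid S_g=s]$, and using $\frac{1-A_g}{1-\pi_1} = 1 - \frac{A_g-\pi_1}{1-\pi_1}$, decompose $U_G = T_{G,1} + T_{G,2} + T_{G,3}$, where
\begin{align*}
    T_{G,1} &= \frac{1}{\sqrt{G}}\sum_g\left[\frac{A_g}{\pi_1}(\bar Y_g(1,\pi_2) - \mu_{1,\pi_2}(S_g)) - \frac{1-A_g}{1-\pi_1}(\bar Y_g(0,0) - \mu_{0,0}(S_g))\right], \\
    T_{G,2} &= \sum_{s\in\mathcal{S}} \frac{D_G(s)}{\sqrt{G}}\left[\frac{m_{1,\pi_2}(s)}{\pi_1} + \frac{m_{0,0}(s)}{1-\pi_1}\right], \\
    T_{G,3} &= \frac{1}{\sqrt{G}}\sum_g (m_{1,\pi_2}(S_g) - m_{0,0}(S_g)).
\end{align*}

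Conditional on $(S^{(G)}, H^{(G)})$, the summands of $T_{G,1}$ are independent with mean zero (by Assumption \ref{as:assignment1}(a) together with Assumptions \ref{ass:Q_G}(a)-(b) and \ref{as:assignment2}(a)-(c)); a conditional Lindeberg-Feller CLT with moment bounds from Assumption \ref{ass:Q_G}(c)-(d) gives $T_{G,1} \xrightarrow{d} N(0, \sigma_1^2)$ with $\sigma_1^2 = E[\var(\bar Y_g(1,\pi_2)\mid S_g)]/\pi_1 + E[\var(\bar Y_g(0,0)\mid S_g)]/(1-\pi_1)$. Assumption \ref{as:assignment1}(b) and the continuous mapping theorem give $T_{G,2} \xrightarrow{d} N(0, \sigma_2^2)$ with $\sigma_2^2 = E[\tau(S_g)(m_{1,\pi_2}(S_g)/\pi_1 + m_{0,0}(S_g)/(1-\pi_1))^2]$, and the classical i.i.d.\ CLT gives $T_{G,3} \xrightarrow{d} N(0, \var[m_{1,\pi_2}(S_g) - m_{0,0}(S_g)])$. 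Asymptotic independence of the three terms follows from conditioning: $T_{G,1}$ is conditionally mean zero given $(S^{(G)}, H^{(G)})$; $T_{G,2}$ is conditionally mean zero given $S^{(G)}$ by Assumption \ref{as:assignment1}(b); and $T_{G,3}$ is $S^{(G)}$-measurable. Summing variances and applying $\var[\bar Y_g(z,h)] = E[\var(\bar Y_g(z,h)\mid S_g)] + E[m_{z,h}^2(S_g)]$ yields $V_3(1)$ after algebraic simplification, and the same argument applied to $\hat\theta_1^S$ gives $V_3(0)$. For the size-weighted estimators, Slutsky (using $N_1/G \to \pi_1 E[N_g]$ and $N_0/G \to (1-\pi_1) E[N_g]$) reduces the problem to an analogous three-term decomposition with $\tilde Y_g(z,h)$ from (\ref{eqn:Y(z,h)}) replacing $\bar Y_g(z,h)$; since $E[\tilde Y_g(z,h)] = 0$, the same steps deliver $V_4(z)$.

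The main technical obstacle is verifying the conditional CLT for $T_{G,1}$, since $\bar Y_g(1,\pi_2)$ is simultaneously driven by the potential outcomes (random through $Q_G$), the sampled subset $\mathcal{M}_g$, and the second-stage assignment $Z_{i,g}(\pi_2)$. The key ingredients are that (i) $Z_{i,g}(h) \perp H^{(G)}$ by Assumption \ref{as:assignment2}(a), so conditioning on $H^{(G)}$ preserves the conditional distribution of $\bar Y_g(z,h)$ given $S^{(G)}$; (ii) the triples (potential outcomes, sampling, second-stage assignment) are independent across $g$ conditional on $S^{(G)}$ by the cluster-level i.i.d.\ structure in Assumptions \ref{ass:Q_G}(a)-(b), \ref{ass:Q_G}(e), and \ref{as:assignment2}(b); and (iii) the uniform moment bound in Assumption \ref{ass:Q_G}(d) together with $E[N_g^2] < \infty$ from \ref{ass:Q_G}(c) ensures the Lindeberg condition holds.
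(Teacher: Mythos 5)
Your proposal is correct and follows essentially the same route as the paper: reduce to the cluster-level averages $\bar Y_g(z,h)$, verify that they form an i.i.d.\ sequence independent of $H^{(G)}$ conditional on $S^{(G)}$, and apply the three-term decomposition of \cite{Bugni2018} (the paper simply invokes their Theorem 4.1 rather than re-deriving $T_{G,1}+T_{G,2}+T_{G,3}$, but the decomposition you write out is exactly what underlies that result, and your variance bookkeeping matches the equivalent form of $V_3(z)$ noted in Remark \ref{remark:discuss-variance-expression}). Two places where your sketch is thinner than the paper's argument: first, the centering $\theta_1^P(Q_G) = E[\bar Y_g(1,\pi_2)] - E[\bar Y_g(0,0)]$ is itself a nontrivial identification step requiring the iterated-expectation computation with $E[Z_{i,g}(\pi_2)\mid B_g] = \frac{1}{M_g}\sum_{i\in\mathcal M_g} Z_{i,g}(\pi_2)$ from Assumption \ref{as:assignment2}(c) and the extrapolation condition in Assumption \ref{ass:Q_G}(f), which you list among the ingredients but do not carry out; second, for the size-weighted estimators the passage to $\tilde Y_g(z,h)$ is a delta-method linearization of the ratio $N_1^{-1}\sum_g I\{H_g=\pi_2\} N_g \bar Y_g^1$ (the paper establishes a joint CLT for $(\mathbf L_G^{\rm YN1},\mathbf L_G^{\rm N1},\mathbf L_G^{\rm YN0},\mathbf L_G^{\rm N0})$ and then differentiates $g(x,y,z,w)=x/y-z/w$), not a plain Slutsky substitution, since the $O_P(G^{-1/2})$ fluctuation of $N_1/G$ around $\pi_1 E[N_g]$ contributes to the limiting variance --- your $\tilde Y_g$ already encodes that contribution, so the substance is right, but the justification should be the delta method rather than Slutsky.
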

\begin{remark}\label{remark:discuss-variance-expression}
An alternative variance expression, analogous to equation (15) in \cite{Bugni2018}, is:
\begin{align}\label{eqn:V3(z)2}
    \begin{split}
    V_3(z) &= \frac{1}{\pi_1} \operatorname{Var}\left[\check{Y}_{g}(z,\pi_2)\right]+\frac{1}{1-\pi_1} \operatorname{Var}\left[\check{Y}_{g}(0,0)\right] + E\left[\left(m_{z,\pi_2}\left(S_{g}\right) -m_{0,0}\left(S_{g}\right)\right)^{2}\right]\\
    &\quad\quad\quad + E\left[\tau\left(S_{g}\right)\left(\frac{1}{\pi_1} m_{z,\pi_2}\left(S_{g}\right) +\frac{1}{1-\pi_1} m_{0,0}\left(S_{g}\right)\right)^{2}\right]~,
\end{split}
\end{align}
where $\check{Y}_{g}(z,h) = \bar{Y}_g(z,h) - E[\bar{Y}_g(z,h)\mid S_g]$. By comparing (\ref{eqn:V3(z)2}) with the variance expression in \cite{Bugni2018},  we conclude that the asymptotic variance in Theorem \ref{thm:clt-large-strata} corresponds exactly to the asymptotic variance of the difference-in-means estimator for covariate-adaptive experiments with individual-level ``one-stage'' assignment, as in \cite{Bugni2018}. In fact, when $P(N_g=1)=1$ and $\pi_2=1$, $V_3(1)$ collapses to their variance expression.
\end{remark}

The widely used regression method with cluster-robust variance estimator is potentially conservative for matched tuples designs (see Appendix \ref{sec:app-wols}). Therefore, I aim to develop asymptotically exact methods based on my theoretical results. First, I present variance estimators for $V_3(z)$. A natural estimator of $V_3(z)$ may be constructed by replacing population quantities with their sample counterparts. For $z \in \{0, 1\}$, Let
\begin{align*}
    &\bar{Y}_{1, z} = \frac{1}{G_T} \sum_{1\leq g\leq G} \bar Y_{g}^{z} I\left\{ H_g=\pi_2\right\}~, \quad \bar{Y}_{0, z} = \frac{1}{G_C} \sum_{1\leq g\leq G} \bar Y_{g}^{z} I\left\{ H_g=0\right\}~, \\
    &\hat{\mu}_{1, z}(s)=\frac{1}{G_{1}(s)} \sum_{1 \leq g \leq G} \bar Y_g^{z} I\left\{ H_g=\pi_2, S_g=s\right\}~, \quad \hat{\mu}_{0, z}(s)=\frac{1}{G_{0}(s)} \sum_{1 \leq g \leq G} \bar Y_g^{z} I\left\{ H_g=0, S_g=s\right\}~,
\end{align*}
where $G_{1}(s) = |\{1\leq g \leq G: H_g=\pi_2, S_g =s\} |$ and $G_{0}(s) = |\{1\leq g \leq G: H_g=0, S_g =s\} |$. Then, define $G(s) = |\{1\leq g \leq G: S_g =s\} |$. With this notation, the following estimators can be defined:
\begin{align}\label{eqn:V3(z)hat}
    \begin{split}
    \hat V_3(z) &= \frac{1}{\pi_1}\left(\frac{1}{G_T} \sum_{1 \leq g \leq G} \left(\bar Y_g^{z}\right)^2 I\{H_g=\pi_2\}-\sum_{s \in \mathcal{S}} \frac{G(s)}{G} \hat{\mu}_{1,z}(s)^2\right)\\
    &\quad + \frac{1}{1-\pi_1}\left(\frac{1}{G_C} \sum_{1 \leq g \leq G} \left(\bar Y_g^{z}\right)^2I\{H_g=0\}-\sum_{s \in \mathcal{S}} \frac{G(s)}{G} \hat{\mu}_{0,0}(s)^2\right) \\
    &\quad\quad + \sum_{s \in \mathcal{S}} \frac{G(s)}{G}\left(\left(\hat{\mu}_{1,z}(s)-\bar{Y}_{1, z}\right)-\left(\hat{\mu}_{0, 0}(s)-\bar{Y}_{0, 0}\right)\right)^2\\
    &\quad\quad\quad + \sum_{s \in \mathcal{S}} \tau(s) \frac{G(s)}{G}\left(\frac{1}{\pi}\left(\hat{\mu}_{1, z}(s)-\bar{Y}_{1, z}\right)+\frac{1}{1-\pi}\left(\hat{\mu}_{0, 0}(s)-\bar{Y}_{0, 0}\right)\right)^2~.
    \end{split}
\end{align}

The estimator for $V_4(z)$ follows the same approach as $\hat V_3(z)$, while additionally requires estimation for terms associated with $\tilde Y_g(z,h)$. Let $\tilde Y_g^z$ denote the observed adjusted outcome.
\begin{align*}
    \tilde{Y}_g^z = \frac{N_{g}}{ \frac{1}{G}\sum_{1\leq g \leq G} N_g }\left(\bar{Y}_{g}^z-\frac{ \frac{1}{G_g} \sum_{1\leq j \leq G} \bar{Y}_{j}^z I\{H_g = H_j\} N_j}{\frac{1}{G} \sum_{1\leq j \leq G} N_{j}} \right)~,
\end{align*}
where $G_g = \sum_{1\leq j \leq G} I\{H_g = H_j\}$. For $z \in \{0, 1\}$, Let
\begin{align*}
    &\tilde{\mu}_{1, z}(s)=\frac{1}{G_{1}(s)} \sum_{1 \leq g \leq G} \tilde Y_g^{z} I\left\{ H_g=\pi_2, S_g=s\right\}~,\\
    &\tilde{\mu}_{0, z}(s)=\frac{1}{G_{0}(s)} \sum_{1 \leq g \leq G} \tilde Y_g^{z} I\left\{ H_g=0, S_g=s\right\}~.
\end{align*} 
To estimate $V_4(z)$, I propose the exact same estimator as $\hat V_3(z)$ by simply replacing $\bar Y_g^z$ with $\tilde Y_g^z$. Thus, the following estimators can be defined:
\begin{align}\label{eqn:V4(z)hat}
    \begin{split}
        \hat V_4(z) &= \frac{1}{\pi_1}\left(\frac{1}{G_T} \sum_{1 \leq g \leq G} \left(\tilde Y_g^{z}\right)^2 I\{H_g=\pi_2\}-\sum_{s \in \mathcal{S}} \frac{G(s)}{G} \tilde{\mu}_{1,z}(s)^2\right)\\
        &\quad + \frac{1}{1-\pi_1}\left(\frac{1}{G_C} \sum_{1 \leq g \leq G} \left(\tilde Y_g^{z}\right)^2I\{H_g=0\}-\sum_{s \in \mathcal{S}} \frac{G(s)}{G} \tilde{\mu}_{0,0}(s)^2\right) \\
        &\quad\quad + \sum_{s \in \mathcal{S}} \frac{G(s)}{G}\left(\tilde{\mu}_{1,z}(s)-\tilde{\mu}_{0, 0}(s)\right)^2 + \sum_{s \in \mathcal{S}} \tau(s) \frac{G(s)}{G}\left(\frac{1}{\pi}\tilde{\mu}_{1, z}(s)+\frac{1}{1-\pi}\tilde{\mu}_{0, 0}(s)\right)^2~.
    \end{split}
\end{align}
Then, the following consistency result for variance estimators $\hat V_3(z)$ and $\hat V_4(z)$ can be obtained:
\begin{theorem}\label{thm:variance-estimator-ca2}
Under Assumption \ref{as:interference}-\ref{ass:Q_G}, \ref{as:assignment2} and \ref{as:assignment1}, as $n \to \infty$, $\hat V_3(z) \xrightarrow{P} V_3(z)$ and $\hat V_4(z) \xrightarrow{P} V_4(z)$ for $z \in \{0,1\}$.
\end{theorem}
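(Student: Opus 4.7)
The plan is to establish consistency of $\hat V_3(z)$ by decomposing it into its four summands, showing each converges to an explicit population limit via a conditional law of large numbers argument, and then combining via an algebraic identity to match $V_3(z)$. The proof for $\hat V_4(z)$ follows the same template after handling the ratio structure of the adjusted outcomes $\tilde Y_g^z$.

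First, I would assemble the basic building blocks. The i.i.d.\ sampling in Assumption \ref{ass:Q_G} yields $G(s)/G \stackrel{P}{\to} p(s)$ for each $s \in \mathcal{S}$. Assumption \ref{as:assignment1}(b) implies $D_G(s)/G = o_P(1)$, so $G_1(s)/G \stackrel{P}{\to} \pi_1 p(s)$ and $G_0(s)/G \stackrel{P}{\to} (1-\pi_1) p(s)$. Exploiting the conditional independence $W^{(G)} \perp H^{(G)} \mid S^{(G)}$ from Assumption \ref{as:assignment1}(a) together with the i.i.d.\ structure of $W_g \mid S_g$, a conditional LLN (in the spirit of Lemma B.2 of \cite{Bugni2018}) delivers
\[ \hat\mu_{1,z}(s) \stackrel{P}{\to} E[\bar Y_g(z,\pi_2)\mid S_g=s], \qquad \hat\mu_{0,0}(s) \stackrel{P}{\to} E[\bar Y_g(0,0)\mid S_g=s], \]
together with $\bar Y_{1,z} \stackrel{P}{\to} E[\bar Y_g(z,\pi_2)]$, $\bar Y_{0,0} \stackrel{P}{\to} E[\bar Y_g(0,0)]$, and the analogous convergences for the second moments $\frac{1}{G_1}\sum_g (\bar Y_g^z)^2 I\{H_g=\pi_2\}$ and $\frac{1}{G_0}\sum_g (\bar Y_g^z)^2 I\{H_g=0\}$. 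The uniform integrability needed to extend from bounded-continuous functionals to squared outcomes is supplied by Assumptions \ref{ass:Q_G}(c)--(d).

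Applying the continuous mapping theorem term by term, the four summands of $\hat V_3(z)$ converge respectively to $\pi_1^{-1} E[\var[\bar Y_g(z,\pi_2)\mid S_g]]$, $(1-\pi_1)^{-1} E[\var[\bar Y_g(0,0)\mid S_g]]$, $E[(m_{z,\pi_2}(S_g) - m_{0,0}(S_g))^2]$, and $E[\tau(S_g)(\pi_1^{-1}m_{z,\pi_2}(S_g) + (1-\pi_1)^{-1}m_{0,0}(S_g))^2]$. An algebraic identity, obtained by invoking the law of total variance $\var[\bar Y_g(z,h)] = E[\var[\bar Y_g(z,h)\mid S_g]] + E[m_{z,h}(S_g)^2]$ and expanding the quadratic $-\pi_1(1-\pi_1) E[(\pi_1^{-1} m_{z,\pi_2} + (1-\pi_1)^{-1} m_{0,0})^2]$ appearing in $V_3(z)$, then shows these four limits sum exactly to $V_3(z)$. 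The argument for $\hat V_4(z)$ is structurally identical once one verifies that $\tilde Y_g^z$ tracks $\tilde Y_g(z,h)$ under the relevant exposure: by the LLN, $\bar N_G \stackrel{P}{\to} E[N_g]$, and $G_g^{-1} \sum_j \bar Y_j^z N_j I\{H_g = H_j\} \stackrel{P}{\to} E[\bar Y_g(z,h) N_g]$ via the same conditional LLN applied to the product $\bar Y_g^z N_g$ (integrability from Assumptions \ref{ass:Q_G}(c)--(d)). Slutsky's theorem then delivers $\tilde Y_g^z = \tilde Y_g(z,h) + o_P(1)$, and the Steps 1--3 argument applied with $\tilde Y_g^z$ in place of $\bar Y_g^z$ produces $\hat V_4(z) \stackrel{P}{\to} V_4(z)$.

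The main obstacle is the conditional LLN step. Since $\{H_g\}$ need not be independent within strata (for example, stratified block randomization imposes a hard within-stratum balance constraint, and biased-coin designs introduce sequential dependence), the classical i.i.d.\ LLN does not apply directly to the sums $\sum_g \bar Y_g^z I\{H_g=\pi_2, S_g=s\}$. The resolution is to condition on $(S^{(G)}, H^{(G)})$: under Assumption \ref{as:assignment1}(a), the $\bar Y_g^z$ values in each cell $\{g: H_g=\pi_2, S_g=s\}$ behave like an i.i.d.\ sample from the conditional distribution of $\bar Y_g(z,\pi_2) \mid S_g=s$, while Assumption \ref{as:assignment1}(b) ensures the cell sizes $G_1(s)/G$ stabilize at $\pi_1 p(s)$. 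This is precisely the device used by \cite{Bugni2018} to establish consistency of variance estimators under covariate-adaptive individual-level randomization, and here we adapt it to cluster-level averaged outcomes after first-/second-stage randomization.
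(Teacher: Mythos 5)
Your proposal is correct and follows essentially the same route as the paper: the conditional-on-$(S^{(G)},H^{(G)})$ law of large numbers device from Appendix A.2 of \cite{Bugni2018} (with moment bounds from Lemma \ref{lem:E_bounded}) to obtain the cell-level first and second moments, the continuous mapping theorem to combine them, and the law-of-total-variance identity linking the estimator's four summands to the expression for $V_3(z)$. Your treatment of $\hat V_4(z)$ — replacing the sample-normalized $\tilde Y_g^z$ by the population-normalized $\tilde Y_g(z,h)$ via Slutsky and the convergence of $\bar N_G$ and of $G_g^{-1}\sum_j \bar Y_j^z N_j I\{H_g=H_j\}$ — is exactly the paper's argument, which carries it out by factoring the differences of squares and cross terms and bounding the residual sums using the same integrability conditions.
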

As pointed out by \cite{athey2017} and \cite{matched-tuple}, introducing replicates for each treatment arm in a matched tuples design can improve the finite sample performance for the adjusted $t$-tests based on $\hat V_1(z)$ and $\hat V_2(z)$.\footnote{When there are duplicates, I no longer need to form ``pairs of pairs'' for variance estimation. Instead, I could replace $\hat \rho_n^z(h, h)$ by $$\tilde \rho_n^z(h, h) =  \frac{2}{n} \sum_{1 \leq j \leq \lfloor n / 2 \rfloor} \frac{1}{k^2(h)}\Big ( \sum_{i \in \lambda_{j}} \bar Y_i^z I \{H_i = h\} \Big )~.$$} This motivates the use of variance estimators based on ``large tuples''. To that extent, $\hat V_3(z)$ and $\hat V_4(z)$, which take advantage of all observations within a stratum at the same time, are preferable for experiments with large strata (see Remark \ref{remark:compare-variance-app}). In practice, the choice of variance estimators depends on the sizes of the strata. Specifically, $\hat V_3(z)$ and $\hat V_4(z)$, whose consistency relies on large numbers of observations within each stratum, are suitable for experiments with large strata, while $\hat V_1(z)$ and $\hat V_2(z)$ are suitable for experiments with small strata.\footnote{In practice, most experimental designs either involve stratification on a limited number of categorical variables or matching units into groups of fewer than five. However, when decision-making is complex, choosing \(\hat V_1(z)\) and \(\hat V_2(z)\) is advisable. For instance, if stratification on a few categorical variables results in some strata having insufficient observations for reliable asymptotic analysis, then \(\hat V_1(z)\) and \(\hat V_2(z)\) become essential.} From this perspective, it is useful to divide stratified experiments into ``large strata'' and ``small strata'' and consider two separate sets of variance estimators.

\begin{remark}[Comparison of Variance Estimators]\label{remark:compare-variance-app}
    For experiments with large strata, $\hat V_3(z)$ and $\hat V_4(z)$ could be more efficient than $\hat V_1(z)$ and $\hat V_2(z)$. Consider estimating terms like $E\left[ E[\bar{Y}_{g}(z,h) \mid S_g]^2 \right]$. The estimators $\hat V_3(z)$ and $\hat V_4(z)$ utilize:
    \begin{equation*}
        \hat\nu_z = \sum_{s \in \mathcal{S}} \frac{G(s)}{G} \tilde{\mu}_{1,z}(s)^2 ~.
    \end{equation*}
    In contrast, $\hat V_1(z)$ and $\hat V_2(z)$ use:
    \begin{equation*}
        \hat\omega_z = \hat \rho_n^z(h, h) = \frac{2}{n} \sum_{1 \leq j \leq \lfloor n / 2 \rfloor} \frac{1}{k^2(h)} \left(\sum_{i \in \lambda_{2j-1}} \bar Y_i^z I \{H_i = h\} \right) \left(\sum_{i \in \lambda_{2j}} \bar Y_i^z I \{H_i = h\} \right) ~.
    \end{equation*}
    Consider a simple example with only one stratum ($S_g=1$) and all units treated ($\pi_2=1$), then:
    \begin{align*}
        \hat \nu_1  &= \left( \frac{1}{n} \sum_{i=1}^n \bar Y_{i}^1 \right)^2,\\
        \hat \omega_1 &= \frac{2}{n} \sum_{1\leq i \leq \lfloor n / 2 \rfloor} \bar Y_{i}^1 \bar Y_{i+1}^1 ~.
    \end{align*}
    $\hat \nu_1$ averages all $\bar Y_{i}^1 \bar Y_{j}^1$ for $1 \leq i,j \leq n$, while $\hat \omega_1$ averages $\bar Y_{i}^1 \bar Y_{i+1}^1$ for $1 \leq i \leq \lfloor n / 2 \rfloor$. Assuming $E[Y_{i}^1] = 0$ for all $i$, the finite variance of $\hat \nu_1$ is
    \begin{align*}
        \var(\hat \nu_1) = \frac{1}{n^3} \var \left( (Y_{i}^1)^2 \right) + \frac{n-1}{n^3} \var(Y_{i}^1)^2 ~,
    \end{align*}
    and for $\hat \omega_1$:
    \begin{align*}
        \var(\hat \omega_1) = \frac{2}{n} \var(Y_{i}^1)^2 ~.
    \end{align*}
    Consequently, $\var(\hat \nu_1) = O(1/n^2)$ and $\var(\hat \omega_1) = O(1/n)$, indicating that $\hat V_3(z)$ and $\hat V_4(z)$ could indeed be more efficient.
\end{remark}

\newpage
\section{Proofs of Main Results}
\subsection{Proof for Equally-Weighed Estimator in Theorem \ref{thm:clt-large-strata}}\label{proof:equally-weighted-ca}
To begin with, both estimators can be written as follows.
\begin{align*}
    \hat{\theta}^P_1 &= \frac{1}{G_T}\sum_{1\leq g \leq G} I\{H_g = \pi_2 \} \bar{Y}_g(1,\pi_2)   - \frac{1 }{G_C} \sum_{1\leq g \leq G} I\{H_g = 0 \} \bar{Y}_g(0,0)~, \\
     \hat{\theta}^S_1 &=\frac{1}{G_T}\sum_{1\leq g \leq G} I\{H_g = \pi_2 \} \bar{Y}_g(0,\pi_2)  - \frac{1 }{G_C} \sum_{1\leq g \leq G} I\{H_g = 0 \} \bar{Y}_g(0,0)~.
\end{align*}
By Lemma 5.1 of \cite{Bugni2022} and Assumption \ref{ass:Q_G} (a)-(b), we have $((\bar{Y}_g(1,\pi_2), \bar{Y}_g(0,\pi_2), \bar{Y}_g(0,0)): 1\leq g \leq G)$ being an i.i.d sequence of random variables. Then, by the law of iterated expectation and  Assumption \ref{ass:Q_G} (f) and  \ref{as:assignment2},
\begin{align*}
    E\left[\bar{Y}_g(1,\pi_2)\right] &= E\left[E\left[\frac{1}{M_g^1} \sum_{i \in \mathcal{M}_g} Y_{i,g}(1,\pi_2) Z_{i,g}(\pi_2) \mid B_g, \mathcal{M}_g \right]\right] \\
    &= E\left[\frac{1}{M_g^1}\sum_{i \in \mathcal{M}_g} E\left[  Y_{i,g}(1,\pi_2) Z_{i,g}(\pi_2) \mid B_g, \mathcal{M}_g \right]\right] \\
    &= E\left[\frac{1}{M_g^1} \sum_{i \in \mathcal{M}_g} E\left[Y_{i,g}(1,\pi_2)  \mid B_g, \mathcal{M}_g \right] E\left[ Z_{i,g}(\pi_2) \mid B_g  \right]\right] \\
    &= E\left[\frac{1}{M_g} \sum_{i \in \mathcal{M}_g} E\left[Y_{i,g}(1,\pi_2)  \mid B_g, \mathcal{M}_g \right]\right] = E\left[\frac{1}{N_g}\sum_{1 \leq i \leq N_g} Y_{i,g}(1, \pi_2) \right]~.
\end{align*}
Similarly, 
\begin{equation*}
    E[\bar{Y}_g(0,\pi_2)] = E\left[\frac{1}{N_g}\sum_{1 \leq i \leq N_g} Y_{i,g}(0, \pi_2) \right] \text{, }\quad  E[\bar{Y}_g(0,0)] = E\left[\frac{1}{N_g}\sum_{1 \leq i \leq N_g} Y_{i,g}(0, 0) \right]~.
\end{equation*}
Thus, $\theta^P_1 = E\left[\bar{Y}_g(1,\pi_2)\right] - E[\bar{Y}_g(0,0)]$ and $\theta^S_1 = E\left[\bar{Y}_g(0,\pi_2)\right] - E[\bar{Y}_g(0,0)]$.
By Assumption \ref{as:assignment2} and \ref{as:assignment1}, we have
\begin{equation*}
    H^{(G)} \independent ((\bar{Y}_g(1,\pi_2), \bar{Y}_g(0,\pi_2), \bar{Y}_g(0,0)): 1\leq g \leq G) \mid S^{(G)}~.
\end{equation*}
By Assumption \ref{ass:Q_G} (c)-(d),
\begin{equation*}
    E\left[\bar{Y}_g^2(1,\pi_2)\right] = E\left[\left(\frac{1}{M_g^1} \sum_{i \in \mathcal{M}_g} Y_{i,g}(1,\pi_2) Z_{i,g}(\pi_2) \right)^2\right] \leq E\left[\left(\max_{1\leq i \leq N_g} Y_{i,g}(1,\pi_2) \right)^2 \right] < \infty~.
\end{equation*}
Same conclusions hold for $\bar{Y}_g^2(0,\pi_2)$ and $\bar{Y}_g^2(0,0)$. Then, the result follows directly by Theorem 4.1 of \cite{Bugni2018} and Lemma \ref{lem:E_bounded} and Assumption \ref{ass:Q_G} and \ref{as:assignment1}.
\qed 
\subsection{Proof for Size-Weighed Estimator in Theorem \ref{thm:clt-large-strata}}\label{proof:size-weighted-ca}
To preserve space, I only present proof for primary effect as the proof for spillover effect follows the same argument. Define $\mathbf L_G = \left( \mathbf L_G^{\rm YN1}, \mathbf L_G^{\rm N1}, \mathbf L_G^{\rm YN0}, \mathbf L_G^{\rm N0}\right)$ as follows.
\begin{align*}
    &\mathbf L_G^{\rm YN1} := \frac{1}{G_T}\sum_{1 \leq g \leq G} \left( \bar{Y}_g(1,\pi_2) N_g - E\left[\bar{Y}_g(1,\pi_2) N_g\right]\right) I\{H_g=\pi_2\}~, \\
    &\mathbf L_G^{\rm N1} := \frac{1}{G_T}\sum_{1 \leq g \leq G} \left( N_g -E\left[ N_g\right]\right) I\{H_g=\pi_2\}~, \\
    &\mathbf L_G^{\rm YN0} := \frac{1}{G_C}\sum_{1 \leq g \leq G} \left( \bar{Y}_g(0,0) N_g -E\left[\bar{Y}_g(0,0) N_g\right]\right)I\{H_g=0\}~, \\
    &\mathbf L_G^{\rm N0} := \frac{1}{G_C}\sum_{1 \leq g \leq G} \left( N_g - E\left[ N_g\right]\right) I\{H_g=0\} ~.
\end{align*}
By the law of iterated expectation and  Assumption \ref{ass:Q_G} (f),
\begin{align*}
    E\left[\bar{Y}_g(1,\pi_2) N_g\right] &= E\left[N_g E\left[\bar{Y}_g(1,\pi_2)  \mid N_g\right]\right] = E\left[N_g E\left[\frac{1}{M_g} \sum_{i \in \mathcal{M}_g} Y_{i,g}(1,\pi_2)  \mid N_g\right]\right] \\
    &= E\left[N_g E\left[\frac{1}{N_g} \sum_{1 \leq i \leq N_g} Y_{i,g}(1,\pi_2)  \mid N_g\right]\right] = E\left[ \sum_{1 \leq i \leq N_g} Y_{i,g}(1,\pi_2)  \right].
\end{align*}
Thus,
\begin{equation*}
    \theta^P_2 = \frac{E\left[\bar{Y}_g(1,\pi_2) N_g\right]}{E[N_g]} - \frac{E\left[\bar{Y}_g(0,0) N_g\right]}{E[N_g]} \text{ and } \theta^S_2 = \frac{E\left[\bar{Y}_g(0,\pi_2) N_g\right]}{E[N_g]} - \frac{E\left[\bar{Y}_g(0,0) N_g\right]}{E[N_g]}~.
\end{equation*}
Note that $\frac{G_T}{G} = \frac{D_G}{G} + \pi_1$. Thus,
\begin{equation*}
    \sqrt{G} \mathbf L_G^{\rm YN1} = \left(\frac{D_{G}}{G}+\pi_1\right)^{-1}\left(1-\pi_1-\frac{D_{G}}{G}\right)^{-1} \frac{1}{\sqrt{G}} \sum_{g=1}^{G} \left(\left(1-\pi -\frac{D_G}{G} \right)\left( \bar{Y}_g(1,\pi_2) N_g - \mu_1\right) I\{H_g=\pi_2\} \right)~,
\end{equation*}
where $E\left[\bar{Y}_g(1,\pi_2) N_g\right] = \mu_1$. By Lemma B.1 and B.3 of \cite{Bugni2018}, Lemma \ref{lem:E_bounded} and Assumption \ref{ass:Q_G} and \ref{as:assignment1}, we have
\begin{equation*}
    \sqrt{G} \mathbf L_G^{\rm YN1} = \left(\pi_1 (1-\pi_1)\right)^{-1} \underbrace{\frac{1}{\sqrt{G}} \sum_{1\leq g\leq G} \left(\left(1-\pi_1 \right)\left(\bar{Y}_g(1,\pi_2) N_g - E\left[\bar{Y}_g(1,\pi_2) N_g\right]\right) I\{H_g=\pi_2\} \right)}_{:=L_G^{\rm YN1}}  + o_P(1)~.
\end{equation*}
Similarly,
\begin{align*}
    &\sqrt{G} \mathbf L_G^{\rm N1} = \left(\pi_1 (1-\pi_1)\right)^{-1} \underbrace{\frac{1}{\sqrt{G}} \sum_{1\leq g\leq G} \left(\left(1-\pi_1 \right)\left( N_g -E\left[ N_g\right]\right) I\{H_g=\pi_2\} \right)}_{:=L_G^{\rm N1}} + o_P(1)~,\\
    &\sqrt{G} \mathbf L_G^{\rm YN0} = \left(\pi_1 (1-\pi_1)\right)^{-1} \underbrace{\frac{1}{\sqrt{G}} \sum_{1\leq g\leq G} \left(\pi_1\left( \bar{Y}_g(0,0) N_g - E\left[\bar{Y}_g(0,0) N_g\right]\right) I\{H_g=0\} \right)}_{:=L_G^{\rm YN0}} + o_P(1)~, \\
    &\sqrt{G} \mathbf L_G^{\rm N0} = \left(\pi_1 (1-\pi_1)\right)^{-1} \underbrace{\frac{1}{\sqrt{2n}} \sum_{1\leq i\leq 2n} \left(\pi_1\left( N_g -E\left[ N_g\right]\right) I\{H_g=0\} \right)}_{:=L_G^{\rm N0}}  + o_P(1)~.
\end{align*}
Define
\begin{align*}
    &\Tilde{Y}_g^{\rm N}(z,h) = \bar Y_g(z, h)N_g - E\left[\bar Y_g(z, h)N_g\mid S_g\right]~, \\
    &\Tilde N_g= N_g - E\left[ N_g\mid S_g\right]~, \\
    &m_{z,h}^{\rm YN}(S_{g})=E\left[Y_g(z, h)N_g\mid S_g\right]-E\left[Y_g(z, h)N_g\right]~, \\
    &m^{\rm N}(S_g)=E\left[N_g\mid S_g\right]-E\left[N_g\right]~,
\end{align*}
and consider the following decomposition for $L_G^{\rm YN1}$:
\begin{align*}
    L_G^{\rm YN1} &= R_{n,1} + R_{n,2} + R_{n,3} \\
    &=\frac{\pi_1 (1-\pi_1)}{\sqrt{G}} \sum_{1\leq g \leq G} \frac{1}{\pi_1} \Tilde{Y}_g^{\rm N}(1,\pi_2) I\{H_g=\pi_2\} + \pi_1 (1-\pi_1) \sum_{s\in \mathcal{S}} \frac{D_G(s)}{\sqrt{G}} \frac{1}{\pi_1} m_{1,\pi_2}^{\rm YN}(S_g) \\
    &\quad + \pi_1 (1-\pi_1)\sum_{s \in \mathcal{S}}\sqrt{G}\left(\frac{G(s)}{G} - p(s) \right) m_{1,\pi_2}^{\rm YN}(S_g)~.
\end{align*}
Similarly, we have the same decomposition for $L_G^{\rm YN0}, L_G^{\rm N1}, L_G^{\rm N0}$. Define
\begin{align*}
    &\mathbf{d} := \left(\frac{D_G(s)}{\sqrt{G}}: s\in \mathcal{S} \right)^\prime\\
    &\mathbf{n} := \left(\sqrt{G} \left(\frac{G(s)}{G} - p(s) \right): s\in \mathcal{S} \right)^\prime\\
    &\mathbf{m}^{\rm YN}_{z,h} := \left(E\left[m_{z,h}^{\rm YN}(C_g) \mid S_g=s\right] : s\in \mathcal{S}\right)^\prime\\
    &\mathbf{m}^{\rm N} := \left(E\left[m^{\rm N}(C_g) \mid S_g=s\right] : s\in \mathcal{S}\right)^\prime~.
\end{align*}
Then, we can write
\begin{align*}
\scriptsize
    (\pi_1 (1-\pi_1))^{-1}\begin{pmatrix}
L_G^{\rm YN1}  \\
L_G^{\rm N1} \\
L_G^{\rm YN0} \\
L_G^{\rm N0}
\end{pmatrix}
=  \underbrace{\begin{pmatrix}
1 & 0 & 0 & 0 &\frac{1}{\pi_1} \left(\mathbf{m}_{1,\pi_2}^{\rm YN}\right)^\prime & \left(\mathbf{m}_{1,\pi_2}^{\rm YN}\right)^\prime\\
0 & 1 & 0 & 0 &\frac{1}{\pi_1} \left(\mathbf{m}^{\rm N}\right)^\prime & \left(\mathbf{m}^{\rm N}\right)^\prime \\
0 & 0 & 1 & 0 &-\frac{1}{1-\pi_1} \left(\mathbf{m}_{0,0}^{\rm YN}\right)^\prime & \left(\mathbf{m}_{0,0}^{\rm YN}\right)^\prime \\
0 & 0 & 0 & 1 &-\frac{1}{1-\pi_1} \left(\mathbf{m}^{\rm N}\right)^\prime & \left(\mathbf{m}^{\rm N}\right)^\prime
\end{pmatrix}}_{:=M^\prime} \underbrace{\begin{pmatrix}
\frac{1}{\sqrt{G}} \sum_{g=1}^{G} \frac{1}{\pi_1} \Tilde{Y}_g^{\rm YN}(1,\pi_2) I\{H_g=\pi_2\} \\
\frac{1}{\sqrt{G}} \sum_{g=1}^{G} \frac{1}{\pi_1}  \Tilde{N}_g I\{H_g=\pi_2\} \\
\frac{1}{\sqrt{G}} \sum_{g=1}^{G} \frac{1}{1-\pi_1}  \Tilde{Y}_g^{\rm YN}(0,0) I\{H_g=0\} \\
\frac{1}{\sqrt{G}} \sum_{g=1}^{G} \frac{1}{1-\pi_1} \Tilde{N}_g I\{H_g=0\}\\
\mathbf{d} \\
\mathbf{n}
\end{pmatrix}}_{:= \mathbf{y}_n}
\end{align*}
Following Lemma B.2 from \cite{Bugni2018}, we have
\begin{equation*}
\mathbf{y}_n \xrightarrow{d} \mathcal{N}(0, \Sigma)~,
\end{equation*}
where
\begin{equation*}
    \Sigma = \begin{pmatrix}
    \Sigma_1 & 0 & 0 & 0\\
    0 & \Sigma_0 & 0 & 0 \\
    0 & 0 & \Sigma_D & 0 \\
    0 & 0 & 0 & \Sigma_{N} 
    \end{pmatrix}~,
\end{equation*}
for
\begin{align*}
    &\Sigma_1 = \begin{pmatrix}
    \frac{\text{Var}\left[\Tilde{Y}_g^{\rm YN}(1,\pi_2)\right]}{\pi_1} & \frac{E\left[\Tilde{Y}_g^{\rm YN}(1,\pi_2) N_g\right]}{\pi_1} \\
    \frac{E\left[\Tilde{Y}_g^{\rm YN}(1,\pi_2) N_g\right]}{\pi_1} & \frac{\text{Var}\left[N_g\right]}{\pi_1}
    \end{pmatrix},  &\Sigma_0 = \begin{pmatrix}
    \frac{\text{Var}\left[\Tilde{Y}_g^{\rm YN}(0,0)\right]}{1-\pi_1} & \frac{E\left[\Tilde{Y}_g^{\rm YN}(0,0) N_g\right]}{1-\pi_1} \\
    \frac{E\left[\Tilde{Y}_g^{\rm YN}(0,0) N_g\right]}{1-\pi_1} & \frac{\text{Var}\left[N_g\right]}{1-\pi_1}
    \end{pmatrix}~, \\
    &\Sigma_D = \text{diag}\left(p(s)\tau(s): s\in\mathcal{S} \right),  &\Sigma_N = \text{diag}\left(p(s): s\in\mathcal{S} \right)- \left(p(s): s\in\mathcal{S} \right) \left(p(s): s\in\mathcal{S} \right)^\prime ~.
\end{align*}
Let $\mathbf{m}(S_g) = \left(m_{1,\pi_2}^{\rm YN}(S_g), m_0^{\rm N}(S_g), m_{0,0}^{\rm YN}(S_g), m_0^{\rm N}(S_g) \right)^\prime$. We have
\begin{equation*}
    \mathbb V = M^\prime \Sigma M = \mathbb V_1 + \mathbb V_2 + \mathbb V_3,
\end{equation*}
where
\begin{align*}
    \mathbb V_1 &=  \begin{pmatrix}
    \frac{1}{\pi_1} \text{Var}\left[\Tilde{Y}_g^{\rm YN}(1,\pi_2)\right]   & \frac{1}{\pi_1}E\left[\Tilde{Y}_g^{\rm YN}(1,\pi_2) N_g\right]  & 0 & 0 \\
    \frac{1}{\pi_1} E\left[\Tilde{Y}_g^{\rm YN}(1,\pi_2) N_g\right]  & \frac{1}{\pi_1} \text{Var}\left[N_g\right] &  0 &  0 \\
    0 & 0 & \frac{1}{1-\pi_1} \text{Var}\left[\Tilde{Y}_g^{\rm YN}(0,0)\right] & \frac{1}{1-\pi_1} E\left[\Tilde{Y}_g^{\rm YN}(0,0) N_g\right] \\
    0 & 0 & \frac{1}{1-\pi_1} E\left[\Tilde{Y}_g^{\rm YN}(0,0) N_g\right] &  \frac{1}{1-\pi_1} \text{Var}\left[N_g\right]
    \end{pmatrix}~,\\
    \mathbb V_2 &= \operatorname{Var}\left[\mathbf{m}(S_g)\right]~,\\
    \mathbb V_3 &= E\left[ \tau(S_g) \left(\Lambda \mathbf{m}(S_g) \mathbf{m}(S_g)^\prime \Lambda \right) \right] \text{ with } \Lambda = \text{diag}\left(\frac{1}{\pi_1}, \frac{1}{\pi_1}, -\frac{1}{1-\pi_1},-\frac{1}{1-\pi_1} \right) ~.
\end{align*}
Alternatively,
\begin{align*}
\mathbb V_{11} & =  \frac{1}{\pi_1} \var \left[\bar{Y}_g(1,\pi_2) N_g\right] -  \frac{1-\pi_1}{\pi_1} \var \left[E\left[\bar{Y}_g(1,\pi_2) N_g \mid S_g \right]\right] \\
& \quad + E\left[\frac{\tau(S_g)}{\pi_1^2} \left( E[\bar{Y}_g(1,\pi_2)N_g\mid S_g] - E[\bar{Y}_g(1,\pi_2)N_g] \right)^2 \right] \\
\mathbb V_{12} & = \frac{1}{\pi_1}\cov[\bar{Y}_g(1,\pi_2) N_g, N_g] - \frac{1-\pi_1}{\pi_1}\cov[E[\bar{Y}_g(1,\pi_2) N_g | S_g], E[N_g | S_g]] \\
& \quad + E\left[\frac{\tau(S_g)}{\pi_1^2} \left( E[\bar{Y}_g(1,\pi_2)N_g\mid S_g] - E[\bar{Y}_g(1,\pi_2)N_g] \right) \left( E[N_g\mid S_g] - E[N_g] \right)  \right] \\
\mathbb V_{13} & =  \cov[E[\bar{Y}_g(1,\pi_2) N_g | S_g], E[\bar{Y}_g(0,0) N_g | S_g]] \\
& \quad - E\left[\frac{\tau(S_g)}{\pi_1(1-\pi_1)} \left( E[\bar{Y}_g(1,\pi_2)N_g\mid S_g] - E[\bar{Y}_g(1,\pi_2)N_g] \right) \left( E[\bar{Y}_g(0,0)N_g\mid S_g] - E[\bar{Y}_g(0,0)N_g] \right)  \right]\\
\mathbb V_{14} & =  \cov[E[\bar{Y}_g(1,\pi_2) N_g | S_g], E[N_g | S_g]] \\
& \quad - E\left[\frac{\tau(S_g)}{\pi_1(1-\pi_1)} \left( E[\bar{Y}_g(1,\pi_2)N_g\mid S_g] - E[\bar{Y}_g(1,\pi_2)N_g] \right) \left( E[N_g\mid S_g] - E[N_g] \right)  \right]\\
\mathbb V_{22} & = \frac{1}{\pi_1}\var[N_g] - \frac{1-\pi_1}{\pi_1} \var[E[N_g | S_g]] \\
& \quad + E\left[\frac{\tau(S_g)}{\pi_1^2} \left( E[N_g\mid S_g] - E[N_g] \right)^2  \right]\\
\mathbb V_{23} & =  \cov[E[N_g | S_g], E[\bar{Y}_g(0,0) N_g | S_g]] \\
& \quad - E\left[\frac{\tau(S_g)}{\pi_1(1-\pi_1)} \left( E[N_g\mid S_g] - E[N_g] \right) \left( E[\bar{Y}_g(0,0)N_g\mid S_g] - E[\bar{Y}_g(0,0)N_g] \right)  \right]\\
\mathbb V_{24} & =  \cov[E[N_g | S_g], E[N_g | S_g]] \\
& \quad - E\left[\frac{\tau(S_g)}{\pi_1(1-\pi_1)} \left( E[N_g\mid S_g] - E[N_g] \right)^2  \right]\\
\mathbb V_{33} & = \frac{1}{1-\pi_1} \var[\bar{Y}_g(0,0) N_g] - \frac{\pi_1}{1-\pi_1} \var[E[\bar{Y}_g(0,0) N_g | S_g]] \\
& \quad + E\left[\frac{\tau(S_g)}{(1-\pi_1)^2}\left( E[\bar{Y}_g(0,0)N_g\mid S_g] - E[\bar{Y}_g(0,0)N_g] \right)^2  \right]\\
\mathbb V_{34} & = \frac{1}{1-\pi_1} \cov[\bar{Y}_g(0,0) N_g, N_g] - \frac{\pi_1}{1-\pi_1} \cov[E[\bar{Y}_g(0,0) N_g | S_g], E[N_g | S_g]] \\
& \quad + E\left[\frac{\tau(S_g)}{(1-\pi_1)^2}  \left( E[\bar{Y}_g(0,0)N_g\mid S_g] - E[\bar{Y}_g(0,0)N_g] \right) \left( E[N_g\mid S_g] - E[N_g] \right)  \right]\\
\mathbb V_{44} & = \frac{1}{1-\pi_1} \var[N_g] - \frac{\pi_1}{1-\pi_1}\var[E[N_g | S_g]]\\
& \quad + E\left[\frac{\tau(S_g)}{(1-\pi_1)^2} \left( E[N_g\mid S_g] - E[N_g] \right)^2  \right] ~.
\end{align*}
Therefore,
\begin{equation*}
    \sqrt{G}(\hat{\beta} - \beta) := \sqrt{G} \left(\mathbf L_G^{\rm YN1}, \mathbf L_G^{\rm N1}, \mathbf L_G^{\rm YN0}, \mathbf L_G^{\rm N0} \right)^\prime = (\pi (1-\pi))^{-1}\cdot\left(L_G^{\rm YN1}, L_G^{\rm N1}, L_G^{\rm YN0},L_G^{\rm N0} \right)^\prime + o_P(1) \xrightarrow{d} \mathcal{N}(0, \mathbb V) ~.
\end{equation*}
Let $g(x,y,z,w)= \frac{x}{y}-\frac{z}{w}$. Note that the Jacobian is
\[ D_g(x, y, z, w) = \Big ( \frac{1}{y}, - \frac{x}{y^2}, - \frac{1}{w}, \frac{z}{w^2} \Big )~. \]
By delta method, 
\begin{equation*}
    \sqrt{2n}(\hat{\theta}_2^P - \theta^P_2) = \sqrt{2n}(g(\hat{\beta}) - g(\beta)) \xrightarrow{d} \mathcal{N}(0, V_2(1) ) ~,
\end{equation*}
where
\begin{align*}
    &V_2(1) = D_g^\prime \left(\mathbb V_1 + \mathbb V_2 + \mathbb V_3\right) D_g  
\end{align*}
for
\begin{equation*}
    D_g = \left ( \frac{1}{\pi_1 E[N_g]}, - \frac{E[\bar{Y}_g(1,\pi_2)N_g]}{\pi_1 E[N_g]^2}, - \frac{1}{(1-\pi_1)E[N_g]}, \frac{E[\bar{Y}_g(0,0)N_g]}{(1-\pi_1)E[N_g]^2} \right )^\prime ~.
\end{equation*}
By simple calculation,
\begin{align*}
    &D_g^\prime  \left( \mathbb V_1 + \mathbb V_2\right) D_g =  \frac{1}{\pi_1} \var[\tilde Y_g(z,\pi_2)] + \frac{1}{1-\pi_1} \var[\tilde Y_g(0,0)]\\
    &\quad\quad\quad -  E\left[E\left[\sqrt{\frac{1-\pi_1}{\pi_1}} \tilde Y_g(z,\pi_2) + \sqrt{\frac{\pi_1}{1-\pi_1}}\tilde Y_i(0,0) \Bigg| S_g\right]^2\right] \\
    &D_g^\prime \mathbb V_3 D_g = E\left[ \tau(S_g) \left(\frac{m_{1,\pi_2}^{\rm YN}(S_g)}{\pi_1 E[N_g]} - \frac{E[Y_i(1)N_g] m^{\rm N}(S_g)}{\pi_1 E[N_g]^2} + \frac{m_{0,0}^{\rm YN}(S_g)}{(1-\pi_1)E[N_g] } - \frac{E[Y_i(0)N_g]m^{\rm N}(S_g)}{(1-\pi_1)E[N_g]^2} \right)^2 \right] \\
    &\quad\quad\quad\quad =  E\left[\tau(S_g) \left(\frac{1}{\pi_1}E[\tilde Y_g(z,\pi_2)\mid S_g] + \frac{1}{1-\pi_1}E[\tilde Y_g(0,0)\mid S_g] \right)^2 \right]~.
\end{align*}
Thus, the result follows.
\qed

\subsection{Proof of Theorem \ref{thm:variance-estimator-ca2}}\label{sec:proof-variance-ca}
The conclusion follows by continuous mapping theorem and by showing the following results:
\begin{enumerate}
    \item[(a)] $\frac{G(s)}{G} \xrightarrow{P} p(s)$.
    \item[(b)] $\frac{1}{G_a} \sum_{1 \leq g \leq G} \left(\bar Y_g^{z}\right)^r I\{ H_g=h\} \xrightarrow{P} E[\bar Y_g(z,h)^r]$ for $r, z \in \{0,1\}$ and $(a,h) \in \{(1,\pi_2),(0,0)\}$.
    \item[(c)] $\frac{1}{G_{a}(s)} \sum_{1 \leq g \leq G} \bar Y_g^{z} I\left\{H_g=h, S_g=s\right\} \xrightarrow{P} E[\bar Y_g(z,h) \mid S_g]$ for $z \in \{0,1\}$ and $(a,h) \in \{(1,\pi_2),(0,0)\}$.
    \item[(d)] $\frac{1}{G_a} \sum_{1 \leq g \leq G} \left(\tilde Y_g^{z}\right)^r I\{ H_g=h\} \xrightarrow{P} E[\tilde Y_g(z,h)^r]$ for $r, z \in \{0,1\}$ and $(a,h) \in \{(1,\pi_2),(0,0)\}$.
    \item[(e)] $\frac{1}{G_{a}(s)} \sum_{1 \leq g \leq G} \tilde Y_g^{z} I\left\{H_g=h, S_g=s\right\} \xrightarrow{P} E[\tilde Y_g(z,h) \mid S_g]$ for $z \in \{0,1\}$ and $(a,h) \in \{(1,\pi_2),(0,0)\}$
\end{enumerate}
By following the arguments in Appendix A.2 of \cite{Bugni2018}, Lemma \ref{lem:E_bounded} and Assumption \ref{ass:Q_G} and \ref{as:assignment1},  we conclude that (a), (b) and (c) hold. Next, I first show the results hold for $\tilde Y_g(z,h)$ and then analyze the difference between $\tilde Y_g(z,h)$ and adjusted version $\hat Y_g^z(h)$ defined as follows:
\begin{align}\label{eqn:hatY(h)}
    \begin{split}
    \hat Y_{g}^z(\pi_2) &= \frac{N_{g}}{ \frac{1}{G}\sum_{1\leq g \leq G} N_g }\left(\bar{Y}_{g}(z,\pi_2)-\frac{\frac{1}{G_T} \sum_{1\leq j \leq G} \bar{Y}_{j}(z,\pi_2) I\{H_j = \pi_2\} N_j}{\frac{1}{G} \sum_{1\leq j \leq G} N_j}\right)\\
    \hat Y_{g}^z(0) &= \frac{N_{g}}{ \frac{1}{G}\sum_{1\leq g \leq G} N_g }\left(\bar{Y}_{g}(0,0)-\frac{\frac{1}{G_C}\sum_{1\leq j \leq G} \bar{Y}_{j}(0,0) I\{H_j = 0\} N_j}{\frac{1}{G} \sum_{1\leq j \leq G} N_j}\right)~,
    \end{split}
\end{align}
for which the usual relationship still holds for adjusted outcomes, i.e. $\tilde Y_{g}^z = \sum_{h \in \{0,\pi_2\}} I\{H_g=h\} \hat Y_{g}^z(h)$. Note that
\begin{align*}
    E[\tilde Y_g(z,h)^2] &= E\left[\frac{N_g^2}{E[N_g]^2} \left ( \bar Y_g(z,h) - \frac{E[\bar Y_g(z,h) N_g]}{E[N_g]} \right )^2 \right] \leq 2 E\left[\frac{N_g^2}{E[N_g]^2} \left (  \bar Y_g(z,h)^2 +  \frac{E[\bar Y_g(z,h) N_g]^2}{E[N_g]^2} \right ) \right] \\
    &\leq 2 E\left[N_g^2   \bar Y_g(z,h)^2  \right] + 2 E[\bar Y_g(z,h) N_g]^2 E[N_g^2 ] < \infty~.
\end{align*}
where the first inequality holds by the fact $(a-b)^2 \leq 2 a^2 + 2 b^2$, the second inequality follows by the fact that $E[N_g] \geq 1$, and the last inequality follows by Lemma \ref{lem:E_bounded}. Therefore, again by following the arguments in Appendix A.2 of \cite{Bugni2018},  we conclude that for $r, z \in \{0,1\}$ and $(a,h,c) \in \{(1,\pi_2,T),(0,0,C)\}$,
\begin{align*}
    &\frac{1}{G_c} \sum_{1 \leq g \leq G} \tilde Y_g(z,h)^r I\{ H_g=h\} \xrightarrow{P} E[\tilde Y_g(z,h)^r] \\
    &\frac{1}{G_{a}(s)} \sum_{1 \leq g \leq G} \tilde Y_g(z,h) I\left\{H_g=h, S_g=s\right\} \xrightarrow{P} E[\tilde Y_g(z,h) \mid S_g]~,
\end{align*}
Finally, I show the difference between the above equations with $\tilde Y_g(z,h)$ and $\tilde Y_g^z$ go to zero. Here, I prove this for the following case,
\begin{equation}\label{eqn:variance-ca-diff}
    \frac{1}{G_T} \sum_{1 \leq g \leq G} \left(\tilde Y_g(1,\pi_2)^2 - \left(\tilde Y_g^{1}\right)^2\right)  I\{ H_g=\pi_2\} \xrightarrow{P} 0~;
\end{equation}
an analogous argument establishes the rest. Note that
\begin{align*}
    &\frac{1}{G_T} \sum_{1 \leq g \leq G} \left(\tilde Y_g(1,\pi_2)^2 - \left(\hat Y_g^{1}\right)^2\right)  I\{ H_g=\pi_2\} = \frac{1}{G_T} \sum_{1 \leq g \leq G} \left(\tilde Y_g(1,\pi_2) - \hat Y_g^{1}(\pi_2)\right)\left(\tilde Y_g(1,\pi_2) + \hat Y_g^{1}(\pi_2)\right)  I\{ H_g=\pi_2\} \\
    &= \frac{1}{G_T} \sum_{1 \leq g \leq G}  \left(\frac{1}{E[N_g]} - \frac{1}{ \frac{1}{G}\sum_{1\leq g \leq G} N_g } \right) \bar Y_g(1,\pi_2) N_g \left(\tilde Y_g(1,\pi_2) + \hat Y_g^{1}(\pi_2)\right)  I\{ H_g=\pi_2\} \\
    &\quad - \frac{1}{G_T} \sum_{1 \leq g \leq G} \left( \frac{\frac{1}{G}\sum_{1\leq g \leq G} \bar Y_g(1,\pi_2) I\{H_g=\pi_2\} N_g}{\left(\frac{1}{G} \sum_{1\leq g \leq G} N_g \right)^2} - \frac{E[\Bar{Y}_g(1,\pi_2) N_g]}{E[N_g]^2} \right) N_g \left(\tilde Y_g(1,\pi_2) + \hat Y_g^{1}(\pi_2)\right)  I\{ H_g=\pi_2\}~. 
\end{align*}
I then proceed to prove the following statement
\begin{equation}\label{eqn:variance-ca-1}
    \frac{1}{G_T} \sum_{1 \leq g \leq G}  \bar Y_g(1,\pi_2) N_g \left(\tilde Y_g(1,\pi_2) + \hat Y_g^{1}(\pi_2)\right)  I\{ H_g=\pi_2\} \xrightarrow{P} 2 E[\tilde Y_g(1,\pi_2) \bar Y_g(1,\pi_2) N_g] ~,
\end{equation}
and similar arguments would prove the following statement
\begin{equation*}
    \frac{1}{G_T} \sum_{1 \leq g \leq G}  N_g \left(\tilde Y_g(1,\pi_2) + \hat Y_g^{1}(\pi_2)\right)  I\{ H_g=\pi_2\} \xrightarrow{P} 2 E[N_g \tilde Y_g(1,\pi_2)] ~.
\end{equation*}
Note that
\begin{align*}
    &\frac{1}{G_T} \sum_{1 \leq g \leq G}  \bar Y_g(1,\pi_2) N_g \left(\tilde Y_g(1,\pi_2) + \hat Y_g^{1}(\pi_2)\right)  I\{ H_g=\pi_2\}\\
    &= \frac{1}{G_T} \sum_{1 \leq g \leq G}2  \bar Y_g(1,\pi_2) N_g \tilde Y_g(1,\pi_2)   I\{ H_g=\pi_2\} + \frac{1}{G_T} \sum_{1 \leq g \leq G}  \bar Y_g(1,\pi_2) N_g \left(\hat Y_g^1(\pi_2) -\tilde Y_g(1,\pi_2)\right)  I\{ H_g=\pi_2\}~.
\end{align*}
By weak law of large number, Slutsky's theorem and arguments in the proof of Theorem \ref{thm:clt-large-strata}, we have
\begin{align*}
     \frac{1}{ \frac{1}{G}\sum_{1\leq g \leq G} N_g } &\xrightarrow{P} \frac{1}{E[N_g]} \\
     \frac{\frac{1}{G}\sum_{1\leq g \leq G} \bar Y_g(1,\pi_2) I\{H_g=\pi_2\} N_g}{\left(\frac{1}{G} \sum_{1\leq g \leq G} N_g \right)^2} &\xrightarrow{P} \frac{E[\Bar{Y}_g(1,\pi_2) N_g]}{E[N_g]^2}~.
\end{align*}
Then, by Slutsky's theorem, Lemma \ref{lem:E_bounded} and Lemma B.3 of \cite{Bugni2018},
\begin{align*}
    &\frac{1}{G_T} \sum_{1 \leq g \leq G}  \bar Y_g(1,\pi_2) N_g \left(\hat Y_g^1(\pi_2) -\tilde Y_g(1,\pi_2)\right)  I\{ H_g=\pi_2\}\\
    &= \left(\frac{1}{E[N_g]} - \frac{1}{ \frac{1}{G}\sum_{1\leq g \leq G} N_g } \right) \frac{1}{G_T} \sum_{1 \leq g \leq G}  \bar Y_g(1,\pi_2)^2 N_g^2 I\{ H_g=\pi_2\} \xrightarrow{P} 0~.
\end{align*}
Again, by Lemma B.3 of \cite{Bugni2018}, and
\begin{equation*}
    E[\bar Y_g(1,\pi_2) N_g \tilde Y_g(1,\pi_2)] = \frac{E[\bar Y_g(1,\pi_2)^2 N_g^2]}{E[N_g]} - \frac{E[\bar Y_g(1,\pi_2) N_g] E[\bar Y_g(1,\pi_2) N_g^2]}{E[N_g]^2} < \infty,
\end{equation*}
 We conclude that (\ref{eqn:variance-ca-1}) holds, and then
\begin{equation*}
    \frac{1}{G_T} \sum_{1 \leq g \leq G}  \left(\frac{1}{E[N_g]} - \frac{1}{ \frac{1}{G}\sum_{1\leq g \leq G} N_g } \right) \bar Y_g(1,\pi_2) N_g \left(\tilde Y_g(1,\pi_2) + \hat Y_g^{1}(\pi_2)\right)  I\{ H_g=\pi_2\} \xrightarrow{P} 0~.
\end{equation*}
Therefore, (\ref{eqn:variance-ca-diff}) holds.
\qed

\subsection{Proof of Theorem \ref{thm:matched-group}}
To preserve space, I only present the proof for primary effect  as the proof for spillover effect follows the same argument. First, I analyze the equally-weighted estimator. Note that
\begin{equation*}
    \sqrt{G} (\hat{\theta}_1^P - \theta_1^P) =  (\mathbb L_G^{\rm Y1}, \mathbb L_G^{\rm Y0}) D_h  ~,
\end{equation*}
where $D_h=\left(\frac{1}{\sqrt{\pi_1}}, - \frac{1}{\sqrt{1-\pi_1}}\right)^\prime$ and $\mathbb L_G^{\rm Y1}, \mathbb L_G^{\rm Y0}$ are defined in Lemma \ref{lemma:asymptotics-mp}. Thus, by Lemma \ref{lemma:asymptotics-mp}, 
\begin{equation*}
    \sqrt{G} (\hat{\theta}_1^P - \theta_1^P) \xrightarrow{d} \mathcal{N}(0, D_h^\prime \mathbf{V}^{\rm e} D_h),
\end{equation*}
where
\begin{equation*}
   \mathbf{V}^{\rm e} = \begin{pmatrix}
 E[\var[\bar{Y}_g(1,\pi_2) | S_g]] & 0 \\
0 &  E[\var[\bar{Y}_g(0,0) | S_g]]
\end{pmatrix} + \var\left[\begin{pmatrix}
\sqrt{\pi_1} E[ \bar{Y}_g(1,\pi_2)  | S_g] \\
\sqrt{1-\pi_1} E[\bar{Y}_g(0,0) | S_g]\\
\end{pmatrix}\right]~,
\end{equation*}
By simple calculation,  we conclude that $D_h^\prime \mathbf{V}^{\rm e} D_h=V_3(1)$. In order to calculate the variance of size-weighted estimator, I follow the same argument in the end of Section \ref{proof:equally-weighted-ca}. Note that
\begin{equation*}
    \sqrt{G}(\hat{\beta} - \beta) = \sqrt{G}\left(\frac{\mathbb L_G^{\rm YN1}}{\sqrt{G}_1}, \frac{\mathbb L_G^{\rm N1}}{\sqrt{G}_1},  \frac{\mathbb L_G^{\rm YN0}}{\sqrt{G}_0}, \frac{\mathbb L_G^{\rm N0}}{\sqrt{G}_0}\right) = \left(\frac{1}{\sqrt{\pi_1}}, \frac{1}{\sqrt{\pi_1}}, \frac{1}{\sqrt{1-\pi_1}}, \frac{1}{\sqrt{1-\pi_1}}\right) \begin{pmatrix}
\mathbb L_{G}^{\rm YN1} \\
\mathbb L_{G}^{\rm N1} \\
\mathbb L_{G}^{\rm YN0} \\
\mathbb L_{G}^{\rm N0}
\end{pmatrix}~.
\end{equation*}
By a similar calculation and argument in Section \ref{proof:size-weighted-ca} and Lemma \ref{lemma:asymptotics-mp}, the final results is obtained. \qed

\begin{remark}[Details for Remark \ref{remark:how-second-stage-affects-variance}]\label{remark:how-second-stage-affects-variance-app}
    Apparently, the second term in $V_1(z)$ does not depend on the second-stage design since $Z_{i,g}(h)$ does not enter the second term. At first glance, $m_{z,h}(S_g)$ might seem to depend on $Z_{i,g}(h)$, i.e., the second-stage design. However, consider the following derivation:
    \begin{align*}
        m_{1,\pi_2}\left(S_{g}\right) &= E[\bar{Y}_g(1,\pi_2)\mid S_g] - E[\bar{Y}_g(1,\pi_2)] \\
        &= E\left[ \frac{1}{M_g^{1}} \sum_{i \in \mathcal{M}_g} Y_{i,g}(1,\pi_2)Z_{i,g}(\pi_2) \mid S_g\right] - E\left[\frac{1}{N_g}\sum_{1 \leq i \leq N_g} Y_{i,g}(1, \pi_2) \right] \\
        &= E\left[ \frac{1}{M_g^{1}} \sum_{i \in \mathcal{M}_g} Y_{i,g}(1,\pi_2) \mid S_g\right] E[Z_{i,g}(\pi_2)] - E\left[\frac{1}{N_g}\sum_{1 \leq i \leq N_g} Y_{i,g}(1, \pi_2) \right] \\
        &= E\left[ \frac{1}{M_g} \sum_{i \in \mathcal{M}_g} Y_{i,g}(1,\pi_2) \mid S_g\right]  - E\left[\frac{1}{N_g}\sum_{1 \leq i \leq N_g} Y_{i,g}(1, \pi_2) \right]~,
    \end{align*}
    which does not depend on $Z_{i,g}(\pi_2)$. The second equality is confirmed in Section \ref{proof:equally-weighted-ca}. The third equality is justified by Assumption \ref{as:assignment2}(b), which states that $(Z_{i,g}(\pi_2): 1\leq i \leq N_g) \independent ((Y_{i,g}(1,\pi_2): 1\leq i \leq N_g), \mathcal{M}_g, N_g, S_g)$. The last equality results from Assumption \ref{as:assignment2}(c)
\end{remark}

\subsection{Proof of Theorem \ref{thm:variance-estimator-mt}}\label{sec:proof-variance-mp}
First, note that we can write the variance expression as follows:
\begin{align*}
    V_1(z) &= \frac{1}{\pi_1} \operatorname{Var}\left[\bar{Y}_{g}(z,\pi_2)\right] + \frac{1}{1-\pi_1} \operatorname{Var}\left[\bar{Y}_{g}(0,0)\right] - \pi_1(1-\pi_1)  E\left[\left( \frac{1}{\pi_1}m_{z,\pi_2}\left(S_{g}\right) + \frac{1}{1-\pi_1}m_{0,0}\left(S_{g}\right) \right)^2\right] \\
    &= \frac{1}{\pi_1} E\left[ \operatorname{Var}\left[\bar{Y}_{g}(z,\pi_2)\mid S_g \right] \right] + \frac{1}{1-\pi_1} E \left[ \operatorname{Var}\left[\bar{Y}_{g}(0,0)\mid S_g  \right]\right] +   \operatorname{Var}\left[E \left[\bar{Y}_{g}(z,\pi_2) \mid S_g \right]  \right] \\
    &\quad + \operatorname{Var} \left[E \left[\bar{Y}_{g}(0,0)\mid S_g  \right]\right]  - 2\cdot \operatorname{Cov}\left[E \left[\bar{Y}_{g}(z,\pi_2)\mid S_g  \right] , E \left[\bar{Y}_{g}(0,0)\mid S_g  \right]\right] ~.
\end{align*}
By Slutsky's theorem and Lemma \ref{lemma:var-mp-1}-\ref{lemma:var-mp-3},  we conclude that $\hat V_1(z) \xrightarrow{P} V_1(z)$. Similarly, by Slutsky's theorem and Lemma \ref{lemma:var-mp-size-1}-\ref{lemma:var-mp-size-3},  we conclude that $\hat V_2(z) \xrightarrow{P} V_2(z)$.

\subsection{Proof of Theorem \ref{thm:optimal-first-stage}}
To begin with, observe that it is equivalent to show $g_z^{\rm e}(C_g, N_g)=E\left[  \frac{\bar{Y}_g(z,\pi_2)}{\pi_1} + \frac{\bar{Y}_g(0,0) }{1-\pi_1}\mid C_g, N_g \right]$ maximizes
\begin{align*}
    &E\left[\left(\frac{m_{z,\pi_2}\left(S_{g}\right)}{\pi_1} + \frac{m_{0,0}\left(S_{g}\right)}{1-\pi_1} \right)^2\right]\\
    &= E\left[ \left( \frac{E[\bar{Y}_g(z,\pi_2)\mid S_g] - E[\bar{Y}_g(z,h)]}{\pi_1} + \frac{E[\bar{Y}_g(0,0)\mid S_g] - E[\bar{Y}_g(z,h)]}{1-\pi_1}\right)^2\right]~,
\end{align*}
and $g_z^{\rm s}(C_g, N_g)=E\left[\frac{\tilde Y_g(z,\pi_2)}{\pi_1} + \frac{\tilde Y_g(0,0)}{1-\pi_1} \mid C_g,N_g\right]$ maximizes
\begin{align*}
     &E\left[\left(\frac{1}{\pi_1}E[\tilde Y_g(z,\pi_2)\mid S_g] + \frac{1}{1-\pi_1}E[\tilde Y_g(0,0)\mid S_g] \right)^2\right] ~.
\end{align*}
By Theorem C.2. of \cite{bai-inference}, the result for equally-weighted estimators follow directly. In terms of the size-weighted estimators, first observe that
\begin{align*}
    & E\left[\left(g_z^{\rm s}(C_g, N_g) -  E\left[ \frac{\tilde Y_g(z,\pi_2)}{\pi_1} + \frac{\tilde Y_g(0,0)}{1-\pi_1} \bigg | S_g \right] \right) E\left[ \frac{\tilde Y_g (z,\pi_2)}{\pi_1} + \frac{\tilde Y_g (0,0)}{1-\pi_1} \bigg | S_g \right] \right] \\
    &=E\left[ E\left[\left(g_z^{\rm s}(C_g, N_g) -  E\left[ \frac{\tilde Y_g (z,\pi_2)}{\pi_1} + \frac{\tilde Y_g (0,0)}{1-\pi_1} \bigg | S_g \right] \right)\bigg | S_g\right] E\left[ \frac{\tilde Y_g (z,\pi_2)}{\pi_1} + \frac{\tilde Y_g (0,0)}{1-\pi_1} \bigg | S_g \right] \right] \\
    &= 0~,
\end{align*}
by law of iterated expectation. Therefore,
\begin{align*}
    & E\left[g_z^{\rm s}(C_g, N_g)^2 \right] \\
    &=E\left[\left(g_z^{\rm s}(C_g, N_g) - E\left[ \frac{\tilde Y_g (z,\pi_2)}{\pi_1} + \frac{\tilde Y_g (0,0)}{1-\pi_1} \bigg | S_g \right] + E\left[ \frac{\tilde Y_g (z,\pi_2)}{\pi_1} + \frac{\tilde Y_g (0,0)}{1-\pi_1} \bigg | S_g \right] \right)^2 \right] \\
    &= E\left[\left(g_z^{\rm s}(C_g, N_g) - E\left[ \frac{\tilde Y_g (z,\pi_2)}{\pi_1} + \frac{\tilde Y_g (0,0)}{1-\pi_1} \bigg | S_g \right] \right)^2 \right] + E\left[ E\left[ \frac{\tilde Y_g (z,\pi_2)}{\pi_1} + \frac{\tilde Y_g (0,0)}{1-\pi_1} \bigg | S_g \right]^2 \right] \\
    &\geq  E\left[ E\left[ \frac{\tilde Y_g (z,\pi_2)}{\pi_1} + \frac{\tilde Y_g (0,0)}{1-\pi_1} \bigg | S_g \right]^2 \right]~.
\end{align*}
Thus, it is optimal to match on 
\begin{align*}
     g_z^{\rm s}(C_g, N_g) =& E\left[\frac{\tilde Y_g (z,\pi_2)}{\pi_1} + \frac{\tilde Y_g (0,0)}{1-\pi_1} \mid C_g,N_g\right]  ~.
\end{align*}
\qed

\subsection{Proof of Theorem \ref{thm:optimal-second-stage}}
In this section, I show the optimality result holds for $V_1(z)$ first. To begin with, observe that the second stage design enters the variance formula only through $Z_{i,g}$, or in other words $\bar{Y}_g(z,\pi_2)$. Moreover, the conditional expectations, $m_{1,\pi_2}(C_g), m_{0,\pi_2}(C_g), m_{1,\pi_2}(S_g), m_{0,\pi_2}(S_g)$, do not depend on the stratification strategy. Take $m_{1,\pi_2}(C_g)$ as an example:
\begin{align*}
    m_{1,\pi_2}(C_g) &= E\left[\frac{1}{M_g^{1}} \sum_{i \in \mathcal{M}_g} Y_{i,g}(1,\pi_2)Z_{i,g}(\pi_2) \mid C_g\right] - E\left[\frac{1}{M_g^{1}} \sum_{i \in \mathcal{M}_g} Y_{i,g}(1,\pi_2)Z_{i,g}(\pi_2) \right] \\
    &= E\left[ \frac{1}{M_g^{1}} \sum_{i \in \mathcal{M}_g} E\left[ Y_{i,g}(1,\pi_2)Z_{i,g}(\pi_2) \mid C_g, \mathcal{M}_g, B_g \right] \mid C_g\right] - E\left[\frac{1}{N_g}\sum_{1 \leq i \leq N_g} Y_{i,g}(1, \pi_2) \right] \\
    &=   E\left[ \frac{1}{M_g} \sum_{i \in \mathcal{M}_g} Y_{i,g}(1,\pi_2) \mid C_g \right]  - E\left[\frac{1}{N_g}\sum_{1 \leq i \leq N_g} Y_{i,g}(1, \pi_2) \right]~,
\end{align*}
where the last inequality holds by Assumption \ref{as:assignment2}. Therefore, only the first term is likely to depend on stratification strategy. In addition,
\begin{align*}
    \var\left[\bar{Y}_g(1,\pi_2) \right] = E\left[\bar{Y}_g(1,\pi_2)^2 \right] - E\left[\bar{Y}_g(1,\pi_2) \right]^2~,
\end{align*}
for which I only need to focus on the first term. Let $X_g=(X_{i,g}:1\leq i \leq N_g)$.
\begin{align*}
    E\left[\bar{Y}_g(1,\pi_2)^2 \right] &= E\left[ \left(\frac{1}{M_g^{1}} \sum_{i \in \mathcal{M}_g} Y_{i,g}(1,\pi_2)Z_{i,g}(\pi_2) \right)^2\right]\\
    &= E\left[ \frac{1}{\left(M_g^{1}\right)^2} E\left[ \left( \sum_{i \in \mathcal{M}_g} Y_{i,g}(1,\pi_2)Z_{i,g}(\pi_2) \right)^2\mid X_g, \mathcal{M}_g\right]  \right]~.
\end{align*}
In fact, it is equivalent to consider
\begin{align*}
    &E\left[ \frac{1}{\left(M_g^{1}\right)^2} E\left[ \left( \sum_{i \in \mathcal{M}_g} Y_{i,g}(1,\pi_2)Z_{i,g}(\pi_2) \right)^2-  \left( \sum_{i \in \mathcal{M}_g} Y_{i,g}(1,\pi_2)\pi_2 \right)^2\mid X_g, \mathcal{M}_g\right]  \right] \\
    &= E\left[ \frac{1}{\left(M_g^{1}\right)^2} E\left[\sum_{i,j\in\mathcal{M}_g:B_{i,g}=B_{j,g}} Y_{i,g}(1,\pi_2) Y_{j,g}(1,\pi_2) \left(Z_{i,g}(\pi_2) Z_{j,g}(\pi_2)-\pi_2^2\right)  \mid X_g, \mathcal{M}_g\right]  \right] \\
    &= E\left[ \frac{1}{\left(M_g^{1}\right)^2} \sum_{b\in\mathcal{B}} \sum_{B_{i,g}=B_{j,g}=b} E\left[ Y_{i,g}(1,\pi_2) Y_{j,g}(1,\pi_2)  \mid X_g\right] E\left[ Z_{i,g}(\pi_2) Z_{j,g}(\pi_2)-\pi_2^2  \mid X_g\right]  \right] \\
    &= E\left[ \frac{1}{\left(M_g^{1}\right)^2} \sum_{b\in\mathcal{B}} \sum_{i:B_{i,g}=b} E\left[ Y_{i,g}^2(1,\pi_2)  \mid X_g\right] (\pi_2-\pi_2^2)  \right]\\
    &\quad\quad\quad+ E\Bigg[ \frac{1}{\left(M_g^{1}\right)^2} \sum_{b\in\mathcal{B}} \sum_{i\neq j:B_{i,g}=B_{j,g}=b} \bigg( E\left[ Y_{i,g}(1,\pi_2) \mid X_{g}\right] E\left[Y_{j,g}(1,\pi_2)\mid X_{g}\right] \\
    &\quad\quad\quad\quad\quad\quad  + \operatorname{Cov}(Y_{i,g}(1,\pi_2), Y_{j,g}(1,\pi_2)) \bigg) \times E\left[ Z_{i,g}(\pi_2) Z_{j,g}(\pi_2)-\pi_2^2  \mid X_g\right]  \Bigg] ~,
\end{align*}
where the last inequality holds by Assumption \ref{ass:cond-indep}. Note the last term with $\operatorname{Cov}(Y_{i,g}(1,\pi_2), Y_{j,g}(1,\pi_2))$ does not affect the optimization problem and can be dropped since it is invariant across units. By Lemma II.2 of \cite{Bai-optimal}, we only need to consider matched-group design with group size $k$ when $\pi_2 = l/k$ with $l<k$ being positive integers.\footnote{Without loss of generality, I implicitly assume that $N_g/k$ is an integer.} Note that the first term does not depend on stratification, for which we can replace $E\left[ Y_{i,g}^2(1,\pi_2)  \mid X_g\right]$ with $E\left[ Y_{i,g}(1,\pi_2)  \mid X_{g}\right]^2$ without affecting the optimzation problem. Then, by Lemma \ref{lemma:crd-cov} and Assumption \ref{ass:assignment-crd}, we can write the objective above as
\begin{align*}
    &E\left[ \frac{1}{\left(M_g^{1}\right)^2} \sum_{b\in\mathcal{B}} \sum_{i:B_{i,g}=b} E\left[ Y_{i,g}(1,\pi_2)  \mid X_{g}\right]^2  \left(\pi_2-\pi_2^2\right)   \right] \\
    &\quad + E\left[ \frac{1}{\left(M_g^{1}\right)^2} \sum_{b\in\mathcal{B}} \sum_{i\neq j: B_{i,g}=B_{j,g}=b} E\left[ Y_{i,g}(1,\pi_2) \mid X_{g}\right] E\left[Y_{j,g}(1,\pi_2)\mid X_{g}\right] \left( \frac{\pi_2^2-\pi_2}{k-1} \right)  \right] \\
    &= E\left[ \frac{1}{\left(M_g^{1}\right)^2} \sum_{b\in\mathcal{B}} \sum_{i:B_{i,g}=b}\left( E\left[ Y_{i,g}(1,\pi_2)  \mid X_{g}\right] - \bar{\mu}^b(X_{g}) \right)^2  \frac{k \left(\pi_2-\pi_2^2\right)}{k-1}   \right]~,
\end{align*}
where
\begin{align*}
    \bar{\mu}^b(X_{g}) &= \frac{1}{k} \sum_{i:B_{i,g}=b} E\left[ Y_{i,g}(1,\pi_2)  \mid X_{g}\right] ~.
\end{align*}
Therefore, the optimal matching strategy matches on $E\left[ Y_{i,g}(1,\pi_2)  \mid X_{g}\right]$.

Now, let's turn to $V_2(z)$ for $z \in \{0,1\}$. Follow the same argument to conclude that $E[\tilde Y_g(z,\pi_2)\mid S_g]$ is invariant to stratification strategy. Then, only the first term is likely to be affected by stratification.
\begin{align*}
    \var[\tilde Y_g(z,\pi_2) ] = E\left[ \frac{N_g^2}{E[N_g]^2}\left(\bar Y_g(z,\pi_2)^2 - 2 \bar Y_g(z,\pi_2) \frac{E[\bar Y_g(z,\pi_2)N_g]}{E[N_g]} + \frac{E[\bar Y_g(z,\pi_2)N_g]^2}{E[N_g]^2} \right) \right],
\end{align*}
for which we only need to focus on 
\begin{equation*}
    E\left[ N_g^2 \bar Y_g(z,\pi_2)^2 \right] = E\left[ N_g^2 E\left[ \bar Y_g(z,\pi_2)^2 \mid N_g\right] \right],
\end{equation*}
which is also minimized by a matched-group design that matches on $E\left[ Y_{i,g}(z,\pi_2)  \mid X_{g}\right]$.
\qed

\subsubsection{Efficiency Improvement in a Matched-Pair Example}\label{subsubsec:variance-improvement-bound}
Consider a matched-pair design with \( k=2 \) and \( \pi_2=1/2 \). The relevant term for variance improvement is given as:
\begin{align*}
    &E\left[ \frac{1}{M_g^2} \sum_{b\in\mathcal{B}} \sum_{B_{i,g}=B_{j,g}=b}\left( E\left[ Y_{i,g}(1,\pi_2)  \mid X_{g}\right] - E\left[ Y_{j,g}(1,\pi_2)  \mid X_{g}\right] \right)^2 \right] \\
    &= E\left[ \frac{1}{M_g^2} \sum_{b\in\mathcal{B}} \sum_{B_{i,g}=B_{j,g}=b} E\left[\left( E\left[ Y_{i,g}(1,\pi_2)  \mid X_{g}\right] - E\left[ Y_{j,g}(1,\pi_2)  \mid X_{g}\right] \right)^2 \mid \mathcal{M}_g\right] \right]  \\
    &\leq E\left[ \frac{1}{M_g^2} \sum_{b\in\mathcal{B}} \sum_{B_{i,g}=B_{j,g}=b} 2 E\left[E\left[ Y_{i,g}(1,\pi_2)  \mid X_{g}\right]^2 + E\left[ Y_{j,g}(1,\pi_2)  \mid X_{g}\right]^2  \mid \mathcal{M}_g\right] \right] \\
    &= E\left[ \frac{2}{M_g^2} \sum_{i \in \mathcal{M}_g}  E\left[E\left[ Y_{i,g}(1,\pi_2)  \mid X_{g}\right]^2 \mid \mathcal{M}_g\right] \right] ~.
\end{align*}
If \( E\left[E\left[ Y_{i,g}(1,\pi_2)  \mid X_{g}\right]^2 \mid \mathcal{M}_g\right] \leq C \) and \( M_g \geq M \) for some constants \( C \) and \( M \geq 0 \), then the relevant term for variance improvement becomes \( O(1/M) \).

\section{Auxiliary Lemmas}
\begin{lemma}
If cluster size is fixed for all $1\leq g \leq G$, i.e. $N_g = N$, then, $V_1(z) =V_2(z)$ for $z\in\{0,1\}$.
\begin{proof}
Note that when $N_g = N$,
\begin{equation*}
    \tilde Y_g (z,h) = \bar Y_g(z,h) - E[\bar Y_g(z,h)]~.
\end{equation*}
Then,
\begin{align*}
    V_2(z) &= \frac{1}{\pi_1} \var[Y_g(z,\pi_2)] + \frac{1}{1-\pi_1} \var[Y_g(0,0)] \\
    &\quad - E\left[\left(\sqrt{\frac{1-\pi_1}{\pi_1}} m_{z,\pi_2}(S_g) + \sqrt{\frac{ \pi_1}{1-\pi_1}} m_{0,0}(S_g)  \right)^2 \right] \\
    &\quad + E\left[\tau(S_g) \left(\frac{1}{\pi_1}m_{z,\pi_2}(S_g) + \frac{1}{1-\pi_1}m_{0,0}(S_g) \right)^2 \right]~.
\end{align*}
By law of iterated expectation, we have $E\left[m_{z,h}\left(C_{g}\right) \mid S_{g}\right] = m_{z,h}\left(S_{g}\right)$. Thus,
\begin{align*}
    V_1(z) &= \frac{1}{\pi_1} \operatorname{Var}\left[\tilde{Y}_{g}(z,\pi_2)\right]+\frac{1}{1-\pi_1} \operatorname{Var}\left[\tilde{Y}_{g}(0,0)\right] + E\left[\left(m_{z,\pi_2}\left(S_{g}\right)-m_{0,0}\left(S_{g}\right)\right)^{2}\right]\\
    &\quad\quad + E\left[\tau\left(S_{g}\right)\left(\frac{1}{\pi_1} m_{z,\pi_2}\left(S_{g}\right)+\frac{1}{1-\pi_1} m_{0,0}\left(S_{g}\right)\right)^{2}\right] \\
    &= \frac{ E\left[ \bar Y_g^2(z,\pi_2) \right] -   E[\bar Y_g(z,\pi_2)]^2 }{\pi_1} + \frac{ E\left[ \bar Y_g^2(0,0)  \right] -  E[\bar Y_g(0,0)]^2}{1-\pi_1} - 2 E\left[m_{z,\pi_2}\left(S_{g}\right)m_{0,0}\left(S_{g}\right)\right] \\
    &\quad\quad - \frac{1-\pi_1}{\pi_1}\left( E\left[ E[Y_g(z,\pi_2)\mid S_g]^2 \right]  -  E[Y_g(z,\pi_2)]^2\right)  - \frac{\pi_1}{1-\pi_1} \left( E\left[ E[Y_g(0,0)\mid S_g]^2\right]-  E[Y_g(0,0)]^2\right) \\
    &= V_2(z)~.
\end{align*}
\end{proof}
\end{lemma}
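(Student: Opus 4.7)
The plan is to prove the equality by direct substitution, exploiting the fact that under $N_g \equiv N$ the ``adjusted'' outcome $\tilde Y_g(z,h)$ defined in (\ref{eqn:Y(z,h)}) collapses to a simple recentering of $\bar Y_g(z,h)$. Specifically, since $E[N_g] = N$, the $N/E[N_g]$ prefactor equals one and the inner subtracted constant reduces to $E[\bar Y_g(z,h)]$, so
\begin{equation*}
    \tilde Y_g(z,h) = \bar Y_g(z,h) - E[\bar Y_g(z,h)],
\end{equation*}
which differs from $\bar Y_g(z,h)$ only by a deterministic scalar.

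The second step is to propagate this simplification through the three building blocks of $V_2(z)$ and match them with the corresponding blocks of $V_1(z)$. First, variance is translation-invariant, so $\var[\tilde Y_g(z,h)] = \var[\bar Y_g(z,h)]$, which makes the first two terms of (\ref{eqn:V2}) identical to those of (\ref{eqn:V1}). Second, by linearity of conditional expectation,
\begin{equation*}
    E[\tilde Y_g(z,h) \mid S_g] = E[\bar Y_g(z,h) \mid S_g] - E[\bar Y_g(z,h)] = m_{z,h}(S_g),
\end{equation*}
using the definition of $m_{z,h}$ in (\ref{eqn:m(S)}). Substituting this identity into the third term of (\ref{eqn:V2}) reproduces the third term of (\ref{eqn:V1}) verbatim. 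Combining the two pieces gives $V_2(z) = V_1(z)$.

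There is no real obstacle to this argument; the statement is essentially an algebraic corollary of the definitions once the collapse of $\tilde Y_g(z,h)$ is noted. The only point requiring care is that both identities above must hold without any extra assumptions on $Q_G$ beyond $N_g \equiv N$, which is immediate since only translation-invariance of variance and linearity of conditional expectation are used.
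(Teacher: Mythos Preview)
Your proposal is correct and follows essentially the same route as the paper: both start from the collapse $\tilde Y_g(z,h)=\bar Y_g(z,h)-E[\bar Y_g(z,h)]$ under $N_g\equiv N$ and then match the three building blocks of $V_2(z)$ with those of $V_1(z)$. Your argument is in fact tidier, since the paper expands $V_1(z)$ and $V_2(z)$ into alternative algebraic forms (and even carries a $\tau(S_g)$ term, suggesting the lemma was also meant to cover the large-strata analogues $V_3,V_4$) before equating them, whereas you invoke translation invariance of variance and the identity $E[\tilde Y_g(z,h)\mid S_g]=m_{z,h}(S_g)$ directly.
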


\begin{lemma}\label{lemma:crd-cov}
Given a sequence of binary random variables $A^{(n)}=(A_{i}: 1\leq i \leq n)$ with the joint distribution
\begin{equation*}
    P\left(A^{(n)}=a^{(n)}\right) = \frac{1}{\left(\begin{array}{l}
n \\
n \pi
\end{array}\right)} \text{ for all } a^{(n)}= (a_{i}: 1\leq i \leq n) \text{ such that } \sum_{1\leq i \leq n} a_{i} = n \pi~,
\end{equation*}
where $n \pi \in \mathbb{N}$ is an integer, otherwise $P\left(A^{(n)}=a^{(n)}\right)=0$. We have $E[A_i A_j] = \pi^2 -\frac{\pi(1-\pi)}{n-1}$ for all $i\neq j\in [1, n]$.
\end{lemma}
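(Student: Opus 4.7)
The plan is to exploit the deterministic constraint $\sum_{i=1}^n A_i = n\pi$ together with the exchangeability of the joint distribution. Since the distribution is uniform over the set of binary vectors with exactly $n\pi$ ones, any permutation of coordinates preserves the distribution; in particular, $E[A_i]$ is constant in $i$ and $E[A_iA_j]$ is constant in $(i,j)$ with $i \neq j$. The marginal $E[A_i] = \pi$ follows immediately from $\sum_i E[A_i] = n\pi$ combined with this symmetry.

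Next, I would square the identity $\sum_{i=1}^n A_i = n\pi$ (valid almost surely) and take expectations, yielding
\begin{equation*}
    n^2 \pi^2 = E\!\left[\Big(\sum_{i=1}^n A_i\Big)^{\!2}\right] = \sum_{i=1}^n E[A_i^2] + \sum_{i \neq j} E[A_iA_j]~.
\end{equation*}
Because $A_i \in \{0,1\}$ we have $A_i^2 = A_i$, so $\sum_{i=1}^n E[A_i^2] = n\pi$. By exchangeability, $\sum_{i\neq j} E[A_iA_j] = n(n-1)\, E[A_1 A_2]$, giving
\begin{equation*}
    E[A_iA_j] = \frac{n^2\pi^2 - n\pi}{n(n-1)} = \frac{\pi(n\pi - 1)}{n-1}~.
\end{equation*}

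A short algebraic manipulation of the last expression, writing $n\pi - 1 = (n-1)\pi - (1-\pi)$, yields
\begin{equation*}
    E[A_iA_j] = \pi^2 - \frac{\pi(1-\pi)}{n-1}~,
\end{equation*}
which is the claimed identity. There is no genuine obstacle here; the only thing to verify carefully is that $n\pi$ is an integer so that the support of $A^{(n)}$ is nonempty and the uniform distribution is well defined, which is assumed in the statement. This covariance formula is then applied in the proof of Theorem \ref{thm:optimal-second-stage} via Assumption \ref{ass:assignment-crd} to compute $E[Z_{i,g}(\pi_2) Z_{j,g}(\pi_2) \mid X_g]$ within each second-stage stratum.
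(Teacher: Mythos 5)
Your proof is correct and is essentially the same argument as the paper's: both exploit the almost-sure constraint $\sum_{i} A_i = n\pi$ together with exchangeability, the only cosmetic difference being that you compute $E[(\sum_i A_i)^2]$ directly while the paper phrases the identical computation as $\operatorname{Var}[\sum_i A_i] = 0 = n\pi(1-\pi) + n(n-1)\operatorname{Cov}(A_i,A_j)$. The algebra checks out in both cases, so nothing further is needed.
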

\begin{proof}
Note that
\begin{align*}
    \text{Var}\left[\sum_{1\leq i \leq n} A_{i} \right] &= 0 = \sum_{1\leq i \leq n}  \text{Var}\left[ A_{i} \right] + \sum_{i\neq j} \text{Cov}(A_i, A_j) = n \pi (1-\pi) + n(n-1) \text{Cov}(A_i, A_j)~,
\end{align*}
for any $i\neq j\in[1,n]$, which implies 
\begin{equation*}
    E[A_i A_j] = \text{Cov}(A_i, A_j) + E[A_i] E[A_j] = \pi^2 -\frac{\pi(1-\pi)}{n-1}~.
\end{equation*}
\end{proof}

\begin{lemma}\label{lem:E_bounded}
Suppose Assumption \ref{ass:Q_G} holds, then
\[E[\bar{Y}^r_g(z, \pi_2)|C_g, N_g] \le C \hspace{3mm} a.s.~,\]
for $r \in \{1, 2\}, z\in\{0,1\}$ for some constant $C > 0$,  
\[E\left[\bar{Y}_g^r(z,\pi_2)N_g^\ell\right] < \infty~,\]
for $r \in \{1, 2\}, \ell \in \{0, 1, 2\}, z\in\{0,1\}$, and
\[E\left[E[\bar{Y}_g(z,\pi_2)N_g|S_g]^2\right] < \infty~.,\]
for $z\in\{0,1\}$. In addition, suppose Assumption \ref{as:Q_G-lip} (b) holds, then
\begin{equation*}
    E[\bar{Y}_g(z,\pi_2)^r N_g^\ell \mid S_g] \leq C \hspace{3mm} a.s.~,
\end{equation*}
for $z\in\{0,1\}$.
\end{lemma}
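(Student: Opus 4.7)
The plan is to reduce every claim to the pointwise moment bound in Assumption \ref{ass:Q_G}(d), using the conditional-independence structure from Assumption \ref{ass:Q_G}(e) and Assumption \ref{as:assignment2} to push conditional expectations inside the averages that define $\bar Y_g(z,h)$, and then to integrate cluster sizes using Assumption \ref{ass:Q_G}(c) (or the stronger Assumption \ref{as:Q_G-lip}(b)). The starting observation is that each $\bar Y_g(z,h)$ is an unweighted average over $M_g^{z}$ (or $M_g$) potential outcomes, so Jensen's inequality applied to $x\mapsto x^2$ yields, for instance,
\[
\bar Y_g^2(1,\pi_2)\;\le\;\frac{1}{M_g^{1}}\sum_{i\in\mathcal M_g} Y_{i,g}^2(1,\pi_2)\,Z_{i,g}(\pi_2),
\]
and analogous bounds for $(0,\pi_2)$ and $(0,0)$. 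Conditioning on the enlarged $\sigma$-algebra $\sigma(C_g,N_g,X_g,\mathcal M_g,Z^{(G)}(\pi_2))$, Assumption \ref{ass:Q_G}(e) (independence of $\mathcal M_g$ from the potential outcomes given $(C_g,N_g,X_g)$) together with Assumption \ref{as:assignment2}(b) (conditional independence of $Z(\pi_2)$ from $W^{(G)}$ given $B_g\subseteq\sigma(X_g)$) collapses $E[Y_{i,g}^2(z,h)\mid C_g,N_g,X_g,\mathcal M_g,Z(\pi_2)]$ down to $E[Y_{i,g}^2(z,h)\mid C_g,N_g,X_g]$, which is at most $C$ a.s.\ by Assumption \ref{ass:Q_G}(d). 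Summing, dividing by $M_g^{z}$, and applying the tower property then gives $E[\bar Y_g^2(z,\pi_2)\mid C_g,N_g]\le C$ a.s.

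The $r=1$ case of the first bullet follows immediately from conditional Cauchy--Schwarz, $|E[\bar Y_g(z,\pi_2)\mid C_g,N_g]|\le (E[\bar Y_g^2(z,\pi_2)\mid C_g,N_g])^{1/2}\le C^{1/2}$. For the second bullet I apply the tower property to write $E[\bar Y_g^r(z,\pi_2)N_g^{\ell}]=E[N_g^{\ell}\,E[\bar Y_g^r(z,\pi_2)\mid C_g,N_g]]\le C\,E[N_g^{\ell}]$, which is finite for $\ell\in\{0,1,2\}$ by Assumption \ref{ass:Q_G}(c) and Jensen. For the third bullet, conditional Jensen gives $E[\bar Y_g(z,\pi_2)N_g\mid S_g]^2\le E[\bar Y_g^2(z,\pi_2)N_g^2\mid S_g]$, so taking one more expectation and using $S_g=S(C_g,N_g)$ (so that conditioning on $(C_g,N_g)$ is finer than conditioning on $S_g$) reduces the problem to $E[N_g^2\,E[\bar Y_g^2(z,\pi_2)\mid C_g,N_g]]\le C\,E[N_g^2]<\infty$.

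The final claim, which invokes the additional Assumption \ref{as:Q_G-lip}(b), is handled by the same tower trick: condition first on $\sigma(C_g,N_g)$ to pull $N_g^{\ell}$ out of the inner conditional expectation, apply the first bullet to get $E[\bar Y_g^r(z,\pi_2)\mid C_g,N_g]\le C$, and then bound $E[N_g^{\ell}\mid S_g]\le(E[N_g^{2}\mid S_g])^{\ell/2}\le (C')^{\ell/2}$ a.s.\ by Jensen for $\ell\in\{0,1,2\}$. The only nontrivial bookkeeping concerns the random denominators $M_g^{z}$ appearing in the second-stage averages: the conditional independence of $Z(\pi_2)$ from the potential outcomes must be invoked \emph{before} integrating out $Z(\pi_2)$ so that the denominator $M_g^{z}$ and the indicators in the numerator can be treated as known when applying the pointwise moment bound of Assumption \ref{ass:Q_G}(d); beyond this, the proof is a routine chain of Jensen, tower, and Cauchy--Schwarz steps.
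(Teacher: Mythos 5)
Your proposal is correct and follows essentially the same route as the paper's proof: bound $\bar Y_g^2(z,h)$ by an average of squared potential outcomes via Jensen/Cauchy--Schwarz, invoke Assumptions \ref{ass:Q_G}(e) and \ref{as:assignment2}(b) so that the pointwise bound in Assumption \ref{ass:Q_G}(d) applies, and then integrate out $N_g$ with the tower property using Assumption \ref{ass:Q_G}(c) (or \ref{as:Q_G-lip}(b) for the conditional version). Your treatment of the random denominator $M_g^1$ and the indicators $Z_{i,g}(\pi_2)$ is in fact slightly more careful than the paper's (which drops the indicators before conditioning), and your reduction of the third display to the second via conditional Jensen is an equivalent shortcut to the paper's bound through $E[E[N_g\mid C_g]^2]$.
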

\begin{proof}
We show the first statement for $r = 2$ and $z=1$, since the case $r = 1$ follows similarly. By the Cauchy-Schwarz inequality,
\[\bar{Y}_g(1,\pi_2)^2 = \left(\frac{1}{M_g}\sum_{i\in \mathcal{M}_g}Y_{i,g}(1,\pi_2) Z_{i,g}(\pi_2) \right)^2 \le \frac{1}{M_g}\sum_{i\in \mathcal{M}_g}Y_{i,g}(1,\pi_2)^2~,\]
and hence
\[E[\bar{Y}_g(1,\pi_2)^2|C_g, N_g, X_g] \le   E\left[ \frac{1}{M_g}\sum_{i \in \mathcal{M}_g} Y_{i,g}(1,\pi_2)^2 \mid C_g,N_g,X_g\right] \leq \sum_{1\leq i\leq N_g} E\left[\frac{1\{i\in \mathcal{M}_g\}}{M_g}\mid C_g,N_g,X_g\right] C \le C~,\]
where the second inequality follows from the above derivation, Assumption \ref{ass:Q_G}(e) and the law of iterated expectations, and final inequality follows from Assumption \ref{ass:Q_G}(d).
I show the next statement for $r = \ell = 2$, since the other cases follow similarly. By the law of iterated expectations,
\begin{align*}
E\left[\bar{Y}^2_g(1,\pi_2)N_g^2\right] &= E\left[N_g^2E[\bar{Y}^2_g(1,\pi_2)|C_g,N_g]\right] \\
&\lesssim E\left[N_g^2\right] < \infty~,
\end{align*}
where the final line follows by Assumption \ref{ass:Q_G} (c). Next,
\begin{align*}
E\left[E[\bar{Y}_g(1,\pi_2)N_g|S_g]^2\right] &= E\left[E[N_gE[\bar{Y}_g(1,\pi_2)|C_g, N_g]|S_g]^2\right] \\
&\lesssim E\left[E[N_g|C_g]^2\right] < \infty~,
\end{align*}
where the final line follows from Jensen's inequality and Assumption \ref{ass:Q_G}(c). Finally,
\begin{align*}
    E[\bar{Y}_g(z,\pi_2)^r N_g^\ell \mid S_g] &= E[ N_g^\ell E[\bar{Y}_g(z,\pi_2)^r  \mid C_g. N_g] \mid S_g] \lesssim E[N_g^\ell  \mid S_g] \leq C~,
\end{align*}
where the last inequality follows by Assumption \ref{as:Q_G-lip} (b).
\end{proof}

\begin{lemma}\label{lemma:asymptotics-mp}
Suppose $Q_G$ satisfies Assumptions \ref{ass:Q_G} and \ref{as:Q_G-lip} and the treatment assignment mechanism satisfies Assumptions \ref{ass:assignment-match}-\ref{as:close} and \ref{as:assignment2}. Define
\begin{align*}
\mathbb L_G^{\rm Y1} & = \frac{1}{\sqrt{n l}} \sum_{1 \leq g \leq n k} (\bar{Y}_g(1,\pi_2)   - E[\bar{Y}_g(1,\pi_2) ] )I\{H_g=\pi_2\} \\
\mathbb L_G^{\rm YN1} & = \frac{1}{\sqrt{n l}} \sum_{1 \leq g \leq n k} (\bar{Y}_g(1,\pi_2) N_g  - E[\bar{Y}_g(1,\pi_2) N_g] )I\{H_g=\pi_2\} \\
\mathbb L_G^{\rm N1} & = \frac{1}{\sqrt{n l}} \sum_{1 \leq g \leq 2G} (N_g  - E[N_g] ) I\{H_g=\pi_2\} \\
\mathbb L_G^{\rm Y0} & = \frac{1}{\sqrt{n (k-l)}} \sum_{1 \leq g \leq n k} (\bar{Y}_g(0,0)   - E[\bar{Y}_g(0,0) N_g] )I\{H_g=0\} \\
\mathbb L_G^{\rm YN0} & = \frac{1}{\sqrt{n (k-l)}} \sum_{1 \leq g \leq 2G} (\bar{Y}_g(0,0) N_g  - E[\bar{Y}_g(0,0) N_g] )I\{H_g=0\} \\
\mathbb L_G^{\rm N0} & = \frac{1}{\sqrt{n (k-l)} } \sum_{1 \leq g \leq 2G} (N_g  - E[N_g] )I\{H_g=0\}~.
\end{align*}
Then, as $n\rightarrow \infty$, 
\begin{equation*}
    \left(\mathbb L_G^{\rm Y1},  \mathbb L_G^{\rm YN1}, \mathbb L_G^{\rm N1}, \mathbb L_G^{\rm Y0}, \mathbb L_G^{\rm YN0}, \mathbb L_G^{\rm N0}\right) \xrightarrow{d} \mathcal{N}(0, \mathbf{V})~,
\end{equation*}
where
\begin{equation*}
    \mathbf{V} = \mathbf{V}_1 + \mathbf{V}_2
\end{equation*}
for
\[ \mathbf V_1 = \begin{pmatrix}
\mathbf V_1^1 & 0 \\
0 & \mathbf V_1^0
\end{pmatrix} \]
{\footnotesize
\begin{align*}
\mathbf V_1^1 & = \begin{pmatrix}
E[\var[\bar{Y}_g(1,\pi_2) | S_g]] & E[\cov[\bar{Y}_g(1,\pi_2) , \bar{Y}_g(1,\pi_2) N_g | S_g]] & E[\cov[\bar{Y}_g(1,\pi_2) , N_g | S_g]] \\
E[\cov[\bar{Y}_g(1,\pi_2) , \bar{Y}_g(1,\pi_2) N_g | S_g]] & E[\var[\bar{Y}_g(1,\pi_2) N_g | S_g]]  & E[\cov[\bar{Y}_g(1,\pi_2) N_g, N_g | S_g]] \\
E[\cov[\bar{Y}_g(1,\pi_2) , N_g | S_g]] & E[\cov[\bar{Y}_g(1,\pi_2) N_g, N_g | S_g]] & E[\var[N_g | S_g]]
\end{pmatrix} \\
\mathbf V_1^0 & = \begin{pmatrix}
E[\var[\bar{Y}_g(0,0) | S_g]] & E[\cov[\bar{Y}_g(0,0) , \bar{Y}_g(0,0) N_g | S_g]] & E[\cov[\bar{Y}_g(0,0) , N_g | S_g]] \\
E[\cov[\bar{Y}_g(0,0) , \bar{Y}_g(0,0) N_g | S_g]] & E[\var[\bar{Y}_g(0,0) N_g | S_g]]  & E[\cov[\bar{Y}_g(0,0) N_g, N_g | S_g]] \\
E[\cov[\bar{Y}_g(0,0) , N_g | S_g]] & E[\cov[\bar{Y}_g(0,0) N_g, N_g | S_g]] & E[\var[N_g | S_g]]
\end{pmatrix}
\end{align*}}
{\footnotesize
\[ \mathbf V_2 =  \var\left[\begin{pmatrix}
\sqrt{\pi_1} E[ \bar{Y}_g(1,\pi_2)  | S_g] \\
\sqrt{\pi_1} E[ \bar{Y}_g(1,\pi_2) N_g | S_g]\\
\sqrt{\pi_1} E[  N_g | S_g]\\
\sqrt{1-\pi_1} E[\bar{Y}_g(0,0) | S_g]\\
\sqrt{1-\pi_1} E[\bar{Y}_g(0,0) N_g | S_g]\\
\sqrt{1-\pi_1} E[ N_g | S_g]
\end{pmatrix}\right]~. \]}
\end{lemma}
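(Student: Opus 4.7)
The plan is to establish joint asymptotic normality by the Cramér–Wold device, reducing to a scalar CLT for any fixed linear combination $\xi_G = a'\mathbb L_G$ where $\mathbb L_G$ stacks the six statistics. The key step is the within/between decomposition $\bar Y_g(z,\pi_2) - E[\bar Y_g(z,\pi_2)] = (\bar Y_g(z,\pi_2) - \mu_{z,\pi_2}(S_g)) + (\mu_{z,\pi_2}(S_g) - E\mu_{z,\pi_2}(S_g))$, with $\mu_{z,h}(s) = E[\bar Y_g(z,h)|S_g=s]$, and analogously for $\bar Y_g(z,h) N_g$ and $N_g$. Summing with the appropriate $\sqrt{nl}$ or $\sqrt{n(k-l)}$ scaling yields $\xi_G = T_{G,1} + T_{G,2}$, where $T_{G,1}$ aggregates the stratum-residual contributions and $T_{G,2}$ the conditional-mean contributions.

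For $T_{G,1}$: conditional on $(S^{(G)}, H^{(G)})$ the residuals $\bar Y_g(z,h) - \mu_{z,h}(S_g)$ (and their $N_g$-weighted analogues) are independent across $g$ with mean zero, using that $Q_G$ has i.i.d.\ clusters (Assumption \ref{ass:Q_G}) together with $W^{(G)} \independent H^{(G)}|S^{(G)}$ (Assumptions \ref{ass:assignment-match} and \ref{as:assignment2}). I would apply a Lindeberg–Feller CLT conditional on $(S^{(G)}, H^{(G)})$. The conditional variance converges in probability, by an LLN across the $nk$ i.i.d.\ strata combined with the uniform within-tuple assignment probabilities from Assumption \ref{ass:assignment-match}, to the block-diagonal matrix $\mathbf V_1$: the $3\times 3$ treated block collects $E[\var[\bar Y_g(1,\pi_2)|S_g]]$, $E[\cov[\bar Y_g(1,\pi_2), \bar Y_g(1,\pi_2) N_g|S_g]]$, etc., the control block is analogous, and the cross-block is zero because $I\{H_g = \pi_2\}$ and $I\{H_g = 0\}$ cannot both be one. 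The Lindeberg condition follows from the uniform $L^2$ bounds in Lemma \ref{lem:E_bounded}.

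For $T_{G,2}$: all summands depend only on $(S^{(G)}, H^{(G)})$. I would compute the conditional variance given $S^{(G)}$ using the hypergeometric structure of the within-tuple assignment, namely $\var[I\{H_g=\pi_2\}|S^{(G)}] = \pi_1(1-\pi_1)$ and $\cov[I\{H_g=\pi_2\}, I\{H_{g'}=\pi_2\}|S^{(G)}] = -\pi_1(1-\pi_1)/(k-1)$ for distinct $g,g' \in \lambda_j$. Expanding tuple-by-tuple, the contribution reduces to $\tfrac{\pi_1(1-\pi_1) k}{k-1}\big[\sum_{g\in\lambda_j} c_g^2 - \tfrac{1}{k}(\sum_{g\in\lambda_j} c_g)^2\big]$ with $c_g = \mu_{z,\pi_2}(S_g) - E\mu_{z,\pi_2}$ (and analogues). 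By Assumption \ref{as:Q_G-lip}(a) (Lipschitz continuity of $E[\bar Y_g^r(z,h) N_g^\ell|S_g=s]$ covering exactly the required moments) combined with Assumption \ref{as:close}, the within-tuple sum of squared deviations from the tuple mean is $o_P(1)$, so $\var[T_{G,2}|S^{(G)}] = o_P(1)$ and $T_{G,2} = E[T_{G,2}|S^{(G)}] + o_P(1)$. Direct scaling (using $\pi_1 = l/k$, $G = nk$) gives $E[T_{G,2}|S^{(G)}] = G^{-1/2}\sum_g \big(\sqrt{\pi_1}\, d_g^{(1)} - \sqrt{1-\pi_1}\, d_g^{(0)}\big)$ for each component, where $d_g^{(1)}, d_g^{(0)}$ are the centered conditional-mean vectors of the treated- and control-labeled quantities. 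The classical multivariate CLT for i.i.d.\ $S_g$ (moments finite by Lemma \ref{lem:E_bounded}) then yields a normal limit with variance $a'\mathbf V_2 a$.

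Finally, $T_{G,1}$ and $T_{G,2}$ are uncorrelated since $E[T_{G,1}|S^{(G)}, H^{(G)}] = 0$ while $T_{G,2}$ is $\sigma(S^{(G)}, H^{(G)})$-measurable; asymptotic independence follows because the limit of $T_{G,1}$ is pinned down by the conditional residual distribution and $T_{G,2}$ is a functional of $S^{(G)}$ alone, formalized through a subsequence/Slutsky argument exactly as in \cite{bai-inference} and \cite{mp-cluster}. Combining via Cramér–Wold yields $\xi_G \xrightarrow{d} N(0, a'(\mathbf V_1 + \mathbf V_2) a)$, which is the claim. The principal obstacle is the vanishing-conditional-variance step for $T_{G,2}$: it must be done uniformly across all six components, including the size-weighted $\bar Y_g N_g$ terms, and it is precisely here that the strengthened Lipschitz/moment conditions in Assumption \ref{as:Q_G-lip} (tailored to $E[\bar Y_g^r N_g^\ell|S_g]$) together with the bound $E[N_g^2|S_g] \le C$ are indispensable.
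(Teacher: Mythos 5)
Your proposal follows essentially the same route as the paper's proof: the same within/between decomposition around $E[\cdot\mid S_g]$, a conditional Lindeberg CLT given $(S^{(G)},H^{(G)})$ yielding $\mathbf V_1$, the hypergeometric within-tuple covariance plus Assumptions \ref{as:Q_G-lip} and \ref{as:close} to kill the conditional variance of the between part so that it reduces to an i.i.d.\ sum with limit variance $\mathbf V_2$, and the same combination argument via the auxiliary lemma of \cite{bai-inference}. The only blemish is the stray minus sign in your expression for $E[T_{G,2}\mid S^{(G)}]$ (both the treated- and control-labeled conditional means enter with positive coefficients $\sqrt{\pi_1}$ and $\sqrt{1-\pi_1}$, as in the stated $\mathbf V_2$), which does not affect the structure of the argument.
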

\begin{proof}
Note
\begin{align*}
    (\mathbb L_G^{\rm Y1}, \mathbb L_G^{\rm YN1}, \mathbb L_G^{\rm N1}, \mathbb L_G^{\rm Y0}, \mathbb L_G^{\rm YN0}, \mathbb L_G^{\rm N0}) &= (\mathbb L_{1, G}^{\rm Y1}, \mathbb L_{1, G}^{\rm YN1}, \mathbb L_{1, G}^{\rm N1}, \mathbb L_{1, G}^{\rm Y0}, \mathbb L_{1, G}^{\rm YN0}, \mathbb L_{1, G}^{\rm N0})\\
    &\quad\quad\quad + (\mathbb L_{2, G}^{\rm Y1}, \mathbb L_{2, G}^{\rm YN1}, \mathbb L_{2, G}^{\rm N1}, \mathbb L_{2, G}^{\rm Y0}, \mathbb L_{2, G}^{\rm YN0}, \mathbb L_{2, G}^{\rm N0})~,
\end{align*}
where
\begin{align*}
\mathbb L_{1, G}^{\rm YN1} & = \frac{1}{\sqrt{nl}} \sum_{1 \leq g \leq 2G} (\bar{Y}_g(1,\pi_2) N_g I\{H_g=\pi_2\} - E[\bar{Y}_g(1,\pi_2) N_g I\{H_g=\pi_2\} | S^{(G)}, H^{(G)}]) \\
\mathbb L_{2, G}^{\rm YN1} & = \frac{1}{\sqrt{nl}} \sum_{1 \leq g \leq 2G} (E[\bar{Y}_g(1,\pi_2) N_g I\{H_g=\pi_2\} | S^{(G)}, H^{(G)}] - E[\bar{Y}_g(1,\pi_2) N_g] I\{H_g=\pi_2\})
\end{align*}
and similarly for the rest. Next, note $(\mathbb L_{1,G}^{\rm Y1}, \mathbb L_{1,G}^{\rm YN1}, \mathbb L_{1,G}^{\rm N1}, \mathbb L_{1,G}^{\rm Y0}, \mathbb L_{1,G}^{\rm YN0}, \mathbb L_{1,G}^{\rm N0}), n \geq 1$ is a triangular array of normalized sums of random vectors. We will apply the Lindeberg central limit theorem for random vectors, i.e., Proposition 2.27 of \cite{van_der_vaart1998asymptotic}, to this triangular array. Conditional on $S^{(G)}, H^{(G)}$, $(\mathbb L_{1,G}^{\rm Y1}, \mathbb L_{1, G}^{\rm YN1}, \mathbb L_{1, G}^{\rm N1}) \perp (\mathbb L_{1,G}^{\rm Y0}, \mathbb L_{1, G}^{\rm YN0}, \mathbb L_{1, G}^{\rm N0})$. Moreover, it follows from $Q_G = Q^{G}$ (by Lemma 5.1 of \cite{Bugni2022} and Assumption \ref{ass:Q_G} (a)-(b)) and Assumption \ref{ass:assignment-match}, \ref{as:assignment2} that
{\tiny
\begin{align*}
    &\var \left [ \left(\mathbb L_{1,G}^{\rm Y1}, \mathbb L_{1, G}^{\rm YN1}, \mathbb L_{1, G}^{\rm N1}\right)^\prime |  S^{(G)}, H^{(G)} \right ] \\
&= \begin{pmatrix}
\frac{1}{nl} \sum_{g =1}^{G} \var[\bar{Y}_g(1,\pi_2)  | S_g] \tilde{H}_g &  \frac{1}{nl} \sum_{g =1}^{G} \cov[\bar{Y}_g(1,\pi_2), \bar{Y}_g(1,\pi_2) N_g| S_g] \tilde{H}_g & \frac{1}{nl} \sum_{g =1}^{G} \cov[\bar{Y}_g(1,\pi_2) , N_g | S_g] \tilde{H}_g \\
\frac{1}{nl} \sum_{g =1}^{G} \cov[\bar{Y}_g(1,\pi_2), \bar{Y}_g(1,\pi_2) N_g| S_g] \tilde{H}_g & \frac{1}{nl} \sum_{g =1}^{G} \var[\bar{Y}_g(1,\pi_2) N_g | S_g] \tilde{H}_g & \frac{1}{nl} \sum_{g =1}^{G} \cov[\bar{Y}_g(1,\pi_2) N_g , N_g | S_g] \tilde{H}_g \\
\frac{1}{nl} \sum_{g =1}^{G} \cov[\bar{Y}_g(1,\pi_2) , N_g | S_g] \tilde{H}_g  & \frac{1}{nl} \sum_{g =1}^{G} \cov[\bar{Y}_g(1,\pi_2) N_g , N_g | S_g] \tilde{H}_g  & \frac{1}{nl} \sum_{g =1}^{G} \var[N_g | S_g] \tilde{H}_g
\end{pmatrix}~,
\end{align*}}
where $\tilde{H}_g = I\{H_g=\pi_2\}$. For the upper left component, we have
\begin{equation} \label{eq:condlip}
\frac{1}{G_T} \sum_{1 \leq g \leq G} \var[\bar{Y}_g(1,\pi_2) | S_g] \tilde{H}_g = \frac{1}{G_T} \sum_{1 \leq g \leq G} E[\bar{Y}^2_g(1,\pi_2)   | S_g] \tilde{H}_g - \frac{1}{G_T} \sum_{1 \leq g \leq G} E[\bar{Y}_g(1,\pi_2)   | S_g]^2 \tilde{H}_g~,
\end{equation}
where $G_T = nl$. Note
\begin{align*}
& \frac{1}{G_T} \sum_{1 \leq g \leq G} E[\bar{Y}^2_g(1,\pi_2)   | S_g] \tilde{H}_g  = \frac{1}{G} \sum_{1 \leq g \leq G} E[\bar{Y}^2_g(1,\pi_2)   | S_g]\\
&\quad\quad\quad+  (1-\pi_2)\left ( \frac{1}{G_T} \sum_{1 \leq g \leq G: \tilde{H}_g  = 1} E[\bar{Y}^2_g(1,\pi_2)   | S_g] - \frac{1}{G_C} \sum_{1 \leq g \leq G: \tilde{H}_g  = 0} E[\bar{Y}^2_g(1,\pi_2)   | S_g] \right )~.
\end{align*}
It follows from the weak law of large numbers, and Lemma \ref{lem:E_bounded}, that
\[ \frac{1}{G} \sum_{1 \leq g \leq G} E[\bar{Y}^2_g(1,\pi_2)   | S_g] \stackrel{P}{\to} E[\bar{Y}^2_g(1,\pi_2)  ]~. \]
On the other hand, it follows from Assumption  \ref{as:close} and \ref{as:Q_G-lip}(b) that
\begin{align*}
& \left | \frac{1}{G_T} \sum_{1 \leq g \leq G: \tilde{H}_g = 1} E[\bar{Y}^2_g(1,\pi_2)   | S_g] - \frac{1}{G_C} \sum_{1 \leq g \leq G: \tilde{H}_g = 0} E[\bar{Y}^2_g(1,\pi_2)   | S_g] \right | \\
&= \frac{1}{G} \left | \frac{1}{\pi_2} \sum_{1 \leq g \leq G: \tilde{H}_g = 1} E[\bar{Y}^2_g(1,\pi_2)   | S_g] - \frac{1}{1-\pi_2} \sum_{1 \leq g \leq G: \tilde{H}_g = 0} E[\bar{Y}^2_g(1,\pi_2)   | S_g] \right | \\
& \leq \frac{1}{G} \sum_{1 \leq j \leq n} k\cdot \max_{i,k\in\lambda_j} |E[\bar{Y}_{i}^2(1,\pi_2)   | S_{i}] - E[\bar{Y}_{k}^2(1,\pi_2)  | S_{k}]| \\
& \lesssim \frac{1}{n} \sum_{1 \leq j \leq n} \max_{i,k\in\lambda_j} |S_{i} - S_{k}|~.
\end{align*}
Therefore,
\[ \frac{1}{G_T} \sum_{1 \leq g \leq G} E[\bar{Y}^2_g(1,\pi_2)   | S_g] \tilde{H}_g \stackrel{P}{\to} E[\bar{Y}^2_g(1,\pi_2)  ]~. \]
Meanwhile,
\begin{align*}
& \frac{1}{G_T} \sum_{1 \leq g \leq G} E[\bar{Y}_g(1,\pi_2)   | S_g]^2 \tilde{H}_g  = \frac{1}{G} \sum_{1 \leq g \leq G} E[\bar{Y}_g(1,\pi_2)   | S_g]^2 \\
&\quad\quad\quad + (1-\pi_2) \left ( \frac{1}{G_T} \sum_{1 \leq g \leq G: \tilde{H}_g = 1} E[\bar{Y}_g(1,\pi_2)   | S_g]^2 - \frac{1}{G_C} \sum_{1 \leq g \leq G: \tilde{H}_g = 0} E[\bar{Y}_g(1,\pi_2)   | S_g]^2 \right )~.
\end{align*}
Jensen's inequality implies $E[E[\bar{Y}_g(1,\pi_2)   | S_g]^2] \leq E[\bar{Y}^2_g(1,\pi_2)  ] < E[\bar{Y}^2_g(1,\pi_2)] < \infty$ by Assumption \ref{ass:Q_G}(d), so it follows from the weak law of large numbers as above that
\[ \frac{1}{G} \sum_{1 \leq g \leq G} E[\bar{Y}_g(1,\pi_2)   | S_g]^2 \stackrel{P}{\to} E[E[\bar{Y}_g(1,\pi_2)   | S_g]^2]~. \]
Next, by Assumption \ref{as:close} and \ref{as:Q_G-lip}, the Cauchy-Schwarz inequality, and the fact that $(a + b)^2 \leq 2 a^2 + 2 b^2$,
{\footnotesize
\begin{align*}
& \left | \frac{1}{G_T} \sum_{1 \leq g \leq G: \tilde{H}_g = 1} E[\bar{Y}_g(1,\pi_2)   | S_g]^2 - \frac{1}{G_C} \sum_{1 \leq g \leq G: \tilde{H}_g = 0} E[\bar{Y}_g(1,\pi_2)   | S_g]^2 \right | \\
&= \frac{1}{G} \left | \frac{1}{\pi_2} \sum_{1 \leq g \leq G: \tilde{H}_g = 1} E[\bar{Y}_g(1,\pi_2)   | S_g]^2 - \frac{1}{1-\pi_2} \sum_{1 \leq g \leq G: \tilde{H}_g = 0} E[\bar{Y}_g(1,\pi_2)   | S_g]^2 \right | \\
&\leq \frac{1}{G} \sum_{1\leq j \leq n} \left(\max_{i,j\in\lambda_j} |E[\bar{Y}_{i}(1,\pi_2)   | S_{i}] - E[\bar{Y}_{k}(1,\pi_2)   | S_{k}]|  \right) \left(\sum_{k\in\lambda_j} E[\bar{Y}_{k}(1,\pi_2)   | S_{k}] \right) \\
&\lesssim  \left(\frac{1}{G} \sum_{1\leq j \leq n}\max_{i,j\in\lambda_j} |E[\bar{Y}_{i}(1,\pi_2)   | S_{i}] - E[\bar{Y}_{k}(1,\pi_2)   | S_{k}]|^2  \right)^{1/2}\left(\frac{1}{G} \sum_{1\leq j \leq n} \left(\sum_{k\in\lambda_j} E[\bar{Y}_{k}(1,\pi_2)   | S_{k}] \right)^2\right)^{1/2} \\
&\lesssim  \left(\frac{1}{G} \sum_{1\leq j \leq n}\max_{i,j\in\lambda_j} |E[\bar{Y}_{i}(1,\pi_2)   | S_{i}] - E[\bar{Y}_{k}(1,\pi_2)   | S_{k}]|^2  \right)^{1/2} \left(\frac{1}{G} \sum_{1\leq j \leq n}\sum_{k\in\lambda_j} E[\bar{Y}_{k}(1,\pi_2)   | S_{k}]^2\right)^{1/2}~.
\end{align*}}
Therefore, it follows from \eqref{eq:condlip} that
\[ \frac{1}{G_T} \sum_{1 \leq g \leq G} \var[\bar{Y}_g(1,\pi_2) | S_g] \tilde{H}_g  \stackrel{P}{\to} E[\var[\bar{Y}_g(1,\pi_2)   | S_g]]~. \]
Similar arguments together with Assumption \ref{as:Q_G-lip}(a)-(b) and Lemma \ref{lem:E_bounded} imply that
\[ \var \left [ \begin{pmatrix}
\mathbb L_{1,G}^{\rm Y1} \\
\mathbb L_{1, G}^{\rm YN1} \\
\mathbb L_{1, G}^{\rm N1}
\end{pmatrix} \Bigg |  S^{(G)}, H^{(G)} \right ] \stackrel{P}{\to} \mathbf V_1^1~. \]
Similarly,
\[ \var \left [ \begin{pmatrix}
\mathbb L_{1, G}^{\rm Y0} \\
\mathbb L_{1, G}^{\rm YN0} \\
\mathbb L_{1, G}^{\rm N0}
\end{pmatrix} \Bigg |  S^{(G)}, H^{(G)} \right ] \stackrel{P}{\to} \mathbf V_1^0~. \]
If $E[\var[\bar{Y}_g(1,\pi_2) N_g | S_g]] = E[\var[N_g | S_g]] = E[\var[\bar{Y}_g(0,0) N_g | S_g]] = 0$, then it follows from Markov's inequality conditional on $S^{(G)}$ and $H^{(G)}$, and the fact that probabilities are bounded and hence uniformly integrable, that $(\mathbb L_{1, G}^{\rm Y1}, \mathbb L_{1, G}^{\rm YN1}, \mathbb L_{1, G}^{\rm N1}, \mathbb L_{1, G}^{\rm Y0}, \mathbb L_{1, G}^{\rm YN0}, \mathbb L_{1, G}^{\rm N0}) \stackrel{P}{\to} 0$. Otherwise, it follows from similar arguments to those in the proof of Lemma S.1.5 of \cite{bai-inference} that
\begin{equation} \label{eq:cond}
\rho(\mathcal L((\mathbb L_{1, G}^{\rm Y1}, \mathbb L_{1, G}^{\rm YN1}, \mathbb L_{1, G}^{\rm N1}, \mathbb L_{1, G}^{\rm Y0}, \mathbb L_{1, G}^{\rm YN0}, \mathbb L_{1, G}^{\rm N0})' | S^{(G)}, H^{(G)}),  N(0, \mathbf V_1)) \stackrel{P}{\to} 0~,
\end{equation}
where $\mathcal L$ denotes the distribution and $\rho$ is any metric that metrizes weak convergence.

Next, I study $(\mathbb L_{2, G}^{\rm Y1}, \mathbb L_{2, G}^{\rm YN1}, \mathbb L_{2, G}^{\rm N1}, \mathbb L_{2, G}^{\rm Y0}, \mathbb L_{2, G}^{\rm YN0}, \mathbb L_{2, G}^{\rm N0})$. It follows from $Q_G = Q^{G}$ (by Lemma 5.1 of \cite{Bugni2022} and Assumption \ref{ass:Q_G} (a)-(b)) and Assumption \ref{ass:assignment-match} that
\[ \begin{pmatrix}
\mathbb L_{2, G}^{\rm Y1} \\
\mathbb L_{2, G}^{\rm YN1} \\
\mathbb L_{2, G}^{\rm N1} \\
\mathbb L_{2, G}^{\rm Y0} \\
\mathbb L_{2, G}^{\rm YN0} \\
\mathbb L_{2, G}^{\rm N0}
\end{pmatrix} = \begin{pmatrix}
\frac{1}{\sqrt{G_T}}\sum_{1 \leq g \leq G} \tilde{H}_g (E[\bar{Y}_g(1,\pi_2)  | S_g] - E[\bar{Y}_g(1,\pi_2)]) \\
\frac{1}{\sqrt{G_T}}\sum_{1 \leq g \leq G} \tilde{H}_g (E[\bar{Y}_g(1,\pi_2) N_g | S_g] - E[\bar{Y}_g(1,\pi_2) N_g]) \\
\frac{1}{\sqrt{G_T}} \sum_{1 \leq g \leq G} \tilde{H}_g (E[N_g | S_g] - E[N_g]) \\
\frac{1}{\sqrt{G_C}} \sum_{1 \leq g \leq G} (1 - \tilde{H}_g) (E[\bar{Y}_g(0,0)  | S_g] - E[\bar{Y}_g(0,0) ]) \\
\frac{1}{\sqrt{G_C}} \sum_{1 \leq g \leq G} (1 - \tilde{H}_g) (E[\bar{Y}_g(0,0) N_g | S_g] - E[\bar{Y}_g(0,0) N_g]) \\
\frac{1}{\sqrt{G_C}} \sum_{1 \leq g \leq G} (1 - \tilde{H}_g) (E[N_g | S_g] - E[N_g])
\end{pmatrix}~. \]
For $\mathbb L_{2, G}^{\rm Y1}$, note it follows from Assumption \ref{ass:assignment-match} and Lemma \ref{lemma:crd-cov} that
\begin{align*}
&\var[\mathbb L_{2, G}^{\rm Y1} | S^{(G)}] = \frac{1}{G_T} \sum_{1 \leq j \leq n} \var \left[\sum_{i=1}^{k} \tilde{H}_i (E[\bar{Y}_g(1,\pi_2)  | S_g] - E[\bar{Y}_g(1,\pi_2)]) \right] \\
&= \frac{1}{G_T} \sum_{1 \leq j \leq n} \pi_1(1-\pi_1)\left( (k-1) \sum_{i\in\lambda_j} (E[\bar{Y}_g(1,\pi_2)  | S_i] - E[\bar{Y}_g(1,\pi_2)])^2 \right.\\
&\quad\quad\quad\left. - \sum_{a\neq b} (E[\bar{Y}_g(1,\pi_2)  | S_a] - E[\bar{Y}_g(1,\pi_2)]) (E[\bar{Y}_g(1,\pi_2)  | S_b] - E[\bar{Y}_g(1,\pi_2)]) \right)\\
&\lesssim \frac{1}{n} \sum_{1 \leq j \leq n}  \sum_{i\in\lambda_j}\sum_{j\neq i} (E[\bar{Y}_g(1,\pi_2)  | S_i] - E[\bar{Y}_g(1,\pi_2)])(E[\bar{Y}_g(1,\pi_2)  | S_i]-E[\bar{Y}_g(1,\pi_2)  | S_j])\\
&\lesssim \frac{1}{n} \sum_{1 \leq j \leq n}  \sum_{i\in\lambda_j} (E[\bar{Y}_g(1,\pi_2)  | S_i] - E[\bar{Y}_g(1,\pi_2)])\left(\max_{i,k\in\lambda_j} \left|E[\bar{Y}_g(1,\pi_2)  | S_i]-E[\bar{Y}_g(1,\pi_2)  | S_k]\right|\right)\\
&\lesssim \left( \frac{1}{n} \sum_{1 \leq j \leq n}\max_{i,k\in\lambda_j} \left|E[\bar{Y}_g(1,\pi_2)  | S_i]-E[\bar{Y}_g(1,\pi_2)  | S_k]\right|^{2}\right)^{1/2}\\
& \lesssim \frac{1}{n} \sum_{1 \leq j \leq G} \max_{i,k\in\lambda_j} |S_i - S_k |^2 \stackrel{P}{\to} 0~.
\end{align*}
Therefore, it follows from Markov's inequality conditional on $S^{(G)}$ and $H^{(G)}$, and the fact that probabilities are bounded and hence uniformly integrable, that
\[ \mathbb L_{2, G}^{\rm Y1} = E[\mathbb L_{2, G}^{\rm Y1} | S^{(G)}] + o_P(1)~. \]
Similarly,
\[ \begin{pmatrix}
\mathbb L_{2, G}^{\rm Y1} \\
\mathbb L_{2, G}^{\rm YN1} \\
\mathbb L_{2, G}^{\rm N1} \\
\mathbb L_{2, G}^{\rm Y0} \\
\mathbb L_{2, G}^{\rm YN0} \\
\mathbb L_{2, G}^{\rm N0}
\end{pmatrix} = \begin{pmatrix}
\frac{1}{\sqrt G} \sqrt{\pi_1} \sum_{1 \leq g \leq G} (E[\bar{Y}_g(1,\pi_2)  | S_g] - E[\bar{Y}_g(1,\pi_2) ]) \\
\frac{1}{\sqrt G} \sqrt{\pi_1} \sum_{1 \leq g \leq G} (E[\bar{Y}_g(1,\pi_2) N_g | S_g] - E[\bar{Y}_g(1,\pi_2) N_g ]) \\
\frac{1}{\sqrt G} \sqrt{\pi_1} \sum_{1 \leq g \leq G} (E[N_g | S_g] - E[N_g]) \\
\frac{1}{\sqrt G} \sqrt{1-\pi_1} \sum_{1 \leq g \leq G} (E[\bar{Y}_g(0,0)  | S_g] - E[\bar{Y}_g(0,0) ]) \\
\frac{1}{\sqrt G} \sqrt{1-\pi_1} \sum_{1 \leq g \leq G} (E[\bar{Y}_g(0,0) N_g | S_g] - E[\bar{Y}_g(0,0) N_g]) \\
\frac{1}{\sqrt G} \sqrt{1-\pi_1} \sum_{1 \leq g \leq G} (E[N_g | S_g] - E[N_g])
\end{pmatrix} + o_P(1)~. \]
It then follows from Assumption \ref{ass:Q_G}(c)-(d) and \ref{as:Q_G-lip}(a) and the central limit theorem that
\[ (\mathbb L_{2, G}^{\rm Y1},\mathbb L_{2, G}^{\rm YN1}, \mathbb L_{2, G}^{\rm N1}, \mathbb L_{2, G}^{\rm Y0}, \mathbb L_{2, G}^{\rm YN0}, \mathbb L_{2, G}^{\rm N0})' \stackrel{d}{\to} N(0, \mathbf V_2)~. \]
Because \eqref{eq:cond} holds and $(\mathbb L_{2, G}^{\rm Y1},\mathbb L_{2, G}^{\rm YN1}, \mathbb L_{2, G}^{\rm N1}, \mathbb L_{2, G}^{\rm Y0}, \mathbb L_{2, G}^{\rm YN0}, \mathbb L_{2, G}^{\rm N0})$ is deterministic conditional on $S^{(G)}, H^{(G)}$, the conclusion of the theorem follows from Lemma S.1.3 in \cite{bai-inference}.
\end{proof}

\section{Lemmas for Proof of Theorem \ref{thm:variance-estimator-mt}}\label{sec:lemma-variance}

\begin{lemma}\label{lemma:assumptions-for-bai-inference}
If Assumption \ref{ass:Q_G}, \ref{ass:assignment-match}, \ref{as:assignment2} and \ref{as:Q_G-lip}(a) hold, then
\begin{enumerate}[(a)]
    \item $E[\bar Y_{g}^{r}(z,h) \mid S_g =s ]$ and $E[\tilde Y_{g}^{r}(z,h) \mid S_g =s ]$ are Lipschitz in $s$ for $(z,h) \in \{(1,\pi_2), (0,\pi_2), (0,0)\}$ and $r\in\{1,2\}$.
    \item $E\left[\bar Y_{g}^{2}(z,h)\right] < \infty$ and $E\left[\tilde Y_{g}^{2}(z,h)\right] < \infty$ for $(z,h) \in \{(1,\pi_2), (0,\pi_2), (0,0)\}$.
    \item $((\bar Y_{g}(1,\pi_2), \bar Y_{g}(0,\pi_2), \bar Y_{g}(0,0)):1\leq g\leq G) \perp H^{(G)} \mid S^{(G)}$ and $((\tilde Y_{g}(1,\pi_2), \tilde Y_{g}(0,\pi_2), \tilde Y_{g}(0,0)):1\leq g\leq G) \perp H^{(G)} \mid S^{(G)}$.
\end{enumerate}
\begin{proof}
    First, (a) is an immediate consequence of Assumption  \ref{as:Q_G-lip}(a). Also, (b) is an immediate consequence of Lemma \ref{lem:E_bounded} with Assumption \ref{ass:Q_G}. Finally, (c) follows directly by inspection and Assumption \ref{ass:assignment-match} and \ref{as:assignment2}.
\end{proof}
\end{lemma}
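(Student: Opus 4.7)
The plan is to verify each of the three claims in turn, with the only nontrivial work being the bookkeeping required to reduce statements about $\tilde Y_g(z,h)$ to the hypotheses on the averages $\bar Y_g(z,h)$.

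For part (a), the claim for $\bar Y_g^r(z,h)$ is immediate from Assumption \ref{as:Q_G-lip}(a) by taking $\ell = 0$. For $\tilde Y_g(z,h)$, I would plug the definition (\ref{eqn:Y(z,h)}) into the conditional expectation and expand, writing $E[\tilde Y_g(z,h) \mid S_g = s]$ as a linear combination of $E[N_g \bar Y_g(z,h) \mid S_g = s]$ and $E[N_g \mid S_g = s]$ with constant coefficients. Each piece is Lipschitz in $s$ by Assumption \ref{as:Q_G-lip}(a) with appropriate $(r,\ell)$, and the constants $E[N_g]$ and $E[\bar Y_g(z,h) N_g]$ are finite by Assumption \ref{ass:Q_G}(c) and Lemma \ref{lem:E_bounded}. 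An analogous expansion of $\tilde Y_g^2(z,h)$ yields a linear combination of $N_g^2 \bar Y_g^r(z,h)$ for $r \in \{0,1,2\}$; taking conditional expectations and invoking Assumption \ref{as:Q_G-lip}(a) with $(r,\ell) \in \{(2,2), (1,2), (0,2)\}$ gives the Lipschitz property.

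For part (b), $E[\bar Y_g^2(z,h)] < \infty$ is contained in Lemma \ref{lem:E_bounded}. The same expansion used in part (a), but taken unconditionally, bounds $E[\tilde Y_g^2(z,h)]$ by a linear combination of $E[N_g^2 \bar Y_g^r(z,h)]$ for $r \in \{0,1,2\}$, each finite by Lemma \ref{lem:E_bounded} together with Assumption \ref{ass:Q_G}(c).

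For part (c), the idea is that both $\bar Y_g(z,h)$ and $\tilde Y_g(z,h)$ are Borel functions of the pair $(W^{(G)}, ((Z_{i,g}(h'): h' \in \mathcal H, 1 \le i \le N_g): 1 \le g \le G))$, since $E[N_g]$ and $E[\bar Y_g(z,h) N_g]$ are constants and $N_g$ is part of $W^{(G)}$. Combining Assumption \ref{ass:assignment-match}, which yields $W^{(G)} \perp H^{(G)} \mid S^{(G)}$ and characterizes $H^{(G)}$ as drawn from a distribution determined by $S^{(G)}$ alone, with Assumption \ref{as:assignment2}(a), which gives unconditional independence of $H^{(G)}$ from the second-stage potential assignments, one obtains that $H^{(G)}$ is conditionally independent of $(W^{(G)}, (Z_{i,g}(h')))$ given $S^{(G)}$, and therefore of any measurable function of this pair. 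The main obstacle, though still quite routine, is making this last joint-independence step precise; everything else is direct substitution.
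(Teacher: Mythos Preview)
Your proposal is correct and follows essentially the same approach as the paper's proof, which consists of three one-line references to Assumption \ref{as:Q_G-lip}(a), Lemma \ref{lem:E_bounded}, and Assumptions \ref{as:assignment2} and \ref{ass:assignment-match}, respectively. You have simply spelled out the algebraic expansions for the $\tilde Y_g$ terms and the independence bookkeeping that the paper leaves implicit under the phrase ``immediate consequence'' and ``by inspection.''
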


\begin{lemma}\label{lemma:var-mp-1}
Suppose $Q_G$ satisfies Assumptions \ref{ass:Q_G} and \ref{as:Q_G-lip} and the treatment assignment mechanism satisfies Assumptions \ref{ass:assignment-match}-\ref{as:assignment2}. Then, for $r=1,2$,
\begin{equation*}
    \frac{1}{n k(h)} \sum_{1\leq g \leq G}  \left(\bar Y_g^z\right)^r I \{H_g = h\} \xrightarrow{P} E[ \bar Y_g^r(z,h) ]~.
\end{equation*}
\end{lemma}
\begin{proof}
    I only prove the conclusion for $r = 1$ and the proof for $r = 2$ follows similarly. Note that
    \begin{multline*}
        \frac{1}{n k(h)} \sum_{1\leq g \leq G} \bar Y_g^z I \{H_g = h\} = \frac{1}{n k(h)} \sum_{1\leq g \leq G}  (\bar Y_g(z,h) I \{H_g = h\} - E[\bar Y_g(z,h) I \{H_g = h\} | S^{(G)}, H^{(G)}]) \\
        + \frac{1}{n k(h)} \sum_{1\leq g \leq G} E[\bar Y_g(z,h) I \{H_g = h\} | S^{(G)}, H^{(G)}]~.
    \end{multline*}
    By Lemma \ref{lemma:assumptions-for-bai-inference} (c), Assumption \ref{as:close} and similar arguments to those used in the proof of Lemma \ref{lemma:asymptotics-mp},
    \begin{align*}
        \frac{1}{n k(h)} \sum_{1\leq g \leq G} E[\bar Y_g(z,h) I \{H_g = h\} | S^{(G)}, H^{(G)}] &= \frac{1}{n k(h)} \sum_{1\leq g \leq G} I \{H_g = h\} E[\bar Y_g(z,h)  | S_g] \\
        & \xrightarrow{P} E[E[\bar Y_g(z,h)  | S_g]] = E[\bar Y_g(z,h)]~.
    \end{align*}
    By following the argument in Lemma S.1.5 of \cite{bai-inference}, we conclude that
    \[ \frac{1}{n k(h)} \sum_{1\leq g \leq G}  (\bar Y_g(z,h) I \{H_g = h\} - E[\bar Y_g(z,h) I \{H_g = h\} | S^{(G)}, H^{(G)}]) \stackrel{P}{\to} 0~. \]
    Therefore, the results hold.
\end{proof}

\begin{lemma}\label{lemma:var-mp-2}
Suppose $Q_G$ satisfies Assumptions \ref{ass:Q_G} and \ref{as:Q_G-lip} and the treatment assignment mechanism satisfies Assumptions \ref{ass:assignment-match}-\ref{as:assignment2}. Then, as $n\rightarrow \infty$,
\begin{equation*}
    \hat \rho_n^z(\pi_2, 0) \xrightarrow{P} E[ E[ \bar{Y}_g(z,\pi_2) \mid S_g] E[ \bar{Y}_g(z,0) \mid S_g] ] ~.
\end{equation*}
\end{lemma}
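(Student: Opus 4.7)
My plan is to condition on $(S^{(G)}, H^{(G)})$, decompose $\hat\rho_n^z(\pi_2, 0)$ into its conditional mean plus a centered residual, and then argue separately that (i) the conditional mean converges to the target using Lipschitz smoothness of the regression functions combined with Assumption~\ref{as:close}, and (ii) the residual vanishes by a conditional variance bound leveraging the cross-cluster independence implied by Lemma~\ref{lemma:assumptions-for-bai-inference}(c). Throughout, I will write $\mu_h(s) := E[\bar Y_g(z,h) \mid S_g = s]$ for $h \in \{0, \pi_2\}$; by Lemma~\ref{lemma:assumptions-for-bai-inference}(a), each $\mu_h$ is Lipschitz.

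Writing $A_j := \sum_{i \in \lambda_j} I\{H_i = \pi_2\}\, \bar Y_i(z,\pi_2)$ and $B_j := \sum_{i \in \lambda_j} I\{H_i = 0\}\, \bar Y_i(z,0)$ so that $\hat\rho_n^z(\pi_2, 0) = n^{-1}\sum_{j=1}^n A_j B_j / [l(k-l)]$, the key observation is that $A_j$ and $B_j$ involve disjoint clusters, so Lemma~\ref{lemma:assumptions-for-bai-inference}(c) together with the i.i.d.\ structure in Assumption~\ref{ass:Q_G} yields
\[
E\!\left[A_j B_j \mid S^{(G)}, H^{(G)}\right] \;=\; \sum_{i \in \lambda_j:\, H_i = \pi_2}\; \sum_{k \in \lambda_j:\, H_k = 0} \mu_{\pi_2}(S_i)\,\mu_0(S_k).
\]
For (i), I will pick an arbitrary representative $i_0(j) \in \lambda_j$ and use Lipschitz continuity together with Cauchy--Schwarz (and the bound $E[\mu_h(S_g)^2] < \infty$ obtained from Lemma~\ref{lem:E_bounded}) to show that the average of the right-hand side over $j$, divided by $l(k-l)$, equals $n^{-1}\sum_{j=1}^n \mu_{\pi_2}(S_{i_0(j)})\mu_0(S_{i_0(j)}) + o_P(1)$, with the error controlled by $\bigl(n^{-1}\sum_j \max_{i,k \in \lambda_j}\|S_i - S_k\|^2\bigr)^{1/2}$, which is $o_P(1)$ by Assumption~\ref{as:close}, times an $O_P(1)$ factor. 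An analogous Lipschitz/Cauchy--Schwarz step replaces the leading sum by $G^{-1}\sum_g \mu_{\pi_2}(S_g)\mu_0(S_g)$, and the weak LLN on the i.i.d.\ sequence $\{S_g\}$ delivers convergence in probability to $E[\mu_{\pi_2}(S_g)\mu_0(S_g)]$.

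For (ii), conditional on $(S^{(G)}, H^{(G)})$ the variables $\{A_j B_j\}_{j=1}^n$ are mutually independent across $j$ since the $\lambda_j$ partition the clusters and outcomes are conditionally independent across clusters; moreover, within each stratum $A_j$ and $B_j$ are conditionally independent because they involve disjoint units. Hence $\operatorname{Var}(A_j B_j \mid S^{(G)}, H^{(G)}) \le E[A_j^2 \mid S^{(G)}, H^{(G)}]\, E[B_j^2 \mid S^{(G)}, H^{(G)}]$, and combining Lemma~\ref{lem:E_bounded} with Assumption~\ref{as:Q_G-lip}(b) bounds this uniformly, so $\operatorname{Var}(\hat\rho_n^z(\pi_2, 0) \mid S^{(G)}, H^{(G)}) = O(1/n)$. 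A conditional Markov argument followed by uniform integrability of bounded conditional probabilities---exactly as in the proof of Lemma~\ref{lemma:asymptotics-mp}---then handles the residual. I expect the main technical nuisance to be step (i): keeping the within-stratum Lipschitz error bounds compatible with the $L^2$ form of Assumption~\ref{as:close} and cleanly passing from the per-stratum representative average to the full cluster average; the remainder is routine bookkeeping analogous to what already appears in the proof of Lemma~\ref{lemma:asymptotics-mp}.
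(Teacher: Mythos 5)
Your proposal is correct and follows the same skeleton as the paper's proof (decompose $\hat\rho_n^z(\pi_2,0)$ into a conditional mean given the strata plus a centered residual; control the mean via Lipschitz continuity and Assumption \ref{as:close}; kill the residual by a concentration argument), but the two halves are executed differently. For the conditional mean, the paper conditions only on $S^{(G)}$ and integrates out $H^{(G)}$ using the within-stratum permutation distribution, which produces the factor $l(k-l)/[k(k-1)]$ and is then handled via the polarization identity $ab = \tfrac12(a_ib_i + a_mb_m) - \tfrac12(a_i-a_m)(b_i-b_m)$; you instead condition on $H^{(G)}$ as well and replace every $\mu_h(S_i)$ in the stratum by a representative $\mu_h(S_{i_0(j)})$. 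Both reduce to the same Lipschitz-plus-close-strata control and deliver the same limit, so this is a cosmetic difference. For the residual, the paper invokes the truncation/uniform-integrability argument of Lemma C.2 of the matched-tuples reference, whereas you use a direct conditional variance bound: mutual independence of $\{A_jB_j\}$ across $j$ and of $A_j$ from $B_j$ within a stratum (both valid here, since conditional on $(S^{(G)},H^{(G)})$ the cluster-level outcomes are independent across clusters), together with the a.s.\ bound $E[\bar Y_g^2(z,h)\mid S_g]\le C$ from the last part of Lemma \ref{lem:E_bounded}, which is available precisely because Assumption \ref{as:Q_G-lip}(b) is among your hypotheses. Your Chebyshev route is simpler and self-contained given that bound; the paper's UI argument is more robust in settings where conditional second moments are not uniformly bounded, but that extra generality is not needed under the stated assumptions.
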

\begin{proof}
    To begin with, by Assumption \ref{ass:assignment-match}, 
    \begin{align*}
        &E[\hat \rho_n^z(\pi_2, 0) \mid S^{(G)} ]\\
        &= \frac{1}{n} \sum_{1 \leq j \leq n} \frac{1}{l(k-l)} E\left[  \Big ( \sum_{i \in \lambda_j} \bar Y_i^z I \{H_i = \pi_2\} \Big ) \Big ( \sum_{i \in \lambda_j} \bar Y_i^z I \{H_i = 0\} \Big ) \mid S^{(G)} \right] \\
        &= \frac{1}{n} \sum_{1 \leq j \leq n} \frac{1}{l(k-l)} \sum_{i \neq m \in \lambda_j} E\left[  \bar Y_i(z,\pi_2)    \mid S_i \right] E\left[   \bar Y_m(z,0)  \mid S_m \right] E\left[  I \{H_i = \pi_2\}   I \{H_m = 0\}  \mid S^{(G)} \right] \\
        &= \frac{1}{n} \sum_{1 \leq j \leq n} \frac{1}{l(k-l)} \sum_{i < m \in \lambda_j} (E\left[  \bar Y_i(z,\pi_2)    \mid S_i \right] E\left[   \bar Y_i(z,0)  \mid S_i \right] + E\left[  \bar Y_m(z,\pi_2)    \mid S_m \right] E\left[   \bar Y_m(z,0)  \mid S_m \right] \\
        &\quad - (E\left[  \bar Y_i(z,\pi_2)    \mid S_i \right] - E\left[  \bar Y_m(z,\pi_2)    \mid S_m \right]) (E\left[   \bar Y_i(z,0)  \mid S_i \right] - E\left[   \bar Y_m(z,0)  \mid S_m \right] ) ) \frac{l(k-l)}{k(k-1)} \\
        &= \frac{1}{n} \sum_{1 \leq j \leq n} \frac{1}{k} \sum_{i \in \lambda_j} E\left[  \bar Y_i(z,\pi_2)    \mid S_i \right] E\left[   \bar Y_i(z,0)  \mid S_i \right] \\
        &\quad - \frac{1}{n} \sum_{1 \leq j \leq n} \frac{1}{k(k-1)} \sum_{i < m \in \lambda_j}(E\left[  \bar Y_i(z,\pi_2)    \mid S_i \right] - E\left[  \bar Y_m(z,\pi_2)    \mid S_m \right]) (E\left[   \bar Y_i(z,0)  \mid S_i \right] - E\left[   \bar Y_m(z,0)  \mid S_m \right] )~.
    \end{align*}
    Then, by Lipschitz condition from Lemma \ref{lemma:assumptions-for-bai-inference}(a), Lemma \ref{lem:E_bounded} and Assumption \ref{as:close}, we conclude that $E[\hat \rho_n^z(\pi_2, 0) \mid S^{(G)} ] \xrightarrow{P} E[ E[ \bar{Y}_g(z,\pi_2) \mid S_g] E[ \bar{Y}_g(z,0) \mid S_g] ]$. To conclude the proof, we need to show
    \begin{equation*}
        \hat \rho_n^z(\pi_2, 0) - E[\hat \rho_n^z(\pi_2, 0) \mid S_g ] \xrightarrow{P} 0 ~.
    \end{equation*}
    Define
    \begin{equation*}
        \hat \rho_{n,j}^z(\pi_2, 0) = \frac{1}{l(k-l)}  \Big ( \sum_{i \in \lambda_j} \bar Y_i^z I \{H_i = \pi_2\} \Big ) \Big ( \sum_{i \in \lambda_j} \bar Y_i^z I \{H_i = 0\} \Big )~.
    \end{equation*}
    Note that
    {\footnotesize
    \begin{align*}
        &\left| E[\hat \rho_{n,j}^z(\pi_2, 0) \mid S^{(G)}] \right| I \left\{ \left|  E[\hat \rho_{n,j}^z(\pi_2, 0) \mid S^{(G)}] \right| > \lambda \right\} \\
        &= \left| \frac{1}{k(k-1)} \sum_{i \neq m \in \lambda_j} E\left[  \bar Y_i(z,\pi_2)    \mid S_i \right] E\left[   \bar Y_m(z,0)  \mid S_m \right] \right|I \left\{ \left|  \frac{1}{k(k-1)} \sum_{i \neq m \in \lambda_j} E\left[  \bar Y_i(z,\pi_2)    \mid S_i \right] E\left[   \bar Y_m(z,0)  \mid S_m \right] \right| > \lambda \right\} \\
        &\leq \sum_{i \neq m \in \lambda_j} \left|E\left[  \bar Y_i(z,\pi_2)    \mid S_i \right] E\left[   \bar Y_m(z,0)  \mid S_m \right] \right| I \left\{ \left|  E\left[  \bar Y_i(z,\pi_2)    \mid S_i \right] E\left[   \bar Y_m(z,0)  \mid S_m \right] \right| > \lambda \right\}~.
    \end{align*}
    }
    Then, the conclusion follows by repeating the same arguments from Lemma C.2 of \cite{matched-tuple}.
\end{proof}

\begin{lemma}\label{lemma:var-mp-3}
Suppose $Q_G$ satisfies Assumptions \ref{ass:Q_G} and \ref{as:Q_G-lip} and the treatment assignment mechanism satisfies Assumptions \ref{ass:assignment-match}-\ref{as:assignment2}. Then, as $n\rightarrow \infty$,
\begin{equation*}
    \hat \rho_n^z(h,h) \xrightarrow{P} E[ E[ \bar{Y}_g(z,h) \mid S_g]^2 ] ~.
\end{equation*}
\end{lemma}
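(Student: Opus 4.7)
The plan is to decompose $\hat\rho_n^z(h,h) = E[\hat\rho_n^z(h,h) \mid S^{(G)}] + (\hat\rho_n^z(h,h) - E[\hat\rho_n^z(h,h) \mid S^{(G)}])$ and handle the two pieces separately. For the conditional expectation, the key point is that by Assumption \ref{ass:assignment-match} the treatment indicators within $\lambda_{2j-1}$ and within $\lambda_{2j}$ are conditionally independent given $S^{(G)}$, each $I\{H_i = h\}$ has marginal probability $k(h)/k$, and Lemma \ref{lemma:assumptions-for-bai-inference}(c) guarantees that the potential outcomes are independent of $H^{(G)}$ given $S^{(G)}$. Writing $f(s) = E[\bar Y_g(z,h) \mid S_g = s]$, these facts give
\[
E[\hat\rho_n^z(h,h) \mid S^{(G)}] = \frac{2}{n k^2} \sum_{1 \le j \le \lfloor n/2 \rfloor} \Bigl(\sum_{i \in \lambda_{2j-1}} f(S_i)\Bigr)\Bigl(\sum_{m \in \lambda_{2j}} f(S_m)\Bigr).
\]

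Applying the polarization identity $f(S_i) f(S_m) = \tfrac{1}{2}(f(S_i)^2 + f(S_m)^2) - \tfrac{1}{2}(f(S_i) - f(S_m))^2$ rewrites this as an empirical average of $f(S_g)^2$ over the clusters contained in the pairs of pairs, minus a cross-difference term. The Lipschitz property of $f$ from Lemma \ref{lemma:assumptions-for-bai-inference}(a) together with Assumption \ref{ass:variance-estimator} bound the cross-difference term by a constant multiple of $n^{-1} \sum_j \max_{i \in \lambda_{2j-1}, m \in \lambda_{2j}} \|S_i - S_m\|^2 = o_P(1)$, and the weak law of large numbers together with the second-moment bound from Lemma \ref{lem:E_bounded} and Lemma \ref{lemma:assumptions-for-bai-inference}(b) gives that the empirical average of $f(S_g)^2$ converges in probability to $E[E[\bar Y_g(z,h) \mid S_g]^2]$, establishing convergence of the conditional expectation.

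For the remainder, setting $A_j = k(h)^{-2} \bigl(\sum_{i \in \lambda_{2j-1}} \bar Y_i^z I\{H_i = h\}\bigr)\bigl(\sum_{m \in \lambda_{2j}} \bar Y_m^z I\{H_m = h\}\bigr)$, the $A_j$ are conditionally independent across $j$ given $S^{(G)}$ because they involve disjoint tuples with independently assigned treatments (Assumption \ref{ass:assignment-match}). I would then truncate $A_j$ at a level $\lambda > 0$ and follow the argument of Lemma C.2 of \cite{matched-tuple} used in Lemma \ref{lemma:var-mp-2}: the truncated piece has conditional variance of order $1/n$ and converges to its conditional mean by Chebyshev's inequality, while the tail piece is uniformly small in $n$ by uniform integrability, which follows from $E[A_j^2] < \infty$ via Cauchy--Schwarz and Lemma \ref{lemma:assumptions-for-bai-inference}(b).

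The main subtle point, relative to Lemma \ref{lemma:var-mp-2}, is that the two factors in $A_j$ come from two distinct tuples rather than from a single tuple, so one must invoke the between-tuple independence (rather than the within-tuple combinatorial covariance formula used in Lemma \ref{lemma:var-mp-2}) both in computing the conditional expectation and in carrying out the variance bookkeeping for the remainder. Correspondingly, the smallness of the bias term in the conditional expectation relies on Assumption \ref{ass:variance-estimator} rather than Assumption \ref{as:close}. Once these two substitutions are recognized, the truncation machinery transports essentially verbatim.
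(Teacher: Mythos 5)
Your proposal is correct and follows essentially the same route as the paper's proof: the same decomposition into conditional expectation plus remainder, the same factorization of the conditional expectation via between-tuple independence and the marginal probability $k(h)/k$, the same polarization identity with a Lipschitz bound on the cross-difference term, and the same truncation/uniform-integrability argument for the remainder. Your observation that the cross-difference term is controlled by Assumption \ref{ass:variance-estimator} (distances between adjacent tuples) rather than Assumption \ref{as:close} is the correct attribution, and is in fact more precise than the in-line citation in the paper's own proof.
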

\begin{proof}
    To begin with, by Assumption \ref{ass:assignment-match},
    \begin{align*}
        &E[\hat \rho_n^z(h,h) \mid S^{(G)}]\\
        &= \frac{2}{n} \sum_{1 \leq j \leq \lfloor n / 2 \rfloor} \frac{1}{k^2(h)} E\left[ \Big ( \sum_{i \in \lambda_{2j-1}} \bar Y_i^z I \{H_i = h\} \Big ) \Big ( \sum_{i \in \lambda_{2j}} \bar Y_i^z I \{H_i = h\} \Big ) \mid S^{(G)} \right] \\
        &= \frac{2}{n} \sum_{1 \leq j \leq \lfloor n / 2 \rfloor} \frac{1}{k^2(h)} \frac{k^2(h)}{k^2} \sum_{i \in \lambda_{2j-1}, k \in \lambda_{2j}} E[\bar Y_i^z(z,h)\mid S_i] E[\bar Y_k^z(z,h)\mid S_k] \\
        &= \frac{2}{n} \sum_{1 \leq j \leq \lfloor n / 2 \rfloor} \frac{1}{k^2} \sum_{i \in \lambda_{2j-1}, k \in \lambda_{2j}} \left( \frac{1}{2} E[\bar Y_i^z(z,h)\mid S_i]^2 + \frac{1}{2} E[\bar Y_k^z(z,h)\mid S_k]^2 - \frac{1}{2} (E[\bar Y_i^z(z,h)\mid S_i] - E[\bar Y_k^z(z,h)\mid S_k])^2 \right) \\
        &= \frac{1}{G} \sum_{1 \leq g \leq G} E[\bar Y_g^z(z,h)\mid S_g]^2 - \frac{1}{n k^2} \sum_{1 \leq j \leq \lfloor n / 2 \rfloor} \sum_{i \in \lambda_{2j-1}, k \in \lambda_{2j}}(E[\bar Y_i^z(z,h)\mid S_i] - E[\bar Y_k^z(z,h)\mid S_k])^2 \\
        &\xrightarrow{P} E[ E[ \bar{Y}_g(z,h) \mid S_g]^2 ]~,
    \end{align*}
    where the convergence in probability follows from Lemma \ref{lemma:assumptions-for-bai-inference}(a), Assumption \ref{as:close}, Lemma \ref{lem:E_bounded} and weak law of large numbers. To conclude the proof, we need to show
    \begin{equation*}
        \hat \rho_n^z(h, h) - E[\hat \rho_n^z(h, h) \mid S^{(G)} ] \xrightarrow{P} 0 ~.
    \end{equation*}
    Define
    \begin{equation*}
        \hat \rho_{n,j}^z(h, h) = \frac{1}{k^2(h)}  \Big ( \sum_{i \in \lambda_{2j-1}} \bar Y_i^z I \{H_i = h\} \Big ) \Big ( \sum_{i \in \lambda_{2j}} \bar Y_i^z I \{H_i = h\} \Big ) ~.
    \end{equation*}
    Note that
    {\footnotesize
    \begin{align*}
        &\left| E[\hat \rho_{n,j}^z(\pi_2, 0) \mid S^{(G)}] \right| I \left\{ \left|  E[\hat \rho_{n,j}^z(\pi_2, 0) \mid S^{(G)}] \right| > \lambda \right\} \\
        &= \left| \frac{1}{k^2} \sum_{i \in \lambda_{2j-1}, k \in \lambda_{2j}} E[\bar Y_i^z(z,h)\mid S_i] E[\bar Y_k^z(z,h)\mid S_k] \right|I \left\{ \left|  \frac{1}{k^2} \sum_{i \in \lambda_{2j-1}, k \in \lambda_{2j}} E[\bar Y_i^z(z,h)\mid S_i] E[\bar Y_k^z(z,h)\mid S_k] \right| > \lambda \right\} \\
        &\leq \sum_{i \in \lambda_{2j-1}, k \in \lambda_{2j}} \left|E[\bar Y_i^z(z,h)\mid S_i] E[\bar Y_k^z(z,h)\mid S_k] \right| I \left\{ \left|  E[\bar Y_i^z(z,h)\mid S_i] E[\bar Y_k^z(z,h)\mid S_k] \right| > \lambda \right\}~.
    \end{align*}
    }
    Then, the conclusion follows by repeating the same arguments from Lemma C.3 of \cite{matched-tuple}.
\end{proof}

\begin{lemma}\label{lemma:var-mp-size-1}
Suppose $Q_G$ satisfies Assumptions \ref{ass:Q_G} and \ref{as:Q_G-lip} and the treatment assignment mechanism satisfies Assumptions \ref{ass:assignment-match}-\ref{as:assignment2}. Then, for $r=1,2$,
\begin{equation*}
    \frac{1}{n k(h)} \sum_{1\leq g \leq G}  \left(\tilde Y_g^z\right)^r I \{H_g = h\} \xrightarrow{P} E[ \tilde Y_g^r(z,h) ]~.
\end{equation*}
\end{lemma}
\begin{proof}
    I only prove the conclusion for $r = 1$ and the proof for $r = 2$ follows similarly. Note that
    \begin{align*}
        \frac{1}{n k(h)} \sum_{1\leq g \leq G} \tilde Y_g^z I \{H_g = h\} = \frac{1}{n k(h)} \sum_{1\leq g \leq G} \tilde Y_g(z,h) I \{H_g = h\} + \frac{1}{n k(h)} \sum_{1\leq g \leq G} \left(\hat Y_g^z(h) -\tilde Y_g(z,h) \right) I \{H_g = h\} ~,
    \end{align*}
    where $\hat Y_g^z(h)$ is defined in (\ref{eqn:hatY(h)}). Note that
    \begin{align*}
        &\frac{1}{n k(h)} \sum_{1\leq g \leq G} \left(\hat Y_g^z(h) -\tilde Y_g(z,h) \right) I \{H_g = h\}\\
        &=\left( \frac{1}{\frac{1}{G} \sum_{1\leq g \leq G} N_g} - \frac{1}{E[N_g]} \right)\left( \frac{1}{n k(h)} \sum_{1\leq g \leq G} \Bar{Y}_g(z,h) N_gI \{H_g = h\}\right) \\
        &\quad - \left( \frac{\frac{1}{G}\sum_{1\leq g \leq G} \bar Y_g(z,h) I\{H_g=h\} N_g}{\left(\frac{1}{G} \sum_{1\leq g \leq G} N_g \right)^2} - \frac{E[\Bar{Y}_g(z,h) N_g]}{E[N_g]^2} \right) \left( \frac{1}{n k(h)} \sum_{1\leq g \leq G}  N_gI \{H_g = h\}\right)
    \end{align*}
    By weak law of large number, Lemma \ref{lemma:asymptotics-mp} and Slutsky's theorem, we have 
    \begin{equation*}
        \frac{1}{n k(h)} \sum_{1\leq g \leq G} \left(\hat Y_g^z(h) -\tilde Y_g(z,h) \right) I \{H_g = h\} \xrightarrow{P} 0 ~.
    \end{equation*}
    By Lemma \ref{lemma:assumptions-for-bai-inference} and repeating the arguments in Lemma \ref{lemma:var-mp-1} with $\tilde Y_g(z,h)$ in the place of $\bar Y_g(z,h)$, we have
    \begin{equation*}
        \frac{1}{n k(h)} \sum_{1\leq g \leq G} \tilde Y_g(z,h) I \{H_g = h\} \xrightarrow{P} E[ \tilde Y_g^r(z,h) ]~.
    \end{equation*}
    Thus, the result follows.
\end{proof}

\begin{lemma}\label{lemma:var-mp-size-2}
Suppose $Q_G$ satisfies Assumptions \ref{ass:Q_G} and \ref{as:Q_G-lip} and the treatment assignment mechanism satisfies Assumptions \ref{ass:assignment-match}-\ref{as:assignment2}. Then, as $n\rightarrow \infty$,
\begin{equation*}
    \frac{1}{n} \sum_{1 \leq j \leq n} \frac{1}{l(k-l)} \Big ( \sum_{i \in \lambda_j} \tilde Y_i^z I \{H_i = \pi_2\} \Big ) \Big ( \sum_{i \in \lambda_j} \tilde Y_i^z I \{H_i = 0\} \Big ) \xrightarrow{P} E[ E[ \bar{Y}_g(z,\pi_2) \mid S_g] E[ \bar{Y}_g(z,0) \mid S^{(G)}] ] ~.
\end{equation*}
\end{lemma}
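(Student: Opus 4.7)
The plan is to reduce this to the already-established Lemma \ref{lemma:var-mp-2}, but with $\bar Y_g(z,h)$ replaced by its size-weighted, demeaned analogue $\tilde Y_g(z,h)$. By Lemma \ref{lemma:assumptions-for-bai-inference} together with Assumption \ref{as:Q_G-lip}, the random variables $\tilde Y_g(z,h)$ satisfy exactly the regularity conditions needed to rerun the proof of Lemma \ref{lemma:var-mp-2}: they have finite second moments, their conditional expectation given $S_g$ is Lipschitz, and they are independent of the first-stage assignments conditional on $S^{(G)}$. Consequently, the direct analogue
\[
\frac{1}{n} \sum_{1 \leq j \leq n} \frac{1}{l(k-l)} \Big ( \sum_{i \in \lambda_j} \tilde Y_i(z,\pi_2)\, I \{H_i = \pi_2\} \Big ) \Big ( \sum_{i \in \lambda_j} \tilde Y_i(z,0) \, I \{H_i = 0\} \Big ) \xrightarrow{P} E\!\left[ E[\bar Y_g(z,\pi_2)\mid S_g] \, E[\bar Y_g(z,0)\mid S_g]\right]
\]
follows by exactly the same argument (the conditional mean calculation, Lipschitz-plus-Assumption \ref{as:close} bound on the cross term, and the truncation/uniform-integrability step).

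What remains is to show that replacing $\tilde Y_i(z,h)$ by its observable counterpart $\tilde Y_i^z$ produces an asymptotically negligible error. Following the strategy already used in Lemma \ref{lemma:var-mp-size-1} and in the proof of Theorem \ref{thm:variance-estimator-ca2}, I would first pass through the intermediate quantity $\hat Y_g^z(h)$ from (\ref{eqn:hatY(h)}), which agrees with $\tilde Y_g(z,h)$ up to a factor multiplying $\bar Y_g(z,h) N_g$ and an additive term multiplying $N_g$, where both multipliers are of the form ``empirical moment $-$ population moment'' and hence are $o_P(1)$ by the weak law of large numbers, Slutsky's theorem, and Lemma \ref{lem:E_bounded}. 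Then I would pass from $\hat Y_g^z(h)$ to $\tilde Y_i^z$, which differs only by a further $o_P(1)$ factor of the same form (the sample normalization of $N_g$).

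The key step is to show that the cross-product of remainders, when summed within each tuple and averaged over tuples, remains $o_P(1)$. I would expand the product inside the sum into four terms: the ``main'' term involving only $\tilde Y_i(z,\pi_2)$ and $\tilde Y_i(z,0)$, which gives the desired limit; and three ``cross'' terms, each of which factors as an $o_P(1)$ scalar (coming from a difference of empirical and population moments of $N_g$ or $\bar Y_g(z,h)N_g$) multiplied by an average of the form $\tfrac{1}{n}\sum_j \tfrac{1}{l(k-l)}(\sum_{i\in\lambda_j} A_i I\{H_i=\pi_2\})(\sum_{i\in\lambda_j} B_i I\{H_i=0\})$ for $A,B \in \{\bar Y_g(z,h), \bar Y_g(z,h)N_g, N_g, 1\}$. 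Each such average is $O_P(1)$ by the same argument as in Lemma \ref{lemma:var-mp-2}, using Lemma \ref{lem:E_bounded} to guarantee the requisite second moments.

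The main obstacle will be the bookkeeping: keeping track of which factors converge in probability to which limits, and verifying that every $O_P(1)$ average genuinely has a finite limit rather than contributing a divergent term. This is routine but tedious; the decisive input is Assumption \ref{as:Q_G-lip}(b) together with Lemma \ref{lem:E_bounded}, which supply the uniform second-moment bounds $E[\bar Y_g^r(z,h) N_g^\ell\mid S_g] \le C$ needed to handle all the cross terms. Once these are controlled, Slutsky's theorem combines the ``main'' convergence with the $o_P(1)$ errors to deliver the stated limit.
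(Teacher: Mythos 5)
Your proposal is correct and follows essentially the same route as the paper's proof: the main term is handled by rerunning the argument of Lemma \ref{lemma:var-mp-2} with $\tilde Y_g(z,h)$ in place of $\bar Y_g(z,h)$ (justified by Lemma \ref{lemma:assumptions-for-bai-inference}), and the remainder from replacing $\tilde Y_g(z,h)$ by $\tilde Y_g^z$ via $\hat Y_g^z(h)$ is expanded into exactly the three cross terms you describe, each an $o_P(1)$ moment difference times an $O_P(1)$ within-tuple average controlled by Lemma \ref{lem:E_bounded}. The only cosmetic discrepancy is that the limit of the main term should be written with $\tilde Y_g$ rather than $\bar Y_g$ inside the conditional expectations, but this matches the (apparent typo in the) lemma statement itself.
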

\begin{proof}
    Note that
    \begin{align*}
        &\frac{1}{n} \sum_{1 \leq j \leq n} \frac{1}{l(k-l)} \Big ( \sum_{i \in \lambda_j} \tilde Y_i^z I \{H_i = \pi_2\} \Big ) \Big ( \sum_{i \in \lambda_j} \tilde Y_i^z I \{H_i = 0\} \Big )\\
        &= \frac{1}{n} \sum_{1 \leq j \leq n} \frac{1}{l(k-l)} \sum_{i\neq m \in \lambda_j} \hat Y_i^z(\pi_2) \hat Y_m^z(0) I \{H_i = \pi_2, H_m = 0\} \\
        &= \frac{1}{n} \sum_{1 \leq j \leq n} \frac{1}{l(k-l)} \sum_{i\neq m \in \lambda_j} \tilde Y_i(z,\pi_2) \tilde Y_m(z,0)  I \{H_i = \pi_2, H_m = 0\}\\
        &\quad + \frac{1}{n} \sum_{1 \leq j \leq n} \frac{1}{l(k-l)} \sum_{i\neq m \in \lambda_j} \left(\hat Y_i^z(\pi_2) \hat Y_m^z(0) - \tilde Y_i(z,\pi_2) \tilde Y_m(z,0)\right)  I \{H_i = \pi_2, H_m = 0\}~.
    \end{align*}
    The second term can be written as
    \begin{multline}\label{eqn:rewrting}
        \frac{1}{n} \sum_{1 \leq j \leq n} \frac{1}{l(k-l)} \sum_{i\neq m \in \lambda_j} \left(\hat Y_i^z(\pi_2)  - \tilde Y_i(z,\pi_2)\right)\tilde Y_m(z,0)I \{H_i = \pi_2, H_m = 0\}\\
        + \frac{1}{n} \sum_{1 \leq j \leq n} \frac{1}{l(k-l)} \sum_{i\neq m \in \lambda_j}\left(\hat Y_m^z(0)- \tilde Y_m(z,0)\right)\tilde Y_i(z,\pi_2) I \{H_i = \pi_2, H_m = 0\}\\
        + \frac{1}{n} \sum_{1 \leq j \leq n} \frac{1}{l(k-l)} \sum_{i\neq m \in \lambda_j}\left(\hat Y_i^z(\pi_2)  - \tilde Y_i(z,\pi_2)\right)\left(\hat Y_m^z(0) - \tilde Y_m(z,0))\right)I \{H_i = \pi_2, H_m = 0\} ~.
    \end{multline}
    We show that the first term of (\ref{eqn:rewrting}) converges to zero in probability and the other two terms should follow the same arguments:
    {\footnotesize
    \begin{align*}
        &\frac{1}{n} \sum_{1 \leq j \leq n} \frac{1}{l(k-l)} \sum_{i\neq m \in \lambda_j} \left(\hat Y_i^z(\pi_2)  - \tilde Y_i(z,\pi_2)\right) \tilde Y_m(z,0)  I \{H_i = \pi_2, H_m = 0\} \\
        &= \left( \frac{1}{\frac{1}{G} \sum_{1\leq g \leq G} N_g} - \frac{1}{E[N_g]} \right) \left(\frac{1}{n} \sum_{1 \leq j \leq n} \frac{1}{l(k-l)} \sum_{i\neq m \in \lambda_j} \bar Y_i(z,\pi_2) N_i   \tilde Y_m(z,0) I \{H_i = \pi_2, H_m = 0\} \right) \\
        &\quad - \left( \frac{\frac{1}{G}\sum_{1\leq g \leq G} \bar Y_g(z,h) I\{H_g=h\} N_g}{\left(\frac{1}{G} \sum_{1\leq g \leq G} N_g \right)^2} - \frac{E[\Bar{Y}_g(z,h) N_g]}{E[N_g]^2} \right)\left(\frac{1}{n} \sum_{1 \leq j \leq n} \frac{1}{l(k-l)} \sum_{i\neq m \in \lambda_j} N_i  \tilde Y_m(z,0) I \{H_i = \pi_2, H_m = 0\} \right)
    \end{align*}
    }
    By following the same argument in Lemma S.1.6 from \cite{bai-inference}, we have
    \begin{align*}
        \frac{1}{n} \sum_{1 \leq j \leq n} \frac{1}{l(k-l)} \sum_{i\neq m \in \lambda_j} \bar Y_g(z,\pi_2) N_g   \tilde Y_m(z,0) I \{H_i = \pi_2, H_m = 0\} &\xrightarrow{P} E[E[N_g \Bar Y_g(z,\pi_2)\mid S_g] E[ \tilde Y_m(z,0)\mid S_g]] \\
        \frac{1}{G} \sum_{1 \leq j \leq G} N_{\pi(2 j)} \tilde Y_{\pi(2 j-1)}(0)I \{H_i = \pi_2, H_m = 0\} &\xrightarrow{P} E[E[N_g\mid S_g]E[ \tilde Y_m(z,0)\mid S_g]]~.
    \end{align*}
    By weak law of large number, Lemma \ref{lemma:asymptotics-mp} and Slutsky's theorem, we have 
    \begin{equation*}
        \frac{1}{n} \sum_{1 \leq j \leq n} \frac{1}{l(k-l)} \sum_{i\neq m \in \lambda_j} \left(\hat Y_i^z(\pi_2)  - \tilde Y_i(z,\pi_2)\right)  \tilde Y_m(z,0) I \{H_i = \pi_2, H_m = 0\} \xrightarrow{P} 0~.
    \end{equation*}
    Similarly, the convergence in probability to zero should hold for all three terms in (\ref{eqn:rewrting}).
    Thus, we have
    \begin{equation*}
        \frac{1}{n} \sum_{1 \leq j \leq n} \frac{1}{l(k-l)} \sum_{i\neq m \in \lambda_j} \left(\hat Y_i^z(\pi_2) \hat Y_m^z(0) - \tilde Y_i(z,\pi_2) \tilde Y_m(z,0)\right)  I \{H_i = \pi_2, H_m = 0\} \rightarrow 0~.
    \end{equation*}
    By Lemma \ref{lemma:assumptions-for-bai-inference} and repeating the arguments in Lemma \ref{lemma:var-mp-2} with $\tilde Y_g(z,h)$ in the place of $\bar Y_g(z,h)$, we conclude the result.
\end{proof}

\begin{lemma}\label{lemma:var-mp-size-3}
Suppose $Q_G$ satisfies Assumptions \ref{ass:Q_G} and \ref{as:Q_G-lip} and the treatment assignment mechanism satisfies Assumptions \ref{ass:assignment-match}-\ref{as:assignment2}. Then, as $n\rightarrow \infty$,
\begin{equation*}
    \frac{2}{n} \sum_{1 \leq j \leq \lfloor n / 2 \rfloor} \frac{1}{k^2(h)} \Big ( \sum_{i \in \lambda_{2j-1}} \tilde Y_i^z I \{H_i = h\} \Big ) \Big ( \sum_{i \in \lambda_{2j}} \tilde Y_i^z I \{H_i = h\} \Big )   \xrightarrow{P} E[ E[ \tilde{Y}_g(z,h) \mid S_g]^2 ] ~.
\end{equation*}
\begin{proof}
    Note that
    \begin{align*}
        &\frac{2}{n} \sum_{1 \leq j \leq \lfloor n / 2 \rfloor} \frac{1}{k^2(h)} \Big ( \sum_{i \in \lambda_{2j-1}} \tilde Y_i^z I \{H_i = h\} \Big ) \Big ( \sum_{i \in \lambda_{2j}} \tilde Y_i^z I \{H_i = h\} \Big ) \\
        &= \frac{2}{n} \sum_{1 \leq j \leq \lfloor n / 2 \rfloor} \frac{1}{k^2(h)} \sum_{i \in \lambda_{2j-1}, m\in \lambda_{2j}} \hat Y_i^z(h) \hat Y_m^z(h) I \{H_i = H_m = h\} \\
        &= \frac{2}{n} \sum_{1 \leq j \leq \lfloor n / 2 \rfloor} \frac{1}{k^2(h)} \sum_{i \in \lambda_{2j-1}, m\in \lambda_{2j}} \tilde Y_i(z,h) \tilde Y_m(z,h) I \{H_i = H_m = h\} \\
        &\quad + \frac{2}{n} \sum_{1 \leq j \leq \lfloor n / 2 \rfloor} \frac{1}{k^2(h)} \sum_{i \in \lambda_{2j-1}, m\in \lambda_{2j}} \left(\hat Y_i^z(h) \hat Y_m^z(h) - \tilde Y_i(z,h) \tilde Y_m(z,h) \right) I \{H_i = H_m = h\}
    \end{align*}
    The second term can be written as
    \begin{multline}\label{eqn:rewrting2}
        \frac{2}{n} \sum_{1 \leq j \leq \lfloor n / 2 \rfloor} \frac{1}{k^2(h)} \sum_{i \in \lambda_{2j-1}, k\in \lambda_{2j}} \left(\hat Y_i^z(h)  - \tilde Y_i(z,h)\right)\tilde Y_m(z,h)I \{H_i = H_m = h\}\\
        + \frac{2}{n} \sum_{1 \leq j \leq \lfloor n / 2 \rfloor} \frac{1}{k^2(h)} \sum_{i \in \lambda_{2j-1}, k\in \lambda_{2j}}\left(\hat Y_m^z(h)- \tilde Y_m(z,h)\right)\tilde Y_i(z,h) I \{H_i =  H_m = h\}\\
        + \frac{2}{n} \sum_{1 \leq j \leq \lfloor n / 2 \rfloor} \frac{1}{k^2(h)} \sum_{i \in \lambda_{2j-1}, k\in \lambda_{2j}} \left(\hat Y_i^z(h)  - \tilde Y_i(z,h)\right)\left(\hat Y_m^z(h) - \tilde Y_m(z,h))\right)I \{H_i = H_m = h\} ~.
    \end{multline}
    We show that the first term of (\ref{eqn:rewrting2}) converges to zero in probability and the other two terms should follow the same arguments:
    {\footnotesize
    \begin{align*}
        &\frac{2}{n} \sum_{1 \leq j \leq \lfloor n / 2 \rfloor} \frac{1}{k^2(h)} \sum_{i \in \lambda_{2j-1}, k\in \lambda_{2j}} \left(\hat Y_i^z(h)  - \tilde Y_i(z,h)\right)\tilde Y_m(z,h)I \{H_i = H_m = h\} \\
        &= \left( \frac{1}{\frac{1}{G} \sum_{1\leq g \leq G} N_g} - \frac{1}{E[N_g]} \right) \left(\frac{2}{n} \sum_{1 \leq j \leq \lfloor n / 2 \rfloor} \frac{1}{k^2(h)} \sum_{i \in \lambda_{2j-1}, k\in \lambda_{2j}} \bar Y_i(z,h) N_i   \tilde Y_m(z,h) I \{H_i = H_m = h\} \right) \\
        & - \left( \frac{\frac{1}{G}\sum_{1\leq g \leq G} \bar Y_g(z,h) I\{H_g=h\} N_g}{\left(\frac{1}{G} \sum_{1\leq g \leq G} N_g \right)^2} - \frac{E[\Bar{Y}_g(z,h) N_g]}{E[N_g]^2} \right)\left(\frac{2}{n} \sum_{1 \leq j \leq \lfloor n / 2 \rfloor} \frac{1}{k^2(h)} \sum_{i \in \lambda_{2j-1}, k\in \lambda_{2j}} N_i  \tilde Y_m(z,h) I \{H_i =  H_m = h\} \right)
    \end{align*}
    }
    By following the same argument in Lemma S.1.6 from \cite{bai-inference}, we have
    \begin{align*}
        \frac{2}{n} \sum_{1 \leq j \leq \lfloor n / 2 \rfloor} \frac{1}{k^2(h)} \sum_{i \in \lambda_{2j-1}, k\in \lambda_{2j}} \bar Y_i(z,h) N_i   \tilde Y_m(z,h) I \{H_i = H_m = h\} &\xrightarrow{P} E[ E[Y_g(z,h) N_g \mid S_g] E[ Y_g(z,h) \mid S_g] ] \\
        \frac{2}{n} \sum_{1 \leq j \leq \lfloor n / 2 \rfloor} \frac{1}{k^2(h)} \sum_{i \in \lambda_{2j-1}, k\in \lambda_{2j}} N_i  \tilde Y_m(z,h) I \{H_i =  H_m = h\} &\xrightarrow{P} E[ E[ N_g\mid S_g] E[Y_g(z,h)\mid S_g] ]~.
    \end{align*}
    By weak law of large number, Lemma \ref{lemma:asymptotics-mp} and Slutsky's theorem, we have 
    \begin{equation*}
        \frac{2}{n} \sum_{1 \leq j \leq \lfloor n / 2 \rfloor} \frac{1}{k^2(h)} \sum_{i \in \lambda_{2j-1}, k\in \lambda_{2j}} \left(\hat Y_i^z(h)  - \tilde Y_i(z,h)\right)\tilde Y_m(z,h)I \{H_i = H_m = h\} \xrightarrow{P} 0~.
    \end{equation*}
    Similarly, the convergence in probability to zero should hold for all three terms in (\ref{eqn:rewrting}).
    Thus, we have
    \begin{equation*}
        \frac{2}{n} \sum_{1 \leq j \leq \lfloor n / 2 \rfloor} \frac{1}{k^2(h)} \sum_{i \in \lambda_{2j-1}, m\in \lambda_{2j}} \left(\hat Y_i^z(h) \hat Y_m^z(h) - \tilde Y_i(z,h) \tilde Y_m(z,h) \right) I \{H_i = H_m = h\} \rightarrow 0~.
    \end{equation*}
    By Lemma \ref{lemma:assumptions-for-bai-inference} and repeating the arguments in Lemma \ref{lemma:var-mp-3} with $\tilde Y_g(z,h)$ in the place of $\bar Y_g(z,h)$, we conclude the result.
\end{proof}
\end{lemma}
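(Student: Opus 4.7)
The plan is to prove the statement by mirroring the proof of Lemma \ref{lemma:var-mp-size-2}: decompose $\tilde Y_i^z \tilde Y_m^z$ into a target term built from the ``infeasible'' population-normalized quantities $\tilde Y_i(z,h) \tilde Y_m(z,h)$ plus remainder terms that vanish in probability, and then invoke Lemma \ref{lemma:var-mp-3} applied to $\tilde Y_g(z,h)$ (instead of $\bar Y_g(z,h)$) for the target.

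First, using $\tilde Y_g^z = \sum_{h'} I\{H_g = h'\} \hat Y_g^z(h')$ with $\hat Y_g^z(h)$ defined as in \eqref{eqn:hatY(h)}, I would rewrite the left-hand side as
\[
\frac{2}{n} \sum_{1 \leq j \leq \lfloor n/2 \rfloor} \frac{1}{k^2(h)} \sum_{\substack{i \in \lambda_{2j-1} \\ m \in \lambda_{2j}}} \hat Y_i^z(h) \hat Y_m^z(h)\, I\{H_i = H_m = h\}
\]
and split $\hat Y_i^z(h) \hat Y_m^z(h) = \tilde Y_i(z,h)\tilde Y_m(z,h) + R_{i,m}$, where $R_{i,m}$ is a sum of three cross terms of the form $(\hat Y_i^z(h) - \tilde Y_i(z,h))\tilde Y_m(z,h)$, $\tilde Y_i(z,h)(\hat Y_m^z(h) - \tilde Y_m(z,h))$, and $(\hat Y_i^z(h) - \tilde Y_i(z,h))(\hat Y_m^z(h) - \tilde Y_m(z,h))$.

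Second, I would exploit the identity
\[
\hat Y_g^z(h) - \tilde Y_g(z,h) = \left(\tfrac{1}{\bar N_G} - \tfrac{1}{E[N_g]}\right) \bar Y_g(z,h) N_g - \left(\tfrac{\widehat{\mu}_{YN,h}}{\bar N_G^{\,2}} - \tfrac{E[\bar Y_g(z,h) N_g]}{E[N_g]^2}\right) N_g,
\]
where $\bar N_G = G^{-1}\sum_g N_g$ and $\widehat{\mu}_{YN,h}$ is the corresponding group-specific sample mean. The prefactors in parentheses are $o_P(1)$ by the weak law of large numbers, Lemma \ref{lem:E_bounded}, the continuous mapping theorem, and the consistency of the group means already verified in the proof of Theorem \ref{thm:clt-large-strata} (Section \ref{proof:size-weighted-ca}). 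The remaining ``pair of pair'' averages — e.g.\ $\frac{2}{n} \sum_j \frac{1}{k^2(h)} \sum_{i \in \lambda_{2j-1}, m \in \lambda_{2j}} \bar Y_i(z,h) N_i \tilde Y_m(z,h) I\{H_i = H_m = h\}$ — converge in probability to finite limits by the same type of argument as Lemma \ref{lemma:var-mp-3}, with the integrands replaced by products of the form $(\bar Y_g(z,h) N_g)$ or $N_g$ with $\tilde Y_g(z,h)$; the moment and Lipschitz conditions required are supplied by Lemma \ref{lem:E_bounded} and Assumption \ref{as:Q_G-lip}. Slutsky's theorem then gives each component of $\frac{2}{n} \sum_j \frac{1}{k^2(h)} \sum_{i,m} R_{i,m} I\{H_i = H_m = h\} \stackrel{P}{\to} 0$.

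Third, for the leading term
\[
\frac{2}{n} \sum_{1 \leq j \leq \lfloor n/2 \rfloor} \frac{1}{k^2(h)} \sum_{\substack{i \in \lambda_{2j-1} \\ m \in \lambda_{2j}}} \tilde Y_i(z,h) \tilde Y_m(z,h)\, I\{H_i = H_m = h\},
\]
I would apply Lemma \ref{lemma:var-mp-3} verbatim, but with $\tilde Y_g(z,h)$ in the role of $\bar Y_g(z,h)$. Lemma \ref{lemma:assumptions-for-bai-inference} confirms that $\tilde Y_g(z,h)$ inherits exactly the ingredients that proof required: Lipschitz conditional moments $s \mapsto E[\tilde Y_g^r(z,h) \mid S_g = s]$, finiteness of $E[\tilde Y_g^2(z,h)]$, and the conditional-independence property from $H^{(G)}$ given $S^{(G)}$ (a deterministic function of $\bar Y_g^z$ and $N_g$ inherits this from Assumption \ref{as:assignment2} and \ref{ass:assignment-match}). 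Consequently this term converges in probability to $E\big[E[\tilde Y_g(z,h)\mid S_g]^2\big]$, completing the proof by combining with the vanishing remainder.

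The main obstacle will be handling the cross terms in $R_{i,m}$ carefully, because they couple the data-dependent normalizers $\bar N_G$ and $\widehat{\mu}_{YN,h}$ — which are $O_P(1)$ perturbations of constants — with ``pair-of-pair'' averages whose $O_P(1)$ limits are not themselves covered by any single previously stated lemma. The cleanest route is to argue that each of these averages is bounded in probability by reusing the conditional variance computation and Lipschitz smoothing used in Lemma \ref{lemma:var-mp-3}, invoking Assumption \ref{ass:variance-estimator} at every step where the ``pairs of pairs'' structure is needed, and then multiplying by the $o_P(1)$ prefactors.
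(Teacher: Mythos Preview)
Your proposal is correct and follows essentially the same approach as the paper's proof: both decompose $\tilde Y_i^z \tilde Y_m^z$ into the infeasible target $\tilde Y_i(z,h)\tilde Y_m(z,h)$ plus three cross terms, use the explicit identity for $\hat Y_g^z(h) - \tilde Y_g(z,h)$ to factor out $o_P(1)$ prefactors, show the residual pair-of-pair averages are $O_P(1)$, and then invoke Lemma \ref{lemma:var-mp-3} with $\tilde Y_g(z,h)$ in place of $\bar Y_g(z,h)$ via Lemma \ref{lemma:assumptions-for-bai-inference}. The only cosmetic difference is that the paper cites Lemma S.1.6 of \cite{bai-inference} for the $O_P(1)$ boundedness of the pair-of-pair averages, whereas you point to the machinery of Lemma \ref{lemma:var-mp-3} itself; both routes deliver the same conclusion.
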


\newpage
\section{Proof of Theorem \ref{thm:adj}}
\subsection{Limit of Regression Coefficient}
Recall that $\hat\beta^{P}_2$ is an OLS estimator of the slope coefficient in the linear regression of $\hat\mu_{1,j} - \hat\mu_{0,j}$ on a constant and $\hat\psi_{1,j} - \hat\psi_{0,j}$, where
\begin{align*}
    \hat\mu_{1,j} &= \frac{1}{l} \sum_{g\in\lambda_j} \tilde{Y}_g^1 \bar N_G I\{H_g = \pi_2\} \\
    \hat\mu_{0,j} &= \frac{1}{k-l} \sum_{g\in\lambda_j} \tilde{Y}_g^1 \bar N_G I\{H_g = 0\} \\
    \hat\psi_{1,j} &= \frac{1}{l} \sum_{g\in\lambda_j} \psi_g I\{H_g = \pi_2\} \\
    \hat\psi_{0,j} &= \frac{1}{k-l} \sum_{g\in\lambda_j} \psi_g I\{H_g = 0\} ~.
\end{align*}
Note that 
\begin{align*}
    &\frac{1}{n} \sum_{j=1}^{n} (\hat\psi_{1,j} - \hat\psi_{0,j}) (\hat\psi_{1,j} - \hat\psi_{0,j})^{\prime} \\
    &= \frac{1}{l^2} \sum_{g\in \lambda_j} \psi_g \psi_g^\prime I\{H_g=\pi_2\} + \frac{1}{(k-l)^2} \sum_{g\in \lambda_j} \psi_g \psi_g^\prime I\{H_g=0\}\\
    &\hspace{3em} + \frac{1}{l^2} \sum_{g,q\in \lambda_j } \psi_g \psi_q^\prime I\{H_g = H_q = \pi_2\} + \frac{1}{(k-l)^2} \sum_{g,q\in \lambda_j } \psi_g \psi_q^\prime I\{H_g = H_q = 0\} \\
    &\hspace{6em} - \frac{1}{l(k-l)} \sum_{g,q\in \lambda_j } \psi_g \psi_q^\prime I\{H_g = \pi_2,  H_q = 0\} - \frac{1}{l(k-l)} \sum_{g,q\in \lambda_j } \psi_g \psi_q^\prime I\{H_g = 0,  H_q = \pi_2\} ~.
\end{align*}
Following the argument in A.4 of \cite{bai2023}, we have
\begin{align*}
    &\frac{1}{l^2} \sum_{g\in \lambda_j} \psi_g \psi_g^\prime I\{H_g=\pi_2\} + \frac{1}{(k-l)^2} \sum_{g\in \lambda_j} \psi_g \psi_g^\prime I\{H_g=0\}  \xrightarrow{p} \frac{k}{l(k-l)} E[\psi_g \psi_g^\prime]
\end{align*}
and
\begin{align*}
    &\frac{1}{l^2} \sum_{g,q\in \lambda_j } \psi_g \psi_q^\prime I\{H_g = H_q = \pi_2\} + \frac{1}{(k-l)^2} \sum_{g,q\in \lambda_j } \psi_g \psi_q^\prime I\{H_g = H_q = 0\} \\
    &\hspace{3em} - \frac{1}{l(k-l)} \sum_{g,q\in \lambda_j } \psi_g \psi_q^\prime I\{H_g = \pi_2,  H_q = 0\} - \frac{1}{l(k-l)} \sum_{g,q\in \lambda_j } \psi_g \psi_q^\prime I\{H_g = 0,  H_q = \pi_2\}\\
    &\xrightarrow{p} - \frac{k}{l(k-l)} E[E[\psi_g\mid S_g] E[\psi_g^\prime\mid S_g]] ~.
\end{align*}
In other words, we have 
\begin{equation*}
    \frac{1}{n} \sum_{j=1}^{n} (\hat\psi_{1,j} - \hat\psi_{0,j}) (\hat\psi_{1,j} - \hat\psi_{0,j})^{\prime} \xrightarrow{p} \frac{k}{l(k-l)} E[\var[\psi_g \mid S_g]] ~.
\end{equation*}
Similarly, following the argument in A.8 of \cite{mp-cluster},  we have
\begin{align*}
    \frac{1}{n} \sum_{j=1}^{n}  (\hat\psi_{1,j} - \hat\psi_{0,j}) (\hat\mu_{1,j} - \hat\mu_{0,j}) &\xrightarrow{p} \frac{E[N_g]}{l}  \left( E[\psi_g \tilde Y_g(1,\pi_2)]  - E[ E[\psi_g \mid S_g] E[\tilde Y_g(1,\pi_2)  \mid S_g] ] \right) \\
    &\hspace{3em} + \frac{E[N_g]}{k-l}  \left(E[\psi_g \tilde Y_g(0,0) ] - E[ E[\psi_g \mid S_g] E[\tilde Y_g(0,0) \mid S_g] ] \right) \\
    &= E\left[  \cov \left[\frac{1}{l} \tilde Y_g(1, \pi_2)  + \frac{1}{k-l}  \tilde Y_g(0,0),  \psi_g\mid S_g\right] \right] E[N_g]
\end{align*}
Therefore,
\begin{equation*}
    \hat\beta_2^P \xrightarrow{p} \pi_1(1-\pi_1) (E[\var[\psi_g \mid S_g]])^{-1}  E\left[  \cov \left[\frac{1}{\pi_1} \tilde Y_g(1, \pi_2)  + \frac{1}{1-\pi_1}  \tilde Y_g(0,0) ,  \psi_g\mid S_g\right] \right] E[N_g] = \beta^P_2~.
\end{equation*}

\subsection{Asymptotic Normality}
To establish the limiting distribution, first define
\begin{align*}
    \bar\psi_1 = \frac{1}{G_T} \sum_{1 \leq g \leq G} \psi_g I\{H_g = \pi_2\} \\
    \bar\psi_0 = \frac{1}{G_C} \sum_{1 \leq g \leq G} \psi_g I\{H_g = 0\} ~.
\end{align*}
Let $\tilde H_g = I\{H_g = \pi_2\}$. Note that
\begin{align*}
    & \frac{1}{G_T} \sum_{1 \leq g \leq G} (\bar Y_g(1,\pi_2) N_g - (\psi_g - \bar \psi_G)' \hat \beta_2^P) \tilde H_g \\
    & = \frac{1}{G_T} \sum_{1 \leq g \leq G} (\bar Y_g(1,\pi_2) N_g - (\psi_g - \bar \psi_G)' \beta^P_2) \tilde H_g - \frac{1}{G_T} \sum_{1 \leq g \leq G} (\psi_g - \bar \psi_{1, G})' (\hat \beta_2^P - \beta_2^P) \tilde H_g - (\bar \psi_{1, G} - \bar \psi_G)' (\hat \beta_2^P - \beta_2^P) \\
    & = \frac{1}{G_T} \sum_{1 \leq g \leq G} (\bar Y_g(1,\pi_2) N_g - (\psi_g - \bar \psi_G)' \beta_2^P) \tilde H_g - O_P(G^{-1/2}) o_P(1) \\
    & = \frac{1}{G_T} \sum_{1 \leq g \leq G} (\bar Y_g(1,\pi_2) N_g - (\psi_g - \bar \psi_G)' \beta_2^P) \tilde H_g + o_P(G^{-1/2}) \\
    & = \frac{1}{G_T} \sum_{1 \leq g \leq G} (\bar Y_g(1,\pi_2) N_g - (\psi_g - E[\psi_g])' \beta_2^P) \tilde H_g - (\bar \psi_G - E[\psi_g])' \beta_2^P + o_P(G^{-1/2})~.
\end{align*}
where the second equality follows because $\hat \beta_2^P - \beta_2^P = o_P(1)$,
\[ \frac{1}{G_T} \sum_{1 \leq g \leq G} (\psi_g - \bar \psi_{1, G})\tilde H_g = 0~, \]
and
\[ \sqrt{G_T} (\bar \psi_{1, G} - \bar \psi_G) = O_P(1)~. \]
The last equality follows from the arguments that establish (50) in \cite{bai2023}. Define
\begin{align*}
    \tilde \theta_2^{P, adj} = \frac{1}{N_T} \sum_{1\leq g \leq G} I\{H_g = \pi_2 \} (N_g \bar{Y}_{g}^1 -(\psi_g - E[\psi_g])' \beta_2^P)  - \frac{1}{N_C} \sum_{1\leq g \leq G}I\{H_g = 0 \} (N_g \bar{Y}_{g}^1 - (\psi_g - E[\psi_g])' \beta_2^P) ~.
\end{align*}
It follows from previous arguments that
\begin{align*}
    & \sqrt G(\hat \theta_2^{P, adj} - \theta_2^{P}) - \sqrt G(\tilde \theta_2^{P, adj} - \theta_2^{P}) \\
    & = \sqrt G (\bar \psi_G - E[\psi_g])' \beta^\ast \left ( \frac{1}{\frac{1}{G} \sum_{1 \leq g \leq 2G} N_g \tilde H_g} - \frac{1}{\frac{1}{G} \sum_{1 \leq g \leq 2G} N_g (1 - \tilde H_g)} \right ) + o_P(1) \\
    & = o_P(1)~.
\end{align*}
It then follows from the proof of Theorem \ref{thm:matched-group} that $\sqrt G(\hat \theta^{P,adj}_2 - \theta^P_2) \stackrel{d}{\to} N(0, V_2^*(1))$, where
\begin{align*}
    V_2^*(1) &= \frac{1}{\pi_1} \var[ Y_g^*(1,\pi_2)] + \frac{1}{1-\pi_1} \var[ Y_g^*(0,0)]\\
    &\hspace{3em} - \pi_1(1-\pi_1) E\left[\left(\frac{1}{\pi_1}E[ Y_g^*(1,\pi_2)\mid S_g] + \frac{1}{1-\pi_1}E[ Y_g^*(0,0)\mid S_g] \right)^2\right]
\end{align*}
All relevant assumptions for Theorem \ref{thm:matched-group} have their counterparts stated in Theorem \ref{thm:adj}.

\subsection{Variance Improvement}\label{subsec:ca-var-improve}
Recall that
\begin{align*}
    V_2^*(z) &= \frac{1}{\pi_1} \var[Y_g^*(z,\pi_2)] + \frac{1}{1-\pi_1} \var[Y_g^*(0,0)]\\
    &\hspace{3em} - \pi_1(1-\pi_1) E\left[\left(\frac{1}{\pi_1}E[ Y_g^*(z,\pi_2)\mid S_g] + \frac{1}{1-\pi_1}E[ Y_g^*(0,0)\mid S_g] \right)^2\right] \\
    &= \frac{1}{\pi_1} \var[E[ Y_g^*(z,\pi_2)\mid S_g]] + \frac{1}{\pi_1} E[\var[ Y_g^*(z,\pi_2)\mid S_g]] \\
    &\hspace{3em} + \frac{1}{1-\pi_1} \var[E[ Y_g^*(0,0)\mid S_g]] + \frac{1}{1-\pi_1} E[\var[ Y_g^*(0,0)\mid S_g]] \\
    &\hspace{6em} - \frac{1-\pi_1}{\pi_1} E\left[E[ Y_g^*(z,\pi_2)\mid S_g]^2 \right] - 2 E\left[E[ Y_g^*(z,\pi_2)\mid S_g] E[ Y_g^*(0,0)\mid S_g] \right] \\
    &\hspace{9em} - \frac{\pi_1}{1-\pi_1} E\left[E[ Y_g^*(0,0)\mid S_g]^2 \right] \\
    &= \frac{1}{\pi_1} E[\var[Y_g^*(z,\pi_2)\mid S_g]] + \frac{1}{1-\pi_1} E[\var[Y_g^*(0,0)\mid S_g]] + E[E[Y_g^*(z,\pi_2) - Y_g^*(0,0)\mid S_g]^2] ~.
\end{align*}
My goal is to show that $V_2^*(1) \leq V_2(1)$. First note that by definition it follows immediately that 
\begin{equation*}
    E[E[\tilde Y_g(z,\pi_2) - \tilde Y_g(0,0)\mid S_g]^2] = E[E[Y_g^*(1,\pi_2) - Y_g^*(0,0)\mid S_g]^2] ~.
\end{equation*}
It thus remains to show that 
\begin{equation*}
    \frac{1}{\pi_1} E[\var[Y_g^*(1,\pi_2)\mid S_g]] + \frac{1}{1-\pi_1} E[\var[Y_g^*(0,0)\mid S_g]] \leq \frac{1}{\pi_1} \var[E[\tilde Y_g(1,\pi_2)\mid S_g]] + \frac{1}{1-\pi_1} \var[E[\tilde Y_g(0,0)\mid S_g]]~.
\end{equation*}
To that end,
\begin{align*}
    &\frac{1}{\pi_1} E[\var[Y_g^*(1,\pi_2)\mid S_g]] + \frac{1}{1-\pi_1} E[\var[Y_g^*(0,0)\mid S_g]] \\
    &=  \frac{1}{\pi_1} E\left[\var\left[\tilde Y_g(1,\pi_2) - \frac{(\psi_g - E[\psi_g])' \beta^P_2}{E[N_g]} \mid S_g\right]\right]  + \frac{1}{1-\pi_1} E\left[\var\left[\tilde Y_g(0,0) - \frac{(\psi_g - E[\psi_g])' \beta^P_2}{E[N_g]} \mid S_g\right]\right] \\
    &= \frac{1}{\pi_1} E\left[\var\left[\tilde Y_g(1,\pi_2) \mid S_g\right]\right] + \frac{1}{1-\pi_1} E\left[\var\left[\tilde Y_g(0,0) \mid S_g\right]\right] \\
    &\hspace{3em} - 2 E\left[  \cov \left[\frac{1}{\pi_1} \tilde Y_g(1,\pi_2) + \frac{1}{1-\pi_1} \tilde Y_g(0,0),  \frac{(\psi_g - E[\psi_g])' \beta^P_2}{E[N_g]}\mid S_g\right] \right] \\
    &\hspace{6em} + \frac{1}{\pi_1(1-\pi_1)} E\left[ \var\left[\frac{(\psi_g - E[\psi_g])' \beta^P_2}{E[N_g]} \mid S_g\right] \right] \\
    &= \frac{1}{\pi_1} E\left[\var\left[\tilde Y_g(1,\pi_2) \mid S_g\right]\right] + \frac{1}{1-\pi_1} E\left[\var\left[\tilde Y_g(0,0) \mid S_g\right]\right] \\
    &\hspace{3em} - \frac{1}{\pi_1(1-\pi_1)} \frac{1}{E[N_g]^2} E\left[ \var[ \psi_g' \beta^P_2\mid S_g] \right] ~.\\
\end{align*}
The last inequality follows by noting that $\beta^P_2$ is the projection coefficient of $\frac{1}{\pi_1} \tilde Y_g(1, \pi_2)  + \frac{1}{1-\pi_1}  \tilde Y_g(0,0) - E[\frac{1}{\pi_1} \tilde Y_g(1, \pi_2)   + \frac{1}{1-\pi_1}  \tilde Y_g(0,0)  \mid S_g]$ on $(\psi_g - E[\psi_g \mid S_g])/E[N_g]$, 
\begin{equation*}
    E\left[  \cov \left[\frac{1}{\pi_1} \tilde Y_g(1, \pi_2)   + \frac{1}{1-\pi_1}  \tilde Y_g(0,0),  \psi_g' \beta^P_2\mid S_g\right] \right] =  \frac{1}{\pi_1(1-\pi_1) E[N_g]}  E\left[ \var[ \psi_g' \beta^P_2\mid S_g] \right]~.
\end{equation*}
Therefore, 
\begin{equation*}
    V_2^*(1) = V_2(1) - \kappa^2 ~,
\end{equation*}
where
\[
\kappa^2 = \frac{1}{\pi_1(1-\pi_1)} \frac{1}{E[N_g]^2} E\left[ \var[ \psi_g' \beta^P_2\mid S_g] \right]~.
\]
\newpage

\section{Details for Weighted OLS}\label{sec:app-wols}
In this section, let's consider estimator of the coefficient of $Z_{i,g}$ and $L_{i,g}$ in a weighted least squares regression of $Y_{i,g}$ on a constant and $Z_{i,g}$ and $L_{i,g}$ with weights equal to $\sqrt{N_g/M_g}$. The results for weights equal to $\sqrt{1/M_g}$ (or the unweighted regression) are similar and omitted here.
First, I provide some notatiosn as follows:
\begin{equation*}
    \begin{aligned}
T_{i,g} & :=\left(\sqrt{\frac{N_g}{\left|\mathcal{M}_g\right|}} \quad  \sqrt{\frac{N_g}{\left|\mathcal{M}_g\right|}} Z_{i,g} \quad \sqrt{\frac{N_g}{\left|\mathcal{M}_g\right|}} L_{i,g} \right)^\prime \\
T_{g} & := \left(T_{i,g} : i\in \mathcal{M}_g \right)^{\prime} \\
\hat{\epsilon}_{g} & :=\left(Y_{i,g} - \hat \alpha - \hat \beta_1 Z_{i,g} - \hat \beta_2 L_{i,g}  : i\in \mathcal{M}_g \right)^{\prime} ~,
\end{aligned}
\end{equation*}
where $\hat \alpha, \hat \beta_1$ and $\hat \beta_2$ are the corresponding estimated coefficients.
By doing some algebra, it follows that
{\scriptsize
\begin{equation*}
    \sum_{1 \leq g \leq G} \sum_{i\in\mathcal{M}_g} T_{i,g} T_{i,g}^{\prime}=\left(\begin{array}{ccc}
\sum_{1 \leq g \leq G}  N_g & \sum_{1 \leq g \leq G} N_g \pi_2 I\{H_g=\pi_2\} &  \sum_{1 \leq g \leq G} N_g (1-\pi_2)I\{H_g=\pi_2\}\\
\sum_{1 \leq g \leq G} N_g \pi_2 I\{H_g=\pi_2\} & \sum_{1 \leq g \leq G} N_g\pi_2  I\{H_g=\pi_2\}  & 0 \\
\sum_{1 \leq g \leq G} N_g (1-\pi_2) I\{H_g=\pi_2\} & 0 & \sum_{1 \leq g \leq G} N_g (1-\pi_2) I\{H_g=\pi_2\}  
\end{array}\right)
\end{equation*}}
and
{\footnotesize
\begin{align*}
    \sum_{1 \leq g \leq G} \sum_{i\in\mathcal{M}_g} T_{i,g} \sqrt{\frac{N_g}{\left|\mathcal{M}_g\right|}} Y_{i,g} &= \left( \sum_{1 \leq g \leq G} \frac{N_g}{M_g} \sum_{i\in\mathcal{M}_g} Y_{i,g}  \quad   \sum_{1 \leq g \leq G}  \frac{N_g}{M_g}\sum_{i\in\mathcal{M}_g} Y_{i,g} Z_{i,g} \quad \sum_{1 \leq g \leq G} \frac{N_g}{M_g}  \sum_{i\in\mathcal{M}_g} Y_{i,g} L_{i,g} \right)^\prime \\
    &=\left( \sum_{1 \leq g \leq G} \frac{N_g}{M_g} \sum_{i\in\mathcal{M}_g} Y_{i,g}  \quad   \sum_{1 \leq g \leq G}  I\{H_g=\pi_2\} N_g  \bar Y_g^1 \pi_2 \quad \sum_{1 \leq g \leq G}  I\{H_g=\pi_2\} N_g  \bar Y_g^0 (1-\pi_2) \right)^\prime
\end{align*}}
Note that
\begin{equation*}
    \left(\sum_{1 \leq g \leq G} \sum_{i\in\mathcal{M}_g} T_{i,g} T_{i,g}^{\prime}\right)^{-1}=\left(\begin{array}{ccc}
\frac{1}{N_C} & -\frac{1}{N_C} &  -\frac{1}{N_C}\\
-\frac{1}{N_C} & \frac{1}{N_C} + \frac{1}{N_T \pi_2}  & \frac{1}{N_C} \\
-\frac{1}{N_C} & \frac{1}{N_C} & \frac{1}{N_C} + \frac{1}{N_T(1-\pi_2)}
\end{array}\right)
\end{equation*}
Then, it follows that
\begin{equation*}
    \left(\begin{array}{c}
\hat \alpha  \\
\hat \theta_2^P \\
\hat \theta_2^S 
\end{array}\right) = \left(\sum_{1 \leq g \leq G} \sum_{i\in\mathcal{M}_g} T_{i,g} T_{i,g}^{\prime}\right)^{-1} \left(\sum_{1 \leq g \leq G} \sum_{i\in\mathcal{M}_g} T_{i,g} \sqrt{\frac{N_g}{\left|\mathcal{M}_g\right|}} Y_{i,g} \right) = \left(\frac{1}{N_C} \sum_{1\leq g\leq N_g} I\{H_g=0\} N_g \bar Y_g^1 \quad \hat \theta^P_2 \quad   \hat \theta^S_2\right)^\prime~.
\end{equation*}
Therefore, we conclude that this weighted OLS regression results in the same estimators as $\hat\theta^P_2,   \hat \theta^S_2$. Next, I consider $t$-tests based on cluster-robust variance estimator.
Note that the cluster-robust variance estimator can be written as
\begin{equation*}
    \hat{\mathbf{V}}_{\textsc{CR}} = G\left(\sum_{1 \leq g \leq G}  T_{g}^\prime T_{g} \right)^{-1}\left(\sum_{1 \leq g \leq G}  T_{g}^\prime \hat{\epsilon}_{i,g} \hat{\epsilon}_{i,g}^{\prime} T_{i,g}\right)\left(\sum_{1 \leq g \leq G}  T_{g}^{\prime} T_{g}\right)^{-1}~,
\end{equation*}
where $\sum_{1 \leq g \leq G}  T_{g}^\prime T_{g}$ should be identical to $\sum_{1 \leq g \leq G} \sum_{i\in\mathcal{M}_g} T_{i,g} T_{i,g}^{\prime}$. By doing some algebra, if follows that
\begin{align*}
    &\sum_{1 \leq g \leq G}  T_{g}^\prime \hat{\epsilon}_{i,g} \hat{\epsilon}_{i,g}^{\prime} T_{i,g} = \sum_{1\leq g \leq G} \left(\frac{N_g}{M_g} \right)^2 \left(\begin{array}{c}
\sum_{i\in\mathcal{M}_g} \hat \epsilon_{i,g}  \\
\sum_{i\in\mathcal{M}_g} \hat \epsilon_{i,g} Z_{i,g} \\
\sum_{i\in\mathcal{M}_g} \hat \epsilon_{i,g} L_{i,g}  
\end{array}\right)
\left(\begin{array}{c}
\sum_{i\in\mathcal{M}_g} \hat \epsilon_{i,g}  \\
\sum_{i\in\mathcal{M}_g} \hat \epsilon_{i,g} Z_{i,g} \\
\sum_{i\in\mathcal{M}_g} \hat \epsilon_{i,g} L_{i,g}  
\end{array}\right)^{\prime}~.
\end{align*}
And thus cluster-robust variance estimator can be written as $\sum_{1\leq g \leq G} \tilde \epsilon_g \tilde \epsilon_g^\prime$, where
\begin{equation*}
    \tilde \epsilon_g = \left(\begin{array}{c}
\frac{1}{N_C} \frac{1}{M_g}  \sum_{i\in\mathcal{M}_g}  \hat \epsilon_{i,g}  N_g  I\{H_g=0\}  \\
\frac{1}{N_T } \frac{1}{M_g^1}\sum_{i\in\mathcal{M}_g} \hat \epsilon_{i,g}  N_g Z_{i,g} - \frac{1}{N_C} \frac{1}{M_g}  \sum_{i\in\mathcal{M}_g}  \hat \epsilon_{i,g}  N_g  I\{H_g=0\}  \\
\frac{1}{N_T } \frac{1}{M_g^0}\sum_{i\in\mathcal{M}_g} \hat \epsilon_{i,g}  N_g L_{i,g}  - \frac{1}{N_C} \frac{1}{M_g}  \sum_{i\in\mathcal{M}_g}  \hat \epsilon_{i,g}  N_g  I\{H_g=0\} 
\end{array}\right)~.
\end{equation*}
Take the second diagonal element (primary effect) as an example. Its cluster-robust variance estimator is given by
\begin{align*}
    &\hat V_{\textsc{CR}}(1)=G \sum_{1\leq g \leq G}\left(\frac{1}{N_T } \frac{1}{M_g^1}\sum_{i\in\mathcal{M}_g} \hat \epsilon_{i,g}  N_g Z_{i,g} - \frac{1}{N_C} \frac{1}{M_g}  \sum_{i\in\mathcal{M}_g}  \hat \epsilon_{i,g}  N_g  I\{H_g=0\} \right)^2\\
    &=  \frac{1}{(N_T/G)^2} \frac{1}{G}\sum_{1\leq g\leq G} N_g^2 \left(\bar Y_g(1,\pi_2) - \hat \alpha - \hat \theta_2^P\right)^2 I\{H_g=\pi_2\} + \frac{1}{(N_C/G)^2} \frac{1}{G}\sum_{1\leq g\leq G} N_g^2 \left(\bar Y_g(0,0) - \hat \alpha\right)^2 I\{H_g=0\}~.
\end{align*}
In both finely stratified randomization and ``large strata'' frameworks, by repeating arguments made in the Section \ref{sec:proof-variance-ca} and \ref{sec:proof-variance-mp}, we have the following asymptotic results:
\begin{align*}
    \frac{1}{G}\sum_{1\leq g\leq G} N_g^2 \left(\bar Y_g(1,\pi_2) - \hat \alpha - \hat \theta_2^P\right)^2 I\{H_g=\pi_2\} &\xrightarrow{p} \pi_1 E\left[\left(N_g \bar Y_g(1,\pi_2) - N_g \frac{E[N_g \bar Y_g(1,\pi_2)]}{E[N_g]} \right)^2\right]  \\
    \frac{1}{G}\sum_{1\leq g\leq G} N_g^2 \left(\bar Y_g(0,0) - \hat \alpha\right)^2 &\xrightarrow{p} (1-\pi_1) E\left[\left(N_g \bar Y_g(0,0) - N_g \frac{E[N_g \bar Y_g(0,0)]}{E[N_g]} \right)^2\right],
\end{align*}
which implies
\begin{equation*}
    \hat V_{\textsc{CR}}(1) \xrightarrow{p} \frac{1}{\pi_1}\var[\tilde Y_g(z,\pi_2)] + \frac{1}{1-\pi_1} \var[\tilde Y_g(0,0)]~.
\end{equation*}

\newpage
\section{Additional Simulations Results}
\subsection{Subsampling within Clusters with $M_g < N_g$}
In this section, I repeat the simulation study from Section \ref{sec:simulation}, with the only difference being that \( M_g \) is set to \( 0.5 N_g \) and \( \mathcal{M}_g \) is a random subset of \( \{1, \dots, N_g\} \). In Table \ref{table:mse_ratio_appendix}, I present ratios of MSE for various two-stage designs under the model with \( M_g = 0.5 N_g \) against those with \( M_g = N_g \). Note that under \( M_g = 0.5 N_g \), the asymptotic variance of the four estimators is higher than those under \( M_g = N_g \) in most cases, which is likely due to the effect of smaller sample size. Tables \ref{table:mse_appendix} through \ref{table:reject-probs-ols-appendix} correspond to Tables \ref{table:mse} through \ref{table:reject-probs-ols} in the main text. The conclusions from the main text still hold qualitatively under \( M_g = 0.5 N_g \).

\begin{table}[ht!]
\centering
\setlength{\tabcolsep}{5pt}
\begin{adjustbox}{max width=0.75\linewidth,center}
\begin{tabular}{lllllllll}
\toprule
 & & \multicolumn{7}{c}{Second-stage}   \\\cmidrule{3-9}
First-stage              & Parameter & \textbf{C}    & \textbf{S-2}   & \textbf{S-4}   & \textbf{S-4O} & \textbf{MT-A}  & \textbf{MT-B}  & \textbf{MT-C}   \\
\midrule
\multirow{4}{*}{\textbf{C}} & $\theta^P_1$ & 1.1311 & 0.9961 & 1.1490 & 1.0975 & 1.0282 & 0.9779 & 0.8973 \\
& $\theta^P_2$ & 1.1370 & 0.9958 & 1.1744 & 1.1061 & 1.1038 & 1.0174 & 0.9162 \\
& $\theta^S_1$ & 1.1264 & 1.0002 & 1.1370 & 1.1270 & 1.0356 & 0.9930 & 0.9069 \\
& $\theta^S_2$ & 1.1443 & 0.9915 & 1.1679 & 1.1308 & 1.1145 & 1.0316 & 0.9232 \\
& & & & & & & & \\
\multirow{4}{*}{\textbf{S-2}} & $\theta^P_1$ & 1.0831 & 1.1022 & 0.9960 & 0.9507 & 0.8446 & 1.0029 & 0.9950 \\
& $\theta^P_2$ & 1.1140 & 1.0724 & 1.0202 & 0.9855 & 0.8668 & 1.0452 & 1.0288 \\
& $\theta^S_1$ & 1.0746 & 1.1178 & 0.9895 & 0.9503 & 0.8583 & 0.9826 & 0.9776 \\
& $\theta^S_2$ & 1.0837 & 1.0888 & 1.0032 & 0.9979 & 0.8820 & 1.0361 & 1.0089 \\
& & & & & & & & \\
\multirow{4}{*}{\textbf{S-4}} & $\theta^P_1$ & 1.1002 & 1.0298 & 1.1288 & 1.0669 & 0.9949 & 1.0560 & 1.0894 \\
& $\theta^P_2$ & 1.0792 & 1.0073 & 1.1446 & 1.1216 & 1.0073 & 1.0599 & 1.1048 \\
& $\theta^S_1$ & 1.1002 & 1.0257 & 1.0989 & 1.0196 & 0.9770 & 1.0634 & 1.0656 \\
& $\theta^S_2$ & 1.0500 & 1.0148 & 1.1144 & 1.0748 & 0.9861 & 1.0549 & 1.0916 \\
& & & & & & & & \\
\multirow{4}{*}{\textbf{S-O}} & $\theta^P_1$ & 1.1482 & 1.2698 & 1.1789 & 1.0416 & 1.0616 & 1.2459 & 1.0699 \\
& $\theta^P_2$ & 1.1516 & 1.2765 & 1.1903 & 1.0977 & 1.0537 & 1.2524 & 1.0921 \\
& $\theta^S_1$ & 1.1682 & 1.2737 & 1.1562 & 1.0646 & 1.1392 & 1.1798 & 1.1181 \\
& $\theta^S_2$ & 1.1832 & 1.2388 & 1.1394 & 1.0780 & 1.1380 & 1.1824 & 1.1050 \\
& & & & & & & & \\
\multirow{4}{*}{\textbf{MT-A}} & $\theta^P_1$ & 1.1515 & 1.0560 & 0.9775 & 1.0974 & 1.0252 & 1.0592 & 0.9909 \\
& $\theta^P_2$ & 1.1164 & 1.0489 & 1.0471 & 1.0568 & 1.0346 & 1.0512 & 0.9461 \\
& $\theta^S_1$ & 1.1254 & 1.0116 & 0.9951 & 1.1025 & 1.0496 & 1.0566 & 1.0092 \\
& $\theta^S_2$ & 1.0978 & 0.9980 & 1.0441 & 1.0523 & 1.0717 & 1.0294 & 0.9647 \\
& & & & & & & & \\
\multirow{4}{*}{\textbf{MT-B}} & $\theta^P_1$ & 1.2961 & 0.9776 & 1.3245 & 1.1634 & 1.0426 & 1.3105 & 1.0905 \\
& $\theta^P_2$ & 1.2771 & 0.9580 & 1.2263 & 1.1270 & 0.9956 & 1.2462 & 1.1027 \\
& $\theta^S_1$ & 1.2850 & 0.9764 & 1.2778 & 1.2184 & 1.0650 & 1.1869 & 1.0872 \\
& $\theta^S_2$ & 1.2499 & 0.9526 & 1.2093 & 1.1848 & 1.0172 & 1.1573 & 1.0948 \\
& & & & & & & & \\
\multirow{4}{*}{\textbf{MT-C}} & $\theta^P_1$ & 1.3489 & 1.2758 & 1.3100 & 1.3483 & 1.3233 & 1.4600 & 1.3201 \\
& $\theta^P_2$ & 1.4388 & 1.4396 & 1.5821 & 1.4858 & 1.5985 & 1.6598 & 1.5779 \\
& $\theta^S_1$ & 1.3891 & 1.2927 & 1.2882 & 1.2148 & 1.2832 & 1.3590 & 1.3978 \\
& $\theta^S_2$ & 1.5025 & 1.4200 & 1.4729 & 1.2878 & 1.4961 & 1.5932 & 1.6898 \\
\bottomrule
\end{tabular}
\end{adjustbox}
\caption{Ratio of MSE under $M_g=0.5 N_g$ against those under $M_g = N_g$}
\label{table:mse_ratio_appendix}
\end{table}

\begin{table}[ht!]
\centering
\setlength{\tabcolsep}{5pt}
\begin{adjustbox}{max width=0.75\linewidth,center}
\begin{tabular}{lllllllll}
\toprule
 & & \multicolumn{7}{c}{Second-stage}   \\\cmidrule{3-9}
First-stage              & Parameter & \textbf{C}    & \textbf{S-2}   & \textbf{S-4}   & \textbf{S-4O} & \textbf{MT-A}  & \textbf{MT-B}  & \textbf{MT-C}   \\
\midrule
\multirow{4}{*}{\textbf{C}} & $\theta^P_1$ & \textbf{1.0000} & 1.0678 & 1.1776 & 1.0368 & 1.0867 & 1.1143 & 1.0588 \\
& $\theta^P_2$ & \textbf{1.0000} & 1.0467 & 1.1832 & 1.0618 & 1.0980 & 1.1078 & 1.0684 \\
& $\theta^S_1$ & \textbf{1.0000} & 1.1005 & 1.1998 & 1.0657 & 1.0648 & 1.1496 & 1.0530 \\
& $\theta^S_2$ & \textbf{1.0000} & 1.0565 & 1.2064 & 1.0793 & 1.0570 & 1.1355 & 1.0432 \\
& & & & & & & & \\
\multirow{4}{*}{\textbf{S-2}} & $\theta^P_1$ & 0.9625 & 0.9634 & 0.9359 & 0.9656 & \textbf{0.8380} & 0.9332 & 0.8407 \\
& $\theta^P_2$ & 0.9384 & 0.9214 & 0.8862 & 0.9295 & 0.8447 & 0.8765 & \textbf{0.7875} \\
& $\theta^S_1$ & 0.9529 & 0.9529 & 0.9028 & 0.9575 & 0.8529 & 0.9512 & \textbf{0.8495} \\
& $\theta^S_2$ & 0.9170 & 0.9001 & 0.8463 & 0.9166 & 0.8504 & 0.8916 & \textbf{0.7903} \\
& & & & & & & & \\
\multirow{4}{*}{\textbf{S-4}} & $\theta^P_1$ & 0.9350 & 0.9040 & 0.9774 & 0.9036 & \textbf{0.8641} & 0.9486 & 0.8762 \\
& $\theta^P_2$ & 0.8953 & 0.8650 & 0.9545 & 0.9040 & \textbf{0.7998} & 0.9174 & 0.8209 \\
& $\theta^S_1$ & 0.9111 & 0.8740 & 0.9517 & 0.9139 & 0.8768 & 0.9527 & \textbf{0.8622} \\
& $\theta^S_2$ & 0.8601 & 0.8213 & 0.9075 & 0.8867 & \textbf{0.7975} & 0.8992 & 0.8001 \\
& & & & & & & & \\
\multirow{4}{*}{\textbf{S-O}} & $\theta^P_1$ & 0.2726 & 0.2820 & 0.2629 & 0.2740 & \textbf{0.2527} & 0.2792 & 0.2556 \\
& $\theta^P_2$ & 0.3191 & 0.3200 & 0.3038 & 0.3283 & \textbf{0.2967} & 0.3148 & 0.2970 \\
& $\theta^S_1$ & 0.2820 & 0.2767 & 0.2716 & 0.2758 & 0.2556 & 0.2878 & \textbf{0.2447} \\
& $\theta^S_2$ & 0.3260 & 0.3148 & 0.3101 & 0.3194 & 0.3010 & 0.3154 & \textbf{0.2759} \\
& & & & & & & & \\
\multirow{4}{*}{\textbf{MT-A}} & $\theta^P_1$ & 0.8306 & 0.8604 & 0.8955 & 0.9684 & 0.8905 & 0.8490 & \textbf{0.8269} \\
& $\theta^P_2$ & \textbf{0.7818} & 0.8185 & 0.8575 & 0.9026 & 0.8397 & 0.8402 & 0.7949 \\
& $\theta^S_1$ & \textbf{0.8242} & 0.8604 & 0.8769 & 0.9775 & 0.8664 & 0.8412 & 0.8310 \\
& $\theta^S_2$ & \textbf{0.7565} & 0.8059 & 0.8122 & 0.9044 & 0.8020 & 0.8230 & 0.7924 \\
& & & & & & & & \\
\multirow{4}{*}{\textbf{MT-B}} & $\theta^P_1$ & 0.3759 & 0.3755 & 0.3637 & 0.3696 & 0.3734 & 0.4006 & \textbf{0.3498} \\
& $\theta^P_2$ & 0.5068 & 0.5096 & 0.4988 & 0.4958 & 0.5002 & 0.5395 & \textbf{0.4780} \\
& $\theta^S_1$ & 0.3601 & 0.3771 & 0.3663 & 0.3626 & 0.3559 & 0.3742 & \textbf{0.3461} \\
& $\theta^S_2$ & 0.4734 & 0.5053 & 0.4884 & 0.4820 & \textbf{0.4663} & 0.5057 & 0.4725 \\
& & & & & & & & \\
\multirow{4}{*}{\textbf{MT-C}} & $\theta^P_1$ & 0.1662 & 0.1683 & 0.1556 & 0.1718 & \textbf{0.1521} & 0.1753 & 0.1696 \\
& $\theta^P_2$ & 0.1446 & 0.1450 & \textbf{0.1346} & 0.1547 & 0.1353 & 0.1596 & 0.1468 \\
& $\theta^S_1$ & 0.1694 & 0.1679 & 0.1602 & 0.1653 & \textbf{0.1451} & 0.1979 & 0.1762 \\
& $\theta^S_2$ & 0.1469 & 0.1445 & 0.1397 & 0.1425 & \textbf{0.1274} & 0.1691 & 0.1544 \\
\bottomrule
\end{tabular}
\end{adjustbox}
\caption{Ratio of MSE under all designs against those under complete randomization in both stages}
\label{table:mse_appendix}
\end{table}

\begin{table}[t]
\centering
\setlength{\tabcolsep}{3pt}
\begin{adjustbox}{max width=0.95\linewidth,center}
\begin{tabular}{lllllllllllllll}
\toprule
       &              & \multicolumn{13}{c}{Second-stage}    \\  \cmidrule{3-15}
       &              & \multicolumn{6}{c}{$H_0: \tau = \omega = 0$}                      &  & \multicolumn{6}{c}{$H_1: \tau = \omega = 0.05$}                      \\  \cmidrule{3-8} \cmidrule{10-15}
First-stage & Parameter &  \textbf{S-2}   & \textbf{S-4}   & \textbf{S-4O} & \textbf{MT-A}  & \textbf{MT-B}  & \textbf{MT-C} & & \textbf{S-2}   & \textbf{S-4}   & \textbf{S-4O} & \textbf{MT-A}  & \textbf{MT-B}  & \textbf{MT-C} \\
\toprule
\multirow{4}{*}{\textbf{S-2}}  & $\theta^P_1$ & 0.058 & 0.066 & 0.059 & 0.054 & 0.055 & 0.047 &  & 0.213 & 0.248 & 0.243 & 0.231 & 0.221 & 0.229 \\
                               & $\theta^P_2$ & 0.051 & 0.059 & 0.056 & 0.065 & 0.048 & 0.050 &  & 0.229 & 0.237 & 0.241 & 0.227 & 0.223 & 0.223 \\
                               & $\theta^S_1$ & 0.055 & 0.062 & 0.050 & 0.059 & 0.048 & 0.047 &  & 0.081 & 0.100 & 0.100 & 0.104 & 0.096 & 0.102 \\
                               & $\theta^S_2$ & 0.051 & 0.064 & 0.055 & 0.053 & 0.044 & 0.047 &  & 0.086 & 0.088 & 0.100 & 0.102 & 0.102 & 0.096 \\
\multicolumn{1}{l}{}           &              &       &       &       &       &       &       &  &       &       &       &       &       &       \\
\multirow{4}{*}{\textbf{S-4}}  & $\theta^P_1$ & 0.062 & 0.054 & 0.059 & 0.061 & 0.040 & 0.053 &  & 0.270 & 0.242 & 0.243 & 0.250 & 0.276 & 0.233 \\
                               & $\theta^P_2$ & 0.058 & 0.058 & 0.058 & 0.060 & 0.046 & 0.051 &  & 0.246 & 0.234 & 0.233 & 0.232 & 0.260 & 0.226 \\
                               & $\theta^S_1$ & 0.060 & 0.050 & 0.057 & 0.063 & 0.045 & 0.055 &  & 0.108 & 0.071 & 0.094 & 0.099 & 0.109 & 0.104 \\
                               & $\theta^S_2$ & 0.056 & 0.056 & 0.056 & 0.060 & 0.045 & 0.056 &  & 0.113 & 0.083 & 0.097 & 0.091 & 0.111 & 0.103 \\
\multicolumn{1}{l}{}           &              &       &       &       &       &       &       &  &       &       &       &       &       &       \\
\multirow{4}{*}{\textbf{S-O}}  & $\theta^P_1$ & 0.054 & 0.044 & 0.054 & 0.053 & 0.057 & 0.055 &  & 0.646 & 0.623 & 0.647 & 0.614 & 0.584 & 0.657 \\
                               & $\theta^P_2$ & 0.043 & 0.057 & 0.052 & 0.049 & 0.056 & 0.048 &  & 0.557 & 0.536 & 0.554 & 0.533 & 0.507 & 0.562 \\
                               & $\theta^S_1$ & 0.054 & 0.047 & 0.057 & 0.059 & 0.069 & 0.055 &  & 0.215 & 0.207 & 0.203 & 0.205 & 0.202 & 0.216 \\
                               & $\theta^S_2$ & 0.053 & 0.052 & 0.061 & 0.052 & 0.057 & 0.053 &  & 0.201 & 0.174 & 0.181 & 0.173 & 0.178 & 0.183 \\
\multicolumn{1}{l}{}           &              &       &       &       &       &       &       &  &       &       &       &       &       &       \\
\multirow{4}{*}{\textbf{MT-A}} & $\theta^P_1$ & 0.057 & 0.054 & 0.051 & 0.044 & 0.065 & 0.055 &  & 0.260 & 0.242 & 0.254 & 0.257 & 0.224 & 0.256 \\
                               & $\theta^P_2$ & 0.053 & 0.061 & 0.048 & 0.043 & 0.054 & 0.050 &  & 0.261 & 0.236 & 0.232 & 0.250 & 0.229 & 0.236 \\
                               & $\theta^S_1$ & 0.060 & 0.052 & 0.047 & 0.045 & 0.059 & 0.052 &  & 0.104 & 0.088 & 0.109 & 0.101 & 0.096 & 0.088 \\
                               & $\theta^S_2$ & 0.051 & 0.051 & 0.044 & 0.043 & 0.055 & 0.047 &  & 0.111 & 0.086 & 0.102 & 0.097 & 0.092 & 0.090 \\
\multicolumn{1}{l}{}           &              &       &       &       &       &       &       &  &       &       &       &       &       &       \\
\multirow{4}{*}{\textbf{MT-B}} & $\theta^P_1$ & 0.040 & 0.057 & 0.055 & 0.040 & 0.044 & 0.044 &  & 0.492 & 0.516 & 0.536 & 0.507 & 0.496 & 0.504 \\
                               & $\theta^P_2$ & 0.040 & 0.059 & 0.065 & 0.048 & 0.044 & 0.054 &  & 0.382 & 0.396 & 0.403 & 0.381 & 0.358 & 0.368 \\
                               & $\theta^S_1$ & 0.045 & 0.053 & 0.056 & 0.062 & 0.051 & 0.063 &  & 0.170 & 0.199 & 0.193 & 0.186 & 0.153 & 0.164 \\
                               & $\theta^S_2$ & 0.044 & 0.050 & 0.057 & 0.052 & 0.051 & 0.055 &  & 0.140 & 0.143 & 0.146 & 0.150 & 0.109 & 0.134 \\
\multicolumn{1}{l}{}           &              &       &       &       &       &       &       &  &       &       &       &       &       &       \\
\multirow{4}{*}{\textbf{MT-C}} & $\theta^P_1$ & 0.053 & 0.045 & 0.058 & 0.058 & 0.051 & 0.057 &  & 0.833 & 0.837 & 0.823 & 0.831 & 0.802 & 0.853 \\
                               & $\theta^P_2$ & 0.047 & 0.049 & 0.062 & 0.048 & 0.069 & 0.061 &  & 0.843 & 0.848 & 0.859 & 0.853 & 0.814 & 0.869 \\
                               & $\theta^S_1$ & 0.068 & 0.041 & 0.051 & 0.060 & 0.042 & 0.053 &  & 0.312 & 0.379 & 0.289 & 0.334 & 0.308 & 0.339 \\
                               & $\theta^S_2$ & 0.052 & 0.054 & 0.056 & 0.055 & 0.048 & 0.059 &  & 0.325 & 0.356 & 0.295 & 0.337 & 0.296 & 0.330 \\
\bottomrule
\end{tabular}
\end{adjustbox}
\caption{Rejection probabilities under the null and alternative hypothesis}
\label{table:reject-probs-1-appendix}
\end{table}

\begin{table}[t]
\centering
\setlength{\tabcolsep}{3pt}
\begin{adjustbox}{max width=0.95\linewidth,center}
\begin{tabular}{lclllllll}
\toprule
\multirow{2}{*}{Model}  & \multirow{2}{*}{Inference Method}     &  \multirow{2}{*}{Effect}        & \textbf{S-4O} & \textbf{S-4O} & \textbf{S-4O} & \textbf{MT-C} & \textbf{MT-C} & \textbf{MT-C} \\
& &  & \textbf{C} & \textbf{S-4O} & \textbf{MT-C} & \textbf{C} & \textbf{S-4O} & \textbf{MT-C} \\
\toprule
\multirow{9}{*}{\textbf{Homogeneous}}  & OLS robust                         & Primary   & 0.127 & 0.142 & 0.098 & 0.057 & 0.041 & 0.033 \\
& (standard $t$-test)                                            & Spillover   & 0.121 & 0.112 & 0.103 & 0.040 & 0.022 & 0.030 \\
\multicolumn{1}{l}{}                         &                  &          &           &           &          &           &           \\
& OLS cluster                         & Primary   & 0.000 & 0.000 & 0.000 & 0.000 & 0.000 & 0.000 \\
& (clustered $t$-test)                                            & Spillover & 0.000 & 0.000 & 0.000 & 0.000 & 0.000 & 0.000 \\
\multicolumn{1}{l}{}                         &                  &          &           &           &          &           &           \\
& OLS with group                & Primary   & 0.157 & 0.130 & 0.107 & 0.073 & 0.052 & 0.050 \\
& fixed effects (robust)                                             & Spillover & 0.131 & 0.119 & 0.112 & 0.056 & 0.051 & 0.052 \\
\multicolumn{1}{l}{}                         &                  &          &           &           &          &           &           \\
& OLS with group  & Primary   & 0.024 & 0.036 & 0.026 & 0.065 & 0.060 & 0.062 \\ 
& fixed effects (clustered)                                            & Spillover & 0.029 & 0.027 & 0.024 & 0.052 & 0.058 & 0.063 \\
\\ \\
\multirow{9}{*}{\textbf{Heterogeneous}} & OLS robust                          & Primary   & 0.077 & 0.101 & 0.105 & 0.037 & 0.025 & 0.034 \\
& (standard $t$-test)                                            & Spillover & 0.178 & 0.168 & 0.104 & 0.097 & 0.066 & 0.046 \\
\multicolumn{1}{l}{}                         &                  &          &           &           &          &           &           \\
& OLS cluster                         & Primary   & 0.000 & 0.000 & 0.000 & 0.000 & 0.000 & 0.000 \\
& (clustered $t$-test)                                              & Spillover & 0.000 & 0.000 & 0.000 & 0.000 & 0.000 & 0.000 \\
\multicolumn{1}{l}{}                         &                  &          &           &           &          &           &           \\
& OLS with group    &   Primary         & 0.075 & 0.086 & 0.104 & 0.054 & 0.037 & 0.071 \\
& fixed effects (robust)                                             & Spillover & 0.197 & 0.200 & 0.116 & 0.197 & 0.174 & 0.086 \\
\multicolumn{1}{l}{}                         &                  &          &           &           &          &           &           \\
& OLS with group  & Primary   & 0.018 & 0.028 & 0.022 & 0.057 & 0.047 & 0.049 \\
& fixed effects (clustered)                                            & Spillover & 0.015 & 0.026 & 0.031 & 0.046 & 0.031 & 0.057 \\
\bottomrule
\end{tabular}
\end{adjustbox}
\caption{Rejection probabilities of various inference methods under the null hypothesis}
\label{table:reject-probs-ols-appendix}
\end{table}

\clearpage
\newpage
\subsection{Increasing Number of Clusters}
In this section, I repeat the simulation study from Section \ref{sec:simulation}, with the only difference being that the number cluster $G$ increases from $200$ to $400, 800$ and $1000$. 
\begin{table}[ht!]
\centering
\setlength{\tabcolsep}{5pt}
\begin{adjustbox}{max width=0.75\linewidth,center}
\begin{tabular}{lllllllll}
\toprule
 & & \multicolumn{7}{c}{Second-stage}   \\\cmidrule{3-9}
First-stage              & Parameter & \textbf{C}    & \textbf{S-2}   & \textbf{S-4}   & \textbf{S-4O} & \textbf{MT-A}  & \textbf{MT-B}  & \textbf{MT-C}   \\
\midrule
\multirow{4}{*}{\textbf{C}}    & $\theta^P_1$ & 1.0000 & 0.9369 & 0.9364 & 0.9413 & \textbf{0.8969} & 0.9161 & 0.9274 \\
                               & $\theta^P_2$ & 1.0000 & 0.9579 & 0.9334 & 0.9190 & 0.8963 & 0.9093 & \textbf{0.8935} \\
                               & $\theta^S_1$ & 1.0000 & 0.9401 & 0.9564 & 0.9559 & \textbf{0.9160} & 0.9192 & 0.9356 \\
                               & $\theta^S_2$ & 1.0000 & 0.9649 & 0.9511 & 0.9381 & 0.9162 & 0.9194 & \textbf{0.9030} \\
                               &              &        &        &        &        &        &        &        \\
\multirow{4}{*}{\textbf{S-2}}  & $\theta^P_1$ & 0.7594 & 0.7923 & 0.7789 & 0.8150 & 0.7611 & \textbf{0.7587} & 0.8447 \\
                               & $\theta^P_2$ & 0.7306 & 0.7496 & 0.7503 & 0.7582 & 0.7326 & \textbf{0.7133} & 0.8319 \\
                               & $\theta^S_1$ & 0.7754 & 0.7795 & 0.7846 & 0.8272 & 0.7725 & \textbf{0.7530} & 0.8619 \\
                               & $\theta^S_2$ & 0.7537 & 0.7453 & 0.7553 & 0.7647 & 0.7426 & \textbf{0.7158} & 0.8511 \\
                               &              &        &        &        &        &        &        &        \\
\multirow{4}{*}{\textbf{S-4}}  & $\theta^P_1$ & \textbf{0.6778} & 0.7591 & 0.6963 & 0.7435 & 0.7692 & 0.7616 & 0.7514 \\
                               & $\theta^P_2$ & \textbf{0.6387} & 0.7243 & 0.6480 & 0.7018 & 0.7257 & 0.7018 & 0.7071 \\
                               & $\theta^S_1$ & \textbf{0.6821} & 0.7722 & 0.7062 & 0.7506 & 0.7711 & 0.7693 & 0.7524 \\
                               & $\theta^S_2$ & \textbf{0.6441} & 0.7409 & 0.6559 & 0.7098 & 0.7322 & 0.7152 & 0.7104 \\
                               &              &        &        &        &        &        &        &        \\
\multirow{4}{*}{\textbf{S-4O}} & $\theta^P_1$ & 0.2106 & 0.1949 & 0.2081 & 0.1996 & 0.2147 & 0.2033 & \textbf{0.1918} \\
                               & $\theta^P_2$ & 0.2285 & 0.2308 & 0.2330 & 0.2276 & 0.2411 & 0.2314 & \textbf{0.2176} \\
                               & $\theta^S_1$ & 0.2164 & 0.2026 & 0.2101 & 0.2079 & 0.2188 & 0.2056 & \textbf{0.1968} \\
                               & $\theta^S_2$ & 0.2410 & 0.2385 & 0.2354 & 0.2360 & 0.2425 & 0.2388 & \textbf{0.2222} \\
                               &              &        &        &        &        &        &        &        \\
\multirow{4}{*}{\textbf{MT-A}} & $\theta^P_1$ & 0.7258 & \textbf{0.6914} & 0.7389 & 0.7372 & 0.7464 & 0.7797 & 0.7057 \\
                               & $\theta^P_2$ & 0.6794 & 0.6615 & 0.7061 & 0.6737 & 0.7002 & 0.7399 & \textbf{0.6586} \\
                               & $\theta^S_1$ & 0.7542 & \textbf{0.6878} & 0.7500 & 0.7460 & 0.7677 & 0.7956 & 0.7129 \\
                               & $\theta^S_2$ & 0.7136 & \textbf{0.6582} & 0.7185 & 0.6838 & 0.7216 & 0.7566 & 0.6659 \\
                               &              &        &        &        &        &        &        &        \\
\multirow{4}{*}{\textbf{MT-B}} & $\theta^P_1$ & 0.2624 & 0.2865 & 0.2952 & 0.2686 & 0.2700 & 0.2770 & \textbf{0.2541} \\
                               & $\theta^P_2$ & 0.3609 & 0.3923 & 0.4007 & 0.3709 & 0.3819 & 0.3720 & \textbf{0.3494} \\
                               & $\theta^S_1$ & 0.2579 & 0.2890 & 0.2965 & 0.2732 & 0.2846 & 0.2819 & \textbf{0.2570} \\
                               & $\theta^S_2$ & \textbf{0.3510} & 0.3959 & 0.4064 & 0.3760 & 0.4017 & 0.3808 & 0.3538 \\
                               &              &        &        &        &        &        &        &        \\
\multirow{4}{*}{\textbf{MT-C}} & $\theta^P_1$ & 0.1051 & 0.1037 & 0.1023 & 0.0994 & 0.1052 & 0.1044 & \textbf{0.0929} \\
                               & $\theta^P_2$ & 0.0846 & 0.0819 & 0.0787 & 0.0745 & 0.0722 & 0.0855 & \textbf{0.0711} \\
                               & $\theta^S_1$ & 0.1081 & 0.1007 & 0.1075 & 0.1017 & 0.1085 & 0.1040 & \textbf{0.0948} \\
                               & $\theta^S_2$ & 0.0874 & 0.0779 & 0.0782 & 0.0760 & 0.0773 & 0.0842 & \textbf{0.0727} \\
\bottomrule
\end{tabular}
\end{adjustbox}
\caption{Ratio of MSE under all designs against those under complete randomization in both stages with $G=400$}
\label{table:mse400}
\end{table}

\begin{table}[ht!]
\centering
\setlength{\tabcolsep}{5pt}
\begin{adjustbox}{max width=0.75\linewidth,center}
\begin{tabular}{lllllllll}
\toprule
 & & \multicolumn{7}{c}{Second-stage}   \\\cmidrule{3-9}
First-stage              & Parameter & \textbf{C}    & \textbf{S-2}   & \textbf{S-4}   & \textbf{S-4O} & \textbf{MT-A}  & \textbf{MT-B}  & \textbf{MT-C}   \\
\midrule
\multirow{4}{*}{\textbf{C}}    & $\theta^P_1$ & 1.0000 & \textbf{0.9955} & 1.1978 & 1.1322 & 1.0328 & 1.0124 & 0.9957 \\
                               & $\theta^P_2$ & 1.0000 & 1.0020 & 1.1529 & 1.0984 & 1.0089 & \textbf{0.9641} & 1.0020 \\
                               & $\theta^S_1$ & 1.0000 & \textbf{0.9965} & 1.1980 & 1.1400 & 1.0430 & 1.0020 & 0.9982 \\
                               & $\theta^S_2$ & 1.0000 & 1.0109 & 1.1634 & 1.1106 & 1.0240 & \textbf{0.9558} & 1.0158 \\
                               &              &        &        &        &        &        &        &        \\
\multirow{4}{*}{\textbf{S-2}}  & $\theta^P_1$ & 0.8607 & \textbf{0.7817} & 0.8193 & 0.8848 & 0.8838 & 0.8489 & 0.8245 \\
                               & $\theta^P_2$ & 0.8171 & \textbf{0.7510} & 0.7705 & 0.8458 & 0.8446 & 0.7941 & 0.7625 \\
                               & $\theta^S_1$ & 0.8601 & \textbf{0.7886} & 0.8206 & 0.8745 & 0.8782 & 0.8612 & 0.8236 \\
                               & $\theta^S_2$ & 0.8210 & 0.7709 & 0.7783 & 0.8460 & 0.8475 & 0.8088 & \textbf{0.7693} \\
                               &              &        &        &        &        &        &        &        \\
\multirow{4}{*}{\textbf{S-4}}  & $\theta^P_1$ & 0.8825 & 0.9051 & 0.8238 & 0.7892 & 0.8557 & 0.8355 & \textbf{0.7850} \\
                               & $\theta^P_2$ & 0.8386 & 0.8428 & 0.7624 & 0.7282 & 0.7979 & 0.7964 & \textbf{0.7309} \\
                               & $\theta^S_1$ & 0.8727 & 0.9021 & 0.8305 & \textbf{0.7764} & 0.8519 & 0.8432 & 0.7907 \\
                               & $\theta^S_2$ & 0.8433 & 0.8530 & 0.7718 & \textbf{0.7206} & 0.8046 & 0.8079 & 0.7435 \\
                               &              &        &        &        &        &        &        &        \\
\multirow{4}{*}{\textbf{S-4O}} & $\theta^P_1$ & 0.2310 & 0.2041 & 0.2060 & 0.2153 & 0.2062 & 0.2207 & \textbf{0.2032} \\
                               & $\theta^P_2$ & 0.2564 & 0.2345 & 0.2343 & 0.2417 & \textbf{0.2232} & 0.2518 & 0.2367 \\
                               & $\theta^S_1$ & 0.2414 & 0.2118 & 0.2039 & 0.2170 & \textbf{0.2036} & 0.2194 & 0.2072 \\
                               & $\theta^S_2$ & 0.2671 & 0.2407 & 0.2344 & 0.2463 & \textbf{0.2232} & 0.2547 & 0.2420 \\
                               &              &        &        &        &        &        &        &        \\
\multirow{4}{*}{\textbf{MT-A}} & $\theta^P_1$ & \textbf{0.7313} & 0.8537 & 0.8119 & 0.8049 & 0.7604 & 0.7885 & 0.8531 \\
                               & $\theta^P_2$ & \textbf{0.6727} & 0.7814 & 0.7558 & 0.7502 & 0.6988 & 0.7227 & 0.7958 \\
                               & $\theta^S_1$ & \textbf{0.7437} & 0.8584 & 0.8194 & 0.8027 & 0.7644 & 0.7956 & 0.8542 \\
                               & $\theta^S_2$ & \textbf{0.6917} & 0.7919 & 0.7729 & 0.7512 & 0.7100 & 0.7450 & 0.8049 \\
                               &              &        &        &        &        &        &        &        \\
\multirow{4}{*}{\textbf{MT-B}} & $\theta^P_1$ & 0.3074 & 0.3398 & 0.3206 & 0.3234 & 0.2996 & 0.3054 & \textbf{0.2938} \\
                               & $\theta^P_2$ & 0.4219 & 0.4612 & 0.4404 & 0.4545 & 0.4183 & 0.4151 & \textbf{0.3998} \\
                               & $\theta^S_1$ & 0.3167 & 0.3288 & 0.3171 & 0.3265 & 0.3006 & 0.3052 & \textbf{0.2943} \\
                               & $\theta^S_2$ & 0.4383 & 0.4542 & 0.4417 & 0.4580 & 0.4236 & 0.4189 & \textbf{0.4035} \\
                               &              &        &        &        &        &        &        &        \\
\multirow{4}{*}{\textbf{MT-C}} & $\theta^P_1$ & 0.1272 & 0.1120 & 0.1182 & 0.1181 & 0.1147 & 0.1239 & \textbf{0.1113} \\
                               & $\theta^P_2$ & 0.1030 & 0.0873 & 0.0930 & 0.0862 & 0.0867 & 0.1039 & \textbf{0.0802} \\
                               & $\theta^S_1$ & 0.1288 & 0.1132 & 0.1192 & 0.1180 & 0.1179 & 0.1287 & \textbf{0.1095} \\
                               & $\theta^S_2$ & 0.1045 & 0.0923 & 0.0958 & 0.0878 & 0.0903 & 0.1033 & \textbf{0.0792} \\
\bottomrule
\end{tabular}
\end{adjustbox}
\caption{Ratio of MSE under all designs against those under complete randomization in both stages with $G=800$}
\label{table:mse800}
\end{table}

\begin{table}[ht!]
\centering
\setlength{\tabcolsep}{5pt}
\begin{adjustbox}{max width=0.75\linewidth,center}
\begin{tabular}{lllllllll}
\toprule
 & & \multicolumn{7}{c}{Second-stage}   \\\cmidrule{3-9}
First-stage              & Parameter & \textbf{C}    & \textbf{S-2}   & \textbf{S-4}   & \textbf{S-4O} & \textbf{MT-A}  & \textbf{MT-B}  & \textbf{MT-C}   \\
\midrule
\multirow{4}{*}{\textbf{C}}    & $\theta^P_1$ & 1.0000 & 1.0261 & 1.0119 & \textbf{0.8974} & 1.0175 & 0.9374 & 0.9806 \\
                               & $\theta^P_2$ & 1.0000 & 1.0226 & 1.0010 & \textbf{0.8975} & 0.9820 & 0.9470 & 0.9877 \\
                               & $\theta^S_1$ & 1.0000 & 1.0560 & 1.0366 & \textbf{0.9179} & 1.0326 & 0.9760 & 1.0015 \\
                               & $\theta^S_2$ & 1.0000 & 1.0454 & 1.0195 & \textbf{0.9091} & 0.9926 & 0.9798 & 1.0027 \\
                               &              &        &        &        &        &        &        &        \\
\multirow{4}{*}{\textbf{S-2}}  & $\theta^P_1$ & 0.7743 & 0.9256 & 0.8678 & 0.8789 & 0.8183 & \textbf{0.7698} & 0.8305 \\
                               & $\theta^P_2$ & 0.7290 & 0.8828 & 0.8291 & 0.8297 & 0.7911 & \textbf{0.7140} & 0.7824 \\
                               & $\theta^S_1$ & 0.7962 & 0.9451 & 0.8740 & 0.8841 & 0.8377 & \textbf{0.7754} & 0.8427 \\
                               & $\theta^S_2$ & 0.7518 & 0.9050 & 0.8271 & 0.8284 & 0.8086 & \textbf{0.7190} & 0.7897 \\
                               &              &        &        &        &        &        &        &        \\
\multirow{4}{*}{\textbf{S-4}}  & $\theta^P_1$ & 0.8211 & 0.7965 & 0.7692 & 0.7757 & \textbf{0.7574} & 0.7600 & 0.7865 \\
                               & $\theta^P_2$ & 0.7503 & 0.7374 & 0.7501 & 0.7323 & 0.7017 & \textbf{0.6958} & 0.7212 \\
                               & $\theta^S_1$ & 0.8435 & 0.8254 & 0.7831 & 0.7869 & 0.7757 & \textbf{0.7697} & 0.8064 \\
                               & $\theta^S_2$ & 0.7678 & 0.7586 & 0.7592 & 0.7394 & 0.7195 & \textbf{0.7009} & 0.7370 \\
                               &              &        &        &        &        &        &        &        \\
\multirow{4}{*}{\textbf{S-4O}} & $\theta^P_1$ & 0.2185 & 0.2104 & 0.2041 & 0.2094 & \textbf{0.2007} & 0.2051 & 0.2080 \\
                               & $\theta^P_2$ & 0.2489 & 0.2442 & 0.2283 & 0.2348 & \textbf{0.2135} & 0.2252 & 0.2307 \\
                               & $\theta^S_1$ & 0.2222 & 0.2069 & \textbf{0.2037} & 0.2144 & 0.2089 & 0.2051 & 0.2116 \\
                               & $\theta^S_2$ & 0.2464 & 0.2465 & 0.2305 & 0.2424 & \textbf{0.2230} & 0.2245 & 0.2336 \\
                               &              &        &        &        &        &        &        &        \\
\multirow{4}{*}{\textbf{MT-A}} & $\theta^P_1$ & 0.7618 & \textbf{0.6901} & 0.7937 & 0.7355 & 0.7084 & 0.7585 & 0.7045 \\
                               & $\theta^P_2$ & 0.7037 & \textbf{0.6487} & 0.7538 & 0.6837 & 0.6585 & 0.7258 & 0.6676 \\
                               & $\theta^S_1$ & 0.7712 & \textbf{0.6907} & 0.8058 & 0.7537 & 0.7370 & 0.7793 & 0.7196 \\
                               & $\theta^S_2$ & 0.7159 & \textbf{0.6457} & 0.7565 & 0.7017 & 0.6808 & 0.7512 & 0.6785 \\
                               &              &        &        &        &        &        &        &        \\
\multirow{4}{*}{\textbf{MT-B}} & $\theta^P_1$ & 0.2925 & 0.2883 & 0.2906 & 0.2810 & 0.2853 & 0.2779 & \textbf{0.2694} \\
                               & $\theta^P_2$ & 0.3986 & 0.3923 & 0.3964 & 0.3853 & 0.3904 & 0.3788 & \textbf{0.3745} \\
                               & $\theta^S_1$ & 0.2984 & 0.2939 & 0.3018 & 0.2859 & 0.2861 & 0.2891 & \textbf{0.2763} \\
                               & $\theta^S_2$ & 0.4042 & 0.3952 & 0.4071 & 0.3908 & 0.3944 & 0.3857 & \textbf{0.3826} \\
                               &              &        &        &        &        &        &        &        \\
\multirow{4}{*}{\textbf{MT-C}} & $\theta^P_1$ & 0.1104 & 0.1160 & 0.1044 & 0.1070 & \textbf{0.1027} & 0.1187 & 0.1071 \\
                               & $\theta^P_2$ & 0.0846 & 0.0912 & 0.0853 & \textbf{0.0779} & 0.0786 & 0.0938 & 0.0808 \\
                               & $\theta^S_1$ & 0.1166 & 0.1140 & \textbf{0.1044} & 0.1103 & 0.1053 & 0.1260 & 0.1093 \\
                               & $\theta^S_2$ & 0.0922 & 0.0871 & 0.0869 & 0.0819 & \textbf{0.0806} & 0.1015 & 0.0820 \\
\bottomrule
\end{tabular}
\end{adjustbox}
\caption{Ratio of MSE under all designs against those under complete randomization in both stages with $G=1000$}
\label{table:mse1000}
\end{table}

\begin{table}[ht!]
\centering
\setlength{\tabcolsep}{3pt}
\begin{adjustbox}{max width=0.95\linewidth,center}
\begin{tabular}{lllllllllllllll}
\toprule
       &              & \multicolumn{13}{c}{Second-stage}    \\  \cmidrule{3-15}
       &              & \multicolumn{6}{c}{$H_0: \tau = \omega = 0$}                      &  & \multicolumn{6}{c}{$H_1: \tau = \omega = 0.05$}                      \\  \cmidrule{3-8} \cmidrule{10-15}
First-stage & Parameter &  \textbf{S-2}   & \textbf{S-4}   & \textbf{S-4O} & \textbf{MT-A}  & \textbf{MT-B}  & \textbf{MT-C} & & \textbf{S-2}   & \textbf{S-4}   & \textbf{S-4O} & \textbf{MT-A}  & \textbf{MT-B}  & \textbf{MT-C} \\
\toprule
\multirow{4}{*}{\textbf{S-2}}  & $\theta^P_1$ & 0.048 & 0.043 & 0.044 & 0.043 & 0.059 & 0.057 &  & 0.433 & 0.450 & 0.430 & 0.452 & 0.402 & 0.403 \\
                               & $\theta^P_2$ & 0.047 & 0.042 & 0.039 & 0.045 & 0.050 & 0.052 &  & 0.422 & 0.431 & 0.413 & 0.444 & 0.370 & 0.399 \\
                               & $\theta^S_1$ & 0.049 & 0.037 & 0.048 & 0.042 & 0.058 & 0.058 &  & 0.155 & 0.138 & 0.140 & 0.145 & 0.124 & 0.130 \\
                               & $\theta^S_2$ & 0.050 & 0.032 & 0.056 & 0.051 & 0.051 & 0.053 &  & 0.149 & 0.135 & 0.136 & 0.141 & 0.127 & 0.139 \\
\multicolumn{1}{l}{}           &              &       &       &       &       &       &       &  &       &       &       &       &       &       \\
\multirow{4}{*}{\textbf{S-4}}  & $\theta^P_1$ & 0.062 & 0.046 & 0.060 & 0.054 & 0.045 & 0.056 &  & 0.424 & 0.446 & 0.454 & 0.438 & 0.428 & 0.457 \\
                               & $\theta^P_2$ & 0.058 & 0.062 & 0.061 & 0.056 & 0.046 & 0.057 &  & 0.423 & 0.434 & 0.438 & 0.420 & 0.423 & 0.439 \\
                               & $\theta^S_1$ & 0.056 & 0.052 & 0.058 & 0.056 & 0.050 & 0.056 &  & 0.158 & 0.150 & 0.148 & 0.147 & 0.146 & 0.147 \\
                               & $\theta^S_2$ & 0.057 & 0.058 & 0.063 & 0.052 & 0.041 & 0.061 &  & 0.152 & 0.158 & 0.151 & 0.150 & 0.147 & 0.139 \\
\multicolumn{1}{l}{}           &              &       &       &       &       &       &       &  &       &       &       &       &       &       \\
\multirow{4}{*}{\textbf{S-4O}} & $\theta^P_1$ & 0.051 & 0.057 & 0.055 & 0.046 & 0.064 & 0.059 &  & 0.932 & 0.938 & 0.942 & 0.932 & 0.941 & 0.945 \\
                               & $\theta^P_2$ & 0.057 & 0.059 & 0.053 & 0.055 & 0.054 & 0.058 &  & 0.882 & 0.885 & 0.886 & 0.876 & 0.873 & 0.896 \\
                               & $\theta^S_1$ & 0.050 & 0.045 & 0.060 & 0.042 & 0.064 & 0.059 &  & 0.442 & 0.412 & 0.431 & 0.390 & 0.409 & 0.418 \\
                               & $\theta^S_2$ & 0.058 & 0.050 & 0.048 & 0.046 & 0.054 & 0.063 &  & 0.375 & 0.341 & 0.357 & 0.310 & 0.342 & 0.370 \\
\multicolumn{1}{l}{}           &              &       &       &       &       &       &       &  &       &       &       &       &       &       \\
\multirow{4}{*}{\textbf{MT-A}} & $\theta^P_1$ & 0.059 & 0.050 & 0.054 & 0.063 & 0.046 & 0.060 &  & 0.431 & 0.469 & 0.458 & 0.454 & 0.439 & 0.458 \\
                               & $\theta^P_2$ & 0.056 & 0.059 & 0.054 & 0.067 & 0.041 & 0.060 &  & 0.440 & 0.469 & 0.457 & 0.447 & 0.419 & 0.450 \\
                               & $\theta^S_1$ & 0.063 & 0.056 & 0.055 & 0.060 & 0.060 & 0.058 &  & 0.133 & 0.155 & 0.162 & 0.152 & 0.145 & 0.144 \\
                               & $\theta^S_2$ & 0.056 & 0.054 & 0.057 & 0.062 & 0.053 & 0.058 &  & 0.129 & 0.157 & 0.155 & 0.152 & 0.149 & 0.147 \\
\multicolumn{1}{l}{}           &              &       &       &       &       &       &       &  &       &       &       &       &       &       \\
\multirow{4}{*}{\textbf{MT-B}} & $\theta^P_1$ & 0.054 & 0.033 & 0.049 & 0.053 & 0.047 & 0.054 &  & 0.835 & 0.832 & 0.856 & 0.837 & 0.821 & 0.862 \\
                               & $\theta^P_2$ & 0.050 & 0.042 & 0.048 & 0.068 & 0.054 & 0.062 &  & 0.656 & 0.675 & 0.674 & 0.675 & 0.657 & 0.675 \\
                               & $\theta^S_1$ & 0.045 & 0.042 & 0.053 & 0.054 & 0.047 & 0.056 &  & 0.336 & 0.321 & 0.324 & 0.315 & 0.296 & 0.332 \\
                               & $\theta^S_2$ & 0.051 & 0.045 & 0.042 & 0.067 & 0.048 & 0.060 &  & 0.248 & 0.228 & 0.236 & 0.226 & 0.203 & 0.241 \\
\multicolumn{1}{l}{}           &              &       &       &       &       &       &       &  &       &       &       &       &       &       \\
\multirow{4}{*}{\textbf{MT-C}} & $\theta^P_1$ & 0.039 & 0.057 & 0.048 & 0.041 & 0.056 & 0.058 &  & 0.996 & 0.996 & 0.997 & 0.996 & 0.999 & 0.999 \\
                               & $\theta^P_2$ & 0.054 & 0.048 & 0.045 & 0.043 & 0.051 & 0.050 &  & 1.000 & 0.998 & 0.998 & 0.999 & 1.000 & 1.000 \\
                               & $\theta^S_1$ & 0.040 & 0.054 & 0.046 & 0.046 & 0.047 & 0.056 &  & 0.705 & 0.673 & 0.677 & 0.672 & 0.674 & 0.677 \\
                               & $\theta^S_2$ & 0.051 & 0.047 & 0.050 & 0.049 & 0.036 & 0.047 &  & 0.741 & 0.741 & 0.751 & 0.740 & 0.734 & 0.743 \\
\bottomrule
\end{tabular}
\end{adjustbox}
\caption{Rejection probabilities under the null and alternative hypothesis with $G=400$}
\label{table:reject-probs-400}
\end{table}

\begin{table}[ht!]
\centering
\setlength{\tabcolsep}{3pt}
\begin{adjustbox}{max width=0.95\linewidth,center}
\begin{tabular}{lllllllllllllll}
\toprule
       &              & \multicolumn{13}{c}{Second-stage}    \\  \cmidrule{3-15}
       &              & \multicolumn{6}{c}{$H_0: \tau = \omega = 0$}                      &  & \multicolumn{6}{c}{$H_1: \tau = \omega = 0.05$}                      \\  \cmidrule{3-8} \cmidrule{10-15}
First-stage & Parameter &  \textbf{S-2}   & \textbf{S-4}   & \textbf{S-4O} & \textbf{MT-A}  & \textbf{MT-B}  & \textbf{MT-C} & & \textbf{S-2}   & \textbf{S-4}   & \textbf{S-4O} & \textbf{MT-A}  & \textbf{MT-B}  & \textbf{MT-C} \\
\toprule
\multirow{4}{*}{\textbf{S-2}}  & $\theta^P_1$ & 0.049 & 0.047 & 0.039 & 0.051 & 0.055 & 0.044 &  & 0.713 & 0.693 & 0.720 & 0.711 & 0.691 & 0.679 \\
                               & $\theta^P_2$ & 0.049 & 0.049 & 0.044 & 0.042 & 0.057 & 0.049 &  & 0.694 & 0.681 & 0.692 & 0.695 & 0.662 & 0.666 \\
                               & $\theta^S_1$ & 0.053 & 0.047 & 0.041 & 0.048 & 0.065 & 0.049 &  & 0.248 & 0.229 & 0.251 & 0.226 & 0.242 & 0.239 \\
                               & $\theta^S_2$ & 0.046 & 0.043 & 0.042 & 0.046 & 0.061 & 0.053 &  & 0.232 & 0.218 & 0.238 & 0.224 & 0.235 & 0.236 \\
\multicolumn{1}{l}{}           &              &       &       &       &       &       &       &  &       &       &       &       &       &       \\
\multirow{4}{*}{\textbf{S-4}}  & $\theta^P_1$ & 0.058 & 0.047 & 0.046 & 0.055 & 0.055 & 0.052 &  & 0.727 & 0.716 & 0.731 & 0.712 & 0.711 & 0.729 \\
                               & $\theta^P_2$ & 0.052 & 0.048 & 0.053 & 0.053 & 0.051 & 0.054 &  & 0.718 & 0.714 & 0.714 & 0.692 & 0.683 & 0.708 \\
                               & $\theta^S_1$ & 0.055 & 0.048 & 0.053 & 0.062 & 0.056 & 0.054 &  & 0.264 & 0.240 & 0.253 & 0.236 & 0.247 & 0.269 \\
                               & $\theta^S_2$ & 0.056 & 0.052 & 0.060 & 0.057 & 0.052 & 0.051 &  & 0.251 & 0.248 & 0.243 & 0.213 & 0.251 & 0.243 \\
\multicolumn{1}{l}{}           &              &       &       &       &       &       &       &  &       &       &       &       &       &       \\
\multirow{4}{*}{\textbf{S-4O}} & $\theta^P_1$ & 0.062 & 0.055 & 0.045 & 0.060 & 0.053 & 0.054 &  & 0.999 & 0.998 & 1.000 & 1.000 & 0.999 & 0.999 \\
                               & $\theta^P_2$ & 0.062 & 0.067 & 0.054 & 0.060 & 0.057 & 0.044 &  & 0.994 & 0.990 & 0.994 & 0.998 & 0.997 & 0.997 \\
                               & $\theta^S_1$ & 0.057 & 0.051 & 0.049 & 0.053 & 0.048 & 0.058 &  & 0.685 & 0.695 & 0.725 & 0.702 & 0.703 & 0.716 \\
                               & $\theta^S_2$ & 0.064 & 0.061 & 0.055 & 0.058 & 0.053 & 0.042 &  & 0.610 & 0.591 & 0.636 & 0.602 & 0.580 & 0.626 \\
\multicolumn{1}{l}{}           &              &       &       &       &       &       &       &  &       &       &       &       &       &       \\
\multirow{4}{*}{\textbf{MT-A}} & $\theta^P_1$ & 0.048 & 0.043 & 0.051 & 0.036 & 0.050 & 0.051 &  & 0.755 & 0.730 & 0.723 & 0.737 & 0.732 & 0.760 \\
                               & $\theta^P_2$ & 0.047 & 0.044 & 0.055 & 0.036 & 0.054 & 0.059 &  & 0.747 & 0.721 & 0.732 & 0.719 & 0.718 & 0.750 \\
                               & $\theta^S_1$ & 0.046 & 0.040 & 0.053 & 0.038 & 0.053 & 0.055 &  & 0.248 & 0.286 & 0.241 & 0.277 & 0.258 & 0.251 \\
                               & $\theta^S_2$ & 0.046 & 0.044 & 0.056 & 0.038 & 0.053 & 0.062 &  & 0.240 & 0.279 & 0.231 & 0.270 & 0.242 & 0.252 \\
\multicolumn{1}{l}{}           &              &       &       &       &       &       &       &  &       &       &       &       &       &       \\
\multirow{4}{*}{\textbf{MT-B}} & $\theta^P_1$ & 0.048 & 0.050 & 0.055 & 0.052 & 0.048 & 0.051 &  & 0.988 & 0.992 & 0.991 & 0.986 & 0.989 & 0.987 \\
                               & $\theta^P_2$ & 0.046 & 0.057 & 0.060 & 0.055 & 0.048 & 0.046 &  & 0.919 & 0.925 & 0.932 & 0.925 & 0.922 & 0.927 \\
                               & $\theta^S_1$ & 0.051 & 0.052 & 0.060 & 0.064 & 0.042 & 0.053 &  & 0.563 & 0.558 & 0.555 & 0.552 & 0.543 & 0.542 \\
                               & $\theta^S_2$ & 0.046 & 0.054 & 0.059 & 0.062 & 0.039 & 0.052 &  & 0.395 & 0.416 & 0.393 & 0.405 & 0.385 & 0.392 \\
\multicolumn{1}{l}{}           &              &       &       &       &       &       &       &  &       &       &       &       &       &       \\
\multirow{4}{*}{\textbf{MT-C}} & $\theta^P_1$ & 0.052 & 0.047 & 0.052 & 0.051 & 0.042 & 0.038 &  & 1.000 & 1.000 & 1.000 & 1.000 & 1.000 & 1.000 \\
                               & $\theta^P_2$ & 0.045 & 0.041 & 0.053 & 0.044 & 0.056 & 0.048 &  & 1.000 & 1.000 & 1.000 & 1.000 & 1.000 & 1.000 \\
                               & $\theta^S_1$ & 0.046 & 0.044 & 0.059 & 0.043 & 0.049 & 0.042 &  & 0.929 & 0.924 & 0.920 & 0.926 & 0.922 & 0.937 \\
                               & $\theta^S_2$ & 0.048 & 0.049 & 0.051 & 0.036 & 0.051 & 0.054 &  & 0.957 & 0.963 & 0.968 & 0.959 & 0.947 & 0.979 \\
\bottomrule
\end{tabular}
\end{adjustbox}
\caption{Rejection probabilities under the null and alternative hypothesis with $G=800$}
\label{table:reject-probs-800}
\end{table}

\begin{table}[ht!]
\centering
\setlength{\tabcolsep}{3pt}
\begin{adjustbox}{max width=0.95\linewidth,center}
\begin{tabular}{lllllllllllllll}
\toprule
       &              & \multicolumn{13}{c}{Second-stage}    \\  \cmidrule{3-15}
       &              & \multicolumn{6}{c}{$H_0: \tau = \omega = 0$}                      &  & \multicolumn{6}{c}{$H_1: \tau = \omega = 0.05$}                      \\  \cmidrule{3-8} \cmidrule{10-15}
First-stage & Parameter &  \textbf{S-2}   & \textbf{S-4}   & \textbf{S-4O} & \textbf{MT-A}  & \textbf{MT-B}  & \textbf{MT-C} & & \textbf{S-2}   & \textbf{S-4}   & \textbf{S-4O} & \textbf{MT-A}  & \textbf{MT-B}  & \textbf{MT-C} \\
\toprule
\multirow{4}{*}{\textbf{S-2}}  & $\theta^P_1$ & 0.046 & 0.047 & 0.046 & 0.052 & 0.049 & 0.061 &  & 0.804 & 0.778 & 0.762 & 0.788 & 0.793 & 0.803 \\
                               & $\theta^P_2$ & 0.042 & 0.050 & 0.055 & 0.058 & 0.062 & 0.052 &  & 0.790 & 0.748 & 0.744 & 0.764 & 0.777 & 0.796 \\
                               & $\theta^S_1$ & 0.051 & 0.044 & 0.052 & 0.050 & 0.049 & 0.061 &  & 0.310 & 0.280 & 0.259 & 0.270 & 0.274 & 0.292 \\
                               & $\theta^S_2$ & 0.049 & 0.046 & 0.052 & 0.055 & 0.049 & 0.051 &  & 0.297 & 0.275 & 0.248 & 0.280 & 0.258 & 0.279 \\
\multicolumn{1}{l}{}           &              &       &       &       &       &       &       &  &       &       &       &       &       &       \\
\multirow{4}{*}{\textbf{S-4}}  & $\theta^P_1$ & 0.041 & 0.059 & 0.045 & 0.059 & 0.057 & 0.050 &  & 0.815 & 0.836 & 0.825 & 0.811 & 0.793 & 0.833 \\
                               & $\theta^P_2$ & 0.043 & 0.058 & 0.040 & 0.057 & 0.054 & 0.052 &  & 0.812 & 0.831 & 0.816 & 0.806 & 0.786 & 0.814 \\
                               & $\theta^S_1$ & 0.044 & 0.058 & 0.043 & 0.063 & 0.062 & 0.044 &  & 0.306 & 0.312 & 0.307 & 0.288 & 0.326 & 0.300 \\
                               & $\theta^S_2$ & 0.056 & 0.059 & 0.044 & 0.070 & 0.055 & 0.053 &  & 0.296 & 0.318 & 0.290 & 0.297 & 0.309 & 0.306 \\
\multicolumn{1}{l}{}           &              &       &       &       &       &       &       &  &       &       &       &       &       &       \\
\multirow{4}{*}{\textbf{S-4O}}  & $\theta^P_1$ & 0.051 & 0.039 & 0.053 & 0.047 & 0.057 & 0.053 &  & 0.999 & 1.000 & 1.000 & 1.000 & 1.000 & 1.000 \\
                               & $\theta^P_2$ & 0.055 & 0.035 & 0.048 & 0.045 & 0.048 & 0.056 &  & 0.999 & 0.999 & 0.997 & 0.998 & 0.998 & 0.999 \\
                               & $\theta^S_1$ & 0.050 & 0.045 & 0.048 & 0.048 & 0.053 & 0.051 &  & 0.812 & 0.793 & 0.788 & 0.806 & 0.750 & 0.791 \\
                               & $\theta^S_2$ & 0.055 & 0.036 & 0.053 & 0.040 & 0.051 & 0.052 &  & 0.708 & 0.705 & 0.681 & 0.699 & 0.669 & 0.703 \\
\multicolumn{1}{l}{}           &              &       &       &       &       &       &       &  &       &       &       &       &       &       \\
\multirow{4}{*}{\textbf{MT-A}} & $\theta^P_1$ & 0.049 & 0.049 & 0.061 & 0.058 & 0.048 & 0.058 &  & 0.830 & 0.809 & 0.837 & 0.823 & 0.809 & 0.813 \\
                               & $\theta^P_2$ & 0.044 & 0.051 & 0.060 & 0.054 & 0.053 & 0.058 &  & 0.812 & 0.803 & 0.830 & 0.813 & 0.799 & 0.804 \\
                               & $\theta^S_1$ & 0.051 & 0.056 & 0.053 & 0.058 & 0.056 & 0.061 &  & 0.314 & 0.316 & 0.301 & 0.291 & 0.273 & 0.284 \\
                               & $\theta^S_2$ & 0.048 & 0.050 & 0.052 & 0.051 & 0.061 & 0.053 &  & 0.320 & 0.314 & 0.296 & 0.276 & 0.277 & 0.286 \\
\multicolumn{1}{l}{}           &              &       &       &       &       &       &       &  &       &       &       &       &       &       \\
\multirow{4}{*}{\textbf{MT-B}} & $\theta^P_1$ & 0.052 & 0.041 & 0.048 & 0.051 & 0.058 & 0.058 &  & 0.998 & 0.995 & 0.999 & 0.995 & 0.996 & 0.999 \\
                               & $\theta^P_2$ & 0.060 & 0.048 & 0.054 & 0.050 & 0.059 & 0.052 &  & 0.969 & 0.966 & 0.977 & 0.963 & 0.964 & 0.970 \\
                               & $\theta^S_1$ & 0.052 & 0.048 & 0.056 & 0.049 & 0.056 & 0.060 &  & 0.661 & 0.657 & 0.673 & 0.632 & 0.594 & 0.673 \\
                               & $\theta^S_2$ & 0.054 & 0.049 & 0.055 & 0.057 & 0.053 & 0.054 &  & 0.486 & 0.484 & 0.474 & 0.476 & 0.431 & 0.494 \\
\multicolumn{1}{l}{}           &              &       &       &       &       &       &       &  &       &       &       &       &       &       \\
\multirow{4}{*}{\textbf{MT-C}} & $\theta^P_1$ & 0.054 & 0.050 & 0.052 & 0.056 & 0.055 & 0.044 &  & 1.000 & 1.000 & 1.000 & 1.000 & 1.000 & 1.000 \\
                               & $\theta^P_2$ & 0.045 & 0.052 & 0.053 & 0.047 & 0.059 & 0.037 &  & 1.000 & 1.000 & 1.000 & 1.000 & 1.000 & 1.000 \\
                               & $\theta^S_1$ & 0.051 & 0.036 & 0.058 & 0.053 & 0.048 & 0.049 &  & 0.962 & 0.964 & 0.968 & 0.963 & 0.955 & 0.975 \\
                               & $\theta^S_2$ & 0.043 & 0.046 & 0.069 & 0.056 & 0.047 & 0.034 &  & 0.987 & 0.988 & 0.988 & 0.982 & 0.978 & 0.990 \\
\bottomrule
\end{tabular}
\end{adjustbox}
\caption{Rejection probabilities under the null and alternative hypothesis with $G=1000$}
\label{table:reject-probs-1000}
\end{table}

\clearpage
\newpage
\bibliography{reference}

\end{document}